    \newwrite\bibnotes
    \def\bibnotesext{Notes.bib}
\write\bibnotes{@CONTROL{REVTEX41Control}}
\write\bibnotes{@CONTROL{%
    apsrev41Control,author="08",editor="1",pages="1",title="0",year="0"}}
\write\@auxout{\string\citation{apsrev41Control}}%
\definecolor{mylinkcolor}{rgb}{0,0,0.8} 
\newtheorem{theorem}{Theorem}
\newtheorem{lemma}{Lemma}
\newtheorem{corollary}{Corollary}
\newtheorem{claim}{Claim}
\theoremstyle{definition}
\newtheorem{definition}{Definition}
\newtheorem{remark}{Remark}
\newtheorem{example}{Example}
\newcommand{\ket}[1]{| #1 \rangle}
\newcommand{\bra}[1]{\langle #1 |}
\newcommand{\braket}[2]{\langle #1|#2\rangle}
\newcommand{\ketbra}[2]{|#1\rangle\!\langle#2|}
\newcommand{\id}{\mathbb{I}}
\newcommand{\tr}{{\mathrm {tr}}}
\newcommand{\ot}{\otimes}
\newcommand{\proj}[1]{\ket{#1}\!\bra{#1}}
\newcommand{\score}{\omega}
\newcommand{\mcB}{\mathcal{B}}
\newcommand{\mcC}{\mathcal{C}}
\newcommand{\mcD}{\mathcal{D}}
\newcommand{\mcF}{\mathcal{F}}
\newcommand{\mcG}{\mathcal{G}}
\newcommand{\mcH}{\mathcal{H}}
\newcommand{\mcI}{\mathcal{I}}
\newcommand{\mcM}{\mathcal{N}}
\newcommand{\mcN}{\mathcal{N}}
\newcommand{\mcP}{\mathcal{P}}
\newcommand{\mcS}{\mathcal{S}}
\newcommand{\mcW}{\mathcal{W}}
\newcommand{\mbC}{\mathbb{C}}
\newcommand{\mbN}{\mathbb{N}}
\newcommand{\mbR}{\mathbb{R}}
\newcommand{\chg}[1]{{\color{black}#1}}
\newcommand\colouredsout[2]{%
  \renewcommand\sout{\bgroup\markoverwith
  {\textcolor{#1}{\rule[0.5ex]{2pt}{0.8pt}}}\ULon}%
  \sout{#2}%
}
\newtheorem{protocol}{Protocol}
\def\M{\ensuremath\mathcal}
\def\B{\ensuremath\mathbf}
\newif\ifarxiv 
\begin{document}

\title{Composable framework for device-independent state certification }

\author{Rutvij Bhavsar\textsuperscript{(a)\,(b) }}
\email{rutvij.bhavsar@kcl.ac.uk}
\affiliation{School of Electrical Engineering, Korea Advanced Institute of Science and Technology (KAIST), 291 Daehak-ro, Yuseong-gu, Daejeon 34141, Republic of Korea}

\author{Lewis Wooltorton\textsuperscript{(a)}}
\email{lewis.wooltorton@ens-lyon.fr}
\affiliation{Department of Mathematics, University of York, Heslington, York, YO10 5DD, United Kingdom}
\affiliation{Quantum Engineering Centre for Doctoral Training, H. H. Wills Physics Laboratory and Department of Electrical \& Electronic Engineering, University of Bristol, Bristol BS8 1FD, United Kingdom}
\affiliation{Inria, ENS de Lyon, LIP, 46 Allee d’Italie, 69364 Lyon Cedex 07, France}

\author{Joonwoo Bae}
\email{ joonwoo.bae@kaist.ac.kr }
\affiliation{School of Electrical Engineering, Korea Advanced Institute of Science and Technology (KAIST), 291 Daehak-ro, Yuseong-gu, Daejeon 34141, Republic of Korea}

\date{12$^{\text{th}}$ December, 2025}

\begin{abstract}
Certifying a quantum state in a device-independent (DI) manner, in which no trust is placed on the internal workings of any physical components, is a fundamental task bearing various applications in quantum information theory. The composability of a state certification protocol is key to its integration as a subroutine within information-theoretic protocols. In this work, we present a composable certification of quantum states in a DI manner under the assumption that a source prepares a finite sequence of independent quantum states that are not necessarily identical. We show that the security relies on the DI analog of the fidelity, called the {\it extractability}. We develop methods to compute this quantity under local operations and classical communication in certain Bell scenarios that self-test the singlet state, which may also be of independent interest. Finally, we demonstrate our framework by certifying the singlet state in a composable and DI manner using the Clauser–Horne–Shimony–Holt inequality.  
\end{abstract}

\maketitle

\renewcommand{\thefootnote}{ }
\footnotetext{(a) These authors contributed equally to this work. \\
(b) Current affiliation: Department of Mathematics, King’s College London, Strand, London, WC2R 2LS, United Kingdom}

\def\thefootnote{\arabic{footnote}}
\setcounter{footnote}{0}

\section{Introduction} Certifying the non-classical properties of a source is essential for its use in quantum information processing. This is often achieved by modeling the physical mechanisms which govern either the source's behavior directly, or the measurements used to characterize it. However, deviations between a realistic implementation and this model are difficult to rule out, and any assumptions which are not met in practice could render the certification invalid. The device-independent (DI) approach~\cite{Ekert,ABGMPS,ColbeckThesis,PAMBMMOHLMM,BarrettNonlocalResource} addresses this issue. Here, certification is obtained through the observation of nonlocal input-output correlations, removing the need to understand the source's inner workings or perform trusted measurements.

The aim of DI state certification (DISC) is to certify the presence of a certain multi-partite ``target'' state in a state that is stored in the memory of the user(s). Recently, there have been significant advances in DISC~\cite{bancal2021self,govcanin2022sample}, in which the states and measurements are treated as (separate) black-boxes, and tools from robust self-testing~\cite{MayersYao,mayers2004self,kaniewski2016analytic} enable certification when the source is not assumed to be independent and identically distributed (i.i.d.), and the collected statistics are finite. A related protocol was also outlined in~\cite{AFB19,philip25}, where DI lower bounds on the rate of distillable entanglement were derived for non-i.i.d. sources. Such protocols have also been successfully demonstrated experimentally~\cite{martins24,Storz24,Schmid24}.

However, when using any protocol as a subroutine in a larger system, a notion of composability is essential. For example, in cryptography~\cite{Ekert,BHK,ABGMPS,PABGMS,bcktwo,VV2,ADFRV,ColbeckThesis,PAMBMMOHLMM,CK2,MS1,MS2,CR_free}, composable security statements ensure that the concatenation of two secure protocols results in another secure protocol~\cite{Renner,Portmann14}. A composable approach to DISC is therefore crucial if the stored state is intended for many practical purposes.

We here address the need for the framework of composable certification and present composable guarantees for DISC protocols, enabling their integration as subroutines in broader information theoretic tasks, such as cryptography. Furthermore, our protocols only assume that finitely many independent states are emitted by the source, from which a single state is randomly selected and preserved in a memory for future applications. The remaining states then interact with measurement devices that can have a memory in general, and the statistics are used to certify the preserved state. 

We prove the security of our protocol using the \emph{extractability}, a DI variant of the fidelity between each measured state and the target~\cite{kaniewski2016analytic,SupicSelfTest}. A bound on the extractability follows from observing a Bell inequality violation, and we provide a generic way to compute this quantity in Bell scenarios with binary inputs and binary outputs\footnote{We refer to this scenario with two parties as the minimal Bell scenario.}, which may be of independent interest in the context of nonlocality and entanglement theory. Specifically, the standard definition of extractability quantifies how robust a self-test is, by measuring the minimal distance between a ``physical'' state which produces a given Bell violation and the target state. This distance is typically computed after applying local operations to the physical state, accounting for degrees of freedom undetectable from the statistics alone~\cite{Coopmans19}. However, this notion can be restrictive in the context of state certification protocols, where the relevant class of \textit{free operations} may depend on both the intended application and the experimental setup. For example, in cryptographic settings, it is natural to consider local operations and classical communication (LOCC) as the allowed free operations. Our security framework can be applied to any chosen class of operations, and we propose a method to compute bounds on the extractability under LOCC for any Bell inequality in the minimal Bell scenario certifying the singlet state. As LOCC extractability is typically greater than that obtained under local operations alone, our protocols benefit from tighter security bounds than those derived using the standard approach. 

The paper is structured as follows. In \cref{sec:prior_work,sec:self-test} we provide the necessary background. In \cref{sec: measSetup} we introduce the measurement scenarios considered. \cref{sec:introPro} then outlines the corresponding DISC protocols. A composable security framework is then presented in \cref{sec:compSec}, and in \cref{sec:proofMain} we bound the security of the introduced protocols according to this definition. \cref{sec: extractability} discusses the main tool needed to provide security, namely, how to bound the extractability. We then provide an explicit example using the Clauser–Horne–Shimony–Holt (CHSH) inequality in \cref{sec:example}, and conclude with a discussion in \cref{sec:disc}. \ifarxiv All protocols and assumptions are detailed in \cref{app:assumptions,app:allProtocols}, and the proofs of all security claims can be found in \cref{app: security_proofs_par,app: security_proofs_sec,app:seqLem}. A detailed derivation of the extractability bounds can be found in \cref{app: LOCC_extractibiltity}.\else Technical details of the protocols and assumptions are provided in the Supplementary Material~\cite[Sections 1 and 2]{supp}, along with the full proofs of all security claims~\cite[Sections 3 to 5]{supp}. A detailed derivation of the extractability bounds appears in the Supplementary Material~\cite[Section 6]{supp}.\fi

\section{Prior work}
\label{sec:prior_work}

\chg{Certifying properties of entangled states in a device-independent fashion has been extensively studied. Often considered is the amount of private randomness contained in the measurement outcomes, as a function of the observed Bell violation. This can be estimated in the asymptotic regime~\cite{ABGMPS,TanDI,Bhavsar2023,Brown2024deviceindependent}, and lifted to the finite size setting using tools such as the entropy accumulation theorem~\cite{DFR,EAT2,MetgerGEAT} and quantum probability estimation framework~\cite{Zhang20}, enabling composably secure DI randomness expansion~\cite{ColbeckThesis,PAMBMMOHLMM,CK2,MS1,MS2}, amplification~\cite{CR_free,Foreman2023practicalrandomness} and key distribution~\cite{Ekert,BHK,ABGMPS,PABGMS,VV2,ADFRV}. 

A stronger characterization of the source is possible by estimating properties of the emitted states directly, rather than the measurements performed on them. For example, various entanglement measures can be quantified as a function of the observed Bell violation in the asymptotic regime~\cite{AFB19,zhu2024interplay}. In the finite size regime, Refs.~\cite{AFB19,philip25} certify the largest number of maximally entangled states that can be extracted from an untrusted source under LOCC operations, a quantity known as the one-shot distillable entanglement. This protocol outputs a subset of states (intended for a future task) which the entanglement certification applies to, and does not require an i.i.d. assumption on the source.   

DISC achieves the strongest type of characterization, namely, that the states emitted by the source are close to (many copies of) a target state. Bancal \textit{et al.}~\cite{bancal2021self} provided a DI state verification protocol (in which all states are measured; see~\cite{govcanin2022sample} for a detailed comparison between verification and certification) using the CHSH inequality. Notably, this protocol does not require an i.i.d. assumption on the source and was used to verify the successful distribution of entangled states via a quantum network link. Gočanin \textit{et al.}~\cite{govcanin2022sample} later developed a general framework for DI state verification based on any robust self-testing result. This approach achieves an optimal sample efficiency in the fully non-i.i.d. setting. The authors of Ref.~\cite{govcanin2022sample} further propose a framework for DI state certification, which outputs multiple certified copies with an optimal sample efficiency under the assumption of an independently distributed source. 

In the framework of Refs.~\cite{bancal2021self,govcanin2022sample}, the  average extractability of the state ensemble is certified up to a fixed confidence level. This contrasts the definition of composable security adopted in cryptography, such as quantum key distribution (QKD), in which the distinguishability between an idealized version of the protocol and its real implementation is shown to be small~\cite{Renner,Portmann14}. Applying such a framework to DISC is one of the new contributions of our work. Inspired by~\cite{govcanin2022sample}, we present a general framework for DISC based on extractability under the assumption of independent states. The key difference is that we introduce and apply a composable security definition inspired by quantum cryptography. We do however loose the sample efficiency achieved in~\cite{govcanin2022sample}, a trade off which is necessary in the case of exact certification (see \cref{sec:introPro} for a discussion of different types of certification) owing to the inherently stronger demand of composability~\cite{wiesner2024}. As a proof of principle demonstration, we also consider certifying a single copy of the target state. Extending our framework to the certification of multiple copies in a sample-efficient and fully non-i.i.d. manner is a promising future direction, which we elaborate on in \cref{sec:disc}. Additionally, while composability is not directly addressed in Refs.~\cite{AFB19,philip25}, we expect an analysis similar to the one presented here would directly apply to one-shot distillable entanglement certification. However as aforementioned, certifying the state directly as considered in this work and~\cite{bancal2021self,govcanin2022sample} is a stronger demand than certifying its distillable entanglement.} 

\section{Nonlocality and self-testing}
\label{sec:self-test}
We now review self-testing, which can be viewed as a form of DI state verification in the asymptotic limit. Consider a bipartite Bell scenario, in which two non-communicating devices perform local measurements on some unknown shared quantum state. The inputs to each device are uniformly distributed binary random variables, denoted $X$ and $Y$, respectively. The outputs of each device are binary random variables, denoted $A$ and $B$, which for a given pair of inputs $X=x,\, Y=y$ are distributed according to the device's behavior, $\mathbf{p} = \{p(a,b|x,y)\}$. A behavior is quantum if it can be realized by local quantum measurements on a bipartite state $\rho \in \mathcal{S}( \mathcal{H}_{Q_{A}} \otimes \mathcal{H}_{Q_{B}})$, where $\mathcal{S}(\mathcal{H})$ denotes the set of normalized states on a Hilbert space $\mathcal{H}$. Specifically, the probabilities are given by the Born rule, $p(a,b|x,y) = \tr\big[ (M_{a|x} \otimes N_{b|y})\rho]$, where $\{ \{M_{a|x}\}_{a \in \{0,1\}} \}_{x \in \{0,1\}}$ and $\{ \{N_{b|y}\}_{b \in \{0,1\}} \}_{y \in \{0,1\}}$ are POVMs on $\mathcal{H}_{Q_{A}}$ and $\mathcal{H}_{Q_{B}}$, respectively. 

In certain cases, a given quantum behavior $\mathbf{p}^{*} = \{p^{*}(a,b|x,y)\}$ guarantees the presence of a particular state $\ket{\psi^*}$ up to operations which cannot be detected from the behavior, called local isometries. If this is the case, we say the behavior $\mathbf{p}^{*}$ \emph{self-tests} $\ket{\psi^*}$~\cite{MayersYao,SupicSelfTest}. Such statements are the strongest form of verification possible. However, in noisy systems it is often the case that the behavior $\mathbf{p}^{*}$ cannot be achieved exactly. We say $\mathbf{p}^{*}$ \emph{robustly} self-tests $\ket{\psi^*}$ if witnessing a behavior $\mathbf{p}'$ $\epsilon$-close to $\mathbf{p}$ guarantees the presence of a state $\rho'$ $\epsilon$-close to $\psi^* = \ketbra{\psi^*}{\psi^*}$, according to a given metric (see, e.g., Ref.~\cite{SupicSelfTest} for a definition). Robust self-testing can thus be viewed as a form of DI state verification when given access to infinitely many copies of an identical (potentially noisy) state. 

Self-testing statements can also be formulated without relying on the full input-output distributions. One approach is to express them in terms of functionals of the behavior, which may be non-linear in general \cite{rai2022self}. However, the most common and often more practical method is to use linear functionals, which correspond to Bell inequalities \cite{Bell,CHSH}. 

In the minimal Bell scenario, it will be convenient to work with the expected values $\langle A_{x} \rangle = \sum_{a}(-1)^{a}p(a|x)$, $\langle B_{y} \rangle = \sum_{b}(-1)^{b}p(b|y)$ and $\langle A_{x}B_{y} \rangle = \sum_{a,b}(-1)^{a+b}p(a,b|x,y)$. When the behavior is quantum, we define the associated observables $A_{x} = \sum_{a} (-1)^{a}M_{a|x}$ and $B_{y} = \sum_{b} (-1)^{b}N_{b|y}$, which satisfy $\langle A_{x}B_{y} \rangle = \tr[(A_{x} \otimes B_{y})\rho]$, $\langle A_{x} \rangle =   \tr[(A_{x} \otimes \id_{Q_{B}})\rho]$ and $\langle B_{y} \rangle = \tr[(\id_{Q_{A}} \otimes B_{y})\rho]$. A Bell expression is then a linear combination of the probabilities $\{p(a,b|x,y)\}$, or equivalently the expectations $\{\langle A_{x}\rangle, \langle B_{y} \rangle, \langle A_{x}B_{y} \rangle\}$, and for quantum behaviors we consider Bell operators of the form \chg{\cite{Brunner_review}}
\begin{equation}
    B = \sum_{x,y \in \{-1,0,1\}} c_{xy} \, A_{x} \otimes B_{y}, \label{eq:bop}
\end{equation}
for some real coefficients $c_{xy}$, where we defined $A_{-1} := \id_{Q_{A}}$ and $B_{-1} := \id_{Q_{B}}$. The Bell value is given by $\langle B \rangle = \tr[B\rho]$ for a state $\rho$, and the maximum quantum and classical values of $\langle B \rangle$ are denoted by $\eta^{\mathrm{Q}}$ and $\eta^{\mathrm{L}}$, respectively. A Bell inequality $\langle B \rangle \leq \eta^{\text{L}}$ self-tests a state $\ket{\psi^*}$ if every quantum state which achieves $\langle B \rangle = \eta^{\text{Q}}$ is equivalent to $\ket{\psi^*}$ up to local isometries. 

Of particular interest to this work are Bell inequalities which self-test the maximally entangled pair of qubits, $\phi^{+} = \ketbra{\phi^+}{\phi^+}$ where $\ket{\phi^+} = (\ket{00} + \ket{11})/\sqrt{2}$. Such inequalities have been fully characterized in the minimal Bell scenario~\cite{wang2016all,Le2023quantumcorrelations,Barizien2024custombell,WBC3}. An important example is the CHSH inequality~\cite{CHSH}, which is the only non-trivial facet Bell inequality in this scenario up to relabelings. Its Bell operator takes the form $B_{\text{CHSH}} := A_{0}\otimes (B_{0} + B_{1}) + A_{1} \otimes (B_{0} - B_{1})$, and has local and quantum bounds of $2$ and $2\sqrt{2}$, respectively. 

\section{Measurement scenarios}
\label{sec: measSetup}

Real sources emit a finite sequence of non-i.i.d. states. Furthermore, a real measurement device may also be non-i.i.d., e.g., through the use of a memory. It is then clear that, on their own, self-testing statements are not enough for state verification or certification in practice. To address this, we consider certifying a source which emits $n$ independent but not necessarily identical states, $\rho = \bigotimes_{i=1}^{n}\rho_{i}$, where $\rho_{i} \in \mcS(\mcH_{Q_{i}^{A}} \otimes \mcH_{Q_{i}^{B}})$. We also assume the user has access to a quantum memory in which states can be stored, and treat all measurement devices as black-boxes, making no assumptions on their internal workings. 

\begin{figure}[H]
\centering
\resizebox{0.25\textwidth}{!}{%
\begin{circuitikz}
\tikzstyle{every node}=[font = \normalsize] 
\draw  (5.25,13.5) rectangle (7.25,11.5);
\draw [->, >=Stealth] (3.75,13) -- (5.25,13);
\draw [->, >=Stealth] (7.25,13) -- (8.75,13);
\draw [->, >=Stealth] (3.75,12) -- (5.25,12);
\draw [->, >=Stealth, dashed] (5.75,11.5) -- (5.75,10);
\draw [->, >=Stealth, dashed] (6.75,11.5) -- (6.75,10);
\node   at (3.5,13.25) {$I_{i}^{A}$};
\node   at (3.5,12.25) {$Q^{A}_{i}$};
\node   at (5.25,10.25) {$X_{i}$};
\node   at (6.25,10.25) {$A_{i}$};
\node   at (9,13.25) {$O_{i}^{A}$};
\node   at (6.25,12.45) {$\mathcal{N}_{i}^{A}$};
\end{circuitikz}
}%

\caption{A figure illustrating the measurement channels on Alice's side. The channels on Bob's side have an identical structure. Solid lines indicate quantum systems, while dashed lines represent classical systems. $I_{i}^{A}$ and $O_{i}^{A}$ represent auxiliary information available to, and generated by, the measurement of $Q_{i}^{A}$, respectively. The generation of Alice's input $X_{i}$ is absorbed as part of the channel output, whilst $A_{i}$ represents the output of a quantum measurement. }
\label{fig:measBox}
\end{figure}

Associated to each index $i$ is a channel, $\mcN_{i}^{A}:I_{i}^{A}Q_{i}^{A} \to A_{i}X_{i}O_{i}^{A}$, shown in \cref{fig:measBox}. The systems $A_{i}X_{i}$ are classical, whilst $I_{i}^{A}$ contains any auxiliary information available to the device prior to measurement (e.g., information from previous rounds), and $O_{i}^{A}$ contains auxiliary output information (e.g., information to pass on to future rounds). \chg{The channel $\mathcal{N}_{i}^{A}$ first samples the input $X_{i}$ and then performs the corresponding measurement on the (quantum) system held by the measurement device. Thus, while $X_{i}$ serves as an input to the measurement device, from this perspective it is naturally regarded as an output of the channel $\mcN_{i}^{A}$.} We similarly define channels $\mcN_{i}^{B}:I_{i}^{B}Q_{i}^{B} \to B_{i}Y_{i}O_{i}^{B}$. A joint channel $\mcN_{i}$ is described by performing $\mcN_{i}^{A}$ and $\mcN_{i}^{B}$ in a space-like separated manner, i.e., $\mcN_{i} = \mcN_{i}^{A} \otimes \mcN_{i}^{B}$. Then, the systems $X_{i}Y_{i}$ and $A_{i}B_{i}$ denote the inputs and outputs to the Bell test, respectively.  

We then consider two possible arrangements of the channels, corresponding to parallel and sequential setups. In the parallel setup, $n-1$ isolated measurement devices $M_{i}$ each receive $\rho_{i}$ and implement the channel $\mcN_{i}$. Following the isolation of the devices, the auxiliary systems $I_{i}^{A}O_{i}^{A}$ and $I_{i}^{B}O_{i}^{B}$ can be omitted, and all measurements occur simultaneously. This can be equivalently viewed as the action of a single memoryless measurement device acting on each system sequentially. We illustrate this scenario in \cref{fig:meas_par}. 

\begin{figure}[!ht]
\resizebox{0.3\textwidth}{!}{%
\begin{circuitikz}
\tikzstyle{every node}=[font=\LARGE]
\draw  (3.75,18.25) rectangle (6.25,15.75);
\draw [->, >=Stealth] (2.5,17) -- (3.75,17);
\node [font=\LARGE] at (5,17) {$\mcN_{1}$};
\node [font=\LARGE] at (9,17) {$A_{1}X_{1}B_{1}Y_{1}$};
\node [font=\LARGE] at (1.5,17) {$Q_{1}$};
\draw [dashed] (1.25,15.25) -- (8.75,15.25);
\draw [->, >=Stealth, dashed] (6.25,17) -- (7.5,17);
\draw  (3.75,14.75) rectangle (6.25,12.25);
\draw [->, >=Stealth] (2.5,13.5) -- (3.75,13.5);
\node [font=\LARGE] at (5,13.5) {$\mcN_{2}$};
\node [font=\LARGE] at (9,13.5) {$A_{2}X_{2}B_{2}Y_{2}$};
\node [font=\LARGE] at (1.5,13.5) {$Q_{2}$};
\draw [dashed] (1.25,11.75) -- (8.75,11.75);
\draw [->, >=Stealth, dashed] (6.25,13.5) -- (7.5,13.5);
\draw  (3.75,9.5) rectangle (6.25,7);
\draw [->, >=Stealth] (2.5,8.25) -- (3.75,8.25);
\node [font=\LARGE] at (5,8.25) {$\mcN_{n}$};
\node [font=\LARGE] at (9,8.25) {$A_{n}X_{n}B_{n}Y_{n}$};
\node [font=\LARGE] at (1.5,8.25) {$Q_{n}$};
\draw [->, >=Stealth, dashed] (6.25,8.25) -- (7.5,8.25);
\draw [dashed] (1.25,10) -- (8.75,10);
\node [font=\LARGE] at (5,11) {$\vdots$};
\end{circuitikz}
}%
\caption{\chg{The parallel setup, in which $n-1$ devices each perform an isolated measurement on a composite quantum system $Q_{i} = Q_{i}^{A}Q_{i}^{B}$. Each measurement is performed in an isolated manner with respect to the $AB$ partition.} }
\label{fig:meas_par}
\end{figure}

In the sequential scenario, we consider a single measurement device $M$, which performs all channels in sequence. Specifically, the measurement of $\rho_{i}$ precedes that of $\rho_{i+1}$, and the auxiliary output $O_{i}^{A}$ is sent to the input $I_{i+1}^{A}$ (and similarly for $B$). In other words, we associate $I_{i+1}^{A} = O_{i}^{A}$ and $I_{i+1}^{B} = O_{i}^{B}$. Information in $O_{i}^{A}$ could, for example, consist of the inputs and outputs of rounds $1$ to $i$. Note that systems pertaining to $A$ and $B$ are still assumed to be separated, e.g., $O_{i}^{A}$ cannot influence $I_{i+1}^{B}$. See \cref{fig:meas_seq} for an illustration.   

\begin{figure}[!ht]
\centering
\resizebox{0.3\textwidth}{!}{%
\begin{circuitikz}
\tikzstyle{every node}=[font=\LARGE]
\draw  (3.75,18.25) rectangle (6.25,15.75);
\draw [->, >=Stealth] (2.5,17) -- (3.75,17);
\node [font=\LARGE] at (5,17) {$\mcN_{1}$};
\node [font=\LARGE] at (9,17) {$A_{1}X_{1}B_{1}Y_{1}$};
\node [font=\LARGE] at (1.5,17) {$Q_{1}$};
\draw [->, >=Stealth, dashed] (6.25,17) -- (7.5,17);
\draw  (3.75,14.75) rectangle (6.25,12.25);
\draw [->, >=Stealth] (2.5,13.5) -- (3.75,13.5);
\node [font=\LARGE] at (5,13.5) {$\mcN_{2}$};
\node [font=\LARGE] at (9,13.5) {$A_{2}X_{2}B_{2}Y_{2}$};
\node [font=\LARGE] at (1.5,13.5) {$Q_{2}$};
\draw [->, >=Stealth, dashed] (6.25,13.5) -- (7.5,13.5);
\draw  (3.75,9.5) rectangle (6.25,7);
\draw [->, >=Stealth] (2.5,8.25) -- (3.75,8.25);
\node [font=\LARGE] at (5,8.25) {$\mcN_{n}$};
\node [font=\LARGE] at (9,8.25) {$A_{n}X_{n}B_{n}Y_{n}$};
\node [font=\LARGE] at (1.5,8.25) {$Q_{n}$};
\draw [->, >=Stealth, dashed] (6.25,8.25) -- (7.5,8.25);
\node [font=\LARGE] at (5,11) {$\vdots$};
\draw [->, >=Stealth] (5,19) -- (5,18.25);
\draw [->, >=Stealth] (5,15.75) -- (5,14.75);
\draw [->, >=Stealth] (5,12.25) -- (5,11.5);
\draw [->, >=Stealth] (5,10.25) -- (5,9.5);
\draw [->, >=Stealth] (5,7) -- (5,6.25);
\node [font=\LARGE] at (5,19.5) {$I_{1}$};
\node [font=\LARGE] at (7.25,15.25) {$I_{2} = O_{1}$};
\node [font=\LARGE] at (7.25,11.75) {$I_{3} = O_{2}$};
\node [font=\LARGE] at (7.75,10) {$I_{n} = O_{n-1}$};
\node [font=\LARGE] at (5,5.75) {$O_{n}$};
\end{circuitikz}
}%
\caption{\chg{The sequential setup, in which a single device performs all channels $\mcN_{i}$ in a sequence on the systems $Q_{i} = Q_{i}^{A}Q_{i}^{B}$. The auxiliary information $O_{i} = O_{i}^{A}O_{i}^{B}$ generated by $\mcN_{i}$ is fed forward as the input register $I_{i+1} = I_{i+1}^{A}I_{i+1}^{B}$ for the subsequent measurement $\mcN_{i+1}$.}}
\label{fig:meas_seq}
\end{figure}

We also introduce a random variable $T$, which takes values in $t \in \{1,...,n\}$ according to a \chg{known} distribution $\text{Pr}[T=t] = p_{T}(t)$, and is sampled prior to all measurements. Specifically, the value of $T$ decides which state is not measured and, instead, held in the quantum memory. \chg{Note that the variables $T$, $X$, and $Y$ are all sampled from distributions known to the users. While these distributions may also be known to the adversary, the adversary cannot access the actual realized values of $T$, $X$, or $Y$. For a detailed list of the assumptions made in this work, see \ifarxiv \cref{app:assumptions}. \else the Supplementary Material~\cite[Section 1]{supp}. \fi}

\section{Protocols}
\label{sec:introPro}

We consider DISC protocols of the following form. First, the state $\bigotimes_{i=1}^{n} \rho_{i}$ is  prepared by an untrusted source, and accepted into the secure laboratory. The \chg{random variable} $T$ is generated, and depending on its value, $T = t$, the state $\rho_{t}$ is stored in a quantum memory. Next, the remaining states are sent to the measurement device, and either the parallel or sequential measurement setup is performed. The statistics are collected, and if they deviate from an expected value, the protocol aborts. \ifarxiv Specifically, Protocols \ref{prot: DISV} to \ref{prot: DISV_general_var} consist of a parallel setup, while Protocols \ref{prot: DISV_general_sequencial} and \ref{prot: DISV_general_sequencial_var} consist of a sequential setup.\else Specifically, Protocols 1 to 3 consist of a parallel setup, while Protocols 4 and 5 consist of a sequential setup (see the Supplementary Material~\cite[Section 2]{supp} for details). \fi 

In the event of not aborting, there are two possible variants of the protocol. In the first, the user applies a pre-decided channel (from the set of free operations) to the state stored in memory, with the aim of ``extracting'' a target state \ifarxiv (Protocols \ref{prot: DISV}, \ref{prot: DISV_general} and \ref{prot: DISV_general_sequencial}). \else (Protocols 1, 2 and 4 in the Supplementary Material~\cite[Section 2]{supp}). \fi However, performing this channel in practice may be unrealistic. Thus, in another variant, the user certifies the state held in memory to be equivalent to (i.e., has the potential to be converted to) the target up to the set of free operations \ifarxiv (Protocols \ref{prot: DISV_general_var} and \ref{prot: DISV_general_sequencial_var}).\else (Protocols 3 and 5 in the Supplementary Material~\cite[Section 2]{supp}). \fi

We also allow for freedom in the choice of target state. In \ifarxiv Protocol \ref{prot: DISV}\else Protocol 1\fi, we consider extracting the target state $\psi^*$ (e.g., the maximally entangled state) exactly. Due to imperfections such as noise however, this is out of reach for many practical applications, resulting in an overly stringent security criteria. Instead, the user may wish to relax this, and extract a state $\varepsilon$-close to $\psi^*$\footnote{The notion of closeness referred to here, and throughout the paper, is with respect to the trace distance, unless stated otherwise. Specifically, a state $\rho$ is said to be $\varepsilon$-close to $\sigma$ if $\frac{1}{2}|| \rho - \sigma ||_1 \leq \varepsilon$, where $||A ||_1 := \tr[\sqrt{A^{\dagger}A}]$ denotes the trace norm.}, or guarantee that the final state can be converted to any state $\varepsilon$-close to $\psi^*$ under free operations. We allow for this modification in \ifarxiv Protocols \ref{prot: DISV_general} to \ref{prot: DISV_general_sequencial_var}, \else Protocols 2 to 5 in the Supplementary Material~\cite[Section 2]{supp}, \fi where the distance of interest is the trace distance.

Additionally, we note that while the protocols outlined thus far are concerned with storing and certifying a single state, the proof techniques can be generalized to multiple states. However, the current bounds scale poorly with the number of states measured, and we leave improvements in sample efficiency, such as that presented in Ref.~\cite{govcanin2022sample}, to future work (see \cref{sec:disc} for further discussion).

We provide a template for all protocols considered in this work below, followed by a summary of each variant in \cref{tab:protocol_summary}. All variants are outlined in detail in \ifarxiv \cref{app:allProtocols}.\else the Supplementary Material~\cite[Section 2]{supp}. \fi

\chg{\begin{example}[CHSH-based DISC for the sequencial setup]\label{example: CHSH} 
Consider a source that produces $n$ bipartite states $\{ \rho_{i}\}_{i = 1}^{n}$. 
The goal is to certify that one of these states is $\varepsilon$-close to the maximally entangled pair of qubits $\proj{\phi^+}$, 
in the sense that it can be converted by LOCC operations to some state $\tau$ that is $\varepsilon$-close to $\proj{\phi^+}$. 
The protocol begins by choosing an index $t \in \{1, \dots, n\}$ at random and storing the corresponding state $\rho_t$ in a memory. Each of the remaining $n-1$ states are sent one by one to the same pair of non-communicating measurement devices\footnote{%
The assumption that the devices are non-communicating can be justified in two standard ways: 
(i) by employing shielding mechanisms that prevent any exchange of information, or 
(ii) by enforcing space-like separation during the measurement rounds, which guarantees 
that the devices cannot signal to one another.}. 

On every round $i \neq t$, each device receives a binary input $x_i$ (resp. $y_i$) chosen at random, and performs an unknown measurement on their share of the state $\rho_i$, producing a binary output $a_i$ (resp. $b_i$). A CHSH game is won in round $i$ if the outputs satisfy $a_i \oplus b_i = x_i \cdot y_i$, and is lost otherwise. From the $n-1$ measurement rounds, the empirical CHSH value is then computed, i.e., a value proportional to the number of rounds in which the CHSH game was won divided by $n-1$.  

Depending on the desired security requirement---namely the soundness and completeness parameters introduced in Section~\ref{sec:compSec}---an abort threshold for the CHSH value is chosen. If the empirical value does not exceed this value, the protocol aborts. Otherwise, the protocol accepts, and the stored state $\rho_t$ is certified to be $\varepsilon$-close to $\ketbra{\phi^+}{\phi^+}$ under LOCC operations and up to the chosen security requirement. Intuitively, a smaller $\varepsilon$ corresponds to a stricter acceptance condition: for example, exact certification of the target state requires the observed CHSH value to be very close to the maximal quantum value $2\sqrt{2}$. \end{example} }

\begin{tcolorbox}[colback=blue!5!white, colframe=blue!75!black, title= Template protocol]
Parameters: \\
$n \in \mathbb{N}^+$ -- number of rounds. \\ 
$p_{T}$ -- probability distribution of $T$. \\ 
$\omega_{\sharp}$ -- expected value of the Bell functional.  \\ 
$\kappa > 0$ -- parameter to set completeness error (see \cref{def: completeness}). \\ 
$\varepsilon \geq 0$ -- closeness parameter (See Table \ref{tab:protocol_summary}).  \\ 

\begin{enumerate}
    \item Generate a random \chg{number} $T$ according to $p_{T}$. If $T= t$, then store the state $\rho_{t}$ in the memory. 
    \item Generate the input strings $\B{X} = (X_{1},...,X_{n})$ and $\B{Y} = (Y_{1},...,Y_{n})$ uniformly. 
    \item Depending upon the setup (see Table \ref{tab:protocol_summary} and Section \ref{sec: measSetup}), perform the Bell tests using the generated input strings.
    \item Estimate the Bell parameter $\score_{\mathrm{exp}}$ using the input-output statistics. If $\score_{\mathrm{exp}} \leq \score_{\sharp} - \kappa$, then abort the protocol. 
    \item (Optional). From $\rho_{t}$, extract a state $\tilde{\rho}$ that is $\varepsilon$-close to the target state $\psi^*$.  
\end{enumerate}

\end{tcolorbox}

\begin{table}[h!]
\centering
\begin{tabular}{|l|l|l|l|}
\hline
Protocol & Setup      & Target & Extraction channel                                                                 \\ \hline
1              & Parallel   & $\psi^*$ & Yes                                                                   \\ \hline
2               & Parallel   & $\varepsilon$-close to $\psi^*$ & Yes\\ \hline
3             & Parallel   & $\varepsilon$-close to $\psi^*$ & No \\ \hline
4              & Sequential & $\varepsilon$-close to $\psi^*$ & Yes                    \\ \hline
5              & Sequential & $\varepsilon$-close to $\psi^*$  & No \\ \hline
\end{tabular}
\caption{Summary of all the protocols provided in this paper. ``Setup'' denotes the choice of measurement apparatus described in \cref{sec: measSetup}. ``Target'' indicates whether the final state is assessed to be close to a particular pure entangled state $\psi^*$, or any state $\varepsilon$-close to $\psi^*$ for some $\varepsilon>0$. ``Extraction channel'' indicates whether an extraction channel (within the chosen class of free operations) is applied to the final state or not. In general, such a channel cannot be implemented in a device-independent fashion. Note that the assumptions of the parallel setup can also be satisfied using a single memoryless measurement device.}
\label{tab:protocol_summary}
\end{table}

\section{Composable framework for Device-Independent State Certification}
\label{sec:compSec}
Having outlined the protocol structure, we here introduce a composable security definition for DISC. The security of any protocol must be rigorously defined, and the desired definition depends on its intended application. In cryptographic scenarios, the \textit{composable security definition}~\cite{Ben-OrMayers, BHLMO, Unruh, renner2005universally, Renner, bcktwo, Portmann14} has been adopted as the gold standard. This definition is particularly valuable since it enables a protocol to be securely integrated into a larger, composite system as a subroutine, and typically consists of two error parameters: \(\epsilon_s\), known as the \textit{soundness error} (see \cref{def: soundness}), and \(\epsilon_c\), known as the \textit{completeness error} (see \cref{def: completeness}).

To illustrate the importance of composability in a state certification context, consider one of the DISC protocols described in \cref{sec:introPro}. A possible application is to use the certified state to generate a bit of secret key. This key might subsequently serve as input to another cryptographic protocol. Thus, proving the security of the DISC protocol in isolation is insufficient; we require a security definition that is robust enough to guarantee security when the protocol is used as a building block within a larger system.

In non-cryptographic contexts, security proofs often involve statements such as: \textit{If the protocol does not abort, it outputs the desired state with a small failure probability.} However, such statements are generally not composable. For instance, consider an extreme adversarial strategy where the adversary sends copies of separable states in each round. By sheer luck, this strategy may achieve a large observed Bell value, despite being exceedingly unlikely. In this case, while the protocol mostly aborts, there remains a nonzero probability that it succeeds with an insecure output. If this output is then used in subsequent protocols (such as QKD) security guarantees break down, since conditioned on not aborting, the output state is always separable. \chg{By defining security conditioned on not aborting, the negligibly small probability that this attack succeeds has not been accounted for. In fact, the overall protocol is trivially secure under this attack since it almost always aborts.} 

Another example of insecurity arises from abort-based attacks. An adversary could manipulate the protocol to abort selectively in ways that are advantageous to them. For instance, consider a scenario where the adversary supplies the state 
\[
\rho = \sigma_1 \otimes \psi^* \otimes \psi^* \otimes \cdots \otimes \psi^*,
\]
where \(\sigma_1\) is a separable state and the remaining states are target states (\(\psi^*\)). The adversary could further instruct the devices to perform optimal measurements on \(\psi^*\) whenever a specific setting (e.g., \(t=1\)) is chosen. For all other settings, the adversary forces sub-optimal measurements, ensuring the protocol aborts. In such a scenario, whenever the protocol does not abort, the user is left with a separable state. Consequently, any future protocol relying on the non-aborting behavior of the DISC protocol is rendered insecure.

Adaptive adversarial strategies such as exploiting rare successful outcomes with separable states or leveraging abort-based attacks illustrate the inadequacy of non-composable security definitions. \chg{Specifically, while such attacks can never be ruled out, they can be tolerated using a composable definition as we illustrate below. This is essential to ensure security guarantees remain even when the protocol is integrated into a larger protocol.} 

To prove that a protocol is composably secure, it is necessary to define an \textit{ideal protocol} (see, e.g., Refs.~\cite{Portmann14,pirandola2020advances,primaatmaja2023security}), which aborts with the same probability as the real protocol, and produces a perfect output whenever it does not abort. Importantly, the ideal case is a hypothetical construct that cannot be implemented in practice-- it represents a theoretically perfect protocol. A real protocol is then deemed secure if it is virtually indistinguishable from this ideal. Specifically, security is quantified via a hypothetical game, in which the user performs either the real or ideal protocol. A third-party, referred to as the \textit{distinguisher}, is then tasked with identifying which protocol the user is running, with their success probability measuring security. Since the success probability of distinguishing two quantum states can be quantified using the trace distance, we define the following soundness criteria. 

\begin{definition}[Soundness]\label{def: soundness}
    A DISC protocol is $\epsilon_{s}$-sound if
    \begin{equation}
        \frac{1}{2}\| \rho_{\mathrm{real}} - \rho_{\mathrm{ideal}} \|_{1} \leq \epsilon_{s},
    \end{equation}
    where $\rho_{\mathrm{real}}$ and $\rho_{\mathrm{ideal}}$ are the real and ideal protocol outputs, respectively, and $\| M \|_{1} = \tr \sqrt{M^{\dagger} M}$ is the trace norm of an operator $M$. 
\end{definition}

Note that when computing the distinguishing probability, the distinguisher is assumed to have access to all available side information, along with the output of the real protocol. However, they cannot access any private data generated during the protocol's execution. 

Appropriately defining an ideal protocol can be challenging in general, since one must demonstrate indistinguishability under all possible circumstances. For DISC however, the choice is relatively straightforward, and is inspired by existing definitions in QKD. We consider an ideal DISC protocol which outputs the target state (or an equivalent state up to free operations if the final extraction step is omitted) whenever the protocol does not abort. Furthermore, the ideal protocol aborts with the same probability as the real protocol. Note however this abort probability is not revealed to the user. Similarly, if the goal is to certify a state that is $\varepsilon$-close (in trace distance) to the target state, the ideal protocol outputs any state \(\rho\) which is $\varepsilon$-close to $\psi^*$. \chg{A flow diagram comparing the real and ideal DISC protocol can be found in \cref{fig: flow}, and we provide technical details for all protocol variants in \ifarxiv \cref{app: security_proofs_par,app: security_proofs_sec}. \else the Supplementary Material~\cite[Sections 3 and 4]{supp}. \fi}

\begin{figure}[h!]
    \includegraphics[width=0.5\textwidth]{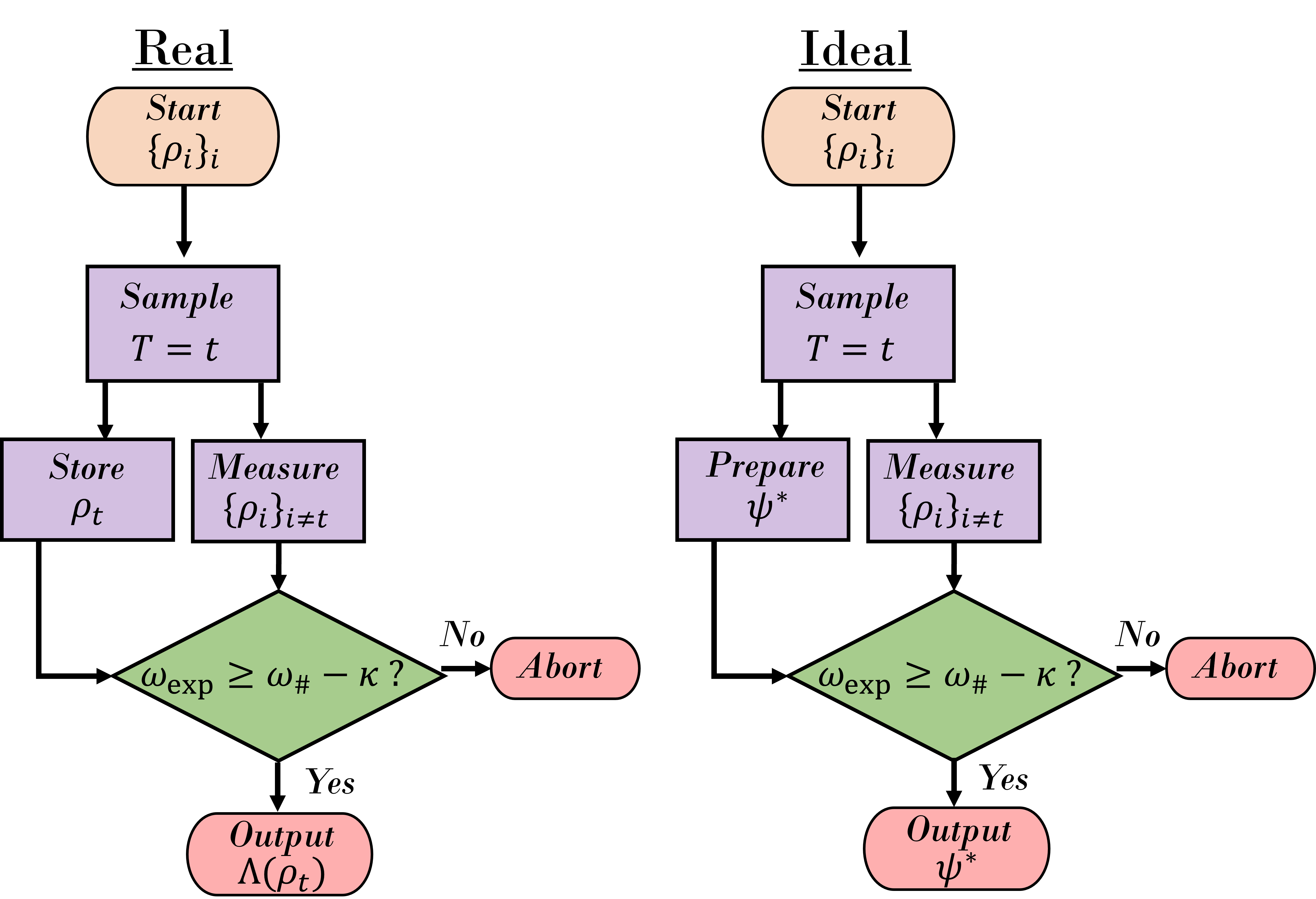}
        \caption{\chg{Flow diagram describing the structure of the real and ideal DISC protocol. Here, $\{\rho_{i}\}_{i=1}^{n}$ denotes a sequence of independent states emitted by the source, $T$ is a uniform random variable which takes values $t \in \{1,...,n\}$, and $\{\rho_{i}\}_{i\neq t}$ is shorthand for $\{\rho_{i} \, : \, i \in \{1,...,n\} \setminus t\}$. The empirical Bell value obtained by measuring $\{\rho_{i}\}_{i\neq t}$ is denoted $\omega_{\mathrm{exp}}$, while $\omega_{\sharp} - \kappa$ is the threshold Bell value below which the protocol aborts. For simplicity, we illustrate the variant in which the user outputs the extracted state $\Lambda(\rho_{t})$.}}
        \label{fig: flow}
\end{figure}

The soundness definition given above ensures that the real protocol is  \chg{virtually} indistinguishable from an ideal one, even under arbitrary adversarial strategies. \chg{This includes the aforementioned extreme cases where the protocol frequently aborts. In such situations, both the real and ideal protocols abort with high probability and are therefore nearly indistinguishable, and hence secure. Specifically, an $\epsilon_s$-sound DISC protocol ensures that an abort based attack can only succeed with probability no larger than $\epsilon_s$.}

However, if soundness is used as the sole criterion for security, a critical issue arises: a protocol that always aborts would trivially be considered secure, despite being of no practical use. To rule out such degenerate cases, we additionally require that there exists an honest implementation of the protocol that succeeds with high probability.  This is captured by the notion of completeness error.

\begin{definition}[Completeness]\label{def: completeness}
A DISC protocol is said to be $\epsilon_c$-complete if there exists an honest implementation that aborts with probability at most $\epsilon_c$.
\end{definition}

\chg{An honest implementation of the protocol is a source and set of measurement devices that model the expected behavior of an experimental implementation. 

\begin{example}[Honest implementation of a CHSH-based DISC protocol]
As a concrete example, recall the CHSH-based protocol discussed in the previous section (Example \ref{example: CHSH}). An example of an honest implementation in this context is a source $\mathsf{S}$ and measurement devices $\mathsf{M}$ with the following description.
The source prepares $n$ identical states $\{\rho_i\}_{i=1}^n$ of the form
\[
  \rho_i \;=\; (1-\mu)\,\proj{\phi^+}\;+\;\mu\,\frac{\id_{4}}{4}\,,
\]
where $\mu \in [0,1]$ (for example, one may take $\mu =   (4/3) \,\varepsilon$ to match a desired robustness parameter $\frac{1}{2}\|\rho_{i} - \phi^+\|_{1} = \varepsilon$) and $\id_{4}$ is the identity operator in dimension $4$.
The measurement devices perform the CHSH projective measurements with observables\footnote{Recall that, for binary projective measurements derived from observables with eigenvalues $\pm 1$, the POVM elements are $M_{a|x}=\tfrac{1}{2}\big(\id+(-1)^a A_x\big)$ and $N_{b|y}=\tfrac{1}{2}\big(\id+(-1)^b B_y\big)$, so that $M_{0|x}-M_{1|x}=A_x$ and $N_{0|y}-N_{1|y}=B_y$.}
\[
\begin{aligned}
A_0 &= \sigma_x,  & A_1 &= \sigma_z, \\
B_0 &= \tfrac{1}{\sqrt{2}}(\sigma_z + \sigma_x), &
B_1 &= \tfrac{1}{\sqrt{2}}(\sigma_z - \sigma_x),
\end{aligned}
\]
where $\sigma_{x},\, \sigma_{y}$ and $\sigma_{z}$ are the Pauli operators. Under these settings, the expected value of the CHSH functional for the honest implementation $(\mathsf{S},\mathsf{M})$ is
\[
  \omega(\mathsf{S},\mathsf{M})
  \;=\; \tr\!\big[\rho_i\,B_{\mathrm{CHSH}}\big]
  \;=\;(1-\mu)2\sqrt{2}\,.
\]
To achieve an $\epsilon_{c}$-complete protocol, one should compute a small deviation $\kappa > 0$ in the observed empirical value $\omega_{\mathrm{exp}}$ such that
\begin{equation*}
    \mathrm{Pr}[\omega_{\mathrm{exp}} \geq \omega(\mathsf{S},\mathsf{M}) - \kappa] \geq 1 - \epsilon_{c}.
\end{equation*}
We refer the reader to \ifarxiv \cref{app: security_proofs_par} \else the Supplemental Material~\cite[Section 3]{supp} \fi for an example of this calculation.
\end{example}
\begin{remark}
An honest implementation is one concrete intended realization of the protocol (honest source and devices) that could be implemented in the lab; it serves as a guarantee that acceptance occurs with high probability for well-behaved devices.
This is not to be confused with the ideal protocol in the definition of soundness (\cref{def: soundness}): when proving soundness, we do not assume that the devices behave honestly.\end{remark}}

Combining both completeness and soundness yields a complete notion of security within the composable framework.

\begin{definition}[Security]\label{def:total_security}
A DISC protocol is $(\epsilon_s, \epsilon_c)$-secure if it is both $\epsilon_s$-sound and $\epsilon_c$-complete.
\end{definition}

Specifically, this definition ensures that security is preserved when the protocol is composed with other information theoretic tasks. In particular, if a subsequent protocol is $\epsilon_s'$-sound, then the combined protocol that includes both the DISC protocol and the subsequent protocol will be $(\epsilon_s + \epsilon_s')$-sound. 

\chg{More concretely, consider a sequence of states $\rho_{\text{init}} = \bigotimes_{i=1}^{n}\rho_{i}$ which are certified by a DISC protocol $\mcP_{1}$, whose output is given by $\mcP_{1}(\rho_{\text{init}}) = \rho_{\text{real}}$ where $\mcP_{1}(\cdot )$ is a quantum channel describing the protocol's action. Let $\rho_{\mathrm{ideal}}$ denote the ideal output state of $\mcP_{1}$, and suppose $\mcP_{1}$ is $\epsilon_{1}$ sound according to \cref{def: soundness}, i.e.,
\[
\frac{1}{2} \|\rho_{\text{real}} - \rho_{\mathrm{ideal}}\|_1 \le \epsilon_{1}.
\]
Let $\mathcal{P}_{2}$ be a quantum channel describing the action of a subsequent protocol, which is $\epsilon_{2}$-sound. In particular, $\mathcal{P}_{2}$ satisfies 
\[
\frac{1}{2} \|\mathcal{P}_{2}(\rho_{\mathrm{ideal}}) - \sigma_{\mathrm{ideal}}\|_1 \le \epsilon_{2},
\]
where $\sigma_{\mathrm{ideal}}$ is the ideal output state of $\mathcal{P}_{2}$. What can we say about the composition $\mcP_{2} \circ \mcP_{1}$ when acting on $\rho_{\text{init}}$?  Using the triangle inequality and the soundness of $\mcP_{2}$, 
\begin{multline*}
\frac{1}{2} \|\mathcal{P}_{2} \circ \mcP_{1}(\rho_{\mathrm{init}}) - \sigma_{\mathrm{ideal}}\|_1 
\\ \le \frac{1}{2} \|\mathcal{P}_{2} \circ \mcP_{1}(\rho_{\mathrm{init}}) - \mcP_{2}(\rho_{\mathrm{ideal}})\|_1 + \epsilon_2.
\end{multline*}
By the contractivity of the trace distance under quantum channels (see, e.g.,~\cite[Exercise 9.1.9]{Wilde2013}),
\[
\frac{1}{2} \|\mathcal{P}_{2} \circ \mcP_{1}(\rho_{\mathrm{init}}) - \mcP_{2}(\rho_{\mathrm{ideal}})\|_1 \le \frac{1}{2} \| \mcP_{1}(\rho_{\mathrm{init}}) - \rho_{\mathrm{ideal}}\|_1 \le \epsilon_1,
\]
where the second equality follows from the soundness of $\mcP_{1}$. Combining these facts,
\[
\frac{1}{2} \|\mathcal{P}_{2} \circ \mcP_{1}(\rho_{\mathrm{init}}) - \sigma_{\mathrm{ideal}}\|_1 \le \epsilon_1 + \epsilon_2.
\]
Thus, the protocol formed by composing $\mcP_{1}$ and $\mcP_{2}$ remains (at least) additively secure. A high level of security for the composite protocol can therefore be achieved whenever the individual protocols are themselves highly secure. }
\section{Security proof}
\label{sec:proofMain}
We are now ready to present our main result: proving the DISC protocol described in \cref{sec:introPro} is secure, according to the composable definition given in \cref{sec:compSec}. For simplicity, we choose the target state to be the maximally entangled state $\ket{\psi^*} = \ket{\phi^+}$. However, the proof technique is general enough to replace $\ket{\phi^+}$ with any pure state which can be robustly self-tested, in the sense discussed in \cref{sec: extractability}. Furthermore, we restrict our analysis to the case where the abort condition is defined via a single linear function of the statistics. Specifically, we consider Bell functionals of the form  
\begin{equation}\label{eqn: Bell_score}
   \score := \sum_{x,y \in \{0,1\}} \gamma_{xy} \langle A_x  B_y \rangle, 
\end{equation}
for $\gamma_{xy} \in \mathbb{R}$, and the protocol aborts if the observed value $\score_{\text{exp}}$ is below a threshold value $\score_{\sharp} - \kappa$, where $\kappa > 0$ is chosen to achieve the desired completeness error. \chg{The associated Bell operator is given by $B$ in \eqref{eq:bop},} and we define the maximum Bell coefficient $\gamma^* := \max_{x,y}|\gamma_{xy}|$. When dealing with the sequential setup, we present our results for the CHSH functional, i.e., the case $\gamma_{00} = \gamma_{01} = \gamma_{10} = -\gamma_{11} = 1$, and leave generalizations to future work. For the parallel setup however, the protocol allows for the use of any Bell functional of the form \eqref{eqn: Bell_score}, such as the family~\cite[Proposition 1]{WBC3}. \ifarxiv The security proofs of all parallel protocols (Protocols \ref{prot: DISV} to \ref{prot: DISV_general_var} described in \cref{sec:parPro}) are given in \cref{app: security_proofs_par}. For the sequential protocols (Protocols \ref{prot: DISV_general_sequencial} and \ref{prot: DISV_general_sequencial_var} described in \cref{app:secSetup}), security is proven in \cref{app: security_proofs_sec}. \else The security proofs of all parallel protocols (Protocols 1 to 3) are given in the Supplementary Material~\cite[Section 3]{supp}. For the sequential protocols (Protocols 4 and 5), security is proven in the Supplementary Material~\cite[Section 4]{supp}. \fi 

In the remainder of this section, we present the security proof of one of the aforementioned protocols in detail. All other cases follow a similar structure, and details can be found in the \ifarxiv Appendix. \else Supplementary Material~\cite{supp}. \fi Specifically, below we consider the parallel protocol when the objective is to certify a state $\varepsilon$-close to the target state, \ifarxiv corresponding to \cref{prot: DISV_general} in \cref{sec:parPro}. \else corresponding to Protocol 2 in the Supplementary Material~\cite[Section 2A]{supp}. \fi

\begin{theorem}\label{thm:sec_main_text}
The DISC protocol 2 is $\epsilon_{s}$-sound with $\epsilon_{s}$ \chg{equal to the following:}
\begin{equation*}
\inf_{\delta  > 0} \max\Big\{ \exp \Big( - \frac{n - 1}{\gamma^*} \delta^{2} \Big), \, G_{\varepsilon}\Big( \frac{n-1}{n}[ \score_{\sharp} - \kappa - \delta] + \frac{\eta^{\mathrm{Q}}_{\mathrm{min}}}{n}\Big) \Big\},
\end{equation*}  
where $G_{\varepsilon}(\score)$ is any non-increasing concave function that upper-bounds the function  
$$
\Theta(\sqrt{1 - \Xi_{B}(\omega)} - \varepsilon) \cdot (\sqrt{1 - \Xi_{B}(\omega)} - \varepsilon).
$$  
Here, $\Theta$ is the Heaviside step function, $n$ is the total number of independent states generated by the source, the Bell value $\omega$ is given by \eqref{eqn: Bell_score}, which has a minimum quantum value $\eta^{\mathrm{Q}}_{\mathrm{min}}$, $\score_{\sharp} - \kappa$ is the value that defines the abort condition where $\kappa > 0$ chosen to achieve the desired completeness error, and $\Xi_{B}(\score)$ is the extractability function \chg{given in \cref{def:extract} (see also Figure \ref{fig: extract})}. 
\end{theorem}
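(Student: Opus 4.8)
The plan is to construct an explicit ideal protocol and bound the trace distance between its output and that of the real protocol, optimising over a free parameter $\delta>0$ that splits the analysis into a ``statistical estimation'' part and a ``certification'' part. First I would set up the ideal protocol: it aborts exactly when the real protocol aborts (i.e.\ conditioned on the same observed statistics), and on non-abort it outputs a state that is $\varepsilon$-close to $\phi^{+}$. The soundness error is then the trace distance between the joint state (output register together with the abort flag and all side information) in the real versus ideal case. Because the two protocols share the same abort event, this distance reduces to the probability-weighted ``infidelity gap'' on the non-abort branch, i.e.\ roughly $\Pr[\text{accept}]$ times how far the stored-and-possibly-extracted state $\rho_{t}$ can be from the $\varepsilon$-ball around $\phi^{+}$, measured via the extractability $\Xi_{B}$.

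Next I would handle the statistical step. The empirical Bell value $\score_{\mathrm{exp}}$ is an average over the $n-1$ measured rounds of a quantity bounded in magnitude by $\gamma^{*}$ (after absorbing the single-round range into $\gamma^{*}$), and since the states are independent, a martingale/Azuma–Hoeffding argument — even in the sequential setup where the devices have memory, the round-wise contributions still form a bounded-difference martingale — gives that the event $\{\score_{\mathrm{exp}} \ge \bar{\score} + \delta\}$, where $\bar{\score}$ is the ``true'' average expected Bell value of the measured rounds, has probability at most $\exp(-(n-1)\delta^{2}/\gamma^{*})$. So up to this first term in the $\max$, we may assume the real average Bell value of the $n-1$ measured states is at least $\score_{\sharp}-\kappa-\delta$. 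This controls one of the two quantities in the maximum.

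Then comes the certification step, which I expect to be the crux. Conditioned on acceptance with a genuinely high average Bell value on the measured rounds, I want to conclude something about the \emph{unmeasured} round $t$. The key observation is the symmetry built into the protocol: $T$ is chosen uniformly (or from a known distribution) and \emph{independently} of everything, and the states are independent, so the expected Bell value of the round that gets stored is, on average over $T$, tied to the average Bell value of the measured rounds. Concretely, if the average over all $n$ rounds of the single-round expected Bell values is $\bar{\omega}_{n}$, then the measured average is $\frac{n}{n-1}\bar{\omega}_{n} - \frac{1}{n-1}\omega_{t}$, and rearranging with the trivial bound $\omega_{t}\ge\eta^{\mathrm{Q}}_{\mathrm{min}}$ gives that the stored round's expected Bell value is at least $\frac{n-1}{n}[\score_{\sharp}-\kappa-\delta] + \frac{1}{n}\eta^{\mathrm{Q}}_{\mathrm{min}}$ — exactly the argument appearing inside $G_{\varepsilon}$. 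Feeding this Bell value into the extractability bound $\Xi_{B}$ gives a lower bound on the LOCC-fidelity of $\rho_{t}$ with $\phi^{+}$, hence (via $F\ge 1-\varepsilon^{2}$-type relations between fidelity and trace distance, and the definition of the $\varepsilon$-ball) an upper bound of the form $\Theta(\sqrt{1-\Xi_{B}(\omega)}-\varepsilon)\cdot(\sqrt{1-\Xi_{B}(\omega)}-\varepsilon)$ on the relevant trace distance; finally I would pass to any non-increasing concave majorant $G_{\varepsilon}$ so that Jensen's inequality lets me replace the random Bell value of round $t$ by its lower bound. Taking the maximum of the two error contributions and then the infimum over $\delta>0$ yields the stated $\epsilon_{s}$. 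The main obstacle is making the certification step fully rigorous in the sequential-memory setting: one must be careful that conditioning on the acceptance event and on the value of $T$ does not destroy the independence/symmetry used to transfer the Bell guarantee from the measured rounds to round $t$, and that the extractability bound — which is an operator inequality valid round-by-round — can be applied to the conditional reduced state of round $t$ without the adversary's memory spoiling it; this is where the independence assumption on the source does the real work.
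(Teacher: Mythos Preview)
Your overall architecture matches the paper's proof: a deterministic two-case split governed by a free parameter $\delta>0$, Hoeffding for the ``low true average'' case, and a concavity/Jensen argument for the ``high true average'' case. Two points in your certification step are not right as written, though, and fixing them is exactly what makes the argument go through.

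First, the quantity you can lower bound by $\frac{n-1}{n}[\omega_{\sharp}-\kappa-\delta]+\frac{\eta^{\mathrm{Q}}_{\mathrm{min}}}{n}$ is the \emph{total} average $\bar{\omega}_{n}=\frac{1}{n}\sum_{i=1}^{n}\mu_{i}$ (equivalently $\mathbb{E}_{T}[\mu_{T}]$), \emph{not} the Bell value $\mu_{t}$ of the particular stored round. Consequently you cannot ``feed this Bell value into $\Xi_{B}$'' to bound $F(\Lambda_{t}(\rho_{t}),\phi^{+})$: the extractability inequality is a per-round statement $F(\Lambda_{t}(\rho_{t}),\phi^{+})\ge\Xi_{B}(\mu_{t})$, and an adversary can make $\mu_{t}$ arbitrarily small while keeping the measured rounds' average high. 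The correct order (and this is how the paper proceeds) is: apply extractability to each round to get the per-round cap $G_{\varepsilon}(\mu_{t})$ on the trace-distance contribution, then average over the uniform $T$ to obtain $\frac{1}{n}\sum_{t}G_{\varepsilon}(\mu_{t})$, and only then invoke concavity of $G_{\varepsilon}$ (Jensen) together with the lower bound on $\bar{\omega}_{n}$. Relatedly, the case split must be on the $t$-independent quantity $\bar{\omega}_{n}$, not on the $t$-dependent measured average; otherwise different values of $t$ fall into different cases and you do not get a clean $\max$. The paper also makes the ideal protocol explicit by outputting the mixture $(1-\lambda_{t})\Lambda_{t}(\rho_{t})+\lambda_{t}\phi^{+}$ with $\lambda_{t}$ chosen so that this state lies in the $\varepsilon$-ball; this is what produces the factor $\Theta(\sqrt{1-\Xi_{B}(\mu_{t})}-\varepsilon)\cdot(\sqrt{1-\Xi_{B}(\mu_{t})}-\varepsilon)$ you wrote down.

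Second, Protocol~2 is the \emph{parallel} protocol: the $n-1$ measurements are independent, so plain Hoeffding on independent bounded variables suffices and your martingale/Azuma digression and the worries about device memory at the end are unnecessary here (they belong to Protocols~4 and~5).
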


Proof can be found \ifarxiv in \cref{sec:DISV_general} (cf. \cref{lem:p2sound}). \else in Supplementary Material~\cite[Section 3B]{supp}. \fi \chg{\noindent For this work, we choose \(G_{\varepsilon}(\cdot)\) to be the function  
\begin{equation}\label{eqn: G_epsilon}
  G_{\varepsilon}(\score) := 
  -\,\mathrm{convenv}\!\Big( - \xi_{B}(\omega, \varepsilon) \cdot \Theta\big(\xi_{B}(\omega, \varepsilon)\big)\Big),     
\end{equation}
where 
\(\xi_{B}(\omega, \varepsilon) := \sqrt{1 - \Xi_{B}(\omega)} - \varepsilon\), and \(\mathrm{convenv}(\cdot)\) denotes the convex envelope, i.e., the tightest convex lower bound of a given function (see the \ifarxiv \cref{rem:g_fun} \else the Supplementary Material~\cite[Remark 5]{supp} \fi for details on its computation). This choice ensures that \(G_{\varepsilon}(\score)\) is the optimal concave function required in Theorem~\ref{thm:sec_main_text}. Moreover, while \cref{thm:sec_main_text} is concerned with soundness, we also provide a proof of completeness in \ifarxiv \cref{lem:p1complete}. \else the Supplementary Material~\cite[Lemma 2]{supp}. \fi} 

The soundness parameter $\epsilon_{s}$ depends on two terms: $\exp \left( - \frac{n - 1}{\gamma^*} \delta^{2} \right)$ and $G_{\varepsilon}\left( \frac{n-1}{n}[ \score_{\sharp} - \kappa -  \delta] \right)$. Both need to be sufficiently small to guarantee security, and each can be understood as a distinct type of penalty. The former penalty increases when the number of copies used for testing is small. Finite statistics effects are prominent in this case, resulting in low statistical confidence of the observed Bell value and weaker security. The second penalty is defined in terms of the so-called \textit{extractability} \cite{bardyn2009device,BNS15,kaniewski2016analytic}, which is a function that indicates how close a state achieving a given Bell violation is to any state in the equivalence class \chg{(i.e., any state that can be transformed into the target state using the allowed class of operations)} of the target state. Bounding this quantity then becomes the central task in proving security. 

\section{Extractability}
\label{sec: extractability}
Let us here elaborate on the extractability, $\Xi_{B}(\score)$, a robustness measure in the context of the security proof. The extractability represents the worst case fidelity between the target state and any state achieving a given Bell violation, following the application of an extraction channel from a set of free operations. Typically, these are taken to be local operations (LO), which originates from the notion of local isometries discussed in \cref{sec:self-test} (see Ref.~\cite{Coopmans19} for the precise connection). This quantity has played a central role in previous works on DISC~\cite{govcanin2022sample,martins24}.
\begin{figure}
    \centering
    \includegraphics[width=\linewidth]{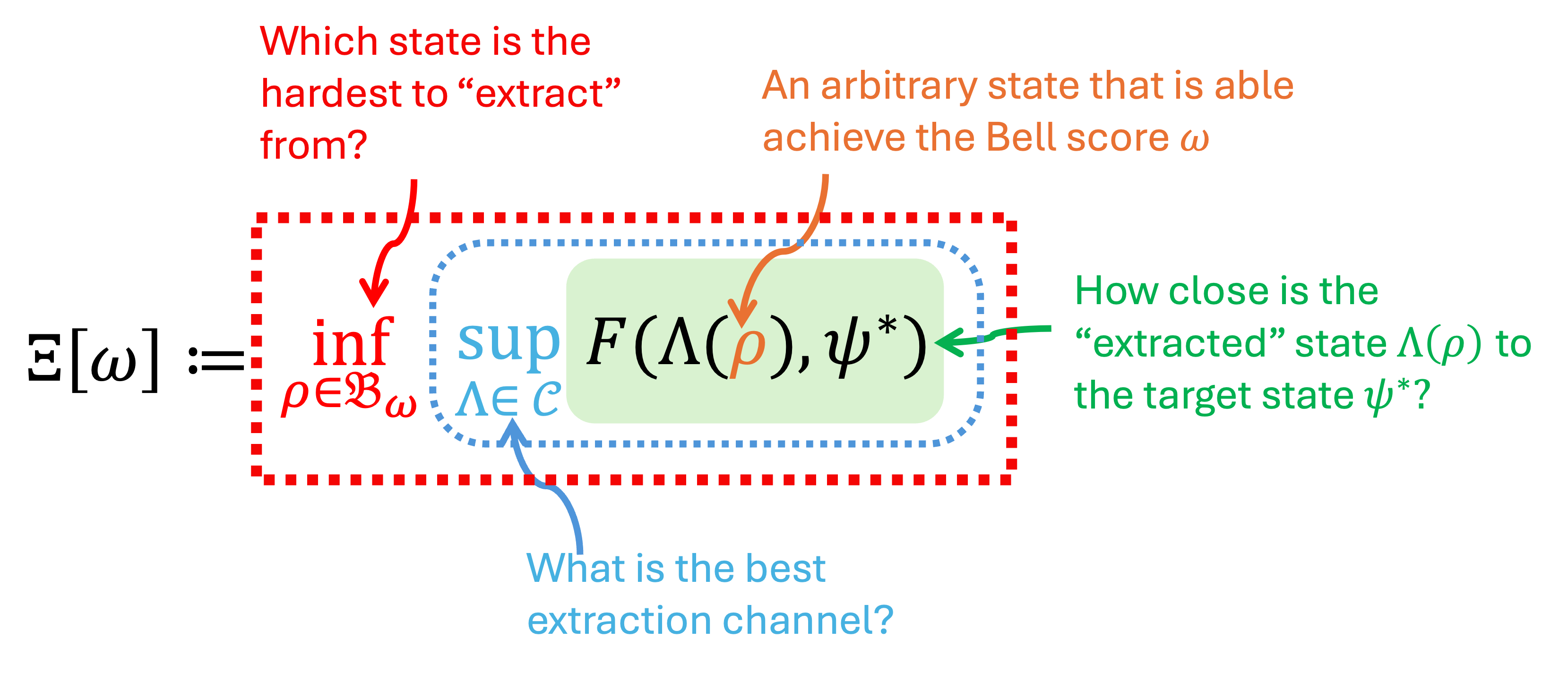}
    \caption{\chg{Definition of the extractibility function. Extractibility quantifies how close a state can be to a target state (or to one that can be converted to the target under a chosen set of channels). It is defined via a min–max optimization, which is generally difficult to compute or bound.}}
    \label{fig: extract}
\end{figure}

Throughout, we consider the extractability under a general class of operations, denoted by $\mathcal{C}$. The choice of $\mathcal{C}$ can be tailored to specific setups, particularly with future protocols in mind. In tasks such as QKD and entanglement distillation, LOCC can be chosen as the class of free operations. The LOCC class includes the LO class, and its use is beneficial since it provides tighter security. Specifically, computing tight lower bounds on the LO extractability for even the simplest case of the CHSH inequality is currently an open question. It is further known that for any CHSH violation up to approximately $2.05$~\cite{Coopmans19,Valcarce_2020}, a non-trivial LO extractability is not possible\footnote{\chg{The trivial extractability in this context is the maximum fidelity between any separable state and the target state (see~\cite[Section 3.6]{SupicSelfTest} for details). This value can always be achieved regardless of the underlying state. See also Example \ref{example: extract}}.}. In contrast, tight bounds on the LOCC extractability for the CHSH inequality are known to be non-trivial for any non-zero violation~\cite{bardyn2009device}.  

We emphasize here that permitting LOCC extraction channels does not enable classical communication between the devices during the Bell test. Indeed, if this were the case, DI certification would become impossible. In our protocol, all classical communication happens strictly after the devices have performed their measurements, during which the user can enforce space-like separation. We also remark that when the target state is the singlet $\ket{\phi^+}$, extractability under LOCC is closely related to the one-shot distillable entanglement defined in~\cite{AFB19}.   

\begin{definition}[Extractability]
Let $\mathcal{C}$ be a class of free operations between a physical system $Q_{A}Q_{B}$ and a target system $\hat{Q}_{A}\hat{Q}_{B}$, $B$ be a Bell operator, and  $\mathcal{B}_{\score} \subset \mcS(\mcH_{Q_{A}}\otimes \mcH_{Q_{B}})$ be the set of states in $Q_{A}Q_{B}$ that can achieve the Bell value $\langle B \rangle \geq \score$ using some measurements. Given a target state $\psi^* \in \mcS(\mcH_{\hat{Q}_{A}}\otimes \mcH_{\hat{Q}_{B}})$, the extractability $\Xi_{B}(\score)$ is defined via the following min-max optimization problem: 
\begin{equation}\label{eqn: extractability}
    \Xi_{B}(\score) := \inf_{\rho \in \mathcal{B}_{\score}} \sup_{\Lambda \in \mathcal{C}} F(\Lambda(\rho), \psi^*),
\end{equation}
where $F(\rho, \sigma) = \| \sqrt{\rho}\sqrt{\sigma}\|_{1}^{2}$ is the fidelity between two states $\rho,\sigma \in \mcS(\mcH)$. \label{def:extract}
\end{definition}

\chg{\begin{example}[Extractability of CHSH under LO channels]\label{example: extract}
    To gain some intuition on how the extractability behaves, consider the CHSH extractability. For simplicity, let us consider the case where $Q_{A},\,Q_{B}\,,\hat{Q}_{A}$ and $\hat{Q}_{B}$ are all qubit systems, $\mcC$ is the set of LO operations and $\psi^* = \phi^{+}$. It is known that all two-qubit states $\rho$ which achieve maximum violation, $\langle B_{\text{CHSH}} \rangle = 2\sqrt{2}$, are of the form $(U_{A} \otimes U_{B})\phi^+ (U_{A} \otimes U_{B})^{\dagger}$ for some local unitary $U_{A} \otimes U_{B}$ (see, e.g.,~\cite[Lemma 10]{WBC3}). In other words, every $\rho \in \mcB_{2\sqrt{2}}$ must be of this form. We then immediately see that for any $\rho \in \mcB_{2\sqrt{2}}$, there exists a $\Lambda \in \mcC$ such that
    \begin{equation}
        F(\Lambda(\rho), \phi^+) = 1,
    \end{equation}
    namely, $\Lambda(\rho) = (U_{A} \otimes U_{B})^{\dagger}\rho (U_{A} \otimes U_{B})$, implying $\Xi_{B_{\text{CHSH}}}^{\text{LO}}(2\sqrt{2}) = 1$.

    For the other extreme, consider the set of possible states which achieve a CHSH value of $\omega = 2$. The extractability can always be lower bounded by choosing a fixed channel of the form $\Lambda(\rho) = \ketbra{00}{00} \ \forall \rho$, which satisfies $F(\Lambda(\rho),\phi^+) = 1/2$. We therefore see $\Xi_{B_{\text{CHSH}}}^{\text{LO}}(2) \geq 1/2$. Moreover, the state $\ketbra{00}{00}$ belongs to $\mcB_{2}$, and $F(\Lambda(\ketbra{00}{00}),\phi^{+}) \leq 1/2$ for all local channels $\Lambda$\footnote{This follows from the fact that the maximum fidelity between $\phi^+$ and $\Lambda(\ketbra{00}{00})$ is achieved when $\Lambda(\ketbra{00}{00})$ is any pure separable state corresponding to the largest Schmidt coefficient of $\ket{\phi^+}$.}. This implies $\Xi_{B_{\text{CHSH}}}^{\text{LO}}(2) \leq 1/2$, and hence $\Xi_{B_{\text{CHSH}}}^{\text{LO}}(2) = 1/2$.     

    The extractability of CHSH under LO channels thus takes values in the interval $[1/2,1]$ for $\omega \in [2,2\sqrt{2}]$. It was shown by Kaniewski~\cite{kaniewski2016analytic} that a lower bound for all $\omega \in [2,2\sqrt{2}]$ is given by
    \begin{equation}
        \Xi_{B_{\text{CHSH}}}^{\text{LO}}(\omega) \geq \max\Big\{\frac{1}{2}\Big( 1 + \frac{\omega - \omega^*}{2\sqrt{2} - \omega^*}\Big),\frac{1}{2}\Big\},
    \end{equation}
    where $\omega^* = (16 + 14\sqrt{2})/17 \approx 2.11$ is the threshold CHSH value below which the extractibility is trivial (below $1/2$). It was shown by Refs.~\cite{Coopmans19,Valcarce_2020} that this threshold cannot be lowered below $\approx 2.05$.  This contrasts the tight bound on $\Xi_{B_{\mathrm{CHSH}}}$ when $\mcC$ is taken to be the class of LOCC operations, derived by Bardyn \textit{et al.}~\cite{bardyn2009device}:
    \begin{equation}
        \Xi_{B_{\text{CHSH}}}^{\text{LOCC}}(\omega) = \frac{1}{2}\Big( 1 + \frac{\omega - 2}{2\sqrt{2} - 2}\Big). \label{eq:LOCC_CHSH}
    \end{equation}
    
\end{example}}

The extractability function involves two levels of optimization. The inner optimization considers a state $\rho$ that can achieve the Bell value $\score$ and aims to transform it, via operations in the class $\mathcal{C}$, to a state as close as possible (in terms of fidelity, rather than trace distance) to the target state $\psi^*$. This provides a meaningful measure of how close a state is to the equivalence class of the target state. The outer optimization then finds the state $\rho$ for which this distance is smallest, provided $\rho$ can achieve the given Bell value $\score$ via some local measurement strategy.  

Computing the extractability function is challenging in general, since it involves a min-max optimization. Furthermore, the optimization runs over all channels in a given class, and all states compatible with a Bell value $\score$, without assuming their dimension. Additionally, the constraint $\rho \in \mathcal{B}_{\score}$ is nonlinear in both the state and the measurements. To address these challenges, existing works (focused on LO extractability) have bypassed the inner optimization over channels by selecting a fixed channel for all states $\rho$ and values $\omega$, resulting in a valid lower bound~\cite{kaniewski2016analytic}. The issue of not assuming the system dimension can, in general, be handled numerically via moment matrix approaches~\cite{BNS15}, or in the special case of Bell scenarios with binary inputs and binary outputs via a reduction to qubits known as Jordan's lemma~\cite{Jordan}.   

We here derive a sequence of lower bounds on the LOCC extractability in the minimal Bell scenario for Bell functionals of the form \eqref{eqn: Bell_score}. Moreover, our bounds can be \chg{improved further} at the expense of increasing computational cost. \chg{Our results are summarized in \cref{thm: main_text_extract}, and the following text explains how this can be used to obtain a sequence of lower bounds.}

\begin{theorem}\label{thm: main_text_extract}
Let $B$ be any Bell functional of the form \eqref{eqn: Bell_score} in the minimal Bell scenario. Then the LOCC extractability $\Xi_{B}(\score)$ satisfies:
\begin{equation}
    \Xi_{B}(\score) \chg{\geq} \mathrm{convenv} \Big(\min_{(a,b) \in \mathcal{F}_{\score}} f_{a,b}(\score)\Big),
\end{equation}
where $\mathrm{convenv}(\cdot)$ is the convex envelope (\chg{tightest} convex lower bound)  and 
\begin{equation}\label{eqn: f_{a,b}}
    \begin{aligned}
        f_{a,b}(\score) := \max \ & \lambda \, \score + \mu \\
        \mathrm{s.t.} \ & \ \sigma - \lambda B(a,b) - \mu \id_{4} \geq 0, \\
        & \ \tr_{\hat{Q}_{A}}[\sigma] = \ \tr_{\hat{Q}_{B}}[\sigma] = \frac{\id_{2}}{2}, \\
        & \ \sigma \in \mathcal{S}_{2}, \ \lambda \geq 0, \ \mu \in \mathbb{R}.
    \end{aligned} 
\end{equation}
Here $\mathcal{S}_{2} = \mcS(\mcH_{\hat{Q}_{A}} \otimes \mcH_{\hat{Q}_{B}})$ is the set of two-qubit density operators and $B(a,b)$ is the Bell operator $B$ constructed from the qubit observables:
\begin{eqnarray*}
    A_{x} &=& \cos(a) \, \sigma_{Z} + (-1)^{x} \sin(a) \, \sigma_{X}, \\
    B_{y} &=& \cos(b) \, \sigma_{Z} + (-1)^{y} \sin(b) \, \sigma_{X}.
\end{eqnarray*}
The set $\mathcal{F}_{\score}$ is defined as:
\begin{equation*}
    \mathcal{F}_{\score} = \left\{ (a,b) \in [0, \pi/2]^{\times 2} : \exists \rho \in \mathcal{S}_{2} \ \mathrm{s.t.} \ \tr[B(a,b)\rho] \geq \score \right\}.
\end{equation*}
\label{thm:SDP_main}
\end{theorem}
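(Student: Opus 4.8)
The plan is to reduce the dimension-free, device-independent optimisation to a family of two-qubit semidefinite programs by Jordan's lemma, and then to certify the value of each program by combining an explicit LOCC extraction protocol with a weak-duality estimate.

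I would first pass to qubit strategies. Fix $\rho\in\mathcal{B}_\omega$ together with POVMs achieving $\tr[B\rho]\geq\omega$. By Naimark dilation the POVMs may be taken projective after appending fixed local ancillas, and since appending (and later discarding) a fixed local ancilla is an LOCC operation, this replacement does not change $\sup_{\Lambda\in\mathrm{LOCC}}F(\Lambda(\rho),\phi^+)$. Jordan's lemma then simultaneously block-diagonalises Alice's $\pm1$ observables $A_0,A_1$ into blocks of dimension at most two, and likewise Bob's. The channel $\mathcal{D}$ that measures both block indices projectively and records and dephases them is LOCC and commutes with $B=\sum_{x,y}\gamma_{xy}A_x\otimes B_y$, so $\mathcal{D}(\rho)=\sum_{j,k}p_{jk}\rho_{jk}$ still has Bell value $\sum_{j,k}p_{jk}\omega_{jk}=\tr[B\rho]\geq\omega$, where $\rho_{jk}$ is a two-qubit state and $\omega_{jk}:=\tr[B^{(jk)}\rho_{jk}]$. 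Inside each block, local unitaries — which I would absorb into the final extraction — bring the observables to the planar symmetric form $A_x=\cos a_j\,\sigma_Z+(-1)^x\sin a_j\,\sigma_X$, $B_y=\cos b_k\,\sigma_Z+(-1)^y\sin b_k\,\sigma_X$ with $a_j,b_k\in[0,\pi/2]$; blocks on which an observable acts as $\pm\id$ need a separate, elementary treatment that I would defer to the end. Since $\mathcal{D}$ is LOCC and (as $\phi^+$ is pure) $\tau\mapsto F(\tau,\phi^+)=\langle\phi^+|\tau|\phi^+\rangle$ is linear, coordinating block-conditioned extractions by classical communication gives $\sup_\Lambda F(\Lambda(\rho),\phi^+)\geq\sum_{j,k}p_{jk}\,h(\rho_{jk})$, where $h(\tau):=\sup_{\Lambda\in\mathrm{LOCC}}F(\Lambda(\tau),\phi^+)$.

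The core step is the per-block estimate $h(\tau)\geq f_{a,b}(\omega')$ whenever $\tau$ is a two-qubit state with $\tr[B(a,b)\tau]\geq\omega'$. I would first prove $h(\tau)\geq M(\tau):=\max\{\tr[\sigma\tau]:\sigma\in\mathcal{S}_2,\ \tr_{\hat Q_A}\sigma=\tr_{\hat Q_B}\sigma=\id_2/2\}$. A maximiser $\sigma^*$, having maximally mixed marginals, is local-unitarily equivalent to a Bell-diagonal state $\sum_k\mu_k\proj{\beta_k}$ (diagonalise the correlation matrix by local unitaries); then the LOCC protocol ``apply those local unitaries to $\tau$, then apply the Pauli $\sigma_{i_k}\otimes\sigma_{i_k}$ — chosen so that $\sigma_{i_k}\otimes\sigma_{i_k}\ket{\phi^+}=\ket{\beta_k}$ — with probability $\mu_k$'' produces a state whose overlap with $\phi^+$ equals $\sum_k\mu_k\langle\beta_k|\tau'|\beta_k\rangle=\tr[\sigma^*\tau]$. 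Second, by weak duality: for any feasible triple $(\sigma,\lambda,\mu)$ of $f_{a,b}(\omega')$, testing $\sigma-\lambda B(a,b)-\mu\id_4\succeq0$ against $\tau\succeq0$ gives $\tr[\sigma\tau]\geq\lambda\tr[B(a,b)\tau]+\mu\geq\lambda\omega'+\mu$ (using $\lambda\geq0$, $\tr\tau=1$), and since the constraints force $\sigma$ to have maximally mixed marginals this yields $M(\tau)\geq f_{a,b}(\omega')$. Taking $\omega'=\omega_{jk}$ and noting $(a_j,b_k)\in\mathcal{F}_{\omega_{jk}}$ by construction, I obtain $h(\rho_{jk})\geq f_{a_j,b_k}(\omega_{jk})\geq G_0(\omega_{jk})$ with $G_0(\omega):=\min_{(a,b)\in\mathcal{F}_\omega}f_{a,b}(\omega)$.

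Finally I would assemble the global bound. Each $f_{a,b}$ is a supremum of affine functions of $\omega$ with non-negative slopes, hence non-decreasing, and $\mathcal{F}_\omega\subseteq\mathcal{F}_{\omega'}$ for $\omega'\leq\omega$, so $G_0$ — and therefore its convex envelope — is non-decreasing. Using Jensen's inequality for $\mathrm{convenv}(G_0)$ and $\sum_{j,k}p_{jk}\omega_{jk}\geq\omega$, $\sum_{j,k}p_{jk}G_0(\omega_{jk})\geq\mathrm{convenv}(G_0)\big(\sum_{j,k}p_{jk}\omega_{jk}\big)\geq\mathrm{convenv}(G_0)(\omega)$, and chaining the inequalities gives $\sup_{\Lambda\in\mathrm{LOCC}}F(\Lambda(\rho),\phi^+)\geq\mathrm{convenv}(G_0)(\omega)$ for every $\rho\in\mathcal{B}_\omega$, that is, $\Xi_B(\omega)\geq\mathrm{convenv}\big(\min_{(a,b)\in\mathcal{F}_\omega}f_{a,b}(\omega)\big)$. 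I expect the genuinely non-routine ingredient to be the per-block inequality $h(\tau)\geq M(\tau)$ — identifying ``maximise $\tr[\sigma\tau]$ over two-qubit states $\sigma$ with maximally mixed marginals'' as the right semidefinite relaxation of the intractable LOCC singlet-fraction problem, and matching it with the Bell-twirling protocol; the Jordan reduction (in particular the $\pm\id$ blocks and the Naimark step) and the convex-envelope bookkeeping are more routine.
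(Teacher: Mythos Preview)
Your proposal is correct and follows essentially the same route as the paper: Jordan's lemma plus an LOCC block-index measurement reduce to two-qubit blocks, the per-block singlet fraction is lower-bounded by $\max_{\sigma\in\mathcal{BD}}\tr[\sigma\tau]$ via an explicit local Pauli-mixing channel (the paper's Lemmas~12--13 and Corollary~2), and SDP duality then produces $f_{a,b}$ before the convex-envelope step. One slip: the Paulis mapping $\ket{\phi^+}$ to the other Bell states act one-sidedly ($\sigma_{i_k}\otimes\id$, not $\sigma_{i_k}\otimes\sigma_{i_k}$, which stabilise $\ket{\phi^+}$), but this is easily repaired and does not affect the argument.
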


The proof of \cref{thm:SDP_main} is presented in \ifarxiv \cref{app: bounds on the singlet fidelity}, \else the Supplementary Material~\cite[Section 6C]{supp}, \fi (see also \ifarxiv \cref{fig: proof} \else the Supplementary Material~\cite[Figure 4]{supp} \fi for an informal overview) and let us here discuss its implications. The first significant aspect of this result is the dimensional reduction of the optimization problem required to compute the extractability. By employing Jordan's lemma, we reduce the problem to effectively computing the extractability function within the state space of a qubit pair. Subsequently, we perform a series of reductions inspired by techniques from device-independent randomness generation and QKD protocols in the minimal Bell scenario~\cite{PABGMS, Bhavsar_thesis, Bhavsar2023, zhu2024interplay}. These reductions, combined with other technical results, allow us to reformulate the optimization over unital LOCC channels into a standard optimization problem over a bounded domain. 

Assuming the projective measurements performed by the two devices on the qubit pair are known, the extractability can be computed numerically. This follows from the fact that, for a fixed $(a,b) \in \mbR^{2}$, the optimization \eqref{eqn: f_{a,b}} is a semidefinite program (SDP), which can be efficiently solved using numerical techniques \cite{Boyd2004}. However, the outer maximization over all possible two-qubit Bell operators \( B(a,b) \) still remains, complicating the original problem as it is no longer an SDP. 

To address this, we develop a technique to discretize the parameter space of the angles \( (a,b) \in [0, \pi/2] \times [0, \pi/2] \) into smaller rectangular domains \ifarxiv (see Appendix \ref{app: gridding}). \else (see the Supplementary Material~\cite[Section 6D]{supp}). \fi This discretization transforms the problem into solving multiple SDPs of the form \eqref{eqn: f_{a,b}}, each corresponding to a specific grid point within the rectangular domains. Specifically, for each domain, we relax the optimization problem and introduce a penalty term that scales with the dimensions of the domain, ensuring we reliably lower bound the global minimization. By reducing the size of each rectangular domain, we achieve tighter bounds on the extractability function at the cost of solving more optimization problems, and thus an increased computation time. Furthermore, as the size of each rectangle tends to zero, the method converges to a \chg{tighter} lower bound on the LOCC extractability. 

Note that an analytic method for computing extractability in the LOCC case was first introduced in~\cite{bardyn2009device}, where only the CHSH functional was considered. The approach presented here is significantly more general, encompassing more self-tests of the singlet-- that is, it applies to all Bell inequalities of the form \eqref{eqn: Bell_score}. Moreover, our method allows for the simultaneous use of multiple Bell inequalities, or even the full distribution, when bounding the extractability. As a result, it provides a framework for obtaining \chg{lower} bounds in the minimal Bell scenario when self-testing the singlet. Additionally, while \cref{thm:SDP_main} addresses the LOCC extractability, our gridding techniques can also be applied to other classes of free operations, such as the LO extractability for arbitrary Bell functionals in the minimal scenario, including those tailored to partially entangled states~\cite{AcinRandomnessNonlocality,BampsPironio,Coopmans19}.

\section{Results for example \ref{example: CHSH}: CHSH based protocol for certifying $\phi^+$}
\label{sec:example}
To illustrate our results, we consider a DISC protocol based on violating the CHSH inequality in the parallel measurement setup, where the objective is to certify a state $\varepsilon$-close to $\phi^+$ \ifarxiv (see \cref{prot: DISV_general} in the appendix). \else (see Protocol 2 in the the Supplementary Material~\cite{supp}). \fi A bound on the CHSH LOCC extractability was provided in~\cite{bardyn2009device}, given by the linear function $\Xi_{\text{CHSH}}(\omega) \geq = 1/2 + (\omega - 2)/(4\sqrt{2} - 4) =: g(\omega)$ for $\omega \in [2,2\sqrt{2}]$. Using this bound, we plot the penalty function $G_{\varepsilon}(\score)$ \chg{(defined via substituting $\Xi_{B}(\score)$ in \eqref{eqn: G_epsilon} with the function $g(\omega)$)} in \cref{fig: G_epsilon}, which provides an estimate of the security bounds as a function of the chosen abort condition, characterized by the Bell value $\score$.  

\begin{figure}[h!]
    \includegraphics[width=0.42\textwidth]{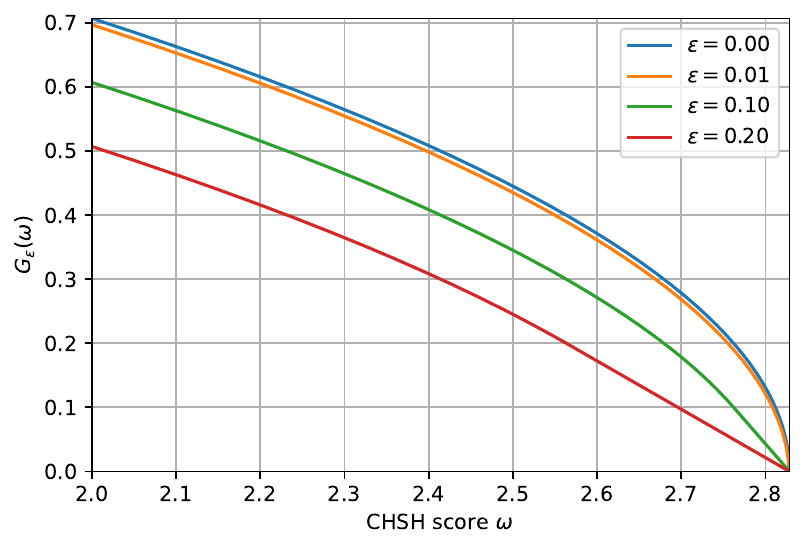}
        \caption{Graph of $G_{\varepsilon}(\omega)$ for different values of $\varepsilon$, using LOCC extractability.}
        \label{fig: G_epsilon}
\end{figure}

\begin{figure}[h!]
    \centering
    \begin{subfigure}[t]{0.42\textwidth}
        \centering
        \includegraphics[width=\textwidth]{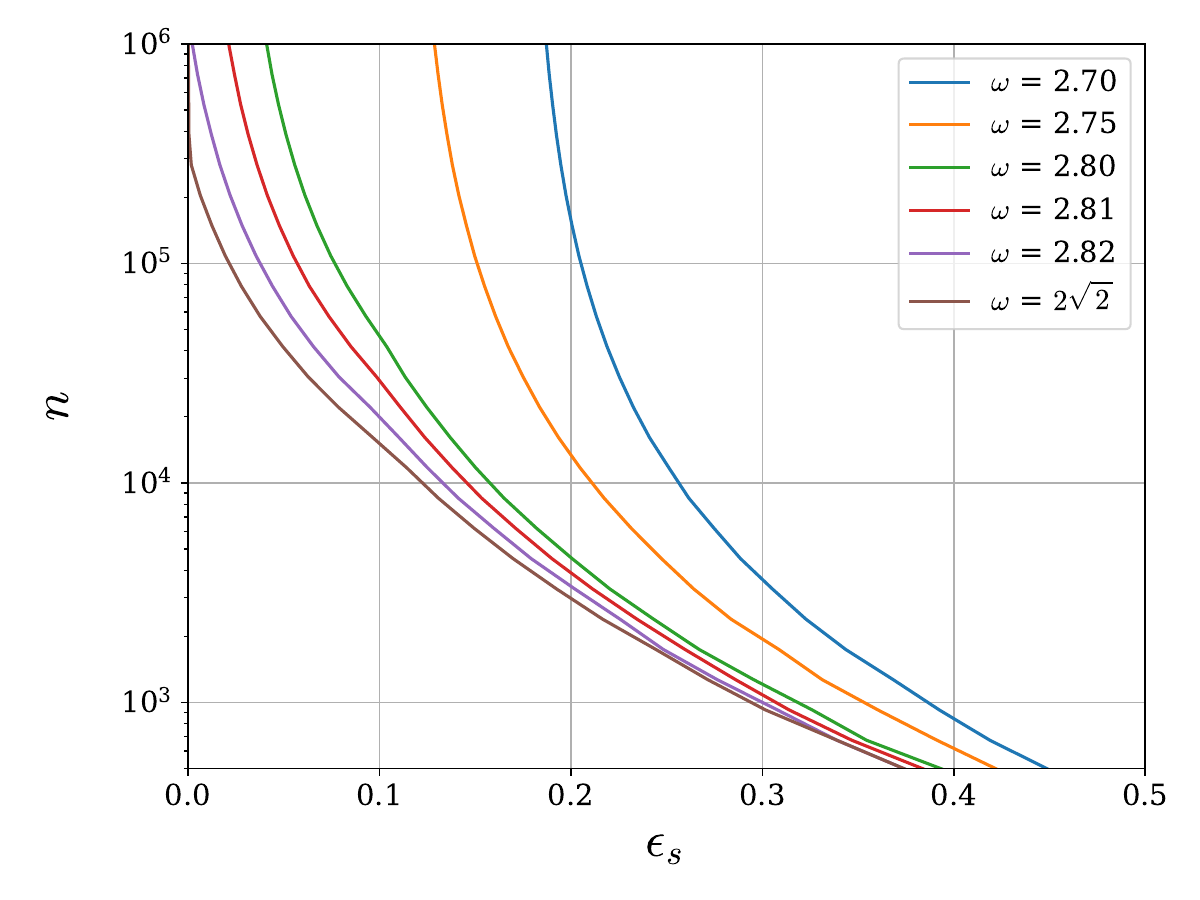}
        \caption{Plot of the security parameter $\epsilon_{s}$ versus the number of rounds $n$  for different CHSH values $\score$. Here, we set $\varepsilon = 0.1$ and $\kappa$ is chosen to achieve a completeness error of \chg{$\epsilon_{c} = 10^{-2}$. The choice of $\epsilon_{c}$ follows standard values used in related device-independent protocols (see, e.g., \cite{ADFRV}).} }
        \label{fig:n_vs_espilon_at_fixed_var_epsilon_comp}
    \end{subfigure}
    \hfill
    \begin{subfigure}[t]{0.42\textwidth}
        \centering
        \includegraphics[width=\textwidth]{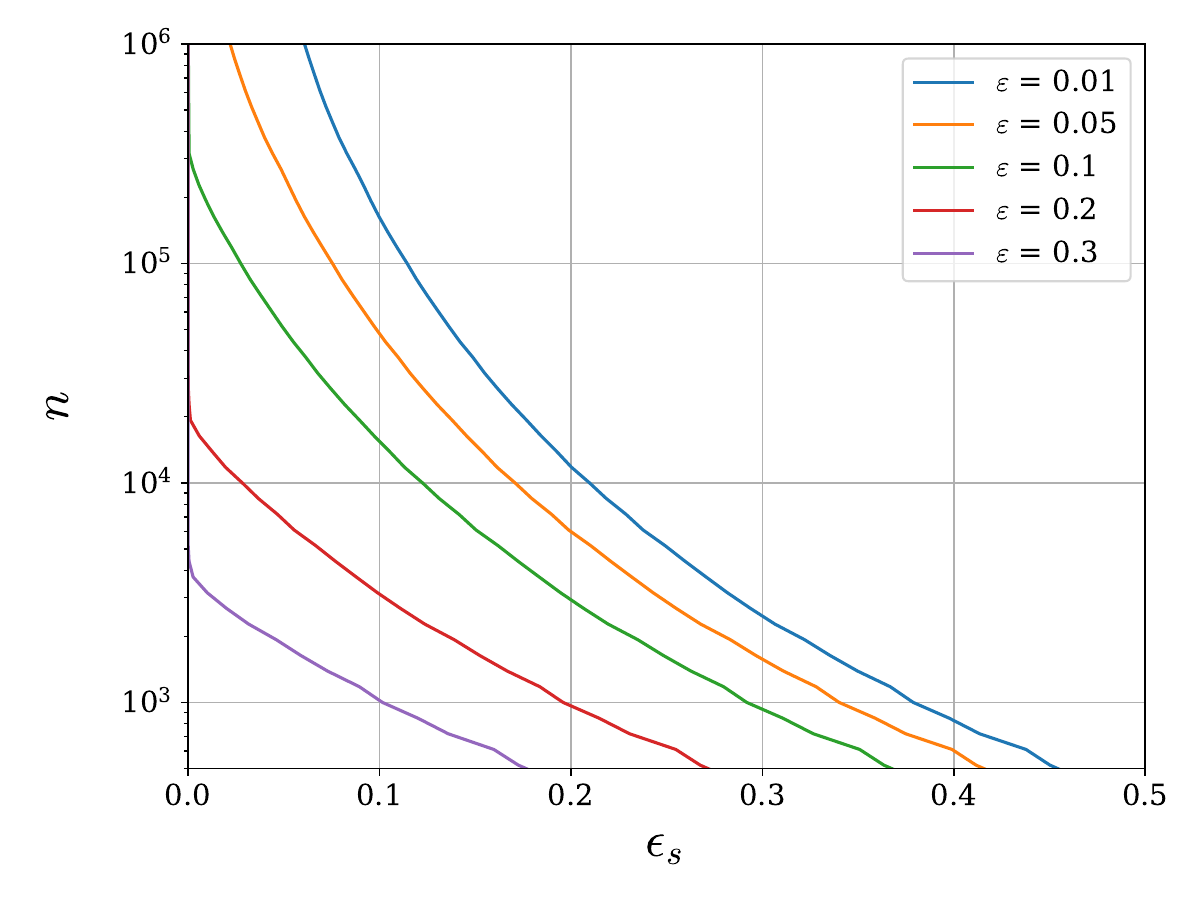}
        \caption{Plot of the security parameter \( \epsilon_{s} \) versus the number of rounds \( n \) for different values of \( \varepsilon \), assuming a CHSH value of \( \omega = 2\sqrt{2} \). The parameter $\kappa$ is chosen to achieve a completeness error of \chg{$\epsilon_{c} = 10^{-2}$}.}
        \label{fig:n_vs_espilon_at_fixed_score_completeness}
    \end{subfigure}
    \caption{Comparison of security parameters for different conditions using the CHSH inequality.}
    \label{fig:combined_figure_completeness}
\end{figure}

Knowing $G_{\varepsilon}(\score)$ enables us to compute the security bounds for the protocol via \cref{thm:sec_main_text}, which we present in \cref{fig:combined_figure_completeness}. From this figure, two key trends emerge: first, we obtain tighter security from higher CHSH values $\score$, which arises from a smaller value of $G_{\varepsilon}(\omega)$. Secondly, security improves as the number of rounds increases, owing to a smaller penalty due to finite statistics (captured by the exponential term in \cref{thm:sec_main_text}). Additionally, we observe that for larger values of the closeness parameter $\varepsilon$, security is higher for the same Bell value and number of rounds. This is expected, as certifying a state that is $\varepsilon$-close to the target state requires a lower Bell value than certifying the target state exactly. Thus, increasing $\varepsilon$ results in a smaller values of $G_{\varepsilon}(\omega)$, corresponding to a smaller penalty.  

\section{Discussion}
\label{sec:disc}

We have presented a composable approach for device-independent state certification under the assumption of an independent but not identically distributed source. We introduced a definition for composable DISC security, and provided a general framework for proving security in the two-input two-output Bell scenario under LOCC operations. 

For future directions, it would be interesting to apply our protocols in practice. Clearly, the advantage lies in the composable integration of DISC with any other composable protocol. For example, consider a protocol $\mathcal{P}$ which, when given as input the state $\phi^{+}$, outputs a state $\mathcal{P}(\phi^{+})$ satisfying $\| \mcP(\phi^{+}) - \sigma_{\text{ideal}}\|_{1} \leq \epsilon'$, where $\sigma_{\text{ideal}}$ is some target output state of $\mcP$. Now, suppose the DISC protocol outputs a state $\rho_{\text{real}}$ with the property\footnote{Here we have omitted the fact that both the DISC protocol and $\mcP$ may abort for ease of discussion.} $\| \rho_{\text{real}} - \phi_{+}\|_{1} \leq \epsilon$. Then the composable security definition ensures that, when the DISC protocol output is used as an input to $\mcP$, the result is secure: $\| \mcP(\rho_{\text{real}}) - \sigma_{\text{ideal}}\|_{1} \leq \epsilon + \epsilon'$. Such applications might include the certification of other quantum resources, along the lines of Ref.~\cite{SekatskiBuilidngBlocks}, where state certification is an essential building block. 

We also note that DISC is not vulnerable to the same device-reuse attacks as in, e.g., DIQKD~\cite{bckone}. This follows from the fact that no classical information is kept private from an adversary during the protocol. It is then an interesting question if DISC always remains secure when the measurement devices are reused.   

It would also be interesting to improve the DISC security statement. Both a large Bell violation and a large number of copies are currently required to obtain a composable security proof, which is limiting in practice. This could be due to two reasons. First, the proof technique relies on inequalities and bounds that may not be tight, suggesting room for improvement in obtaining sharper security bounds. For example, our security proof employs Hoeffding’s inequality, which could potentially be replaced by tighter alternatives such as those used in Ref.~\cite{govcanin2022sample}. The second reason is that, from a fundamental point of view, it is is an inherently strong requirement to ensure general security under any future protocol usage. Consequently, achieving tight security bounds for small Bell violations and low numbers of rounds may be infeasible~\cite{wiesner2024}.   

Nevertheless, it is notable that composable security can be achieved. Moreover, our security bounds are tighter than those obtained using LO extractability bounds. In realistic experimental implementations, improved certification could be achieved by relaxing the stringent fully DI assumptions and incorporating justified partial assumptions. For instance, the fair sampling assumption could be employed to account for poor detector efficiencies. Additionally, similar results may be obtained in a semi-DI setting, where assumptions on system dimensions are introduced.  

\chg{Another promising direction would be to certify more than one copy of the target state. In fact, our current approach can be straightforwardly modified to accommodate this. However, the soundness parameter $\epsilon_{s}$ scales as $ \sqrt{1 - c^{m}}$, where $c$ is the single copy extractability for the threshold Bell value which does not cause the protocol to abort, and $m$ is the number of certified copies. We therefore see that the resulting protocol will not be efficient, in the sense that a large number of copies can only be certified when $c \approx 1$, which demands both a near-maximum Bell violation and a large number of samples $n$. Furthermore, there are recent no-go results~\cite{wiesner2024} which rule out sample efficient and composable state certification when the desired certification is ``exact'', i.e., the target state is certified rather than tolerating small deviations from it. Thus, understanding what is possible for composable multiple copy DI state certification is an appealing direction.}

In addition, it would be useful to relax the assumption of independent state preparation in each round. Removing this assumption and allowing for general memory effects would lead to a more general security proof. \chg{As discussed in~\cite{govcanin2022sample}, the task of certifying one copy can be achieved under an arbitrarily correlated source (see also~\cite{HM19a,HM19b} for a device dependent approach), and such techniques may provide a path to establishing similar results in our composable framework. However, extending the DISC framework of Ref.~\cite{govcanin2022sample} to multiple copies in the fully non-i.i.d. setting remains an open question. Due to the more demanding requirement of composability, we expect this will be at least as challenging to establish in our case. Moreover,} there are potential limitations when considering a fully general measurement process. As discussed in Ref.~\cite[Section 2.2.1]{AFB19}, the ability to ``not measure'' a quantum system and hold it in memory necessitates some separation between the state and measurement devices. 

Our result on the LOCC extractability for the singlet state may also be of independent interest. This quantity serves as the natural DI counterpart to the well-known singlet fraction~\cite{horodecki1999general}, extending its relevance to the DI setting. This raises open questions regarding its potential applications in other areas of entanglement theory, as well as in the development of new DI protocols. Additionally, our results provide another avenue for exploring the relationship between nonlocality and entanglement~\cite{zhu2024interplay}, which remains a fundamental topic of investigation.

Finally, it is also interesting to consider a weaker certification criterion-- namely, certifying the presence of \textit{any} pure entangled state rather than a specific one. Such a certification could have significant cryptographic applications, as it has been demonstrated that randomness and cryptographic keys can be extracted in a fully device-independent manner from non-maximally entangled, yet still entangled, states~\cite{AcinRandomnessNonlocality,Woodhead_21}.  

\acknowledgements
The authors are grateful to Peter Brown, Roger Colbeck and Ivan {\v S}upi\'c for insightful discussions. We also thank Cameron Foreman, Ashutosh Rai, Olgierd Żurek, Mirjam Weilenmann and anonymous referees for their valuable feedback on earlier versions of this work. RB and JB are supported by the National Research Foundation of Korea (Grant No. NRF-2021R1A2C2006309, NRF-2022M1A3C2069728) and the Institute for Information \& Communication Technology Promotion (IITP) (RS-2023-00229524, RS-2025-02304540). LW acknowledges funding support from the Engineering and Physical Sciences Research Council (EPSRC Grant No. EP/SO23607/1) and the European Union’s Horizon Europe research and innovation programme under the project ``Quantum Secure Networks Partnership'' (QSNP, grant agreement No. 101114043). Preliminary investigations for this work were supported by the
EPSRC via the Quantum Communications Hub (Grant No. EP/T001011/1) during RB’s time at the University of
York.


%

\onecolumngrid
\appendix

\section{Overview of assumptions for DISC protocols}\label{app:assumptions}
In this section, we outline all assumptions made in our work.    
\begin{enumerate}
    \item Quantum theory is correct and complete.
    \item No information can leak in or out of the laboratory once the protocol has begun. 
    \item The untrusted source generates a sequence of independent states.
    \item The user has access to a secure quantum memory, and a trusted means to process classical information. 
    \item The user has access to a trusted source of perfect, private randomness. In particular, this implies the random variables $X_i$, $Y_i$, and $T$ are uniformly distributed to the user, and to any potential adversary present in the current protocol, or in any future protocol for which the current protocol serves as an input. 
    \item\label{ass: initial_seed_randomness} All initial states from the source are received in the laboratory before the \chg{random number} $T$ is generated.
\end{enumerate}

It is important to emphasize that, unlike standard device-independent protocols for quantum key distribution and randomness generation, this protocol requires a clear separation between states \chg{(generated solely by the source)} and measurements \chg{(performed by the measurement device)}, rather than treating them as a single uncharacterized ``black-box''. \chg{In particular, all entanglement produced during or before the protocol is attributed to the source only.} This distinction is critical, since treating states and measurements as a single box would render the protocol trivially insecure. For example, an eavesdropper could prepare a maximally entangled state $\ket{\phi^+}$ in each round, and instruct the devices to measure all states projectively, according to the optimal measurement strategy for some Bell inequality. This would destroy the entanglement in all the states, regardless of whether a particular round was intended to serve in the Bell test or not. Under this attack, the protocol will not abort, however, the output state stored for the user is separable. This violates the security requirement, namely, that the output state resembles $\ket{\phi^+}$ when the protocol does not abort. 

In contrast, our protocol eliminates this vulnerability by randomly choosing the output state before any interaction with the measurement device. This state is then held in a trusted quantum memory while the remaining states are measured, ensuring it is shielded from any external influence.

\chg{We now discuss Assumption \ref{ass: initial_seed_randomness}. It is essential to have access to a private source of randomness during the protocol in order to choose the stored state and perform the Bell test. In particular, it suffices to assume that this randomness is not available to the adversary before the protocol commences. Otherwise, the adversary could prepare the sequence of states $\bigotimes_{i=1}^{n} \rho_{i}$ with $\rho_{i} = \phi^+$ whenever $i \neq t$ and $\rho_{t} = \sigma$, where $\sigma$ is some separable state. If the measurement devices are instructed to always perform the optimal measurements for the desired Bell inequality, this would essentially amount to the abort-based attack discussed in the main text, except that it would now succeed with probability one. Assumption \ref{ass: initial_seed_randomness} excludes this attack, and is indispensable for the protocol to remain secure. 

We stress, however, that it is permissible for the adversary to learn the value of the random variable $T$ once the protocol has already commenced. At that stage, the adversary has no ability to pre-program the source and measurement devices in a coordinated manner to break security. Finally, we note that this assumption could be entirely dropped if the random numbers were generated by a randomness-generation protocol that itself is composable.\footnote{We thank the authors of \cite{wiesner2024} for pointing this out to us.} }

\section{Protocols for DISC}\label{app: protocol} \label{app:allProtocols}

We now present the DISC protocols discussed in the main text. Specifically, in \ifarxiv \cref{sec: measSetup} \else Section 4 of the main text \fi we considered two variants of the measurement setup. The first consists of a parallel scenario, in which each state $\rho_{i}$ is measured in isolation using a separate measuring device. The second is sequential, where the measurement of $\rho_{i}$ precedes that of $\rho_{i+1}$, and auxiliary information about the measurement in round $i$ can be used in round $i+1$. This setup consists of a single measurement device. \chg{Throughout, we use the notation $\mathcal{M}_{i}$ to denote the measurement channel associated to the index $i$, which include the settings $X_{i}Y_{i}$ as an input (see \cref{fig: DISV}). This is not to be confused with the channels $\mathcal{N}_{i}$ described in the main text, in which $X_{i}Y_{i}$ are included as outputs.} We consider both the task of certifying the target state exactly, and a state $\varepsilon$-close to the target state. 
\subsection{Parallel setup} \label{sec:parPro}

For the parallel setup, we remark that instead of requiring $n-1$ different non-communicating measurement devices, the protocol can be reinterpreted as involving a single measurement device without memory. This reformulation aligns the protocol with sequential protocols, where a single memoryless device processes the measurements one at a time. Nonetheless, no assumptions are made regarding the inner workings of the measurement devices. Furthermore, no assumptions are made about the generated states: the channels $\mathcal{M}_{i}$ can be pre-programmed in accordance with the state $\rho_{i}$, which itself can be pre-set by a potential adversary for the protocol.

With this in mind we present \Cref{prot: DISV}, which consists of certifying the maximally entangled state $\ket{\phi^+}$, following the action of an optimal extraction channel, using generalized Bell functionals of the form 
\begin{equation}
    \omega = \sum_{x,y \in \{0,1\}} \gamma_{x,y} \langle A_{x}B_{y} \rangle \label{eq:appBop}.
\end{equation}
The maximum and minimum values of $\omega$ for quantum behaviors are denoted $\eta^{\text{Q}}_{\text{min}}$ and $\eta^{\text{Q}}_{\text{max}}$, respectively. The maximum and minimum values for local behaviors are denoted $\eta^{\text{L}}_{\text{min}}$ and $\eta^{\text{L}}_{\text{max}}$, respectively. We label the sequence of states produced by the untrusted source $\big\{\rho_{i} \in \mcS(\mcH_{Q^{A}_{i}} \otimes \mcH_{Q_{i}^{B}}) \big\}_{i=1}^{n} $,  and for each $\rho_{i}$ we associate a channel $\Lambda_{i} \in \mcC$ which satisfies $F(\Lambda_{i}(\rho_{i}),\phi^+) = \sup_{\Lambda \in \mcC} F(\Lambda(\rho_{i}),\phi^+)$. Each measurement device is labeled $M_{i}$, consisting of isolated sub-devices $M_{i}^{A}$ and $M_{i}^{B}$

\begin{remark}
    Note that we have implicitly assumed the supremum over channels in $\mcC$ is achievable. If this is not the case, we define $\Lambda_{i}$ as any channel which achieves a fidelity arbitrarily close to $\sup_{\Lambda \in \mcC} F(\Lambda(\rho_{i}),\phi^+)$. 
\end{remark}

\vspace{0.5cm}

\begin{center}
\begin{mdframed}[linecolor=black, roundcorner=5pt, skipabove=10pt, skipbelow=10pt, backgroundcolor=white, splittopskip=10pt, splitbottomskip=10pt]
\begin{protocol}[Certification of the $\phi^+$ state]\label{prot: DISV}  
\noindent\textbf{Parameters:}\\
$n \in \mbN^{+}$ -- number of rounds \\ 
$p_{T}:\{1,...,n\}\to [0,1]$ -- probability distribution of the random variable $T$ (taken to be uniform here) \\ 
$\score_{\sharp} \in [\eta^{\text{Q}}_{\text{min}},\eta^{\text{Q}}_{\text{max}}]$ -- expected value of the Bell functional \eqref{eq:appBop}\\
$\kappa > 0$ -- completeness parameter.

\begin{enumerate}
    \item\label{p1_step: 1} Generate a random variable $T$ according to the distribution $p_{T}$. If $T = t$, then store the state $\rho_{t}$ for the remainder of the protocol. 
    \item\label{p1_step: 3}  Generate the random bit string $\mathbf{X} = (X_1, X_2, \ldots, X_{t-1}, X_{t+1}, \ldots, X_{n})$ uniformly. Input each bit $X_{i}$ to the device $M_{i}^{A}$, producing the output bit $A_i$. Similarly, generate the random bit string $\mathbf{Y} = (Y_1, Y_2, \ldots, Y_{t-1}, Y_{t+1}, \ldots, Y_{n})$ uniformly, and input $Y_{i}$ to the device $M_{i}^{B}$, producing the output $B_i$. 
    \item For $i \in \{1,...,n\} \setminus t$, set    $W_{i} = \tilde{\gamma}_{X_{i}, Y_{i}}$ if $A_i \oplus B_i = X_i \cdot Y_i$, and   $W_{i} = -\tilde{\gamma}_{X_{i}, Y_{i}}$ otherwise, where $\tilde{\gamma}_{x,y} = (-1)^{xy}\gamma_{x,y}$ .
    \item Compute the empirical value:
    \begin{equation}
    \score_{\text{exp}} := \frac{4}{n} \sum_{i=1 \, : \, i \neq t}^{n} W_{i} 
    \end{equation}
    and abort the protocol if $\score_{\text{exp}} \leq \score_{\sharp} - \kappa$.
    \item If the protocol does not abort, apply the optimal channel $\Lambda_{t} \in \mcC$ to the state $\rho_{t}$, which transforms $\rho_{t}$ to a state $\Lambda_{t}(\rho_{t})$. Output $\Lambda_{t}(\rho_{t})$.
\end{enumerate}
\end{protocol}
\end{mdframed}
\end{center} 

\vspace{0.5cm}
To see how the empirical value relates the Bell functional \eqref{eq:appBop}, consider the variable $W$ for a single round (omitting the index $i$). Then
\begin{equation}
    \begin{aligned}
        \mathbb{E}[W] &= \sum_{x,y \in \{0,1\}} \Big(\mathbb{P}[W = \tilde{\gamma}_{x,y}] \tilde{\gamma}_{x,y} - \mathbb{P}[W = -\tilde{\gamma}_{x,y}] \tilde{\gamma}_{x,y}\Big).
    \end{aligned}
\end{equation}
Note that
\begin{equation}
    \begin{aligned}
        \mathbb{P}[W = \tilde{\gamma}_{x,y}] &= p(x,y) \sum_{a,b  \, : \, a\oplus b = xy} p(a,b|x,y) \\
        &= \frac{1}{8}\sum_{a,b \in \{0,1\}} p(a,b|x,y) \big( 1 + (-1)^{a+b+xy}\big) \\
        &= \frac{1}{8}\Bigg( 1 + (-1)^{xy}\sum_{a,b \in \{0,1\}}(-1)^{a+b}p(a,b|x,y)\Bigg) \\
        &= \frac{1}{8}\big( 1 + (-1)^{xy}\langle A_{x}B_{y}\rangle\big),
    \end{aligned}
\end{equation}
where we used the fact that $p(x,y) = 1/4$. We also have 
\begin{equation}
    \mathbb{P}[W = -\tilde{\gamma}_{x,y}] = p(x,y)\Bigg( 1-\sum_{a,b  \, : \, a\oplus b = xy} p(a,b|x,y) \Bigg) =  \frac{1}{4} - \frac{1}{8}\big( 1 + (-1)^{xy}\langle A_{x}B_{y}\rangle\big).
\end{equation}
As a result,
\begin{equation}
    \mathbb{E}[W] = \frac{1}{4} \sum_{x,y \in \{0,1\}} \tilde{\gamma}_{x,y}(-1)^{xy} \langle A_{x}B_{y}\rangle = \frac{1}{4} \sum_{x,y \in \{0,1\}} \gamma_{x,y} \langle A_{x}B_{y}\rangle = \frac{1}{4}\omega.
\end{equation}
A graphical description of Protocol \ref{prot: DISV} can be found in \Cref{fig: DISV}. 

\begin{figure}[h!]
    \centering
    \includegraphics[width=0.8  \textwidth]{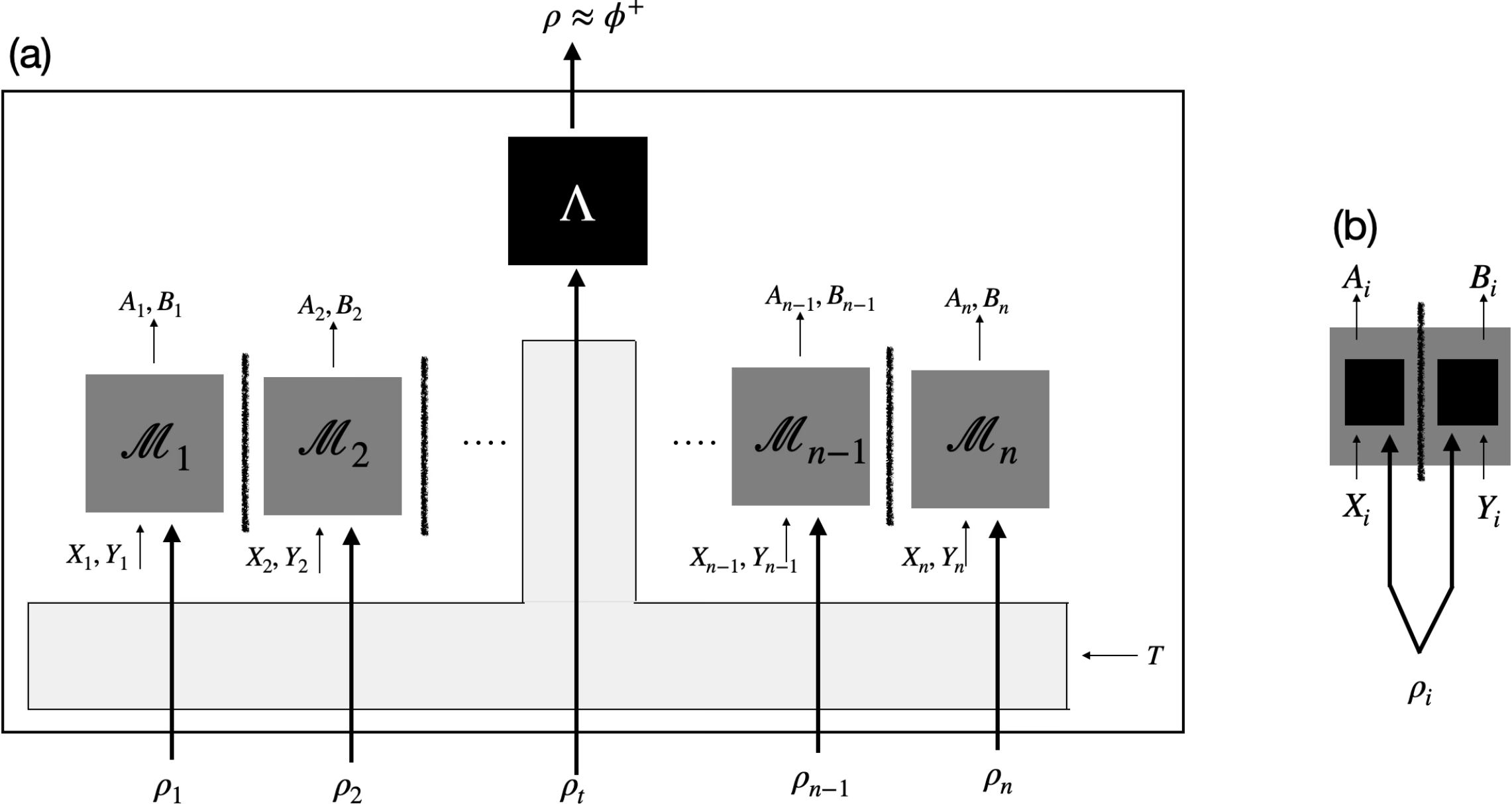}
    \caption{(a) Description of Protocol \ref{prot: DISV} in terms of its individual components. The bold arrows lines indicate quantum systems and the thin arrows represent classical variables. The bold line indicates that no communication is allowed between the components which it separates. (b) Description of the individual measurement channel $\M{M}_{i}$ in terms of devices that perform the Bell test.   }
    \label{fig: DISV}
\end{figure}

\Cref{prot: DISV} certifies the maximally entangled state $\ket{\phi^+}$. However, in practical scenarios, one may wish to certify a quantum state that is $\varepsilon$-close to $\phi^+$, where the closeness is measured using the trace norm. That is, if the protocol does not abort, then the state held in memory, $\Lambda_{t}(\rho_{t})$, satisfies $|| \Lambda_{t}(\rho_{t}) - \phi^+ ||_1 \leq \varepsilon$. The next protocol we present extends \Cref{prot: DISV} to account for deviations from the idealized scenario of perfect state preparation. Whilst the steps are the same as \Cref{prot: DISV}, the ideal protocol differs, resulting in a different security proof (see \Cref{app: security_proofs_par}).

\vspace{0.5cm}
 
\begin{center}
\begin{mdframed}[linecolor=black, roundcorner=5pt, skipabove=10pt, skipbelow=10pt, backgroundcolor=white, splittopskip=10pt, splitbottomskip=10pt]
\begin{protocol}[Certification of a state $\varepsilon$-close to the $\phi^+$ state]\label{prot: DISV_general}
\noindent\textbf{Parameters}:\\
$n \in \mbN^{+}$ -- number of rounds \\ 
$p_{T}:\{1,...,n\}\to [0,1]$ -- probability distribution of the random variable $T$ (taken to be uniform here) \\ 
$\score_{\sharp} \in [\eta^{\text{Q}}_{\text{min}},\eta^{\text{Q}}_{\text{max}}]$ -- expected value of the Bell functional \eqref{eq:appBop}\\
$\varepsilon \geq 0$ -- closeness parameter\\
$\kappa > 0$ -- completeness parameter.\\
Follow the same steps as in Protocol \ref{prot: DISV}.
\end{protocol}
\end{mdframed}
\end{center} 

\vspace{0.5cm}

Note that in the above protocols, the final step which involves applying the optimal channel $\Lambda_{t}$ to $\rho_{t}$ is somewhat fictitious. Indeed, knowing the channels $\Lambda_{i}$ implies solving the optimization $\sup_{\Lambda \in \mcC} F(\Lambda(\rho_{i}),\phi^+)$. This in turn requires knowledge of $\rho_{i}$, which is inaccessible by definition, since $\rho_{i}$ is produced by the untrusted source we wish to certify. Moreover, even if the channels $\Lambda_{i}$ were known, physically implementing them in the lab would go against the device-independent methodology, in which we only have access to observed statistics rather than trusted quantum operations. In \Cref{prot: DISV} and \Cref{prot: DISV_general} however, we are only concerned with the existence of such channels. In this way, when the protocol does not abort, we are guaranteed the existence of an extraction procedure from $\mcC$ which brings the stored state close to the target state.

To further address this point, we include an additional variant which does not include step 5. Specifically, the protocol outputs $\rho_{t}$ directly when it does not abort. We then show in the security proof that $\rho_{t}$ is equivalent to the target state in a well defined sense. 

\vspace{0.5cm}

\begin{center}
\begin{mdframed}[linecolor=black, roundcorner=5pt, skipabove=10pt, skipbelow=10pt, backgroundcolor=white, splittopskip=10pt, splitbottomskip=10pt]
\begin{protocol}[Certification of a state $\varepsilon$-close to the $\phi^+$ state]\label{prot: DISV_general_var}
\noindent\textbf{Parameters}:\\
$n \in \mbN^{+}$ -- number of rounds \\ 
$p_{T}:\{1,...,n\}\to [0,1]$ -- probability distribution of the random variable $T$ (taken to be uniform here) \\ 
$\score_{\sharp} \in [\eta^{\text{Q}}_{\text{min}},\eta^{\text{Q}}_{\text{max}}]$ -- expected value of the Bell functional \eqref{eq:appBop}\\
$\varepsilon \geq 0$ -- closeness parameter\\
$\kappa > 0$ -- completeness parameter.\\
Follow steps 1 to 4 in Protocol \ref{prot: DISV}.

\begin{enumerate}
    \item[5.] If the protocol does not abort, output $\rho_{t}$.
\end{enumerate}
\end{protocol}
\end{mdframed}
\end{center} 
 
\subsection{Sequential setup} \label{app:secSetup}
As discussed above, Protocols \ref{prot: DISV} to \ref{prot: DISV_general_var} assume that all measurements are independent, which may be unrealistic for real devices. To avoid this assumption, one must use $n-1$ isolated devices, which is wasteful and difficult to implement in practice. Alternatively, the aforementioned protocols are also equivalent to a protocol where a single memoryless measurement device is used. There is therefore strong motivation to lift this independence assumption in the security proof. In the following, we detail \Cref{prot: DISV_general_sequencial} which achieves this using the CHSH Bell score:
\begin{equation}
    \chg{p^{\text{win}}} = \frac{1}{4}\sum_{a,b,x,y \in \{0,1\}} w_{a,b,x,y} \, p(a,b|x,y),
\end{equation}
where $w_{a,b,x,y} = 1$ if $a \oplus b = x \cdot y$ and zero otherwise. 

\chg{\begin{remark}
    Up until this point, we have exclusively refered to the \textit{value} of a Bell expression, denoted by $\omega = \langle B \rangle \in [\eta^{\mathrm{Q}}_{\text{min}},\eta^{\mathrm{Q}}_{\text{min}}]$. In particular, this need not correspond to the winning probability of a nonlocal game (i.e., we do not require $\omega \in [0,1]$). When discussing sequential protocols, we will make use of the nonlocal game formulation of the CHSH inequality $B_{\mathrm{CHSH}}$. In this case, we will refer to the CHSH score, denoted $p^{\text{win}} \in [0,1]$, which denotes the winning probability of the CHSH game. Here, the random variables $W_{i}$ take values in $\{0,1\}$ indicating whether round $i$ was lost $(W_{i} = 0)$ or won $(W_{i} = 1)$.
\end{remark}}

As in the parallel setup, the source emits a sequence $\big\{\rho_{i} \in \mcS(\mcH_{Q^{A}_{i}} \otimes \mcH_{Q_{i}^{B}}) \big\}_{i=1}^{n} $, each associated to an optimal channel $\Lambda_{i} \in \mcC$. Instead of $n-1$ devices, we now consider a single measurement device $M$.

\vspace{0.5cm}

\begin{center}
\begin{mdframed}[linecolor=black, roundcorner=5pt, skipabove=10pt, skipbelow=10pt, backgroundcolor=white, splittopskip=10pt, splitbottomskip=10pt]
\begin{protocol}[Certification of a state $\varepsilon$-close to the $\phi^+$ state]\label{prot: DISV_general_sequencial}
\noindent\textbf{Parameters}:\\
$n \in \mbN^{+}$ -- number of rounds \\ 
$p_{T}:\{1,...,n\}\to [0,1]$ -- probability distribution of the random variable $T$ (taken to be uniform here) \\ 
\chg{$p^{\text{win}}_{\sharp} \in [0,1]$} -- expected CHSH score\\
$\varepsilon \geq 0$ -- closeness parameter\\
$\kappa > 0$ -- completeness parameter.\\
\begin{enumerate}
    \item\label{p3_step: 1} Generate a random variable $T$ according to the distribution $p_{T}$. If $T = t$, then store the state $\rho_{t}$. Set $i=1$.  
    \item\label{p3_step: 2} If $i = t-1$, increase $i$ by 2, otherwise, increase $i$ by 1. 
    \item\label{p3_step: 3}  Generate the random bit $X_i$ uniformly, and input to $M$ to obtain the output bit $A_i$. Likewise generate $Y_i$ uniformly and input $Y_{i}$ to $M$, giving the output $B_i$. 
    \item Set $W_{i}= 1$ if $A_i\oplus B_i=X_iY_i$ and $W_{i}= 0$ otherwise. 
    \item Return to Step~2 unless $i = n$ or $i = n- 1$ and $t = n$. 
    \item Calculate the number of rounds in which $W_i=0$ occurred, and abort the protocol if this is larger than $\lfloor(n-1)(1-\chg{p^{\text{win}}_{\sharp}} + \kappa)\rfloor$.
    \item If the protocol does not abort, then apply the optimal LOCC channel $\Lambda_{t}$ to the state $\rho_{t}$ that takes $\rho_{t}$ to a state $\Lambda_{t}(\rho_{t})$. Output $\Lambda_{t}(\rho_{t})$.
\end{enumerate}
\end{protocol}
\end{mdframed}
\end{center}

\vspace{0.5cm}

A graphical description of \Cref{prot: DISV_general_sequencial} can be found in \Cref{fig: DISV_sequencial}. Similarly to the parallel setup, we also include a variant which omits the final extraction step.

\begin{figure}[h!]
    \centering
    \includegraphics[width=0.8 \textwidth]{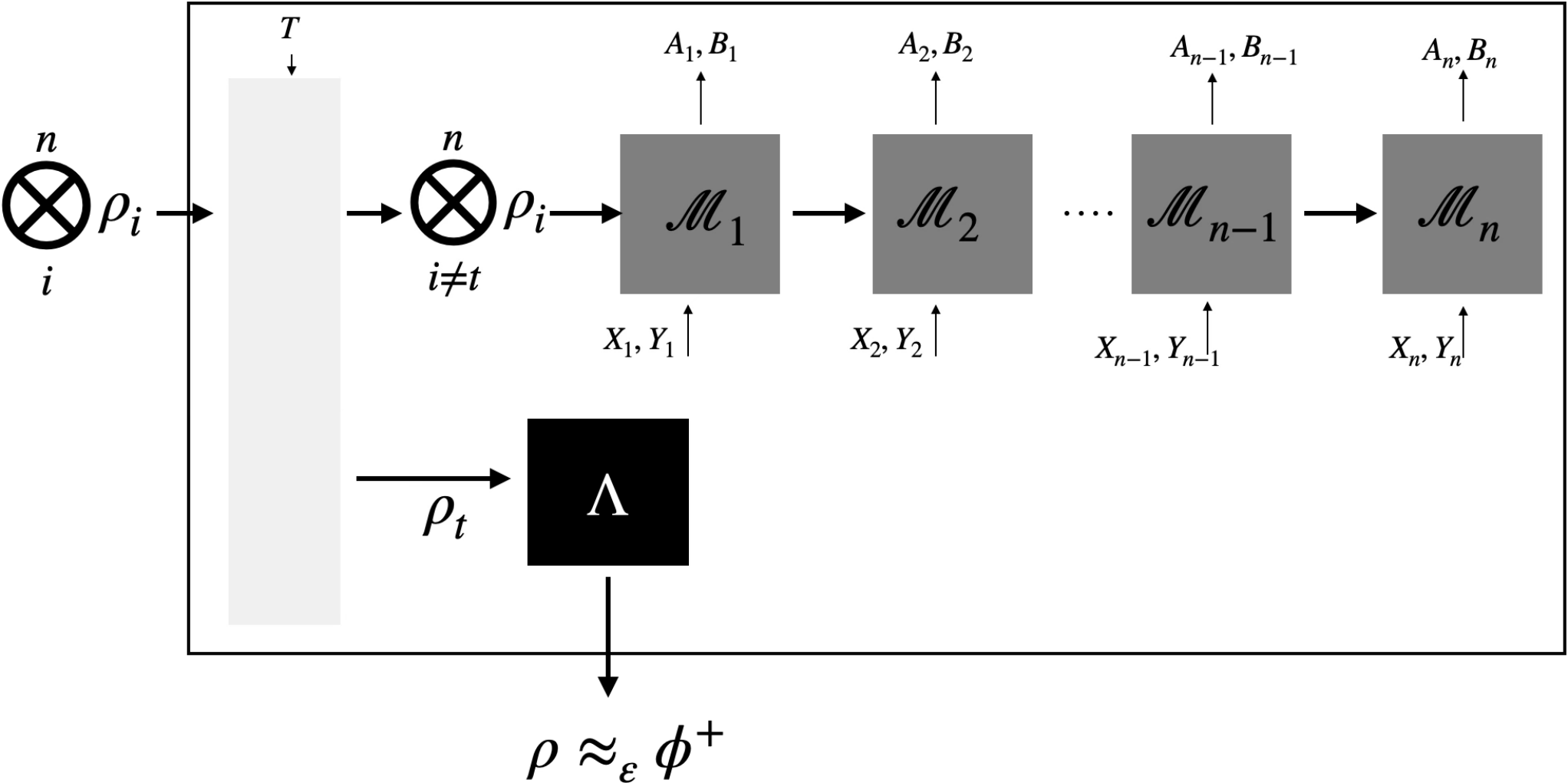}
    \caption{(a) Description of Protocol \ref{prot: DISV_general_sequencial} in terms of its individual components. The bold arrows indicate quantum systems, while the thin arrows represent classical variables. The measurement channels $\mathcal{M}_{i}$ are as described in \Cref{fig: DISV} (b). The measurement devices are uncharacterized and may possess memory; however, they measure independently generated states $\rho_{i}$ one at a time. The gray box, which accepts the input $T$ and $\bigotimes_{i}^n \rho_{i}$, stores the state $\rho_{t}$ for future use and sends the remaining states to the measurement device in a sequential fashion.}
    \label{fig: DISV_sequencial}
\end{figure}

\vspace{0.5cm}

\begin{center}
\begin{mdframed}[linecolor=black, roundcorner=5pt, skipabove=10pt, skipbelow=10pt, backgroundcolor=white, splittopskip=10pt, splitbottomskip=10pt]
\begin{protocol}[Certification of a state $\varepsilon$-close to the $\phi^+$ state]\label{prot: DISV_general_sequencial_var}
\noindent\textbf{Parameters}:\\
$n \in \mbN^{+}$ -- number of rounds \\ 
$p_{T}:\{1,...,n\}\to [0,1]$ -- probability distribution of the random variable $T$ (taken to be uniform here) \\ 
\chg{$p^{\text{win}}_{\sharp} \in [0,1]$} -- expected CHSH score\\
$\varepsilon \geq 0$ -- closeness parameter\\
$\kappa > 0$ -- completeness parameter.\\
Follow steps 1 to 6 in \Cref{prot: DISV_general_sequencial}.
\begin{enumerate}
    \item[7.] If the protocol does not abort, output $\rho_{t}$.
\end{enumerate}
\end{protocol}
\end{mdframed}
\end{center}

\section{Security proof of Protocols \ref{prot: DISV} to \ref{prot: DISV_general_var}}\label{app: security_proofs_par}
In this section, we prove the security of the parallel protocols presented in \Cref{sec:parPro}. As discussed in the main text, we do so according to a composable definition. This involves specifying an ideal protocol and demonstrating that the real implementation of the protocol cannot be distinguished from the ideal one by any hypothetical distinguisher with a probability higher than a pre-agreed threshold (see \cite{Portmann14} for more details).

\begin{remark}
    Protocols \ref{prot: DISV} to \ref{prot: DISV_general_sequencial_var} have been defined with respect to a class of free operations $\mcC$. As discussed in the main text, we prove security for the class of LOCC operations. However, the proofs can be straightforwardly adapted to any other class, such as local operations, which have been frequently studied in the literature~\cite{kaniewski2016analytic,govcanin2022sample}. Provided a lower bound on the LO extractability for the desired Bell inequality is known, this can be directly substituted for the LOCC extractability function used here, for any of the Protocols \ref{prot: DISV} to \ref{prot: DISV_general_sequencial_var}.  \label{rem:LOpro}
\end{remark}

\subsection{Security proof of Protocol \ref{prot: DISV}}

We begin by mathematically describing the real and ideal implementations of the protocol, followed by a proof of their indistinguishability.

\subsubsection{Real protocol}\label{sec:pro1_real}
To best describe the real and ideal protocols, we start by examining their classical-quantum (cq) states at key stages. For the real protocol:
\begin{enumerate}
    \item \label{st: 1} \textbf{Stage 1} (Pre-measurement stage): The user receives a set of (independently generated) states $\bigotimes_{i=1}^{n} \rho_{i}$ and measurement devices $M_{i}$ to which the states $\rho_{i}$ are sent. The variable $T=t$ is sampled to determine which state is kept.
    \item \label{st: 2}\textbf{Stage 2} (Post-measurement stage): For all $i \in \{1,...,n\}\setminus t$ each device $M_{i}$ implements the channel $\mcM_{i}$ detailed in \ifarxiv \cref{sec: measSetup} \else Section 4 of the main text \fi (with the input registers $I_{i}^{A}I_{i}^{B}$ omitted). The user collects a string of length $n-1$, $\mathbf{w} = (w_{1}, w_{2}, \cdots, w_{t-1},w_{t+1}, \cdots , w_{n}) \in \mathcal{W}^{\times (n-1)}$, where  $\mcW = \{\tilde{\gamma}_{x,y},-\tilde{\gamma}_{x,y}\}_{x,y \in \{0,1\}}$ consisting of the measurement outcomes of each round, i.e., $\mathbf{w}$ keeps a record of the ordered list of wins that were measured for each of the $n-1$ rounds. This list is stored in a classical register $\mathbf{W}$. The cq-state of the protocol at this stage then becomes:
    \[
        \rho = \sum_{t=1}^{n} p_{T}(t)\sum_{\mathbf{w} \in \mathcal{W}^{\times (n-1)}} p(\mathbf{w} | t) \, \ketbra{\mathbf{w}}{\mathbf{w}}_{\mathbf{W}} \otimes \ketbra{t}{t}_T \otimes \rho_{t},
    \]
    where $p(\mathbf{w} | t)$ is the conditional probability of generating the string $\mathbf{w}$ given that $T = t$ is observed during the protocol, and $p_{T}(t)$ is the probability that $T = t$.
    \item \label{st: 3}\textbf{Stage 3} (Parameter estimation stage): After collecting statistics, the protocol either aborts or proceeds to the final stage. Its state takes the form
    \[
      \rho = \left( \sum_{t=1}^{n}p_{T}(t) \sum_{\mathbf{w} \in \Omega} p(\mathbf{w} | t) \, \ketbra{\mathbf{w}}{\mathbf{w}}_{\mathbf{W}} \otimes \ketbra{t}{t}_T \otimes \rho_{t} \ot \proj{\Omega} \right) +  (1 - p_{\Omega}) \ketbra{\perp}{\perp},
    \]
    where $\Omega \subset \mcW^{\times (n-1)}$ is the set of observed strings $\mathbf{w}$ which do not cause the protocol to abort, $p_\Omega = \sum_{t=1}^{n}\sum_{\mathbf{w} \in \Omega} p(\mathbf{w}|t) p_{T}(t)$ is the probability of this event and $|\Omega\rangle$ and $|\perp\rangle$ are states indicating whether protocol passes or aborts, respectively.  
    \item \label{st: 4}\textbf{Stage 4} (Final output state): Conditioned on not aborting, the user applies the optimal LOCC channel $\Lambda_{t}$ to the stored state $\rho_{t}$ to obtain the state $\Lambda_{t}(\rho_{t})$, which may be used for future protocols. The final cq-state of the protocol then takes the form
    \[
        \rho_{\mathrm{real}} = \sum_{t=1}^{n} p_{T}(t)\sum_{\mathbf{w} \in \Omega} p(\mathbf{w} | t) \, \ketbra{\mathbf{w}}{\mathbf{w}}_{\mathbf{W}} \otimes \ketbra{t}{t}_T \otimes \Lambda_{t}(\rho_{t}) \ot \proj{\Omega} +  (1 - p_{\Omega})  \proj{\perp}.
    \]
\end{enumerate}
\subsubsection{Ideal protocol}
The ideal protocol differs from the real protocol only in stage \ref{st: 3} and \ref{st: 4}:  
\begin{itemize}
    \item \textbf{Stage 1 and 2:} The ideal protocol runs the real protocol during stage \ref{st: 1} and \ref{st: 2}.
    \item \textbf{Stage 3:} The ideal protocol aborts if the real protocol aborts. If the ideal protocol does not abort, then it replaces the stored state $\rho_{t}$ with the state $\phi^+ \otimes \sigma_{\text{aux}}$, where $\sigma_{\text{aux}} \in \mcS(\mcH_{\text{aux}})$ for some Hilbert space $\mcH_{\text{aux}}$ satisfying $\mbC^{2}\otimes \mbC^{2} \otimes \mcH_{\text{aux}} \cong \mcH_{Q_{t}^{A}}\otimes \mcH_{Q_{t}^{B}}$. The cq-state of the ideal protocol at this stage is: 
    \[
      \rho = \left( \sum_{t=1}^{n} p_{T}(t) \sum_{\mathbf{w} \in \Omega} p(\mathbf{w} | t) \, \ketbra{\mathbf{w}}{\mathbf{w}}_{\mathbf{W}} \otimes \ketbra{t}{t}_T \otimes \phi^+ \otimes \sigma_{\text{aux}} \ot \proj{\Omega} \right) +  (1 - p_{\Omega}) \proj{\perp},
    \]
    \item \textbf{Stage 4:} The ideal protocol throws away the contents of the auxiliary register and outputs $\phi^+$. This will give the final state of the ideal protocol 
    \[
        \rho_{\mathrm{ideal}} = \sum_{t=1}^{n} \sum_{\mathbf{w} \in \Omega} p(\mathbf{w} | t) p_{T}(t) \, \ketbra{\mathbf{w}}{\mathbf{w}}_{\mathbf{W}} \otimes \ketbra{t}{t}_T \otimes \phi^+ \ot \proj{\Omega} +  (1 - p_{\Omega}) \proj{\perp}.
    \]    
\end{itemize}

\subsubsection{Soundness}

Having defined the real and the ideal protocol, recall the definition of soundness discussed in the main text. 
\begin{definition}[Soundness] A DISC protocol is called $\epsilon_{s}$-sound if
\begin{eqnarray}
    \frac{1}{2}|| \rho_{\mathrm{real}} - \rho_{\mathrm{ideal}} ||_1 \leq \epsilon_{s}
\end{eqnarray}
where $\rho_{\mathrm{real}}$ and $\rho_{\mathrm{ideal}}$ are the cq-states obtained after a real and ideal implementation of the protocol and $|| \cdot ||_1$ denotes the trace norm.
\end{definition}

To prove the soundness of \Cref{prot: DISV}, we require Hoeffding's theorem for independent random variables. 
\begin{theorem}[Hoeffding's inequality]\label{thm: Hoeffding's theorem}
Let $X_1, X_2, ..., X_n$ be independent random variables such that $a_i \leq X_i \leq b_i$ for $1 \leq i \leq n$. Then, for any $r > 0$,
\begin{equation}
\begin{aligned}
    \mathbb{P}\left(\sum_{i=1}^n (X_i - \mathbb{E}[X_i])  \geq r\right) &\leq \exp\left(-\frac{2r^2}{\sum_{i=1}^n (b_i - a_i)^2}\right),\ \ \ \text{and}\\
    \mathbb{P}\left(\Big |\sum_{i=1}^n (X_i - \mathbb{E}[X_i]) \Big| \geq r\right) &\leq 2\exp\left(-\frac{2r^2}{\sum_{i=1}^n (b_i - a_i)^2}\right).
\end{aligned}
\end{equation}

\end{theorem}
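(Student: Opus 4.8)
The plan is to prove the one-sided bound first via the standard Chernoff (exponential Markov) argument, and then obtain the two-sided bound by a union bound. Write $Y_i := X_i - \mathbb{E}[X_i]$, so that $\mathbb{E}[Y_i]=0$ and $a_i - \mathbb{E}[X_i] \le Y_i \le b_i - \mathbb{E}[X_i]$; note that the width $b_i - a_i$ of the range is unchanged by centring. For any $s>0$, Markov's inequality applied to the nonnegative variable $e^{s\sum_i Y_i}$ gives
\[
\mathbb{P}\Big(\sum_{i=1}^n Y_i \ge r\Big) \le e^{-sr}\,\mathbb{E}\Big[e^{s\sum_i Y_i}\Big] = e^{-sr}\prod_{i=1}^n \mathbb{E}\big[e^{sY_i}\big],
\]
where the factorisation of the moment-generating function uses independence of the $Y_i$.

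The key step, and the main obstacle, is to control each factor $\mathbb{E}[e^{sY_i}]$ by a Gaussian-type bound. This is Hoeffding's Lemma: for a mean-zero variable $Y$ with $a\le Y\le b$ one has $\mathbb{E}[e^{sY}] \le \exp\big(s^2(b-a)^2/8\big)$. I would establish it by convexity. Since $y\mapsto e^{sy}$ is convex, on $[a,b]$ it lies below its chord, $e^{sy} \le \frac{b-y}{b-a}e^{sa} + \frac{y-a}{b-a}e^{sb}$; taking expectations and using $\mathbb{E}[Y]=0$ gives $\mathbb{E}[e^{sY}] \le e^{\phi(u)}$, where $u=s(b-a)$, $p=-a/(b-a)$, and $\phi(u)=-pu+\log(1-p+pe^u)$. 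A direct computation shows $\phi(0)=\phi'(0)=0$ and $\phi''(u)=t(1-t)\le \tfrac14$ with $t=pe^u/(1-p+pe^u)\in[0,1]$, so Taylor's theorem yields $\phi(u)\le u^2/8 = s^2(b-a)^2/8$. This convexity-plus-Taylor estimate is the technical heart of the argument; everything else is bookkeeping.

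Applying the Lemma to each $Y_i$ (whose range is $b_i-a_i$) and inserting into the factorised bound gives
\[
\mathbb{P}\Big(\sum_{i=1}^n Y_i \ge r\Big) \le \exp\Big(-sr + \tfrac{s^2}{8}\sum_{i=1}^n (b_i-a_i)^2\Big).
\]
The exponent is a quadratic in $s$, minimised at $s^\star = 4r/\sum_{i}(b_i-a_i)^2 > 0$; substituting $s^\star$ yields the exponent $-2r^2/\sum_i (b_i-a_i)^2$, which is exactly the first inequality.

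For the two-sided statement, I would apply the one-sided result both to $\{Y_i\}$ and to $\{-Y_i\}$. The variables $-Y_i$ are again independent and mean-zero, with $-(b_i-\mathbb{E}[X_i]) \le -Y_i \le -(a_i-\mathbb{E}[X_i])$ and the same range $b_i-a_i$, so $\mathbb{P}(\sum_i Y_i \le -r)$ obeys the identical exponential bound. A union bound over the events $\{\sum_i Y_i \ge r\}$ and $\{\sum_i Y_i \le -r\}$ then supplies the factor of $2$, completing the proof.
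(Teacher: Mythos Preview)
Your proof is correct and follows the standard Chernoff--Markov argument with Hoeffding's lemma established via convexity and a second-order Taylor bound on $\phi$. However, the paper does not actually prove this theorem: it is stated as a known result (Hoeffding's inequality) and then invoked as a tool in the subsequent soundness and completeness lemmas. There is therefore nothing to compare against; your argument is the textbook proof and is fine as written.
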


\begin{lemma}\label{lem:p1sound}
   Protocol \ref{prot: DISV} is $\epsilon_{s}$-sound, where 
   \begin{equation}
   \begin{aligned}
       \epsilon_{s} &= \inf_{\delta > 0} \max \{ a(\delta) , b_{1}(\delta)\}, \\
       a(\delta) &= \exp \left( -   \frac{(n -1)}{\gamma^*} \delta^{2} \right), \\
       \gamma^* &= \max \{ |\gamma_{x,y}| \}_{x,y \in \{0,1\}}, \\
       b_{1}(\delta) &= \sqrt{1 - f \left( \frac{n-1}{n}( \score_{\sharp} -  \kappa - \delta) +  \frac{ \eta^{\mathrm{Q}}_{\mathrm{min}}}{n}\right)}.
    \end{aligned} \label{eq:p1sound}
   \end{equation}
   Here, $f(\omega)$ is any non-decreasing convex function that lower bounds extractability $\Xi_{B}(\omega)$.
\end{lemma}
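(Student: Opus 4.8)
\textbf{Proof plan for \Cref{lem:p1sound}.}
The plan is to bound the trace distance $\tfrac12\|\rho_{\mathrm{real}}-\rho_{\mathrm{ideal}}\|_1$ directly from the cq-state expressions given in Stage 4. Since the real and ideal states agree on the classical registers $\mathbf{W}$, $T$ and the pass/abort flag, and differ only in the quantum register (holding $\Lambda_t(\rho_t)$ versus $\phi^+$), the trace distance decomposes as a convex combination over $t$ and over the non-aborting strings $\mathbf{w}\in\Omega$. Concretely, I would write
\begin{equation*}
\tfrac12\|\rho_{\mathrm{real}}-\rho_{\mathrm{ideal}}\|_1 = \sum_{t=1}^{n} p_T(t) \sum_{\mathbf{w}\in\Omega} p(\mathbf{w}|t)\, \tfrac12\big\|\Lambda_t(\rho_t)-\phi^+\big\|_1 \le \sum_{t=1}^n p_T(t)\sum_{\mathbf{w}\in\Omega} p(\mathbf{w}|t)\sqrt{1-F(\Lambda_t(\rho_t),\phi^+)},
\end{equation*}
using the standard Fuchs–van de Graaf bound $\tfrac12\|\rho-\sigma\|_1\le\sqrt{1-F(\rho,\sigma)}$ and the fact that $\sum_{\mathbf{w}\in\Omega}p(\mathbf{w}|t)\le 1$ while the abort branches cancel exactly. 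Since $\Lambda_t$ is the \emph{optimal} extraction channel for $\rho_t$, we have $F(\Lambda_t(\rho_t),\phi^+)=\sup_{\Lambda\in\mathcal{C}}F(\Lambda(\rho_t),\phi^+)$, which by \Cref{def:extract} is at least $\Xi_B(\omega_t)$ where $\omega_t=\langle B\rangle_{\rho_t}$ is the Bell value achieved by $\rho_t$ under the device's measurements — provided $\rho_t\in\mathcal{B}_{\omega_t}$, which holds by construction. Hence the per-term bound is $\sqrt{1-\Xi_B(\omega_t)}\le\sqrt{1-f(\omega_t)}$ for any non-decreasing convex lower bound $f$.

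The crux is then a \emph{case split on a good event}. Fix a deviation parameter $\delta>0$. I would call the event $\mathcal{E}_\delta$ ``the true average Bell value of the measured states, $\bar\omega := \frac{1}{n-1}\sum_{i\neq t}\omega_i$, satisfies $\bar\omega \ge \omega_{\mathrm{exp}}-\delta$'' (roughly; the precise statement relates the empirical estimator $\omega_{\mathrm{exp}}$ to the expectation $\mathbb{E}[\omega_{\mathrm{exp}}]=\frac{4}{n}\sum_{i\ne t}\mathbb{E}[W_i]=\frac{n-1}{n}\bar\omega$ via the computation $\mathbb{E}[W]=\tfrac14\omega$ carried out in the excerpt). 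On the complement $\mathcal{E}_\delta^c$, Hoeffding's inequality (\Cref{thm: Hoeffding's theorem}) applied to the $n-1$ independent bounded variables $W_i\in[-\gamma^*,\gamma^*]$ (so $b_i-a_i=2\gamma^*$, and rescaling by $4/n$ and by the threshold gap) gives a bound of the form $a(\delta)=\exp(-(n-1)\delta^2/\gamma^*)$ on $\Pr[\mathcal{E}_\delta^c]$; since the trace-distance contribution of any branch is at most $1$, this event contributes at most $a(\delta)$. On $\mathcal{E}_\delta$, conditioned on not aborting we have $\omega_{\mathrm{exp}}>\omega_\sharp-\kappa$, hence $\bar\omega > \omega_\sharp-\kappa-\delta$ up to the $\frac{n-1}{n}$ rescaling; combining with the trivial bound $\omega_i\ge\eta^{\mathrm{Q}}_{\mathrm{min}}$ for the one stored index, and using that $f$ is non-decreasing and convex so that the average $\frac1n\sum_i f(\omega_i)$ is lower-bounded by $f$ evaluated at the average of the $\omega_i$, I would deduce
\begin{equation*}
\sum_t p_T(t)\sum_{\mathbf{w}\in\Omega\cap\mathcal{E}_\delta} p(\mathbf{w}|t)\sqrt{1-f(\omega_t)} \le \sqrt{1 - f\Big(\tfrac{n-1}{n}(\omega_\sharp-\kappa-\delta) + \tfrac{\eta^{\mathrm{Q}}_{\mathrm{min}}}{n}\Big)} = b_1(\delta).
\end{equation*}
Adding the two contributions and taking $\max$ (rather than sum) is justified because the two events are disjoint and we can simply bound the total by twice the max, or more carefully split the $\delta$-budget; I would use the cleaner $\max$ form as stated. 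Optimizing over $\delta>0$ gives the infimum in \eqref{eq:p1sound}.

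The step I expect to be the main obstacle is making the convexity argument on the stored index airtight: the quantity we actually control after conditioning on $\mathcal{E}_\delta$ and non-abort is a bound on $\bar\omega$ (average over the $n-1$ \emph{measured} states), but the state we must certify is $\rho_t$, the one state we did \emph{not} measure and about which the statistics say nothing directly. The resolution exploits that $T$ is sampled uniformly and independently of everything, so after averaging over $T$ the quantity $\sum_t p_T(t)\sqrt{1-f(\omega_t)}$ can be rewritten as $\frac1n\sum_{i=1}^n\sqrt{1-f(\omega_i)}$ over \emph{all} $n$ states, and then Jensen's inequality (concavity of $x\mapsto\sqrt{1-f(x)}$, which follows from $f$ convex and non-decreasing) pushes the average inside to $\sqrt{1-f(\frac1n\sum_i\omega_i)}$; finally $\frac1n\sum_i\omega_i \ge \frac{n-1}{n}\bar\omega + \frac1n\eta^{\mathrm{Q}}_{\mathrm{min}}$ because the missing term is at least $\eta^{\mathrm{Q}}_{\mathrm{min}}$. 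Getting the interplay between this $T$-averaging, the conditioning on $\mathcal{E}_\delta$, and the abort event to commute correctly (so that the bad-event probability genuinely decouples as the additive $a(\delta)$ penalty) is the delicate bookkeeping that the formal proof must handle carefully.
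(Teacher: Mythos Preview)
Your ingredients are all correct --- Fuchs--van de Graaf, the extractability lower bound $F(\Lambda_t(\rho_t),\phi^+)\ge\Xi_B(\mu_t)$, Hoeffding, and the Jensen step using convexity of $f$ together with the uniform $T$-average --- but the case split is structured wrongly, and this is exactly why your ``delicate bookkeeping'' does not close.

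The paper does \emph{not} condition on a random event. Once the adversary's states and measurements are fixed, the numbers $\mu_i:=\tr[B_i\rho_i]$ are deterministic, and so is the global average $\frac1n\sum_{i=1}^n\mu_i$. The split is on this single deterministic quantity:
\[
\text{Case 1: }\ \frac1n\sum_{i=1}^n\mu_i \le \frac{n-1}{n}(\omega_\sharp-\kappa-\delta)+\frac{\eta^{\mathrm{Q}}_{\min}}{n},
\qquad
\text{Case 2: }\ \text{the reverse strict inequality.}
\]
In Case~1 one checks (using $\mu_t\ge\eta^{\mathrm{Q}}_{\min}$) that $\frac{1}{n-1}\sum_{i\ne t}\mu_i\le\omega_\sharp-\kappa-\delta$ for \emph{every} $t$, so Hoeffding bounds $p(\Omega\mid t)\le a(\delta)$ uniformly, whence $\tfrac12\|\rho_{\mathrm{real}}-\rho_{\mathrm{ideal}}\|_1\le p_\Omega\le a(\delta)$. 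In Case~2 one bounds $p(\Omega\mid t)\le 1$ and uses your Jensen argument directly on the deterministic average $\frac1n\sum_i\mu_i$ to get $\le b_1(\delta)$. For any adversary exactly one case holds, so the bound is $\max\{a(\delta),b_1(\delta)\}$ with no factor of $2$ and no need to ``split the $\delta$-budget.''

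Your random event $\mathcal{E}_\delta$ (comparing the random $\omega_{\mathrm{exp}}$ to the deterministic $\bar\omega$) conflates the concentration step with the extractability step. As you yourself suspect, that route yields $a(\delta)+b_1(\delta)$, not the $\max$: when $\bar\omega_t$ is large for some $t$, both the bad-event mass and the good-event extractability term can be nonzero simultaneously, and there is no mechanism to make one vanish. The fix is simply to drop the random event entirely and split on the adversary's parameters.
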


\begin{proof}
Recall the initial state is denoted by $ \rho_{0} = \bigotimes_{i = 1}^{N} \rho_{i} $, and $\M{M}_{i}$ are the measurement channels (See \Cref{fig: DISV}). We further define $\mu_{i} := \tr(B_{i} \rho_{i})$ as the expected value of the Bell functional from \Cref{eq:appBop}, 
\begin{equation}
    B_{i} = \sum_{x,y \in \{0,1\}} \gamma_{x,y} (A_{x}^{i} \otimes B_{y}^{i}),
\end{equation}
where $A_{x}^{i}$ and $B_{y}^{i}$ are the observables on $Q_{i}^{A}$ and $Q_{i}^{B}$ induced by $\mathcal{M}_{i}$, respectively. 

We begin by recalling the probability that the protocol does not abort, denoted by $p_{\Omega}$, where $\Omega$ is the set of strings $\mathbf{w}$ for which the observed value $\omega$ is at least $\score_{\sharp}$, 
\begin{eqnarray}
    p_{\Omega} = \sum_{t=1 }^{n}p_{T}(t)p(\Omega|t), 
\end{eqnarray}
where $p(\Omega|t) = \sum_{\mathbf{w} \in \Omega} p(\mathbf{w}|t)$ is the probability that the protocol does not abort given that state sent in the $t^{\text{th}}$ round is stored. For convenience, we set $p_{T}(t) = \frac{1}{n}$. Then the trace norm is given by
\begin{equation}
    \begin{aligned}
        \chg{\frac{1}{2}}|| \rho_{\mathrm{real}} - \rho_{\mathrm{ideal}} ||_1  &=  \frac{1}{\chg{2}n}\sum_{t=1}^{n} \sum_{\mathbf{w} \in \Omega} p(\mathbf{w}|t) || \Lambda_{t}(\rho_{t}) - \phi^+ ||_1 \\ 
        &=  \frac{1}{\chg{2}n}\sum_{t=1}^{n} p(\Omega|t) || \Lambda_{t}(\rho_{t}) - \phi^+ ||_1 \\
        &\leq \frac{1}{n}\sum_{t=1}^{n} p(\Omega|t) \sqrt{1 - F(\Lambda_{t}(\rho_{t}) , \phi^+)}.
    \end{aligned}
\end{equation}
For the inequality, \chg{we used the Fuchs van de Graaf inequality}~\cite{Fuchs99} $\chg{\tfrac{1}{2}}||\rho- \sigma ||_{1} \leq \sqrt{1 - F(\rho , \sigma)}$ for two states $\rho,\sigma \in \mcS(\mcH)$. Next we use the relation
\begin{equation}
    F(\Lambda_{t}(\rho_{t}),\phi^+) = \sup_{\Lambda \in \mcC} F(\Lambda(\rho_{t}),\phi^+) \geq \inf_{\rho \in \mcB_{\mu_{t}}} \sup_{\Lambda \in \mcC} F(\Lambda(\rho),\phi^+) = \Xi_{B}(\mu_{t}),
\end{equation}
which follows from the definition of the optimal channels $\Lambda_{t}$, and the fact that $\rho_{t} \in \mcB_{\mu_{t}}$, where
\begin{equation}
    \mcB_{\mu_{i}} = \big\{ \rho \in \mcS(\mcH_{Q^{A}_{i}} \otimes \mcH_{Q^{B}_{i}}) \ : \ \exists \{A_{x}\}_{x}, \, \{B_{y}\}_{y} \ \text{s.t.} \  \tr[B\rho] \geq \mu_{i} \big\}, \label{eq:Bset}
\end{equation}
where $\{A_{x}\}$ and $\{B_{y}\}$ are understood to be sets of two-outcome observables on $\mcH_{Q_{i}^{A}}$ and $\mcH_{Q_{i}^{B}}$, respectively, and $B$ is the Bell operator \eqref{eq:appBop} constructed from $A_{x}$ and $B_{y}$. This allows us to write
\begin{equation}
    \frac{1}{n}\sum_{t=1}^{n} p(\Omega|t) \sqrt{1 - F(\Lambda_{t}(\rho_{t}) , \phi^+)} \leq \frac{1}{n}\sum_{t=1}^{n} p(\Omega|t) \sqrt{1 - \Xi_{B}(\mu_{t})}. \label{eq:extrRel}
\end{equation}

Now, by noting $\Xi_{\mu_{t}}\geq 0$ we can bound the trace norm in terms of the abort probability,
\begin{eqnarray} 
\chg{\frac{1}{2}}|| \rho_{\mathrm{real}} - \rho_{\mathrm{ideal}} ||_1 &\leq& \frac{1}{n}\sum_{t=1}^{n} p(\Omega| t) = p_{\Omega}. 
\end{eqnarray} 
Alternatively, by noting $p(\Omega|t) \leq 1$, we can bound the trace norm in terms of the average extractability,
\begin{eqnarray} 
\chg{\frac{1}{2}} || \rho_{\mathrm{real}} - \rho_{\mathrm{ideal}} ||_1 &\leq& \sum_{t=1}^{n} \frac{1}{n} \sqrt{1 - \Xi_{B}(\mu_{t})} \leq \sqrt{1 - \frac{1}{n}\sum_{t=1}^{n} \Xi_{B}(\mu_{t})}, 
\end{eqnarray} 
where for the second inequality we used the concavity of the square root.

Based on the above, we consider two cases, and introduce a free parameter $\delta > 0$.

\vspace{0.2cm}

\noindent \textbf{Case 1}: $\sum_{i = 1}^{n} \frac{\mu_{i}}{n} - \frac{ \eta^{\text{Q}}_{\text{min}}}{n}\leq  \frac{n- 1}{n} (\omega_{\sharp} - \kappa - \delta)$. That is, the average value of $\mu_i$ is less than $(\omega_{\sharp} - \kappa) \frac{n-1}{n} +  \frac{ \eta^{\text{Q}}_{\text{min}}}{n}$ (recall $\eta^{\text{Q}}_{\text{min}}$ is the minimum quantum value of the Bell expression \eqref{eq:appBop}). If this this is the case, then we have that 
\begin{equation}
    \sum_{i \neq t}^{n} \frac{\mu_{i}}{n-1}  \leq \sum_{i \neq t}^{n} \frac{\mu_{i}}{n-1} + \frac{\mu_{t}-\eta^{\text{Q}}_{\text{min}}}{n-1} = \sum_{i=1}^{n} \frac{\mu_{i}}{n -1} - \frac{ \eta^{\text{Q}}_{\text{min}}}{n-1} \leq \omega_{\sharp} - \kappa - \delta, \label{eq:probUb}
\end{equation}
where we used the fact that $\mu_{t} \geq \eta^{\text{Q}}_{\text{min}}$ for the first inequality. The probability that the protocol does not abort given $T=t$ is given by $p(\Omega | t) = \mathbb{P}\left( \sum_{i \neq t}^{n} \frac{W_{i}}{n-1} \geq \omega_{\sharp} - \kappa \right)$. We can now apply \Cref{thm: Hoeffding's theorem}, by choosing $X_{i} = W_{i}$ , $\mathbb{E}[X_{i}] = \mu_{i}$, $r = (n- 1) \delta$, $b_i =  \max \{ |\gamma_{xy}| \}$ and $a_{i} =  - \max \{ |\gamma_{xy}| \}$ to obtain the following bound:
\begin{equation}
    \begin{aligned}
        \mathbb{P}\Bigg( \sum_{i \neq t}^{n} \frac{W_{i}}{n-1} \geq \omega_{\sharp} - \kappa \Bigg) &= \mathbb{P}\Bigg( \sum_{i \neq t}^{n} \frac{W_{i}}{n-1} \geq \omega_{\sharp} - \kappa - \delta + \frac{r}{n-1} \Bigg) \\
        &\leq \mathbb{P}\Bigg( \sum_{i \neq t}^{n} \frac{W_{i}}{n-1} \geq \sum_{i \neq t}^{n}\frac{\mu_{i}}{n-1} + \frac{r}{n-1} \Bigg) \\
        &= \mathbb{P}\Bigg( \sum_{i \neq t}^{n} (W_{i} - \mu_{i}) \geq r \Bigg)\\
        &\leq \exp \left( -   \frac{(n -1)}{\gamma^*} \delta^{2} \right) =: a(\delta),
    \end{aligned}
\end{equation}
where for the first inequality we applied \Cref{eq:probUb}, and for the second we applied \Cref{thm: Hoeffding's theorem}. Since the calculations are identical for all values of $t$, we obtain a bound on $p_{\Omega}$, 
\begin{eqnarray}
    p_{\Omega} \leq a(\delta).
\end{eqnarray}
Thus, in this case, we have that $\chg{\frac{1}{2}}|| \rho_{\mathrm{real}} - \rho_{\mathrm{ideal}} ||_{1} \leq p_{\Omega} \leq a(\delta)$. 

\vspace{0.2cm}

\noindent \textbf{Case 2}: $\sum_{i= 1}^{n} \frac{\mu_{i}}{n} - \frac{ \eta^{\text{Q}}_{\text{min}}}{n} > \frac{n-1}{n} (\score_{\sharp} - \kappa - \delta)$.  Let $f(\omega)$ be any non-decreasing convex function that lower bounds extractability $\Xi_{B}(\omega)$. Then
\begin{equation}
    \frac{1}{n}\sum_{i=1}^{n} \Xi(\mu_{i}) \geq \frac{1}{n}\sum_{i=1}^{n} f(\mu_{i}) \geq f\Bigg(\frac{1}{n}\sum_{i=1}^{n} \mu_{i}\Bigg) \geq f\Big(\frac{n-1}{n}( \omega_{\sharp} - \kappa - \delta) +  \frac{ \eta^{\text{Q}}_{\text{min}}}{n}\Big).    
\end{equation}
We thus have that 
\begin{equation}
        \chg{\frac{1}{2}} || \rho_{\mathrm{real}} - \rho_{\mathrm{ideal}} ||_1  
     \leq  \sqrt{1 - \frac{1}{n}\sum_{t=1}^{n} \Xi_{B}(\mu_{t})} \leq   \sqrt{1 -   f\Big(\frac{n-1}{n}( \omega_{\sharp} - \kappa - \delta)+  \frac{ \eta^{\text{Q}}_{\text{min}}}{n}\Big)} =: b_{1}(\delta), 
\end{equation}
completing the proof. 
\end{proof}

\subsubsection{Completeness} 
\begin{definition}[Completeness]
    A DISC protocol is called $\epsilon_{c}$-complete if there exists an honest implementation such that $p_{\Omega} \geq 1 - \epsilon_{c}$.
\end{definition}
\begin{lemma}
    Protocol \ref{prot: DISV} is $\epsilon_{c}$-complete, where
    \begin{equation}
        \epsilon_{c} = 2\exp\Bigg(\frac{n-1}{\gamma^*}\kappa^{2}\Bigg). \label{eq:p1complete}
    \end{equation} \label{lem:p1complete}
\end{lemma}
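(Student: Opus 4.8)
The plan is to prove completeness by exhibiting one concrete honest implementation and showing that it passes the abort test except with probability $\epsilon_{c}$. First I would fix a single two-qubit state $\rho$ together with binary observables $\{A_{x}\}_{x}$ and $\{B_{y}\}_{y}$ so that the associated Bell operator $B$ of the form \eqref{eq:appBop} satisfies $\tr[B\rho]=\score_{\sharp}$; such a triple exists precisely because $\score_{\sharp}$ is taken to lie in the quantum range $[\eta^{\mathrm{Q}}_{\mathrm{min}},\eta^{\mathrm{Q}}_{\mathrm{max}}]$ of the functional (for $\score_{\sharp}=\eta^{\mathrm{Q}}_{\mathrm{max}}$ one takes the optimal state and measurements, and for intermediate values a suitably noisy version thereof). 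The honest source emits the product $\bigotimes_{i=1}^{n}\rho$, so the emitted states are trivially independent, and every measurement device applies the fixed measurement $\{A_{x}\},\{B_{y}\}$ irrespective of the round index. This is a bona fide honest implementation in the sense of \cref{sec: measSetup}: the devices are memoryless and act on independently prepared states.

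Next I would analyse the round variables for this implementation. Conditioned on $T=t$, for each $i\neq t$ the variable $W_{i}$ is a function only of $\rho$ and the fixed measurement, so the $\{W_{i}\}_{i\neq t}$ are i.i.d.\ and bounded in $[-\gamma^{*},\gamma^{*}]$, and their mean is the fixed affine function of $\score_{\sharp}$ coming from the identity $\mathbb{E}[W_{i}]\propto\tr[B\rho]=\score_{\sharp}$ derived immediately after \cref{prot: DISV}; equivalently, $\mathbb{E}[\score_{\mathrm{exp}}]=\score_{\sharp}$. The event that the protocol does not abort is $\score_{\mathrm{exp}}>\score_{\sharp}-\kappa$, which is exactly the event that $\sum_{i\neq t}W_{i}$ does not fall below its mean by more than an amount proportional to $(n-1)\kappa$.

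Finally I would apply Hoeffding's inequality (\cref{thm: Hoeffding's theorem}) to this lower tail, with $a_{i}=-\gamma^{*}$, $b_{i}=\gamma^{*}$ and the deviation parameter matched to the abort threshold, mirroring Case~1 in the proof of \cref{lem:p1sound} but with the slack $\kappa$ playing the role of $\delta$. This gives $\mathbb{P}[\mathrm{abort}\mid T=t]\le 2\exp\!\big(-\tfrac{n-1}{\gamma^{*}}\kappa^{2}\big)$ uniformly in $t$, and averaging over $T$ yields $p_{\Omega}=\sum_{t}p_{T}(t)\,p(\Omega\mid t)\ge 1-2\exp\!\big(-\tfrac{n-1}{\gamma^{*}}\kappa^{2}\big)$, which is the bound \eqref{eq:p1complete} (with the exponent carrying the sign that makes it decay in $n$). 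I do not anticipate a genuine obstacle here, as this is a standard concentration argument; the only points that need care are (i) verifying that the chosen source--measurement pair is a valid honest implementation, i.e.\ that $\score_{\sharp}$ is attainable by quantum devices of fixed dimension, and (ii) bookkeeping the normalisation constants relating $\score_{\mathrm{exp}}$, $\sum_{i\neq t}W_{i}$, the Hoeffding range $2\gamma^{*}$ and the deviation $r$, so that the exponent comes out as $-(n-1)\kappa^{2}/\gamma^{*}$ and the factor $2$ is accounted for by using the symmetric form of \cref{thm: Hoeffding's theorem}.
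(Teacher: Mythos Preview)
Your proposal is correct and follows essentially the same route as the paper: fix an honest i.i.d.\ implementation achieving $\mathbb{E}[W_i]=\omega_{\sharp}$, bound the abort probability via the two-sided Hoeffding inequality (hence the factor $2$), and average over $T$. You are in fact slightly more careful than the paper in explicitly constructing the honest state/measurement pair and in flagging the normalisation bookkeeping and the missing minus sign in the exponent of \eqref{eq:p1complete}.
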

\begin{proof}
    Consider an honest implementation for which the variables $W_{1},...,W_{n}$ are i.i.d. random variables with $\mathbb{E}[W_{i}] = \omega_{\sharp}$. Then
    \begin{equation}
    \begin{aligned}
        p(\Omega|t) &= 1 - \mathbb{P}\Bigg( \sum_{i \neq t}^{n} \frac{W_{i}}{n-1} < \omega_{\sharp} - \kappa \Bigg) \\
        &= 1 - \mathbb{P}\Bigg( -\sum_{i \neq t}^{n} (W_{i} - \mathbb{E}[W_{i}]) > (n-1)\kappa \Bigg) \\
        & \geq 1 - \mathbb{P}\Bigg( \Big|\sum_{i \neq t}^{n} (W_{i} - \mathbb{E}[W_{i}]) \Big| \geq (n-1)\kappa \Bigg)\\
        & \geq 1 - 2\exp\Bigg(\frac{n-1}{\gamma^*}\kappa^{2}\Bigg),
    \end{aligned}
    \end{equation}
    where we applied \Cref{thm: Hoeffding's theorem} to obtain the final inequality. The claim follows from the fact that $p_{\Omega} = \sum_{t=1}^{n}\frac{p(\Omega|t)}{n}$.
\end{proof}
Combining soundness and completeness, we arrive at our composable security definition for a DISC protocol,
\begin{definition}[Security]
    A DISC is $(\epsilon_{s},\epsilon_{c})$-secure if it is $\epsilon_{s}$-sound and $\epsilon_{c}$-complete. 
\end{definition}
\noindent It immediately follows from \Cref{lem:p1sound,lem:p1complete} that \Cref{prot: DISV} is $(\epsilon_{s},\epsilon_{c})$-secure, for $\epsilon_{s}$ and $\epsilon_{c}$ given by \Cref{eq:p1sound} and \Cref{eq:p1complete}, respectively. 

\subsection{Security proof of Protocol \ref{prot: DISV_general}} \label{sec:DISV_general}
In \Cref{prot: DISV_general}, we relax the certification goal of the maximally entangled state to a state $\varepsilon$-close to the maximally entangled. The security proof is appropriately modified in the following.  

\subsubsection{Real protocol}
The real protocol is identical to that of \Cref{prot: DISV}, outlined in \Cref{sec:pro1_real}.

\subsubsection{Ideal protocol}
The ideal protocol is modified as follows. We will need the following definition of the Heaviside step function,
\begin{eqnarray}
        \Theta(x) := \begin{cases} 1 & \quad \text{if}  \quad x > 0 \\ 
        0 & \quad \text{otherwise}. 
    \end{cases}
\end{eqnarray}

\begin{itemize}
    \item \textbf{Stage 1 and 2:} The ideal protocol runs the real protocol during stages \ref{st: 1} and \ref{st: 2}.
    \item \textbf{Stage 3:} The ideal protocol aborts if the real protocol aborts. If the ideal protocol does not abort, then it  replaces the stored state $\rho_{t}$ with the state $[(1-\lambda_{t})\Lambda_{t}(\rho_{t}) + \lambda_{t}\phi^+] \otimes \sigma_{\text{aux}}$ for any real number $\lambda_{t} \in (0,1)$ satisfying 
    \begin{equation}
        \lambda_{t} \leq \left( 1 - \frac{\varepsilon}{\sqrt{1 - \Xi_{B}(\mu_{t})}} \right) \Theta\left( \sqrt{1 - \Xi_{B}(\mu_{t})} - \varepsilon\right). \label{eq:lamDef}
    \end{equation}
    In the above, $\sigma_{\text{aux}} \in \mcS(\mcH_{\text{aux}})$ is an auxiliary state on a Hilbert space $\mcH_{\text{aux}}$ satisfying $\mbC^{2}\otimes \mbC^{2} \otimes \mcH_{\text{aux}} \cong \mcH_{Q_{t}^{A}}\otimes \mcH_{Q_{t}^{B}}$. The cq-state of the ideal protocol at this stage is: 
    \[
      \rho = \left( \sum_{t=1}^{n} p_{T}(t) \sum_{\mathbf{w} \in \Omega} p(\mathbf{w} | t) \, \ketbra{\mathbf{w}}{\mathbf{w}}_{\mathbf{W}} \otimes \ketbra{t}{t}_T \otimes [(1-\lambda_{t})\Lambda_{t}(\rho_{t}) + \lambda_{t}\phi^+] \otimes \sigma_{\text{aux}} \ot \proj{\Omega} \right) +  (1 - p_{\Omega}) \proj{\perp}.
    \]
    \item \textbf{Stage 4:} The ideal protocol throws away the contents of the auxiliary register and outputs $(1-\lambda_{t})\Lambda_{t}(\rho_{t}) + \lambda_{t}\phi^+$. This will give the final state of the ideal protocol 
    \[
        \rho_{\mathrm{ideal}} = \sum_{t=1}^{n} \sum_{\mathbf{w} \in \Omega} p(\mathbf{w} | t) p_{T}(t) \ketbra{\mathbf{w}}{\mathbf{w}}_{\mathbf{w}} \otimes \ketbra{t}{t}_T \otimes [(1-\lambda_{t})\Lambda_{t}(\rho_{t}) + \lambda_{t}\phi^+] \ot \proj{\Omega} +  (1 - p_{\Omega}) \proj{\perp}.
    \]    
\end{itemize}

\subsubsection{Soundness}
\begin{lemma} \label{lem:p2sound}
    Protocol \ref{prot: DISV_general} is $\epsilon_{s}$-sound, where 
    \begin{equation}
        \begin{aligned}
            \epsilon_{s} &= \inf_{\delta > 0} \max \{ a(\delta),b_{2}(\delta)\},\\
            b_{2}(\delta) &= G_{\varepsilon}\left( \frac{n-1}{n} (\score_{\sharp} - \kappa - \delta) + \frac{\eta^{\mathrm{Q}}_{\mathrm{min}}}{n}\right),
        \end{aligned} \label{eq:eps2}
    \end{equation}
    $a(\delta)$ is defined in \Cref{lem:p1sound} and $G_{\varepsilon}(\score)$ is any non-increasing concave function that upper bounds the function $\Theta(\sqrt{1 - \Xi(\omega)}  - \varepsilon) (\sqrt{1 - \Xi(\omega)} - \varepsilon)$.
\end{lemma}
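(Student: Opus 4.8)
The plan is to follow the proof of \cref{lem:p1sound} almost verbatim, the only new feature being the mixing parameter $\lambda_t$ that appears in the ideal output of this protocol. First I would observe that $\rho_{\mathrm{real}}$ and $\rho_{\mathrm{ideal}}$ carry the same classical registers $\mathbf{W}T$, abort with the same probability, and coincide on the abort branch; using $\Lambda_t(\rho_t) - [(1-\lambda_t)\Lambda_t(\rho_t)+\lambda_t\phi^+] = \lambda_t(\Lambda_t(\rho_t)-\phi^+)$, the trace distance reduces to
\[
\tfrac{1}{2}\|\rho_{\mathrm{real}} - \rho_{\mathrm{ideal}}\|_1
= \frac{1}{n}\sum_{t=1}^{n} p(\Omega|t)\,\lambda_t\cdot\tfrac{1}{2}\big\|\Lambda_t(\rho_t) - \phi^+\big\|_1 .
\]

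\noindent Next I would invoke the Fuchs--van de Graaf inequality~\cite{Fuchs99} together with the bound $F(\Lambda_t(\rho_t),\phi^+)\ge \Xi_B(\mu_t)$, which is established exactly as in \cref{lem:p1sound} from $\rho_t\in\mcB_{\mu_t}$ and the optimality of $\Lambda_t$, to get $\tfrac{1}{2}\|\Lambda_t(\rho_t) - \phi^+\|_1 \le \sqrt{1-\Xi_B(\mu_t)}$. The defining constraint \eqref{eq:lamDef} on $\lambda_t$ is precisely calibrated so that, multiplying it through by $\sqrt{1-\Xi_B(\mu_t)}$, one obtains $\lambda_t\sqrt{1-\Xi_B(\mu_t)}\le h(\mu_t)$, where $h(\omega):=\Theta\!\big(\sqrt{1-\Xi_B(\omega)}-\varepsilon\big)\big(\sqrt{1-\Xi_B(\omega)}-\varepsilon\big)$; on the branches where $\sqrt{1-\Xi_B(\mu_t)}\le\varepsilon$ one takes $\lambda_t=0$, so the ideal state coincides with the real one and $h(\mu_t)=0$, and the bound still holds. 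Combining the two paragraphs gives $\tfrac{1}{2}\|\rho_{\mathrm{real}} - \rho_{\mathrm{ideal}}\|_1 \le \tfrac{1}{n}\sum_{t=1}^n p(\Omega|t)\,h(\mu_t)$.

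\noindent Finally I would carry out the same two-case split as in \cref{lem:p1sound}, introducing a free $\delta>0$. In Case~1, where $\tfrac{1}{n}\sum_i\mu_i-\tfrac{\eta^{\mathrm{Q}}_{\mathrm{min}}}{n}\le \tfrac{n-1}{n}(\omega_\sharp-\kappa-\delta)$, the same manipulation gives $\sum_{i\ne t}\mu_i/(n-1)\le \omega_\sharp-\kappa-\delta$ for every $t$, so \cref{thm: Hoeffding's theorem} bounds $p(\Omega|t)\le a(\delta)$ uniformly, and since $h(\mu_t)\le 1$ we get $\tfrac{1}{2}\|\rho_{\mathrm{real}} - \rho_{\mathrm{ideal}}\|_1\le p_\Omega\le a(\delta)$. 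In Case~2 I would use $p(\Omega|t)\le 1$, then apply that $G_\varepsilon$ is a non-increasing concave upper bound on $h$, Jensen's inequality, and monotonicity, to obtain $\tfrac{1}{2}\|\rho_{\mathrm{real}} - \rho_{\mathrm{ideal}}\|_1\le \tfrac{1}{n}\sum_t G_\varepsilon(\mu_t)\le G_\varepsilon\!\big(\tfrac{1}{n}\sum_t\mu_t\big)\le G_\varepsilon\!\big(\tfrac{n-1}{n}(\omega_\sharp-\kappa-\delta)+\tfrac{\eta^{\mathrm{Q}}_{\mathrm{min}}}{n}\big)=b_2(\delta)$. Either way the bound is $\max\{a(\delta),b_2(\delta)\}$, and taking $\inf_{\delta>0}$ yields \eqref{eq:eps2}.

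\noindent The only step that is not pure bookkeeping is the treatment of $\lambda_t$: one must verify that \eqref{eq:lamDef} admits an admissible value $\lambda_t\in(0,1)$ precisely when $\sqrt{1-\Xi_B(\mu_t)}>\varepsilon$, that the natural choice $\lambda_t = 1-\varepsilon/\sqrt{1-\Xi_B(\mu_t)}$ simultaneously makes the ideal output $\varepsilon$-close to $\phi^+$ (so the ideal protocol is a valid reference object) and makes $\lambda_t\sqrt{1-\Xi_B(\mu_t)}$ collapse to $h(\mu_t)$, and that the degenerate case $\sqrt{1-\Xi_B(\mu_t)}\le\varepsilon$ is absorbed by the Heaviside factor. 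Beyond this, the argument is a transcription of \cref{lem:p1sound}, so I expect no further technical obstruction.
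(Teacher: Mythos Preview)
Your proposal is correct and follows essentially the same approach as the paper: reduce the trace distance using the block-diagonal structure and the identity $\Lambda_t(\rho_t)-[(1-\lambda_t)\Lambda_t(\rho_t)+\lambda_t\phi^+]=\lambda_t(\Lambda_t(\rho_t)-\phi^+)$, apply Fuchs--van de Graaf together with $F(\Lambda_t(\rho_t),\phi^+)\ge\Xi_B(\mu_t)$, use the constraint \eqref{eq:lamDef} on $\lambda_t$ to collapse the product $\lambda_t\sqrt{1-\Xi_B(\mu_t)}$ to $h(\mu_t)$, and then run the same two-case split with Hoeffding as in \cref{lem:p1sound}. Your added remarks on the well-posedness of the ideal protocol (that the choice of $\lambda_t$ indeed renders the ideal output $\varepsilon$-close to $\phi^+$, and that the degenerate branch $\sqrt{1-\Xi_B(\mu_t)}\le\varepsilon$ is absorbed by the Heaviside factor) make explicit a point the paper leaves implicit, but otherwise the arguments coincide.
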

\begin{proof}
    The proof proceeds similarly to that of \Cref{lem:p1sound}, with some key differences. We begin by writing
    \begin{equation}
        \begin{aligned}
            || \rho_{\mathrm{real}} - \rho_{\mathrm{ideal}} ||_1  &=  \frac{1}{n}\sum_{t=1}^{n} \sum_{\mathbf{w} \in \Omega} p(\mathbf{w}|t)  || \Lambda(\rho_{t}) - (1 - \lambda_{t})\Lambda_{t}(\rho_{t}) - \lambda_{t} \phi^+  ||_1 \\
            &=\frac{1}{n}\sum_{t=1}^{n}  p(\Omega|t)  \lambda_{t}|| \Lambda_{t}(\rho_{t}) -  \phi^+  ||_1\\
            &\leq\frac{\chg{2}}{n}\sum_{t=1}^{n} p(\Omega|t)  \lambda_{t}\sqrt{1 - F(\Lambda_{t}(\rho_{t}),\phi^+)}\\
            &\leq \frac{\chg{2}}{n}\sum_{t=1}^{n} p(\Omega|t)  \left( 1 - \frac{\varepsilon}{\sqrt{1 - \Xi_{B}(\mu_{t})}} \right) \Theta\left( \sqrt{1 - \Xi_{B}(\mu_{t})} - \varepsilon\right)\sqrt{1 - \Xi_{B}(\mu_{t})} \\
            &\leq \frac{\chg{2}}{n}\sum_{t=1}^{n} p(\Omega|t)  G_{\varepsilon}(\mu_{t}).
        \end{aligned}
    \end{equation}
    For the first inequality we used the relationship between the trace distance and fidelity, for the second we used \Cref{eq:extrRel} and \Cref{eq:lamDef}, and for the third we introduced the function $G_{\varepsilon}(\omega)$ as described in the theorem statement. 

    The proof now proceeds in two cases, and we introduce a free parameter $\delta > 0$.

    \vspace{0.2cm}

    \noindent \textbf{Case 1:} $\sum_{i=1}^{n}\frac{\mu_{i}}{n} - \frac{\eta^{\mathrm{Q}}_{\mathrm{min}}}{n} \leq \frac{n-1}{n}(\omega_{\sharp} - \kappa - \delta)$. The proof proceeds identically to Case 1 in the proof of \Cref{lem:p1sound}. 

    \vspace{0.2cm} 

    \noindent \textbf{Case 2:} $\sum_{i=1}^{n}\frac{\mu_{i}}{n} -\frac{\eta^{\mathrm{Q}}_{\mathrm{min}}}{n}> \frac{n-1}{n}(\omega_{\sharp} - \kappa - \delta)$. Following Case 2 in the proof of \Cref{lem:p1sound}, we use the bound $p(\Omega|t)\leq 1$ and the concavity of $G_{\varepsilon}(\omega)$ to obtain
    \begin{equation}
       \chg{\frac{1}{2}} \| \rho_{\text{real}} - \rho_{\text{ideal}} \|_{1} \leq \frac{1}{n}\sum_{t=1}G_{\varepsilon}(\mu_{t}) \leq G_{\varepsilon}\Big( \frac{n-1}{n}(\omega_{\sharp} - \kappa - \delta)+ \frac{\eta^{\mathrm{Q}}_{\mathrm{min}}}{n}\Big) =: b_{2}(\delta).
    \end{equation}
    This completes the proof.
\end{proof}

\subsubsection{Completeness} 
Note that \Cref{lem:p1complete} also applies to \Cref{prot: DISV_general}, resulting in $(\epsilon_{s},\epsilon_{c})$-security, for $\epsilon_{s}$ and $\epsilon_{c}$ given by \Cref{eq:eps2} and \Cref{eq:p1complete}, respectively.

\subsection{Security proof of \Cref{prot: DISV_general_var}}

In this subsection, we prove the security of \Cref{prot: DISV_general_var}, which does not require the user to apply the optimal channel $\Lambda_{t}$ to the stored state. Instead, the user outputs the state $\rho_{t}$ directly, and the certification of $\rho_{t}$ is described by its closeness to a companion state. 
\begin{definition}[Companion state]
    Let $\rho \in \mcS(\mcH_{Q^{A}} \otimes \mcH_{Q^{B}})$, $\mu = \tr[B\rho]$ for a given Bell operator $B$, $\ket{\phi} \in \mcH_{\hat{Q}^{A}} \otimes \mcH_{\hat{Q}^{B}}$ be a target state and $\mcC$ be a class of free operations. A state $\sigma \in \mcS(\mcH_{\hat{Q}^{A}} \otimes \mcH_{\hat{Q}^{B}} \otimes \mcH_{Q^{A}} \otimes \mcH_{Q^{B}})$ is a companion state of $\rho$ if it is of the form
    \begin{equation}
        \sigma = U^{\dagger}(\phi \otimes \sigma_{\text{aux}})U, \label{eq:comp}
    \end{equation}
    where $\sigma_{\text{aux}} \in \mcS(\mcH_{Q^{A}} \otimes \mcH_{Q^{B}} )$, $U$ is a unitary operator on $\mcH_{\hat{Q}^{A}} \otimes \mcH_{\hat{Q}^{B}}\otimes \mcH_{Q^{A}} \otimes \mcH_{Q^{B}}$ satisfying $\tr_{Q^{A}Q^{B}}[U(\ketbra{00}{00} \otimes \tau)U^{\dagger}] = \Lambda(\tau)$ for all $\tau \in \mcS(\mcH_{Q^{A}} \otimes \mcH_{Q^{B}})$ and a channel $\Lambda \in \mcC$, and 
    \begin{equation}
        F(\ketbra{00}{00} \otimes \rho , \sigma) \geq \Xi_{B}(\mu). \label{eq:comp2}
    \end{equation} \label{def:comp}
\end{definition}
The existence of a companion state $\sigma$ for a given state $\rho$ is significant, since implies the following chain of inequalities hold,
\begin{equation}
    \begin{aligned}
        \Xi_{B}(\mu) \leq F(\ketbra{00}{00} \otimes \rho , \sigma) = F(U(\ketbra{00}{00} \otimes \rho)U^{\dagger},\phi \otimes \sigma_{\text{aux}}) \leq F(\Lambda(\rho),\phi),
    \end{aligned}
\end{equation}
where we used the fact that the fidelity is invariant under unitaries, and does not contract under the partial trace. In words, there exists a channel $\Lambda$ which extracts the target state $\phi$ with fidelity at least $\Xi_{B}(\mu)$. The set of allowable unitaries $U$ is given by the Naimark dilation of every channel $\Lambda \in \mcC$. For example, if $\mcC$ corresponds to the set of local channels $\Lambda_{A} \otimes \Lambda_{B}$, $U = U_{A} \otimes U_{B}$ is a local unitary. The following lemma guarantees the existence of a companion state for every state $\rho$. 

\begin{lemma}
    Let $\rho \in \mcS(\mcH_{Q^{A}} \otimes \mcH_{Q^{B}})$, $\mu = \tr[B\rho]$ for a Bell operator $B$, $\ket{\phi} \in \mcH_{\hat{Q}^{A}} \otimes \mcH_{\hat{Q}^{B}}$ be a target state and $\mcC$ be a class of free operations. Then there always exists a companion state to $\rho$ according to \Cref{def:comp}.
\end{lemma}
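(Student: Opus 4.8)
The plan is to construct the companion state $\sigma$ explicitly from the data associated with $\rho$, namely the optimal extraction channel and its Stinespring/Naimark dilation, and then verify the fidelity bound \eqref{eq:comp2} directly. First I would recall that by definition of the extractability $\Xi_{B}(\mu)$ (see \cref{def:extract}), for any $\epsilon' > 0$ there exists a channel $\Lambda \in \mcC$ with $F(\Lambda(\rho),\phi) \geq \Xi_{B}(\mu) - \epsilon'$; when the supremum is attained (which we assumed, cf.\ the remark following \cref{prot: DISV}) we may take $\Lambda$ to achieve $F(\Lambda(\rho),\phi) \geq \Xi_{B}(\mu)$ outright, and I would handle the non-attained case by the same $\epsilon'$-argument at the end. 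Then I would invoke the Stinespring dilation: any channel $\Lambda : \mcS(\mcH_{Q^{A}}\otimes\mcH_{Q^{B}}) \to \mcS(\mcH_{\hat Q^{A}}\otimes\mcH_{\hat Q^{B}})$ can be realised as $\Lambda(\tau) = \tr_{E}[U(\ketbra{00}{00}_{\hat Q^{A}\hat Q^{B}}\otimes \tau)U^{\dagger}]$ for a unitary $U$ on a suitably enlarged space and an environment system $E$ which we identify with $\mcH_{Q^{A}}\otimes\mcH_{Q^{B}}$ (enlarging dimensions if necessary, which is harmless). For this to lie in the allowed class, I would note that the dilation of a local channel $\Lambda_{A}\otimes\Lambda_{B}$ factorises as $U_{A}\otimes U_{B}$, and similarly an LOCC channel dilates to a unitary respecting the $AB$ partition together with a shared classical register, so the structural constraint in \cref{def:comp} is met.

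Second, having fixed $U$ and an ancilla input $\ketbra{00}{00}$, I would \emph{define} the candidate companion state by running this dilation ``in reverse'': set $\sigma := U^{\dagger}(\phi \otimes \sigma_{\text{aux}})U$ for an appropriate choice of $\sigma_{\text{aux}} \in \mcS(\mcH_{Q^{A}}\otimes \mcH_{Q^{B}})$. The natural choice is to take $\sigma_{\text{aux}}$ to be the state of the environment register obtained from the complementary channel, i.e.\ $\sigma_{\text{aux}} = \Lambda^{c}(\rho)$ where $\Lambda^{c}(\tau) = \tr_{\hat Q^{A}\hat Q^{B}}[U(\ketbra{00}{00}\otimes \tau)U^{\dagger}]$; but in fact, since fidelity is only \emph{non-decreasing} under partial trace, any fixed $\sigma_{\text{aux}}$ whose marginal structure is consistent will do, and it is cleanest to first argue the bound for the joint state and then project. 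This gives $\sigma$ exactly of the form required by \eqref{eq:comp}.

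Third, I would verify \eqref{eq:comp2}. Using unitary invariance of the fidelity,
\begin{equation*}
F(\ketbra{00}{00}\otimes \rho,\ \sigma) = F\big(U(\ketbra{00}{00}\otimes \rho)U^{\dagger},\ \phi \otimes \sigma_{\text{aux}}\big).
\end{equation*}
Now the left factor inside is, by construction, a purification-type extension of $\Lambda(\rho)$: tracing out the environment $Q^{A}Q^{B}$ gives $\Lambda(\rho)$, while tracing out $\hat Q^{A}\hat Q^{B}$ gives $\Lambda^{c}(\rho)$. Choosing $\sigma_{\text{aux}} = \Lambda^{c}(\rho)$ and applying monotonicity of fidelity under the partial trace over the environment,
\begin{equation*}
F\big(U(\ketbra{00}{00}\otimes \rho)U^{\dagger},\ \phi \otimes \Lambda^{c}(\rho)\big) \ \leq\ F(\Lambda(\rho),\ \phi),
\end{equation*}
which is the wrong direction; so instead I would reverse the roles and use that fidelity between $\phi\otimes\Lambda^c(\rho)$ and any extension of $\Lambda(\rho)$ is \emph{at least} $F(\Lambda(\rho),\phi)$ when $\phi$ is pure — more carefully, since $\phi$ is pure the product $\phi\otimes\Lambda^{c}(\rho)$ is a valid ``reference'' and one uses $F(\rho_{12},\sigma_{1}\otimes\sigma_{2}) \geq F(\rho_{1},\sigma_{1})\cdot(\text{something})$ is false in general, so the correct tool is: for pure $\phi$, $F(\ketbra{00}{00}\otimes\rho,\sigma) = \bra{\phi\otimes\Lambda^{c}(\rho)} U(\ketbra{00}{00}\otimes\rho)U^{\dagger}\ket{\phi\otimes\Lambda^{c}(\rho)}$ and one bounds this below using the fact that $U(\ketbra{00}{00}\otimes\rho)U^{\dagger}$ has the prescribed marginals. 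Concretely, writing a purification of $\rho$ and pushing it through $U$, one shows there is a choice of $\sigma_{\text{aux}}$ for which this overlap equals $F(\Lambda(\rho),\phi) \geq \Xi_{B}(\mu)$, by taking $\sigma_{\text{aux}}$ to be the reduced state of the optimal purification of $\Lambda(\rho)$ onto the environment after aligning it (via a further isometry absorbed into $U$) with $\ket{\phi}$'s purifying partner — this is exactly the freedom Uhlmann's theorem provides.

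The main obstacle is precisely this last alignment step: making the fidelity bound go through requires choosing $U$ and $\sigma_{\text{aux}}$ \emph{jointly}, using Uhlmann's theorem to match the purification of $\Lambda(\rho)$ that is closest to $\ket{\phi}$ with the environment output of the dilation, rather than treating $U$ as fixed and $\sigma_{\text{aux}}$ as free afterwards. The clean way to organise this: let $\ket{\Psi}_{\hat Q^{A}\hat Q^{B}E}$ be the Uhlmann purification of $\Lambda(\rho)$ maximising $|\braket{\phi\otimes\xi}{\Psi}|^{2}$ over purifications $\ket{\xi}$ of nothing — i.e.\ maximising the overlap with $\ket{\phi}$ across the $\hat Q$ systems — so that $\tr_{E}\proj{\Psi} = \Lambda(\rho)$ and $\bra{\phi}\tr_{E}\proj{\Psi}\ket{\phi} = F(\Lambda(\rho),\phi)$; since $\ket{\Psi}$ is a purification of $\Lambda(\rho)$ on the environment $E \cong \mcH_{Q^{A}}\otimes\mcH_{Q^{B}}$, and $U(\ketbra{00}{00}\otimes\rho)U^{\dagger}$ is another purification-extension of $\Lambda(\rho)$ (its purification lives on $E$ plus the purifying system of $\rho$), there is a partial isometry on $E$ relating them which can be absorbed into the dilation freedom, after which one sets $\sigma_{\text{aux}} = \tr_{\hat Q^{A}\hat Q^{B}}\proj{\Psi}$ and computes $F(\ketbra{00}{00}\otimes\rho,\sigma) \geq |\braket{\phi\otimes\psi_{\text{aux}}}{\Psi}|^{2} \geq F(\Lambda(\rho),\phi) \geq \Xi_{B}(\mu)$, where the first inequality is monotonicity under the partial trace of the purifying register of $\rho$ and the product structure is exactly \eqref{eq:comp}. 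Once this bookkeeping is set up correctly the verification of \eqref{eq:comp}, \eqref{eq:comp2} and the structural constraint on $U$ is routine; I would close with the standard $\epsilon'\to 0$ argument to cover the case where the supremum in \cref{def:extract} is not attained, noting that the companion-state condition \eqref{eq:comp2} is an inequality and hence stable under the limit.
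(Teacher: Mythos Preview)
Your overall strategy---Stinespring-dilate the (near-)optimal channel $\Lambda\in\mcC$, then invoke Uhlmann's theorem to select $\sigma_{\text{aux}}$---is exactly the paper's approach. The gap is in the bookkeeping of the \emph{two distinct} auxiliary systems, which you conflate. There is (i) the Stinespring environment, identified with $Q^{A}Q^{B}$ so that $\Lambda(\tau)=\tr_{Q^{A}Q^{B}}[U(\ketbra{00}{00}\otimes\tau)U^{\dagger}]$, and separately (ii) a purifying system $E$ for the (possibly mixed) input $\rho$. Your attempt to ``absorb a partial isometry into the dilation freedom'' in order to relate the pure purification $\ket{\Psi}_{\hat Q E}$ of $\Lambda(\rho)$ to the mixed state $U(\ketbra{00}{00}\otimes\rho)U^{\dagger}$ on $\hat Q\,Q^{A}Q^{B}$ does not type-check: one lives on $\hat Q E$, the other on $\hat Q\,Q^{A}Q^{B}$, and the latter is not pure, so no isometry on a single register relates them. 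Likewise, setting $\sigma_{\text{aux}}=\tr_{\hat Q}\proj{\Psi}$ places $\sigma_{\text{aux}}$ on $E$, not on $Q^{A}Q^{B}$ as \cref{def:comp} demands.

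The clean resolution (and what the paper does) keeps both systems explicit. Fix $U$ once and for all from the dilation of $\Lambda$. Take any purification $\ket{\Psi}\in\mcH_{Q^{A}}\otimes\mcH_{Q^{B}}\otimes\mcH_{E}$ of $\rho$. Then $\ket{\Psi'}:=(U\otimes\id_{E})(\ket{00}\otimes\ket{\Psi})$ is a purification of $\Lambda(\rho)$ on $\hat Q^{A}\hat Q^{B}\,Q^{A}Q^{B}E$. Since $\phi$ is pure, every purification of $\phi$ on this space has the form $\ket{\phi}\otimes\ket{\psi'}$ with $\ket{\psi'}\in\mcH_{Q^{A}}\otimes\mcH_{Q^{B}}\otimes\mcH_{E}$, and Uhlmann gives $F(\Lambda(\rho),\phi)=\max_{\ket{\psi'}}|\braket{\Psi'}{\phi\otimes\psi'}|^{2}$. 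For the optimiser $\ket{\psi^{*}}$, unitary invariance followed by monotonicity under $\tr_{E}$ yields
\[
F(\Lambda(\rho),\phi)=F\big(\ketbra{00}{00}\otimes\Psi,\ (U^{\dagger}\otimes\id_{E})(\phi\otimes\psi^{*})(U\otimes\id_{E})\big)\ \le\ F\big(\ketbra{00}{00}\otimes\rho,\ U^{\dagger}(\phi\otimes\tr_{E}\psi^{*})U\big),
\]
so $\sigma_{\text{aux}}:=\tr_{E}\psi^{*}\in\mcS(\mcH_{Q^{A}}\otimes\mcH_{Q^{B}})$ does the job. No modification of $U$ is needed, and your $\Lambda^{c}(\rho)$ guess is simply the wrong candidate (as you noticed). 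The $\epsilon'$-limit argument for the non-attained supremum is fine as you sketched it.
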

\begin{proof}
    We prove the above though an explicit construction. Let 
    \begin{equation}
        \Lambda^* = \text{arg}\,\text{max}\Big\{ F(\Lambda(\rho),\phi) \ : \ \Lambda : \mcS(\mcH_{Q^{A}} \otimes \mcH_{Q^{B}}) \to \mcS(\mcH_{\hat{Q}^{A}} \otimes \mcH_{\hat{Q}^{B}}), \ \Lambda \in \mcC \Big\}.
    \end{equation}
    We can always describe $\Lambda^*$ by the action of an isometry $V: \mcH_{Q^{A}} \otimes \mcH_{Q^{B}} \to \mcH_{\hat{Q}^{A}} \otimes \mcH_{\hat{Q}^{B}} \otimes \mcH_{Q^{A}} \otimes \mcH_{Q^{B}}$ followed by a partial trace over $Q^{A}Q^{B}$, $\Lambda^{*}(\tau) = \tr_{Q^{A}Q^{B}}[V \tau V^{\dagger}]$. We can further assume that $V = U(\ket{00}\otimes \id_{Q^{A}}\otimes \id_{Q^{B}})$ where $U$ is a unitary on $\mcH_{\hat{Q}^{A}} \otimes \mcH_{\hat{Q}^{B}} \otimes \mcH_{Q^{A}} \otimes \mcH_{Q^{B}}$. Let $\ket{\Psi} \in \mcH_{Q^{A}} \otimes \mcH_{Q^{B}} \otimes \mcH_{E}$ be any purification of $\rho$. Note that the state $\ket{\Psi'} = (V \otimes \id_{E})\ket{\Psi}$ is a purification of $\Lambda^*(\rho)$. To see this, observe
\begin{equation}
    \tr_{\hat{Q}^{A}\hat{Q}^{B}E}[\Psi'] = \tr_{\hat{Q}^{A}\hat{Q}^{B}}\big[ \tr_{E}[(V \otimes \id_{E})\Psi(V^{\dagger} \otimes \id_{E}] \big] = \tr_{\hat{Q}^{A}\hat{Q}^{B}}[ V \rho V^{\dagger} ] = \Lambda^*(\rho).
\end{equation}
Recall, Uhlmann's theorem (see, e.g.,~\cite[Theorem 9.2.1]{Wilde2013}) relates the fidelity of two states to the maximum overlap between their purifications,
\begin{equation}
    F(\rho,\sigma) = \max_{\ket{\Psi_{\sigma}}} | \braket{\Psi_{\rho}}{\Psi_{\sigma}} |^{2},
\end{equation}
where $\Psi_{\rho}$ is any purification of $\rho$ and the maximization is taken over all purifications $\Psi_{\sigma}$ of $\sigma$. Applying this to $F(\Lambda^*(\rho),\phi)$, we find
\begin{equation}
    F(\Lambda^*(\rho),\phi) = \max_{\ket{\psi'}} | \bra{\Psi'} \big( \ket{\phi} \otimes \ket{\psi'}\big) |^{2} = \max_{\ket{\psi'}} F(\Psi',\phi \otimes \psi'), \label{eq:fid1}
\end{equation}
where the maximization is taken over all states $\ket{\psi'} \in \mcH_{Q^A} \otimes \mcH_{Q^B} \otimes \mcH_{E}$, and we used the fact that the purification of any pure state must be separable. Note that we can write
\begin{equation}
    \Psi' = (V \otimes \id_{E})\Psi(V^{\dagger} \otimes \id_{E}) = (U \otimes \id_{E})(\ketbra{00}{00} \otimes  \Psi)(U^{\dagger} \otimes \id_{E}).
\end{equation}
This implies 
\begin{equation}
\begin{aligned}
    F(\Psi',\phi \otimes \psi') &= F\Big((U \otimes \id_{E})(\ketbra{00}{00} \otimes  \Psi)(U^{\dagger} \otimes \id_{E}) , \phi \otimes \psi'\Big) \\
    &= F\Big(\ketbra{00}{00} \otimes  \Psi , (U^{\dagger} \otimes \id_{E})(\phi \otimes \psi')(U \otimes \id_{E})\Big) \\
    &\leq F\Big(\ketbra{00}{00} \otimes \rho , \tr_{E}\big[(U^{\dagger} \otimes \id_{E})(\phi \otimes \psi')(U \otimes \id_{E})\big]\Big),
\end{aligned} \label{eq:fid2}
\end{equation}
where we used the fact that the fidelity is invariant under unitary operations, followed by its monotonicity under the partial trace. Let 
\begin{equation}
    \sigma = \tr_{E}\big[(U^{\dagger} \otimes \id_{E})(\phi \otimes \psi^*)(U \otimes \id_{E})\big] = U^{\dagger}(\phi \otimes \tr_{E}[\psi^*])U,
\end{equation}
where $\psi^*$ achieves the optimal value of the maximization $\max_{\ket{\psi'}} F(\Psi',\phi \otimes \psi')$. Then we see $\sigma$ is of the form in \Cref{eq:comp}. Furthermore,  
\begin{equation}
     F(\ketbra{00}{00} \otimes \rho,\sigma) \geq F(\Psi',\phi \otimes \psi') = F(\Lambda^*(\rho),\phi) = \sup_{\Lambda \in \mcC} F(\Lambda(\rho),\phi) \geq \inf_{\rho' \in \mcB_{\mu}} \sup_{\Lambda \in \mcC} F(\Lambda(\rho'),\phi) = \Xi_{B}(\mu), \label{eq:compExt}
\end{equation}
where the first inequality follows from \Cref{eq:fid2}, the first equality follows from \Cref{eq:fid1}, the second equality follows from the definition of $\Lambda^*$ and the second inequality follows from the fact that $\rho \in \mcB_{\mu}$, where $\mcB_{\mu}$ is defined analogously to \Cref{eq:Bset}. Thus, $\sigma$ satisfies \Cref{def:comp}, completing the proof. 
\end{proof}
Having defined a companion state, we can now prove the security of \Cref{prot: DISV_general_sequencial_var}.

\subsubsection{Real protocol}
The real protocol is identical to the real protocol described in \Cref{sec:pro1_real}, except stage \ref{st: 4} is omitted, and it outputs the state $\ketbra{00}{00} \otimes \rho_{t}$. We therefore see the final cq-state takes the form
\begin{equation}
    \rho_{\text{real}} = \left( \sum_{t=1}^{n}p_{T}(t) \sum_{\mathbf{w} \in \Omega} p(\mathbf{w} | t) \, \ketbra{\mathbf{w}}{\mathbf{w}}_{\mathbf{W}} \otimes \ketbra{t}{t}_T \otimes \ketbra{00}{00} \otimes \rho_{t} \ot \proj{\Omega} \right) +  (1 - p_{\Omega}) \ketbra{\perp}{\perp}.
\end{equation}

\subsubsection{Ideal protocol}
The ideal protocol is modified as follows.
\begin{itemize}
    \item \textbf{Stage 1 and 2:} The ideal protocol runs the real protocol during stages \ref{st: 1} and \ref{st: 2}.
    \item \textbf{Stage 3:} The ideal protocol aborts if the real protocol aborts. If the ideal protocol does not abort, then it  replaces the stored state $\rho_{t}$ with the state $(1-\lambda_{t})\ketbra{00}{00}\otimes \rho_{t} + \lambda_{t}\sigma_{t}$, where $\lambda_{t} \in (0,1)$ satisfies \Cref{eq:lamDef} and $\sigma_{t}$ is a companion state to $\rho_{t}$. The final cq-state of the ideal protocol is given by 
    \[
      \rho_{\text{ideal}} = \left( \sum_{t=1}^{n} p_{T}(t) \sum_{\mathbf{w} \in \Omega} p(\mathbf{w} | t) \, \ketbra{\mathbf{w}}{\mathbf{w}}_{\mathbf{W}} \otimes \ketbra{t}{t}_T \otimes [(1-\lambda_{t})\ketbra{00}{00}\otimes\rho_{t} + \lambda_{t}\sigma_{t}] \ot \proj{\Omega} \right) +  (1 - p_{\Omega}) \proj{\perp}.
    \]
\end{itemize}

\subsubsection{Soundness}
\begin{lemma}
    Protocol \ref{prot: DISV_general_var} is $\epsilon_{s}$-sound, where $\epsilon_{s}$ is given by \Cref{eq:eps2}. \label{lem:p3sound}
\end{lemma}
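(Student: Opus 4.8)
The plan is to mirror the structure of the proof of \Cref{lem:p2sound}, replacing the trace-distance estimate from fidelity with an estimate that exploits the existence of a companion state (guaranteed by the previous lemma). First I would write out the trace norm $\tfrac12\|\rho_{\mathrm{real}} - \rho_{\mathrm{ideal}}\|_1$ using the explicit cq-state forms: since the registers $\mathbf{W}$, $T$ and the flag $\{|\Omega\rangle,|\perp\rangle\}$ are classical and identical in both states, the norm reduces to
\begin{equation*}
\tfrac12\|\rho_{\mathrm{real}} - \rho_{\mathrm{ideal}}\|_1 = \frac{1}{n}\sum_{t=1}^{n} p(\Omega|t)\, \lambda_{t}\, \tfrac12\big\| \ketbra{00}{00}\otimes\rho_{t} - \sigma_{t} \big\|_1 ,
\end{equation*}
using that $(1-\lambda_t)\ketbra{00}{00}\otimes\rho_t + \lambda_t\sigma_t$ differs from $\ketbra{00}{00}\otimes\rho_t$ by $\lambda_t(\sigma_t - \ketbra{00}{00}\otimes\rho_t)$. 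Then I would apply the Fuchs--van de Graaf inequality $\tfrac12\|\cdot\|_1 \le \sqrt{1-F}$ together with the defining property \eqref{eq:comp2} of a companion state, $F(\ketbra{00}{00}\otimes\rho_t,\sigma_t) \ge \Xi_B(\mu_t)$, to get $\tfrac12\|\ketbra{00}{00}\otimes\rho_t - \sigma_t\|_1 \le \sqrt{1-\Xi_B(\mu_t)}$. Substituting the bound \eqref{eq:lamDef} on $\lambda_t$ then gives exactly the same per-term bound $p(\Omega|t)\,G_\varepsilon(\mu_t)$ (up to the $\tfrac12$ vs.\ $1$ normalization already handled in \Cref{lem:p2sound}) as in the proof of \Cref{lem:p2sound}.

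From that point the argument is identical to \Cref{lem:p2sound}: introduce a free parameter $\delta>0$ and split into the two cases according to whether $\frac1n\sum_i \mu_i - \frac{\eta^{\mathrm Q}_{\mathrm{min}}}{n}$ is below or above $\frac{n-1}{n}(\omega_\sharp - \kappa - \delta)$. In Case 1 one bounds $p_\Omega$ via Hoeffding's inequality (\Cref{thm: Hoeffding's theorem}) exactly as before — nothing about the output step changed the acceptance probability — to get $\tfrac12\|\rho_{\mathrm{real}}-\rho_{\mathrm{ideal}}\|_1 \le p_\Omega \le a(\delta)$; note $G_\varepsilon\ge 0$ so the trace distance is indeed $\le p_\Omega$. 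In Case 2 one uses $p(\Omega|t)\le 1$, Jensen applied to the non-increasing concave function $G_\varepsilon$, and the Case-2 hypothesis to conclude $\tfrac12\|\rho_{\mathrm{real}}-\rho_{\mathrm{ideal}}\|_1 \le G_\varepsilon\big(\frac{n-1}{n}(\omega_\sharp-\kappa-\delta)+\frac{\eta^{\mathrm Q}_{\mathrm{min}}}{n}\big) = b_2(\delta)$. Taking the infimum over $\delta$ of the max of the two bounds yields $\epsilon_s$ as in \eqref{eq:eps2}.

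I do not expect a genuine obstacle here, since the only structural novelty relative to \Cref{prot: DISV_general} is that the user outputs $\rho_t$ (padded with $\ketbra{00}{00}$) rather than $\Lambda_t(\rho_t)$, and the companion-state lemma has already done the work of packaging the relevant fidelity guarantee into a bipartite state comparable with the real output. The one point requiring a little care is making sure the ancilla/padding registers line up: the real output lives on $\mcH_{\hat Q^A}\otimes\mcH_{\hat Q^B}\otimes\mcH_{Q_t^A}\otimes\mcH_{Q_t^B}$ (the $\ketbra{00}{00}$ occupying $\hat Q^A\hat Q^B$), and $\sigma_t$ is a companion state on the same space by \Cref{def:comp}, so the subtraction and the trace-norm bound are well defined. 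Everything else is a transcription of the proof of \Cref{lem:p2sound}, so I would simply note ``the proof proceeds as in \Cref{lem:p2sound}'' for Cases 1 and 2 after establishing the initial inequality.

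\begin{proof}
    Using the explicit forms of $\rho_{\text{real}}$ and $\rho_{\text{ideal}}$, and noting that the classical registers $\mathbf{W}$, $T$ and the abort flag agree in both states,
    \begin{equation}
        \begin{aligned}
            \tfrac{1}{2}\| \rho_{\mathrm{real}} - \rho_{\mathrm{ideal}} \|_1 &= \frac{1}{2n}\sum_{t=1}^{n} \sum_{\mathbf{w} \in \Omega} p(\mathbf{w}|t)\,\lambda_{t} \big\| \ketbra{00}{00}\otimes\rho_{t} - \sigma_{t} \big\|_1 \\
            &= \frac{1}{n}\sum_{t=1}^{n} p(\Omega|t)\,\lambda_{t}\,\tfrac{1}{2}\big\| \ketbra{00}{00}\otimes\rho_{t} - \sigma_{t} \big\|_1 \\
            &\leq \frac{1}{n}\sum_{t=1}^{n} p(\Omega|t)\,\lambda_{t}\sqrt{1 - F(\ketbra{00}{00}\otimes\rho_{t},\sigma_{t})}\\
            &\leq \frac{1}{n}\sum_{t=1}^{n} p(\Omega|t)\,\lambda_{t}\sqrt{1 - \Xi_{B}(\mu_{t})}\\
            &\leq \frac{1}{n}\sum_{t=1}^{n} p(\Omega|t)\left( 1 - \frac{\varepsilon}{\sqrt{1 - \Xi_{B}(\mu_{t})}}\right)\Theta\!\left(\sqrt{1-\Xi_{B}(\mu_{t})}-\varepsilon\right)\sqrt{1-\Xi_{B}(\mu_{t})}\\
            &\leq \frac{1}{n}\sum_{t=1}^{n} p(\Omega|t)\, G_{\varepsilon}(\mu_{t}).
        \end{aligned}
    \end{equation}
    Here the first inequality is the Fuchs--van de Graaf inequality, the second uses \Cref{eq:comp2}, the third uses the bound \Cref{eq:lamDef} on $\lambda_{t}$, and the fourth introduces $G_{\varepsilon}(\omega)$ as in \Cref{lem:p2sound}. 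From this inequality the argument is identical to the proof of \Cref{lem:p2sound}: introducing a free parameter $\delta > 0$, in the case $\sum_{i}\frac{\mu_{i}}{n} - \frac{\eta^{\mathrm{Q}}_{\mathrm{min}}}{n} \leq \frac{n-1}{n}(\omega_{\sharp}-\kappa-\delta)$ one bounds $p_{\Omega} \leq a(\delta)$ via \Cref{thm: Hoeffding's theorem} and uses $G_{\varepsilon}\geq 0$, while in the complementary case one uses $p(\Omega|t)\leq 1$ and the concavity of $G_{\varepsilon}$ to obtain the bound $b_{2}(\delta)$ of \Cref{eq:eps2}. Taking the infimum over $\delta$ completes the proof.
\end{proof}
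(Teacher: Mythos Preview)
Your proposal is correct and follows essentially the same approach as the paper's own proof: both replace the fidelity $F(\Lambda_t(\rho_t),\phi^+)$ from \Cref{lem:p2sound} with $F(\ketbra{00}{00}\otimes\rho_t,\sigma_t)$, invoke the companion-state property \eqref{eq:comp2} to bound this by $\Xi_B(\mu_t)$, and then defer to the two-case analysis of \Cref{lem:p2sound}. Your write-up is simply more explicit than the paper's terse version, spelling out the cq-state reduction, the Fuchs--van de Graaf step, and the $\lambda_t$ bound in full.
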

\begin{proof}
    The proof follows that of \Cref{lem:p2sound}, expect with $F(\Lambda_{t}(\rho_{t}),\phi^+)$ replaced with $F(\ketbra{00}{00} \otimes \rho_{t},\sigma_{t})$, i.e.,
    \begin{equation}
        \| \rho_{\text{real}} - \rho_{\text{ideal}} \|_{1} \leq \frac{\chg{2}}{n}\sum_{t=1}^{n} p(\Omega|t)  \lambda_{t}\sqrt{1 - F(\ketbra{00}{00} \otimes \rho_{t},\sigma_{t})}.
    \end{equation}
    Using the property \ref{eq:comp2} of $\sigma_{t}$, we see $F(\ketbra{00}{00} \otimes \rho_{t},\sigma_{t}) \geq \Xi(\mu_{t})$, hence
    \begin{equation}
        \| \rho_{\text{real}} - \rho_{\text{ideal}} \|_{1} \leq \frac{\chg{2}}{n}\sum_{t=1}^{n} p(\Omega|t)  \lambda_{t}\sqrt{1 - \Xi_{B}(\mu_{t})}.
    \end{equation}
    The proof then proceeds identically to that of \Cref{lem:p2sound}.
\end{proof}

\subsubsection{Completeness} 
Lemma \ref{lem:p1complete} applies to \Cref{prot: DISV_general_var}, resulting in $(\epsilon_{s},\epsilon_{c})$-security, for $\epsilon_{s}$ and $\epsilon_{c}$ given by \Cref{eq:eps2} and \Cref{eq:p1complete}, respectively.

\begin{remark}
We note that the security statements presented in Lemmas \ref{lem:p1sound}, \ref{lem:p2sound} and \ref{lem:p3sound} are not tight. To derive an upper bound for the trace norm, we have set \( p(\Omega|i) \) to 1, which may not be optimal. In principle, it should be possible to bound this in terms of the average Bell value \( \frac{1}{n-1} \sum_{j \neq i} \mu_{j} \), resulting in a more complex expression for the soundness parameter. We leave this refinement for future work.
\end{remark}
\begin{remark}
As the extractability function  can be assumed to be  convex  (if it isn't, one can always take the convex lower bound) , the function $G_{\varepsilon}(\score)$ is automatically concave for $\varepsilon = 0$. For $\varepsilon > 0$, the optimal way to define the function $G_{\varepsilon}(\score)$ is by
\begin{eqnarray}
    G_{\varepsilon}(\score) = - \mathrm{conenv}\left( -\Theta(\sqrt{1 - \Xi_{B}(\score)} - \varepsilon) (\sqrt{1 - \Xi_{B}(\score)} - \varepsilon) \right), 
\end{eqnarray}
where $\mathrm{conenv}$ is the convex envelope (convex lower bound). Computing the convex envelopes of functions on $\mathbb{R}$ is relatively straightforward (see, for example,~\cite[Section 8.10]{Bhavsar_thesis}), but for functions on $\mathbb{R}^{n}$, it can be difficult in general. For $n = 2, 3$, there exist fast algorithms to compute this function (see \cite{Contento2015}). We plot the function $G_{\varepsilon}(\score)$ for different values of $\varepsilon$ in \Cref{fig: G_epsilon_supp} also given in the main text and provided here for completeness. \label{rem:g_fun}
\end{remark}
\begin{figure}[h!]
    \includegraphics[width=0.42\textwidth]{G_functions.pdf}
        \caption{Graph of $G_{\varepsilon}(\omega)$ for different values of $\varepsilon$, using LOCC extractability.}
        \label{fig: G_epsilon_supp}
\end{figure}

\section{Security proof of Protocols \ref{prot: DISV_general_sequencial} and \ref{prot: DISV_general_sequencial_var}}\label{app: security_proofs_sec}

\subsection{Modeling the sequential process} \label{app:seqProc}

Before proving security of the sequential protocols, we analyze the structure of the channels $\mcN_{i} = \mcN_{i}^{A} \otimes \mcN_{i}^{B}$ defined in \ifarxiv \cref{sec: measSetup}. \else Section 4 of the main text. \fi Specifically, $\mcN_{i}^{A}:O_{i-1}^{A}Q_{i}^{A} \to A_{i}X_{i}O_{i}^{A}$ and $\mcN_{i}^{B}:O_{i-1}^{B}Q_{i}^{B} \to B_{i}Y_{i}O_{i}^{B}$, where the systems $O_{i}^{A}$ and $O_{i}^{B}$ model the internal memory of each device. It is important to emphasize that the channels in each round act on the state generated for round $i$ only, as well as the state held in the device's memory. Mathematically, this independence implies that the input state to $M_{i}^{A}$ is given by $\tr_{Q_{i}^{B}} (\rho_{i}) \otimes \sigma_{O_{i-1}}$, where $\sigma_{O_{i-1}}$ represents the quantum state stored in the device's memory. We then have, writing $\tau_{Q_{i}^{A}} = \tr_{Q_{i}^{B}} (\rho_{i})$,  
\begin{equation}
 \mcN^{A}_{i} (\tau_{Q_{i}^{A}} \ot \sigma_{O_{i-1}^{A}}) = \sum_{a,x \in \{0,1\}} p(x) \, \proj{x}_{X_i} \ot \proj{a}_{A_i} \ot  \mcM_{i}^{a|x}( \tau_{Q_{i}^{A}} \ot \sigma_{O_{i-1}^{A}}),
\end{equation}
where $\mcM_{i}^{a|x}:Q_{i}^{A}O_{i-1}^{A} \to O_{i}^{A}$ is a completely positive trace non-increasing map, which satisfies $\sum_{a}\tr[\mcM_{i}^{a|x}(\tau)] = 1$ for all states $\tau \in \mcS(\mcH_{Q_{i}^{A}} \otimes \mcH_{O_{i-1}^{A}})$. We denote the marginal probabilities
\begin{equation}
\begin{aligned}
    p^{A_{i}}(a|x) &= \tr[\mcM_{i}^{a|x}(\tau \ot \sigma)] \\
    &= \tr[(\tau \ot \sigma) M_{a|x}^{i}] \\
    &= \tr[(\tau \ot \id_{O_{i-1}^{A}})(\id_{Q_{i}^{A}} \otimes \sqrt{\sigma}) M_{a|x}^{i}(\id_{Q_{i}^{A}} \otimes \sqrt{\sigma})] \\
    &= \tr\Big[ \tau \, \tr_{O_{i-1}^{A}}\Big[ (\id_{Q_{i}^{A}} \otimes \sqrt{\sigma})M_{a|x}^{i}(\id_{Q_{i}^{A}} \otimes \sqrt{\sigma})\Big]\Big] \\
    &= \tr\Big[ \tau\tilde{M}_{a|x}^{i}\Big].
\end{aligned}
\end{equation}
In the above, we defined $M_{a|x}^{i} = \sum_{\mu}K^{\dagger}_{\mu}K_{\mu}$, where $\{K_{\mu}\}_{\mu}$ is a set of Kraus operators for the channel $\mcM^{a|x}_{i}$, used the identity $\tr_{B}[(Y_{A} \otimes \id_{B})X_{AB}] = Y_{A}\tr_{B}[X_{AB}]$ for the fourth equality, and defined
\begin{equation}
    \tilde{M}_{a|x}^{i} := \tr_{O_{i-1}^{A}}\Big( (\id_{Q_{i}^{A}} \otimes \sqrt{\sigma})M_{a|x}^{i}(\id_{Q_{i}^{A}} \otimes \sqrt{\sigma})\Big).
\end{equation}
Note the set of operators $\{M_{a|x}^{i}\}_{a}$ are a POVM, and as a result the set of operators $\{\tilde{M}_{a|x}^{i}\}_{a}$ are also a POVM. By a similar procedure, we can define the POVMs $\{\tilde{N}_{b|y}^{i}\}$ from the channel $\mcN_{i}^{B}$, and describe the joint behavior of round $i$ by
\begin{equation}
    p^{i}(a,b|x,y) = \tr\big[ \rho_{i}(\tilde{M}_{a|x}^{i} \otimes \tilde{N}_{b|y}^{i})\big]. \label{eq:dist1}
\end{equation}
Thus, from the point of view of the statistics, we can view $\mcN_{i}^{A} \otimes \mcN_{i}^{B}$ as performing an uncharacterized measurement acting on the generated state $\rho_{i}$, rather than the state and internal device memory. 

Based on the above, we define
\begin{equation}
    \mu_{i} = \tr[\tilde{B}_{i}\rho_{i}],
\end{equation}
where 
\begin{equation}
    \tilde{B}_{i} = \frac{1}{4}\sum_{a,b,x,y \in \{0,1\}} w_{abxy} (\tilde{M}_{a|x}^{i} \otimes \tilde{N}_{b|y}^{i}),
\end{equation}
and $w_{abxy} = 1$ if $a \oplus b = x\cdot y$ and $0$ otherwise. We also define the optimized CHSH values associated to each state $\rho_{i}$,
\begin{equation}
    \mu^{\uparrow}_{i} = \max_{\{\tilde{M}^{i}_{a|x}\}_{a},\{\tilde{N}^{i}_{b|y}\}_{b}} \tr[\tilde{B}_{i}\rho_{i}]. \label{eq:muUp}
\end{equation}
Note that $\mu^{\uparrow}_{i}$ is only dependent on the state $\rho_{i}$, and not the measurement device. The values $\mu^{\uparrow}_{i}$ thus only depend on round $i$.

\subsection{Security proof of \Cref{prot: DISV_general_sequencial}}

\subsubsection{Real and ideal protocols}
The final state of the real and ideal protocols have an identical structure to that of \Cref{prot: DISV_general}, though the statistics of each round are no longer distributed independently. For convenience, we recall the real and ideal protocol final outputs below, 
\begin{equation}
    \begin{aligned}
        \rho_{\mathrm{real}} &= \sum_{t=1}^{n} p_{T}(t)\sum_{\mathbf{w} \in \Omega} p(\mathbf{w} | t) \ketbra{\mathbf{w}}{\mathbf{w}}_{\mathbf{W}} \otimes \ketbra{t}{t}_T \otimes \Lambda_{t}(\rho_{t}) \ot \proj{\Omega} +  (1 - p_{\Omega})  \proj{\perp},\\
        \rho_{\mathrm{ideal}} &= \sum_{t=1}^{n} \sum_{\mathbf{w} \in \Omega} p(\mathbf{w} | t) p_{T}(t) \ketbra{\mathbf{w}}{\mathbf{w}}_{\mathbf{w}} \otimes \ketbra{t}{t}_T \otimes [(1-\lambda_{t})\Lambda_{t}(\rho_{t}) + \lambda_{t}\phi^+] \ot \proj{\Omega} +  (1 - p_{\Omega})  \proj{\perp}.
    \end{aligned}
\end{equation}

\subsubsection{Soundness}

\begin{lemma}
    Protocol \ref{prot: DISV_general_sequencial} is $\epsilon_{s}$-sound, 
    \begin{equation}
        \begin{aligned}
            \epsilon_{s} &= \inf_{\delta > 0} \max \{ a_{2}(\delta),b_{3}(\delta)\},\\
            a_{2}(\delta) &= \exp\Big(-\frac{\lfloor(n-1)\delta\rfloor^{2}}{n-1}\Big),\\
            b_{3}(\delta) &= G_{\varepsilon}\left( \lfloor(n-1) (p^{\text{win}}_{\sharp} - \kappa - \delta) \rfloor / n \right).
        \end{aligned} \label{eq:eps4}
    \end{equation}\label{lem:p4sound}
\end{lemma}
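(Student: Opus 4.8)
\textbf{Proof strategy for Lemma \ref{lem:p4sound}.}

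The plan is to mirror the structure of the proof of Lemma \ref{lem:p2sound}, replacing the use of Hoeffding's inequality for independent variables with a concentration bound suited to the sequential setting. First I would use the modeling of the sequential process from \cref{app:seqProc}: from the point of view of the statistics, each round $i$ realizes an uncharacterized measurement on $\rho_i$ alone, with induced CHSH value $\mu_i = \tr[\tilde B_i \rho_i]$ and optimized value $\mu^{\uparrow}_i$ depending only on $\rho_i$. Exactly as in the parallel case, conditioning on $T=t$ and applying Fuchs--van de Graaf together with the definition of the extractability, one obtains
\begin{equation*}
\tfrac{1}{2}\|\rho_{\mathrm{real}} - \rho_{\mathrm{ideal}}\|_1 \;\le\; \frac{2}{n}\sum_{t=1}^{n} p(\Omega|t)\, \lambda_t \sqrt{1 - \Xi_B(\mu^{\uparrow}_t)} \;\le\; \frac{2}{n}\sum_{t=1}^{n} p(\Omega|t)\, G_{\varepsilon}(\mu^{\uparrow}_t),
\end{equation*}
using \cref{eq:lamDef} and the definition of $G_\varepsilon$, where $\Xi_B$ here is the (monotone, convex-envelope) CHSH extractability and $\mu^{\uparrow}_t \ge \mu_t$ only helps the bound. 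As in Lemma \ref{lem:p2sound} I would then split into two cases according to a free parameter $\delta>0$: in Case 2, where $\frac{1}{n}\sum_i \mu^{\uparrow}_i$ (or the relevant shifted average) exceeds the threshold $\frac{\lfloor (n-1)(p^{\text{win}}_\sharp - \kappa - \delta)\rfloor}{n}$, bound $p(\Omega|t)\le 1$ and use concavity of $G_\varepsilon$ to get $\tfrac12\|\rho_{\mathrm{real}}-\rho_{\mathrm{ideal}}\|_1 \le G_\varepsilon(\lfloor(n-1)(p^{\text{win}}_\sharp-\kappa-\delta)\rfloor/n) = b_3(\delta)$.

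The genuinely new part is Case 1, where the average CHSH value is below threshold and one must bound the non-abort probability $p_\Omega$. Here the round variables $W_i \in \{0,1\}$ are \emph{not} independent — the device has memory and $O_{i-1}$ feeds into round $i$ — so Theorem \ref{thm: Hoeffding's theorem} does not apply directly. The fix is to pass to the martingale difference sequence $D_i := W_i - \mathbb{E}[W_i \mid W_1,\dots,W_{i-1}, \text{memory}]$, note that $\mathbb{E}[W_i \mid \text{past}] \le \mu^{\uparrow}_i$ (the optimized CHSH value is an upper bound regardless of the memory state, since $\mu^{\uparrow}_i$ maximizes over all measurement strategies), and apply the Azuma--Hoeffding inequality to $\sum_{i\ne t} D_i$. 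Since $|D_i|\le 1$ and there are $n-1$ terms, this yields a bound of the form $\exp(-\lfloor(n-1)\delta\rfloor^2/(n-1))$ (the floor appearing because the abort condition is phrased in terms of the integer count $\lfloor(n-1)(1-p^{\text{win}}_\sharp+\kappa)\rfloor$ of losing rounds), giving $p(\Omega|t)\le a_2(\delta)$ for every $t$, hence $p_\Omega \le a_2(\delta)$ and $\tfrac12\|\rho_{\mathrm{real}}-\rho_{\mathrm{ideal}}\|_1 \le a_2(\delta)$. Taking the maximum over the two cases and the infimum over $\delta>0$ gives the claimed $\epsilon_s$.

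The main obstacle I anticipate is making the martingale argument fully rigorous in the presence of the quantum memory: one needs to define the right filtration (generated by the classical inputs/outputs of rounds $1,\dots,i-1$) and verify carefully that $\mathbb{E}[W_i \mid \mathcal{F}_{i-1}] \le \mu^{\uparrow}_i$, i.e.\ that the conditional winning probability in round $i$, given \emph{any} admissible past and memory state, never exceeds the state-optimized CHSH value $\mu^{\uparrow}_i$ of \cref{eq:muUp}. This is where the independence-of-states assumption and the separation between $A$ and $B$ systems are essential — they guarantee that the memory carried into round $i$ is uncorrelated with $\rho_i$, so that the effective POVM $\{\tilde M^i_{a|x}\otimes \tilde N^i_{b|y}\}$ acts on $\rho_i$ alone and the per-round value is genuinely bounded by $\mu^{\uparrow}_i$. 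Once this conditional bound is in hand, Azuma--Hoeffding applies mechanically and the rest of the proof is a routine adaptation of Lemma \ref{lem:p2sound}. A secondary technical point is bookkeeping the floor functions in $a_2$ and $b_3$ consistently with the integer abort threshold in step 6 of \cref{prot: DISV_general_sequencial}, but this is purely arithmetic.
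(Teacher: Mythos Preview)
Your overall strategy is sound and would yield a valid proof, but the paper takes a different route for Case~1. Instead of invoking Azuma--Hoeffding on a martingale difference sequence, the paper introduces a family of \emph{independent} Bernoulli variables $\hat W_i$ with $\mathbb{P}(\hat W_i=1)=\mu^{\uparrow}_i$ and proves a stochastic-domination result (\cref{lem:seq} and \cref{cor:seq}): among all sequential distributions satisfying $p_i(1\mid w_{i-1},\dots,w_1)\le \mu^{\uparrow}_i$, the tail probability $\mathbb{P}\big(\sum_{i\ne t} W_i \ge c\big)$ is maximized by the product measure $P^*_{\bm{\mu}^{\uparrow}}$. This reduces the sequential problem to the independent one, after which ordinary Hoeffding (\cref{thm: Hoeffding's theorem}) applies directly to $\{\hat W_i\}$. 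Your Azuma--Hoeffding route is the more standard textbook argument and requires exactly the conditional bound $\mathbb{E}[W_i\mid\mathcal F_{i-1}]\le\mu^{\uparrow}_i$ that you correctly isolate; the paper's domination lemma is more elementary (no martingale machinery) and, because it allows the use of Hoeffding for bounded variables in $[0,1]$ rather than Azuma for increments bounded by~$1$, yields a tighter exponent --- Azuma would naturally give $\exp(-r^2/[2(n-1)])$ rather than the $\exp(-r^2/(n-1))$ appearing in $a_2(\delta)$. So your approach works but would produce a slightly weaker constant unless you sharpen the martingale inequality; the paper's reduction-to-independence trick is worth noting as a cleaner way to recover the stated bound.
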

\begin{proof}
    We follow the proof of \Cref{lem:p2sound} to obtain
    \begin{equation}
        \| \rho_{\text{real}} - \rho_{\text{ideal}} \|_{1} \leq \frac{\chg{2}}{n}\sum_{t=1}^{n}p(\Omega|t)\sqrt{1-F(\Lambda_{t}(\rho_{t}),\phi^+)}.
    \end{equation}
    Recall the random variables $W_{i}$, governed by the distribution $p^{i}$ from \Cref{eq:dist1}, which indicate whether or not the CHSH game was won on that round, satisfy $\mathbb{P}(W_{i} = 1) = \mu_{i}$. We can define a new set of random variables, $\{\hat{W}_{i}\}_{i}$, distributed according to the optimized expectation values $\mu_{i}^{\uparrow}$, $\mathbb{P}(\hat{W}_{i} = 1) = \mu_{i}^{\uparrow}$. Note that $\{\hat{W}_{i}\}$ are a set of independently distributed random variables. We now consider two cases, and introduce a free parameter $\delta > 0$.

    \vspace{0.2cm}

    \noindent \textbf{Case 1:} $\sum_{i=1}^{n}\mu_{i}^{\uparrow} \leq \lfloor (n-1)(p^{\text{win}}_{\sharp} - \kappa - \delta)\rfloor$. That is, the average value of $\mu_{i}^{\uparrow}$ is less than $\lfloor(n-1)(p^{\text{win}}_{\sharp} - \kappa)\rfloor /n$. Note that, since the variables $\hat{W}_{i}$ are independent, by following the proof \Cref{lem:p2sound} exactly (using the fact that $\mu_{i}^{\uparrow} \geq 0$ to omit the contribution of $\frac{\eta^{\mathrm{Q}}_{\mathrm{min}}}{n}$), we find using \Cref{thm: Hoeffding's theorem}
    \begin{equation}
        \hat{p}(\Omega|t) := \mathbb{P}\Bigg(\sum_{i \neq t}^{n} \hat{W}_{i} \geq \lfloor (n-1)(p^{\text{win}}_{\sharp} - \kappa)\rfloor \Bigg) \leq \exp\Big(-\frac{\lfloor(n-1)\delta\rfloor^{2}}{n-1}\Big) =: a_{3}(\delta).
    \end{equation}
    That is, the probability of the independent protocol (which generates the variables $\hat{W}_{i}$) not aborting is small. However, we have not shown that the probability of the actual protocol not aborting is also small. To establish this, we apply \Cref{cor:seq} to show that the former upper bounds the latter. Specifically, we obtain
    \begin{equation}
        p(\Omega|t) = \mathbb{P}\Bigg(\sum_{i \neq t}^{n} W_{i} \geq \lfloor(n-1)(p^{\text{win}}_{\sharp} - \kappa) \rfloor \Bigg) \leq \hat{p}(\Omega|t) \leq a_{3}(\delta),
    \end{equation}
    which implies (by bounding $\sqrt{1 - F(\Lambda_{t}(\rho_{t}),\phi^+)} \leq 1$)
    \begin{equation}
        \frac{1}{2}\| \rho_{\text{real}} - \rho_{\text{ideal}} \|_{1} \leq a_{2}(\delta).
    \end{equation}

    \vspace{0.2cm}

    \noindent \textbf{Case 2:} $\sum_{i=1}^{n}\mu_{i}^{\uparrow} \leq \lfloor (n-1)(\score_{\sharp} - \kappa - \delta)\rfloor$. We apply the bound $p(\Omega|t) \leq 1$ to obtain
    \begin{equation}
        \| \rho_{\text{real}} - \rho_{\text{ideal}} \|_{1} \leq \frac{\chg{2}}{n}\sum_{t=1}^{n}\sqrt{1-F(\Lambda_{t}(\rho_{t}),\phi^+)}.
    \end{equation}
    We then note
    \begin{equation}
        F(\Lambda_{t}(\rho_{t}),\phi^+) = \sup_{\Lambda \in \mcC}F(\Lambda(\rho_{t}),\phi^+) \geq \inf_{\rho \in \mcB(\mu^{\uparrow}_{t})}\sup_{\Lambda \in \mcC}F(\Lambda(\rho),\phi^+) = \Xi_{B}(\mu^{\uparrow}_{t}).    
    \end{equation}
    The inequality follows from the fact that, by the definition of $\mu^{\uparrow}_{t}$, there exists measurements which achieve $\tr[\tilde{B}_{i}\rho_{t}] = \mu^{\uparrow}_{t}$, i.e., $\rho_{t} \in \mcB(\mu^{\uparrow}_{t})$. This implies 
    \begin{equation}
       \chg{\frac{1}{2}} \| \rho_{\text{real}} - \rho_{\text{ideal}} \|_{1} \leq \frac{1}{n}\sum_{t=1}^{n}\sqrt{1-\Xi_{B}(\mu^{\uparrow}_{t})}.
    \end{equation}
    The remainder of the proof follows identically that of \Cref{lem:p2sound}, Case 2. 
\end{proof}

\subsubsection{Completeness}
We could also apply \Cref{lem:p1complete} to bound the completeness error of \Cref{prot: DISV_general_sequencial}. However, since we are restricting to the CHSH case, where the variables $W_{i}$ are binary, we can use a sharper concentration inequality.

\begin{theorem}[\cite{Zubkov13}]
    Let $n \in \mathbb{N}$, $p \in (0,1)$ and $Z$ be a random variable distributed according to $Z \sim \mathsf{Binomial}(n,p)$. Then, for every $k = 0,...,n-1$ we have
    \begin{eqnarray}
        C(n,p,k) \leq \mathbb{P}( X \leq k ) \leq C(n,p,k+1),
    \end{eqnarray}
    where
    \begin{equation}
        \begin{aligned}
            C(n,p,k) &= \Phi\Big( \mathrm{sign}\big( k/n - p \big) \sqrt{2n G\big( k/n , p\big)} \Big),\\
            \Phi(x) &= \frac{1}{\sqrt{2\pi}}\int_{-\infty}^{x} \mathrm{d}u \, e^{-u^{2}/2}, \\
            G(x,p) &= x \ln\Big(\frac{x}{p}\Big) + (1-x) \ln\Big( \frac{1-x}{1-p}\Big).
        \end{aligned}
    \end{equation}
    \label{thm:binomial}
\end{theorem}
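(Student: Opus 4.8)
The statement is the Zubkov--Serov two-sided estimate for the binomial tail, quoted from \cite{Zubkov13}, so in the present paper it suffices to cite it; nonetheless, the route a self-contained proof would take is as follows. The plan is to pass from the discrete sum to an exact integral, recognise the integrand as $e^{-n\cdot(\text{relative entropy})}$ up to a slowly varying factor, and then compare with a Gaussian integral via a signed-square-root change of variables.

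First I would use the classical incomplete-beta representation of the binomial cumulative distribution function, valid for $0\le k\le n-1$,
\begin{equation*}
\mathbb{P}(Z\le k) = (n-k)\binom{n}{k}\int_{0}^{1-p} t^{\,n-k-1}(1-t)^{k}\,\mathrm{d}t ,
\end{equation*}
and substitute $t=1-s$, so that the range becomes $s\in[p,1]$ and the integrand equals $(1-s)^{-1}e^{n\phi(s)}$ with $\phi(s):=\tfrac{k}{n}\ln s+(1-\tfrac{k}{n})\ln(1-s)$. The function $\phi$ is strictly concave, is maximised at $s=k/n$, and satisfies $\phi(k/n)-\phi(p)=G(k/n,p)$, so the binary relative entropy $G$ appears exactly as the Chernoff exponent. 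Combining this with Stirling-type control of the prefactor $(n-k)\binom{n}{k}$, one rewrites $\mathbb{P}(Z\le k)$ as a Laplace-type integral $\tfrac{\sqrt{n}}{\sqrt{2\pi}}\int_{p}^{1} e^{n(\phi(s)-\phi(k/n))}\,w_{0}(s)\,\mathrm{d}s$, where $w_0$ is a bounded positive weight collecting the factor $1/(1-s)$ and Stirling's remainder.

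Second, I would introduce the monotone substitution $u=u(s)$ defined by $\tfrac12 u^{2}=\phi(k/n)-\phi(s)$ and $\mathrm{sign}(u)=\mathrm{sign}(s-k/n)$. Strict concavity of $\phi$ with a nondegenerate maximum makes $u$ a strictly increasing bijection from $[p,1)$ onto $[u(p),+\infty)$, with $u(p)=\mathrm{sign}(p-k/n)\sqrt{2G(k/n,p)}$. Under this substitution the integral becomes $\tfrac{1}{\sqrt{2\pi}}\int_{u(p)}^{\infty} e^{-nu^{2}/2}\,w(u)\,\mathrm{d}u$, where $w$ absorbs the Jacobian $\sqrt{n}\,\mathrm{d}s/\mathrm{d}u$ together with $w_{0}$, normalised so that $w(0)=1$. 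The two facts one needs are that $w(0)=1$ and that $w$ is monotone on the relevant interval; granting these, replacing $w$ by its value at the endpoint of integration bounds the integral above and below by genuine Gaussian integrals that evaluate exactly to $\Phi(-\sqrt{n}\,u(p))$ with $u(p)$ computed at the two successive grid points $k/n$ and $(k+1)/n$. Since $-\mathrm{sign}(p-x)\sqrt{2nG(x,p)}=\mathrm{sign}(x-p)\sqrt{2nG(x,p)}$ is increasing in $x$, this yields precisely $C(n,p,k)\le\mathbb{P}(Z\le k)\le C(n,p,k+1)$; the boundary cases $k=0$ and $p$ near $0$ or $1$, together with the passage between the discrete index and the continuous cutoff, are checked by hand.

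The main obstacle is the monotonicity and sign control of the residual weight $w(u)$: showing that the Jacobian of the signed-square-root map times the combinatorial prefactor is monotone in $u$ requires sharp two-sided estimates on the first and second derivatives of $G(\cdot\,,p)$ and on Stirling's remainder term, and this delicate calculus is exactly the technical heart of \cite{Zubkov13}. Once that monotonicity lemma is in hand, the sandwich follows from the elementary endpoint comparison described above.
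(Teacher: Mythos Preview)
You correctly identify that the paper does not prove this statement: it is quoted verbatim from \cite{Zubkov13} and applied as a black box in the completeness proof of Protocol~\ref{prot: DISV_general_sequencial}, with no argument supplied. Your sketch of the Laplace/incomplete-beta route with a signed-square-root substitution is a reasonable outline of the original Zubkov--Serov argument, but since the paper itself offers nothing to compare against, there is no discrepancy to report.
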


\begin{lemma}
    Protocol \ref{prot: DISV_general_sequencial} is $\epsilon_{c}$-complete, where
    \begin{equation}
        \epsilon_{c} = 1 - C\big(N-1,p^{\text{win}}_{\sharp}, \lceil (N-1)(p^{\text{win}}_{\sharp} - \kappa) \rceil\big). \label{eq:p4complete}
    \end{equation}\label{lem:p4complete}
\end{lemma}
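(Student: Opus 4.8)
The plan is to exhibit a concrete honest implementation and then apply the sharp binomial bound of Theorem~\ref{thm:binomial} to control the probability that too many CHSH rounds are lost. First I would take the honest implementation to be the one in which the source emits states and the devices perform measurements so that each round is won independently with probability exactly $p^{\text{win}}_{\sharp}$; concretely, one can use the noisy-singlet source $\rho_i = (1-\mu)\proj{\phi^+} + \mu\,\id_4/4$ together with the optimal CHSH observables, choosing $\mu$ so that the per-round winning probability equals $p^{\text{win}}_{\sharp}$. Because in the honest case the devices are memoryless and the states $\rho_i$ are i.i.d., the indicator variables $\{W_i\}_{i\neq t}$ are i.i.d.\ Bernoulli$(p^{\text{win}}_{\sharp})$, and hence the number of \emph{lost} rounds, $L := \sum_{i\neq t}(1-W_i)$, is distributed as $\mathsf{Binomial}(n-1,\,1-p^{\text{win}}_{\sharp})$.

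Next I would recall the abort condition of Protocol~\ref{prot: DISV_general_sequencial}: the protocol aborts precisely when $L > \lfloor (n-1)(1-p^{\text{win}}_{\sharp}+\kappa)\rfloor$. Therefore, conditioned on $T=t$,
\[
p(\Omega\,|\,t) = \mathbb{P}\big[L \leq \lfloor (n-1)(1-p^{\text{win}}_{\sharp}+\kappa)\rfloor\big],
\]
and since this does not depend on $t$, the same value holds for the unconditioned $p_\Omega = \sum_t p_T(t)\,p(\Omega|t)$. It then remains to lower-bound this binomial lower-tail probability. Equivalently, writing $Z := n-1-L \sim \mathsf{Binomial}(n-1,p^{\text{win}}_{\sharp})$ for the number of wins, the event $\{L \leq \lfloor (n-1)(1-p^{\text{win}}_{\sharp}+\kappa)\rfloor\}$ is the event $\{Z \geq \lceil (n-1)(p^{\text{win}}_{\sharp}-\kappa)\rceil\}$, i.e.\ $\{Z \leq k\}^c$ with $k = \lceil (n-1)(p^{\text{win}}_{\sharp}-\kappa)\rceil - 1$. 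Applying the lower bound of Theorem~\ref{thm:binomial} with $n \mapsto n-1$, $p \mapsto p^{\text{win}}_{\sharp}$ gives $\mathbb{P}[Z \leq k] \geq C(n-1, p^{\text{win}}_{\sharp}, k)$, hence
\[
p_\Omega = 1 - \mathbb{P}[Z \leq k] \leq 1 - C\big(n-1,\,p^{\text{win}}_{\sharp},\, \lceil (n-1)(p^{\text{win}}_{\sharp}-\kappa)\rceil - 1\big).
\]

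Finally, I would reconcile this with the claimed expression $\epsilon_c = 1 - C(n-1, p^{\text{win}}_{\sharp}, \lceil (n-1)(p^{\text{win}}_{\sharp}-\kappa)\rceil)$ in \eqref{eq:p4complete}: one wants a \emph{lower} bound on $p(\Omega|t)$, i.e.\ an \emph{upper} bound on $\mathbb{P}[Z\leq k]$, which by Theorem~\ref{thm:binomial} is $C(n-1,p^{\text{win}}_{\sharp},k+1)$; with $k+1 = \lceil (n-1)(p^{\text{win}}_{\sharp}-\kappa)\rceil$ this yields exactly $p(\Omega|t) \geq C\big(n-1,p^{\text{win}}_{\sharp},\lceil (n-1)(p^{\text{win}}_{\sharp}-\kappa)\rceil\big)$, hence $p_\Omega \geq 1-\epsilon_c$ as stated, and by \cref{def: completeness} the protocol is $\epsilon_c$-complete. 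I expect the only subtle point — and the main place where one must be careful — is getting the direction of the inequality in Theorem~\ref{thm:binomial} right and tracking the floor/ceiling conversions between ``lost rounds'' and ``won rounds'' so that the index fed to $C(\cdot,\cdot,\cdot)$ matches \eqref{eq:p4complete} exactly; everything else is a direct substitution once the i.i.d.\ honest model is fixed.
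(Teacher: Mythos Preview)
Your approach is essentially the paper's: fix an i.i.d.\ honest implementation with per-round win probability $p^{\text{win}}_\sharp$, so the loss count is $\mathsf{Binomial}(n-1,1-p^{\text{win}}_\sharp)$, and bound the tail with Theorem~\ref{thm:binomial}. The paper stays with the loss variable $Z=\sum_{i\neq t}\bar W_i$ and applies the \emph{lower} bound of the theorem directly to get $p(\Omega\mid t)\ge C\big(n-1,\,1-p^{\text{win}}_\sharp,\,\lfloor(n-1)(1-p^{\text{win}}_\sharp+\kappa)\rfloor\big)$; your route via the win count and the upper bound is equivalent through the symmetry $C(m,1-p,m-k)=1-C(m,p,k)$ together with $m-\lfloor m(1-q+\kappa)\rfloor=\lceil m(q-\kappa)\rceil$.

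There is one slip, exactly at the point you yourself flagged as delicate. From $\mathbb{P}[Z\le k]\le C(n-1,p^{\text{win}}_\sharp,k+1)$ the correct conclusion is
\[
p(\Omega\mid t)=1-\mathbb{P}[Z\le k]\ \ge\ 1-C\big(n-1,p^{\text{win}}_\sharp,\lceil(n-1)(p^{\text{win}}_\sharp-\kappa)\rceil\big),
\]
not $p(\Omega\mid t)\ge C(\cdot)$ as you wrote. Your derivation and the paper's proof both yield this ``$1-C$'' form; reconciling it with the exact expression \eqref{eq:p4complete} then requires the symmetry identity above (and in fact the paper's own displayed bound is stated in the $(1-p^{\text{win}}_\sharp,\lfloor\cdot\rfloor)$ parameterisation rather than the one in the lemma statement).
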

\begin{proof}
    Consider an honest implementation for which the variables $W_{1},...,W_{n}$ are i.i.d. random variables with $\mathbb{E}[W_{i}] = \omega_{\sharp}$. Let $\bar{W}_{i} = 1 - W_{i}$. Then
    \begin{equation}
    \begin{aligned}
        1- p(\Omega|t) &= \mathbb{P} \Bigg( \sum_{i \neq t}^{n} \bar{W}_{i} > (n-1)(1 - [p^{\text{win}}_{\sharp} - \kappa]) \Bigg) \\
        &= 1 - \mathbb{P} \Bigg( \sum_{i \neq t} \bar{W}_{j} \leq (n-1)(1 - [p^{\text{win}}_{\sharp} - \kappa]) \Bigg) \\
        & \leq 1 - \mathbb{P} \Bigg( \sum_{i \neq t} \bar{W}_{j} \leq \lfloor (n-1)(1-[p^{\text{win}}_{\sharp} - \kappa]) \rfloor \Bigg).
    \end{aligned}
    \end{equation}
    Let $Z = \sum_{i \neq t}^{n} \bar{W}_{i}$. Then $Z$ is a random variable distributed according to $\mathsf{Binomial}(n-1,1-p^{\text{win}}_{\sharp})$. We can therefore apply Theorem \ref{thm:binomial} to obtain
    \begin{eqnarray}
        p(\Omega|t) \geq C\big(n-1,1-p^{\text{win}}_{\sharp}, \lfloor (n-1)(1-[p^{\text{win}}_{\sharp} - \kappa]) \rfloor \big).
    \end{eqnarray}
    We therefore have
    \begin{eqnarray}
       p_{\Omega} = \frac{1}{n}\sum_{t=1}^{n}p(\Omega|t) \geq  C\big(n-1,1-p^{\text{win}}_{\sharp}, \lfloor (n-1)(1-[p^{\text{win}}_{\sharp} - \kappa]) \rfloor \big),
    \end{eqnarray}
    proving the claim. 
\end{proof}
As a result, we find \Cref{prot: DISV_general_sequencial} is $(\epsilon_{s},\epsilon_{c})$-secure where $\epsilon_{s}$ and $\epsilon_{c}$ are given by \Cref{eq:eps2} and \Cref{eq:p4complete}, respectively. 

\subsection{Security proof of \Cref{prot: DISV_general_sequencial_var}}
In this final subsection, we prove the security of \Cref{prot: DISV_general_sequencial_var}, which is a variant of \Cref{prot: DISV_general_sequencial} in which the user is not required to apply the final extraction channel $\Lambda_{t}$. 

\subsubsection{Real and ideal protocols}
These have an identical structure to that of \Cref{prot: DISV_general_var}. Specifically, the final outputs are given by 

\begin{equation}
\begin{aligned}
    \rho_{\text{real}} &= \left( \sum_{t=1}^{n}p_{T}(t) \sum_{\mathbf{w} \in \Omega} p(\mathbf{w} | t) \, \ketbra{\mathbf{w}}{\mathbf{w}}_{\mathbf{W}} \otimes \ketbra{t}{t}_T \otimes \ketbra{00}{00} \otimes \rho_{t} \ot \proj{\Omega} \right) +  (1 - p_{\Omega}) \ketbra{\perp}{\perp},\\
    \rho_{\text{ideal}} &= \left( \sum_{t=1}^{n} p_{T}(t) \sum_{\mathbf{w} \in \Omega} p(\mathbf{w} | t) \, \ketbra{\mathbf{w}}{\mathbf{w}}_{\mathbf{W}} \otimes \ketbra{t}{t}_T \otimes [(1-\lambda_{t})\ketbra{00}{00}\otimes \rho_{t} + \lambda_{t}\sigma_{t}] \ot \proj{\Omega} \right) +  (1 - p_{\Omega}) \proj{\perp},
\end{aligned}
\end{equation}
where $\sigma_{t}$ is the companion state of $\rho_{t}$.

\subsubsection{Soundness}
\begin{lemma}
    Protocol \ref{prot: DISV_general_sequencial_var} is $\epsilon_{s}$-sound, where $\epsilon_{s}$ is given by \Cref{eq:eps4}. \label{lem:p5sound}
\end{lemma}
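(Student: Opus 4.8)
The plan is to mirror the proof of \Cref{lem:p4sound} (the sequential soundness for \Cref{prot: DISV_general_sequencial}) exactly, making the single substitution that distinguishes \Cref{prot: DISV_general_var} from \Cref{prot: DISV_general}, as was done when passing from \Cref{lem:p2sound} to \Cref{lem:p3sound}. First I would write out the trace norm between $\rho_{\mathrm{real}}$ and $\rho_{\mathrm{ideal}}$ for \Cref{prot: DISV_general_sequencial_var}. Since the only difference between real and ideal at the output stage is that the stored block $\ketbra{00}{00}\otimes \rho_{t}$ is replaced by the convex mixture $(1-\lambda_{t})\ketbra{00}{00}\otimes\rho_{t} + \lambda_{t}\sigma_{t}$, the trace norm collapses exactly as in \Cref{lem:p3sound} to
\[
\frac{1}{2}\| \rho_{\text{real}} - \rho_{\text{ideal}} \|_{1} \leq \frac{1}{n}\sum_{t=1}^{n} p(\Omega|t)\,\lambda_{t}\sqrt{1 - F(\ketbra{00}{00} \otimes \rho_{t},\sigma_{t})}.
\]

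Next I would invoke the defining property \eqref{eq:comp2} of the companion state $\sigma_{t}$, namely $F(\ketbra{00}{00}\otimes\rho_{t},\sigma_{t}) \geq \Xi_{B}(\mu_{t})$, which yields the same bound $\tfrac12\|\rho_{\text{real}}-\rho_{\text{ideal}}\|_1 \le \tfrac1n\sum_t p(\Omega|t)\lambda_t\sqrt{1-\Xi_B(\mu_t)}$ that appears in the proof of \Cref{lem:p4sound} before the case split. From here the argument is verbatim that of \Cref{lem:p4sound}: introduce the optimized CHSH values $\mu_t^{\uparrow}$ from \eqref{eq:muUp} and the associated independent variables $\hat W_i$, split on whether $\sum_i \mu_i^{\uparrow}$ is below or above $\lfloor(n-1)(p^{\text{win}}_{\sharp}-\kappa-\delta)\rfloor$, use \Cref{thm: Hoeffding's theorem} together with the stochastic-dominance statement \Cref{cor:seq} to control $p(\Omega|t)$ in Case~1 (bounding $\sqrt{1-\Xi_B(\mu_t)}\le 1$), and use $p(\Omega|t)\le 1$ together with the convexity of $\Xi_B$, the relation $F(\Lambda_t(\rho_t),\phi^+)\ge\Xi_B(\mu_t^{\uparrow})$ (since $\rho_t\in\mcB(\mu_t^{\uparrow})$), the definition of $\lambda_t$ in \eqref{eq:lamDef}, and the concavity of $G_\varepsilon$ in Case~2. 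Taking the infimum over $\delta>0$ of the maximum of the two case bounds gives precisely $\epsilon_s$ as in \eqref{eq:eps4}.

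I do not expect any genuine obstacle here: the statement is a corollary of \Cref{lem:p4sound} obtained by the same companion-state substitution already validated in \Cref{lem:p3sound}, and every ingredient (the Fuchs--van de Graaf inequality, Hoeffding's inequality, \Cref{cor:seq}, convexity of $\Xi_B$, concavity of $G_\varepsilon$) has been set up earlier. The only point requiring a word of care is making sure that in the companion-state setting one uses $\mu_t^{\uparrow}$ rather than $\mu_t$ when passing through the extractability bound in Case~2, exactly as in \Cref{lem:p4sound} --- this is legitimate because $\rho_t\in\mcB(\mu_t^{\uparrow})$ by definition of $\mu_t^{\uparrow}$, and the companion-state inequality \eqref{eq:comp2} can equally be invoked with the bound $F(\Lambda_t(\rho_t),\phi^+)\ge\Xi_B(\mu_t^{\uparrow})$. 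The proof can therefore be stated in one short paragraph: ``The proof follows that of \Cref{lem:p4sound}, with $F(\Lambda_t(\rho_t),\phi^+)$ replaced by $F(\ketbra{00}{00}\otimes\rho_t,\sigma_t)$, and using property \eqref{eq:comp2} of $\sigma_t$ to conclude $F(\ketbra{00}{00}\otimes\rho_t,\sigma_t)\ge\Xi_B(\mu_t)$.'' Completeness is inherited unchanged from \Cref{lem:p4complete}, so $(\epsilon_s,\epsilon_c)$-security follows with $\epsilon_s$ as in \eqref{eq:eps4} and $\epsilon_c$ as in \eqref{eq:p4complete}.
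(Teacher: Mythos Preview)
Your proposal is correct and matches the paper's own proof essentially line for line: the paper likewise starts from the trace-norm bound $\tfrac{1}{2}\|\rho_{\text{real}}-\rho_{\text{ideal}}\|_1\le\tfrac{1}{n}\sum_t p(\Omega|t)\lambda_t\sqrt{1-F(\ketbra{00}{00}\otimes\rho_t,\sigma_t)}$, then in Case~1 reuses \Cref{lem:p4sound} verbatim, and in Case~2 invokes the chain from \eqref{eq:compExt} together with $\rho_t\in\mcB(\mu_t^{\uparrow})$ to get $F(\ketbra{00}{00}\otimes\rho_t,\sigma_t)\ge\Xi_B(\mu_t^{\uparrow})$ before finishing as in \Cref{lem:p2sound}. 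The one cosmetic slip in your final one-line summary is writing $\Xi_B(\mu_t)$ instead of $\Xi_B(\mu_t^{\uparrow})$, but you already flagged and resolved that subtlety correctly in the preceding paragraph.
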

\begin{proof}
    The proof follows the structure to that of \Cref{lem:p3sound}, with the same modifications introduced to prove \Cref{lem:p4sound}. In detail we have
    \begin{equation}
        \| \rho_{\text{real}} - \rho_{\text{ideal}} \|_{1} \leq \frac{\chg{2}}{n}\sum_{t=1}^{n} p(\Omega|t)  \lambda_{t}\sqrt{1 - F(\ketbra{00}{00} \otimes \rho_{t},\sigma_{t})}.
    \end{equation}

    \vspace{0.2cm}

    \noindent \textbf{Case 1:} $\sum_{i=1}^{n} \frac{\mu_{i}^{\uparrow}}{n} \leq \frac{n-1}{n}(p^{\text{win}}_{\sharp} - \kappa - \delta)$. Here, the proof proceeds identically to that of \Cref{lem:p4sound}, Case 1. 

    \vspace{0.2cm}

    \noindent \textbf{Case 2:} $\sum_{i=1}^{n} \frac{\mu_{i}^{\uparrow}}{n} > \frac{n-1}{n}(p^{\text{win}}_{\sharp} - \kappa - \delta)$. We proceed by bounding $p(\Omega|t) \leq 1$ and lower bounding the fidelity via the extractability $\Xi_{B}(\mu_{i}^{\uparrow})$. Since $\sigma_{t}$ is a companion state of $\rho_{t}$, and $\rho_{t} \in \mcB(\mu^{\uparrow}_{t})$, it follows from the same reasoning used in \Cref{eq:compExt} that 
    \begin{equation}
        F(\ketbra{00}{00} \otimes \rho_{t},\sigma_{t}) \geq \sup_{\Lambda \in \mcC} F(\Lambda(\rho_{t}),\phi^+) \geq \inf_{\rho \in \mcB(\mu_{t}^{\uparrow})}\sup_{\Lambda \in \mcC} F(\Lambda(\rho),\phi^+) = \Xi_{B}( \mu_{t}^{\uparrow}).
    \end{equation}
    The remainder of the proof follows identically to that of \Cref{lem:p2sound}, Case 2.
\end{proof}

\subsubsection{Completeness}
\Cref{prot: DISV_general_sequencial_var} has a completeness error given by \Cref{lem:p4complete}, resulting in $(\epsilon_{s},\epsilon_{c})$-security where $\epsilon_{s}$ and $\epsilon_{c}$ are given by \Cref{eq:eps2} and \Cref{eq:p4complete}, respectively. 

\section{Bounding the abort probability in the sequential setting} \label{app:seqLem}
The aim of this section is to show that the abort probability of Protocol \ref{prot: DISV_general_sequencial} can be bounded in terms of the maximal achievable CHSH scores $\mu_{i}^{\uparrow}$, defined in \Cref{eq:muUp}. To do so, we describe the sequential protocol as stochastic process $(\Gamma , \mathcal{A} , P)$, where 
\begin{itemize}
    \item The sample space $\Gamma$ consists of all possible sequences of outcomes of the experiment. Each round measurement round $i$ results in either a loss (\(0\)) or a win (\(1\)) of the CHSH game, recorded in the classical register $W_{i}$. To ease notation, we consider $n$ such rounds, though in the actual protocol there are $n-1$. Hence
  \[
  \Gamma = \{ \mathbf{w} = (w_1, w_2, \dots,w_{n})  \ : \  w_i \in \{0, 1\} \}.
  \]
  \item The sigma algebra \( \mathcal{A} \) is defined by the cylinder sets generated by the trajectories $\mathbf{w}$.
  \item  In the CHSH game, the outcomes of future rounds can only depend upon the outcomes of previous rounds. The probability measure \(P\) is thus a product measure, i.e., the probability of an outcome \( \mathbf{w} = (w_1, w_2, \dots , w_{n}) \) is given by
\begin{eqnarray}
 P(\mathbf{w}) =  p_{1}(w_1) p_{2}(w_2 |  w_1) p_{3}(w_3 | w_2 , w_{1} ) \cdots  p_{n}(w_{n} | w_{n-1} , \dots , w_{1}). \label{eqn: product-distributions}  
\end{eqnarray}
We denote the set of such measures by $\mathfrak{P}$.
\end{itemize}

Let us also define for a given vector $\bm{\mu} = [\mu_{1},\mu_{2},...,\mu_{n}] \in [0,1]^{n}$ the set 
\begin{eqnarray}\label{lemm: optimality of independent distributions}
\mathfrak{P}_{\bm{\mu}} =  \Big\{ P \in \mathfrak{P} \ : \   p_{1}(1) \leq \mu_1 , \, p_{2}(1 | w_1) \leq \mu_2 , \, ... , \, p_{n}(1 | w_{n-1},\dots ,w_{1}) \leq \mu_{n} \ \forall w_{1},...,w_{n-1} \in \{0,1\}\Big \}.
\end{eqnarray}
We further define the product probability distribution $P^*_{\bm{\mu}} \in \mathfrak{P}_{\bm{\mu}}$, given by
\begin{equation}
    P^*_{\bm{\mu}}(\mathbf{w}) = p_1^*(w_1) p_2^*(w_2) \cdots p_{n}^*(w_{n}), 
\end{equation}
where
\begin{equation}
    p_{i}^*(1) = \mu_{i}.
\end{equation}
Now, consider the following claim. 
\begin{lemma}\label{lem:seq}
Let $c \in \mathbb{N}$ be any non-zero natural number. Let $\Omega$ be the event defined by 
\begin{eqnarray}
    \Omega = \Big\{ \mathbf{w} \in \Gamma \ : \ \sum_{i=1}^{n} w_{i} \geq c \Big\}.
\end{eqnarray}
Then for any fixed $\bm{\mu} \in [0,1]^{n}$, 
\begin{equation}
   \max_{P \in \mathfrak{P}_{\bm{\mu}} } P(\Omega) = P^*_{\bm{\mu}}(\Omega). \label{eq:maxP}
\end{equation}
\end{lemma}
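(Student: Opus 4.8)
The plan is to prove the claim by induction on the number of rounds $n$, exploiting the fact that the event $\Omega$ (``at least $c$ successes'') has a very rigid monotonicity structure in each coordinate. The key observation is that for each round $i$, conditioned on any fixed history $w_1,\dots,w_{i-1}$, the conditional probability of the remaining trajectory lying in $\Omega$ is a \emph{non-decreasing} affine function of $p_i(1\mid w_1,\dots,w_{i-1})$: intuitively, a success in round $i$ can only help reach the threshold $c$. Since the coefficient of $p_i(1\mid\cdots)$ in this affine expression is non-negative, the maximum over the admissible range $[0,\mu_i]$ is attained at the upper endpoint $\mu_i$, and moreover the optimal choice does not depend on the history $w_1,\dots,w_{i-1}$. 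Iterating this endpoint-pushing from the last round backwards (or equivalently building the product measure up from the first round) forces $p_i(1\mid\cdots)=\mu_i$ for every $i$ and every history, which is exactly $P^*_{\bm\mu}$.

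Concretely, I would first set up notation: for a fixed $\bm\mu$ and $P\in\mathfrak P_{\bm\mu}$, write $P(\Omega)$ by conditioning on $w_1$, giving $P(\Omega) = p_1(0)\,Q_0 + p_1(1)\,Q_1$ where $Q_{w_1}$ is the conditional probability of $\Omega$ under the residual (still product-form) measure on rounds $2,\dots,n$ with threshold $c-w_1$. By the induction hypothesis applied to the $(n-1)$-round process, each $Q_{w_1}$ is maximized over the admissible residual measures by the fully-independent choice with marginals $\mu_2,\dots,\mu_n$; call those maxima $Q_0^*$ (threshold $c$) and $Q_1^*$ (threshold $c-1$). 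The crucial monotonicity step is then the inequality $Q_1^* \ge Q_0^*$ — lowering the required number of successes can only increase the probability — which lets me write $P(\Omega) \le p_1(0)Q_0^* + p_1(1)Q_1^* = Q_0^* + p_1(1)(Q_1^*-Q_0^*) \le Q_0^* + \mu_1(Q_1^*-Q_0^*)$, using $p_1(1)\le\mu_1$ and $Q_1^*-Q_0^*\ge0$. The right-hand side is exactly $P^*_{\bm\mu}(\Omega)$ expanded by conditioning on $w_1$, and it is clearly achieved by $P^*_{\bm\mu}\in\mathfrak P_{\bm\mu}$, so the supremum is a maximum and equals $P^*_{\bm\mu}(\Omega)$. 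The base case $n=1$ is immediate: $P(\Omega)=p_1(1)$ if $c=1$ (maximized at $\mu_1$) and $P(\Omega)=0$ if $c\ge 2$.

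I would present the argument cleanly by first isolating the monotonicity fact as a sub-step — that for the ``at least $k$ successes out of $m$'' event, the maximizing product measure is monotone in $k$ in the sense $P^*_{\bm\mu'}(\Omega_{k-1})\ge P^*_{\bm\mu'}(\Omega_k)$ for the truncated vector $\bm\mu'=[\mu_2,\dots,\mu_n]$ — which itself follows trivially from the inclusion $\Omega_k\subseteq\Omega_{k-1}$ of the corresponding events, so it is really just set containment rather than anything requiring the product structure. This makes the induction go through with essentially one honest inequality per step. One should also handle the degenerate edge cases ($c\le 0$, where $\Omega=\Gamma$ and both sides are $1$; and $c>n$, where $\Omega=\emptyset$ and both sides are $0$) with a line, though the lemma as stated takes $c\in\mathbb N$ non-zero and implicitly $c\le n$ for the statement to be interesting.

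The main obstacle, and the only place where care is genuinely needed, is making the conditioning-plus-induction bookkeeping rigorous: one must verify that when $P\in\mathfrak P_{\bm\mu}$ is decomposed as a mixture over $w_1$, the two residual conditional measures $P(\cdot\mid w_1)$ on rounds $2,\dots,n$ are themselves elements of $\mathfrak P_{\bm\mu'}$ (i.e., they retain the product form and satisfy the per-round bounds $p_i(1\mid\cdots)\le\mu_i$) — this is where the specific definition \eqref{lemm: optimality of independent distributions} of $\mathfrak P_{\bm\mu}$ with its ``for all histories'' quantifier is doing the work, since it guarantees the bound holds uniformly regardless of the value of $w_1$ we condition on. Everything else is a short affine-optimization observation (maximum of an increasing affine function on an interval is at the right endpoint) combined with the set-inclusion monotonicity, so I do not expect any real analytic difficulty, only the need to state the induction hypothesis for a general $m$-round process with a general vector and a general threshold so that the recursion closes.
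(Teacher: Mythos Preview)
Your proposal is correct and follows essentially the same approach as the paper: both peel off one variable at a time, use the monotonicity of the ``at least $k$ successes'' event to ensure the coefficient of $p_i(1\mid\cdots)$ is non-negative, and then push each conditional success probability to its upper endpoint $\mu_i$. The only cosmetic difference is direction---you condition on $w_1$ and induct forward, while the paper conditions on $w_n$ (via the recursion $P(\Omega^m_k)=P(\Omega^{m-1}_k)+\cdots$) and iterates backward---but the mechanism and the key inequality are identical.
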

\begin{proof}
Define for any positive integers $k$ and $m$ satisfying $k \leq m \leq n$, 
\begin{equation}
    \Omega^m_k := \Big\{ \mathbf{w} \in \Gamma \  : \ \sum_{i=1}^{m} w_i \geq k \Big\}.
\end{equation}
This is the set of trajectories $\mathbf{w}$ (of length $n$) for which the total number of wins (that is, the total number of instances when $w_{i} = 1$) up to round $m$ is at least \( k \). Now, suppose for a fixed $\bm{\mu}$, $P \in \mathfrak{P}_{\bm{\mu}}$. We then have the following recursion,

\begin{equation}
P(\Omega^{m}_{k}) = P(\Omega^{m-1}_{k}) + \sum_{\substack{\mathbf{w} \in \Gamma \ \text{s.t.}\\ \sum_{i=1}^{m-1} w_i = k-1,\\
w_{m}=1}} P(\mathbf{w}). \label{eqn: recursive_1}
\end{equation}
This follows from the fact that strings $\mathbf{w}$ achieving a sum of at least \( k \) at round \( m \) fall into one of two distinct cases.

\vspace{0.2cm}

\textbf{Case 1:} Strings $\mathbf{w}$ which have already achieved a sum of at least \( k \) by round \( m-1 \), i.e., $\sum_{i=1}^{m-1} w_{i} \geq k$.

\vspace{0.2cm}

\textbf{Case 2:} Strings $\mathbf{w}$ which achieve a sum of exactly \( k-1 \) by round \( m-1 \), i.e., $\sum_{i=1}^{m-1}w_{i} = k-1$, and then win round \( m \), i.e., $w_{m} = 1$.

\vspace{0.2cm}

\noindent The two terms on the right hand side of \Cref{eqn: recursive_1} account of these cases, respectively. Note that $P(\Omega_{k}^{m-1})$ is independent of the distribution assigned to final random variable $W_{m}$, i.e., independent of $p_{m}(w_{m}|w_{m-1},...,w_{1})$. We can also expand the second term to obtain
\begin{equation}
\begin{aligned}
    \sum_{\substack{\mathbf{w} \in \Gamma \ \text{s.t.}\\ \sum_{i=1}^{m-1} w_i = k-1,\\
w_{m}=1}} P(\mathbf{w}) &= \sum_{\substack{\mathbf{w} \in \Gamma \ \text{s.t.}\\ \sum_{i=1}^{m-1} w_i = k-1,\\
w_{m}=1}} p_{1}(w_{1})p_{2}(w_{2}|w_{1})\cdots p_{w_{m-1}}(w_{m-1}|w_{m-2},...,w_{1}) p_{w_{m}}(1|w_{m-1},...,w_{1}) \\
&\leq \sum_{\substack{\mathbf{w} \in \Gamma \ \text{s.t.}\\ \sum_{i=1}^{m-1} w_i = k-1,\\
w_{m}=1}} p_{1}(w_{1})p_{2}(w_{2}|w_{1})\cdots p_{w_{m-1}}(w_{m-1}|w_{m-2},...,w_{1}) \mu_{m} \\
&= \sum_{\substack{\mathbf{w} \in \Gamma \ \text{s.t.}\\ \sum_{i=1}^{m-1} w_i = k-1,\\
w_{m}=1}} p_{1}(w_{1})p_{2}(w_{2}|w_{1})\cdots p_{w_{m-1}}(w_{m-1}|w_{m-2},...,w_{1}) p_{m}^*(1),
\end{aligned} \label{eq:upSum}
\end{equation}
where we used the fact that $P \in \mathfrak{P}_{\bm{\mu}}$ for the inequality and the definition of $P^*_{\bm{\mu}}$ for the second equality. We therefore see that probability distribution achieving the optimal value of the maximization $\max_{P \in \mathfrak{P}_{\bm{\mu}}} P(\Omega_{k}^{m})$
must satisfy
\[
p_m(1|w_{m-1},...,w_{1}) = p_m^*(1).
\]
In particular, $P(\Omega) = P(\Omega_{c}^{n})$, which implies the distribution maximizing the left hand side of \Cref{eq:maxP} satisfies $p_{n}(w_{n}|w_{n-1},...,w_{1}) = p^*_{n}(w_{n})$. Restricting to distributions satisfying this, consider
\begin{equation}
\begin{aligned}
    P(\Omega_{c}^{n}) &= \sum_{\substack{\mathbf{w} \in \Gamma \ \text{s.t.}\\ \sum_{i=1}^{n}w_{i} \geq c}} p_{1}(w_{1})p_{2}(w_{2}|w_{1})\cdots p_{n}(w_{n}|w_{n-1},...,w_{1}) \\
    &= \sum_{z \in \{0,1\}}\sum_{\substack{\mathbf{w} \in \Gamma \ \text{s.t.}\\ \sum_{i=1}^{n-1}w_{i} \geq c - z\\w_{n} = z}} p_{1}(w_{1})p_{2}(w_{2}|w_{1})\cdots p_{n}(z|w_{n-1},...,w_{1})\\
    &= \sum_{z \in \{0,1\}}p^*_{n}(z)\sum_{\substack{\mathbf{w} \in \Gamma \ \text{s.t.}\\ \sum_{i=1}^{n-1}w_{i} \geq c - z\\w_{n}=z}} p_{1}(w_{1})p_{2}(w_{2}|w_{1})\cdots p_{n-1}(w_{n-1}|w_{n-2},...,w_{1})\\
    &= \sum_{z \in \{0,1\}}p^*_{n}(z)\sum_{\substack{\mathbf{w} \in \Gamma \ \text{s.t.}\\ \sum_{i=1}^{n-1}w_{i} \geq c - z}} p_{1}(w_{1})p_{2}(w_{2}|w_{1})\cdots p_{n-1}(w_{n-1}|w_{n-2},...,w_{1})\\
    &=\sum_{z \in \{0,1\}}p^*_{n}(z)P(\Omega_{c-z}^{n-1}).
\end{aligned}
\end{equation}
Notice that, using \Cref{eq:upSum} by setting $m=n-1$ and $k=c-z$, the maximum of $P(\Omega_{c-z}^{n-1})$ occurs when $p_{n-1}(w_{n-1}|w_{n-2},...,w_{1}) = p^*_{n-1}(w_{1})$ for both values of $z$. By the same line of reasoning above, we therefore find the optimal distribution $P$ must satisfy this constraint, implying
\begin{equation}
    P(\Omega_{c}^{n}) = \sum_{z_{1},z_{2}\in \{0,1\}}p^*_{n}(z_{1})p^*_{n-1}(z_{2}) P(\Omega_{c-z_{1}-z_{2}}^{n-2}). 
\end{equation}
We can keep iterating the above procedure, until we obtain
\begin{equation}
    P(\Omega_{c}^{n}) = \sum_{\mathbf{z} \in \Gamma}P^*_{\bm{\mu}}(\mathbf{z}) P(\Omega^0_{c - \sum_{i=1}^{n}z_{i}}),
\end{equation}
where
\begin{equation}
    \Omega^0_{c - \sum_{i=1}^{n}z_{i}} = \Big\{ \mathbf{w} \in \Gamma \ : \ 0 \geq c - \sum_{i=1}^{n}z_{i} \Big\} = \begin{cases}
        \Gamma \ \text{if} \ \sum_{i=1}^{n}z_{i} \geq c, \\
        \varnothing \ \text{otherwise.}
    \end{cases}
\end{equation}
Thus $P(\Omega^0_{c - \sum_{i=1}^{n}z_{i}}) = 1$ if $\sum_{i=1}^{n} z_{i} \geq c$ and zero otherwise, implying
\begin{equation}
    \sum_{\mathbf{z} \in \Gamma}P^*(\mathbf{z})_{\bm{\mu}} P(\Omega^0_{c - \sum_{i=1}^{n}z_{i}}) = \sum_{\substack{\mathbf{z} \in \Gamma \ \text{s.t.}\\ \sum_{i=1}^{n}z_{i} \geq c}}P^*_{\bm{\mu}}(\mathbf{z}) = P^*_{\bm{\mu}}(\Omega_{c}^{n}).
\end{equation}
We have therefore established
\begin{equation}
    \max_{P \in \mathfrak{P}_{\bm{\mu}} } P(\Omega) \leq P^*_{\bm{\mu}}(\Omega).
\end{equation}
The fact that $P_{\bm{\mu}}^* \in \mathfrak{P}_{\bm{\mu}}$ completes the proof. 
\end{proof}

We now state an immediate corollary for the particular case encountered in this work.
\begin{corollary}
    For $i=1,...,n$, let $\rho_{i}$ and $\mcN_{i}$ be a sequence of states and channels which induce the binary CHSH variables $W_{i}$, as described in \Cref{app:seqProc}. Let $\mu_{i}^{\uparrow}$ be defined in \Cref{eq:muUp} and $\hat{W}_{i}$ be independent binary random variables defined by $\mathbb{P}(\hat{W}_{i} = 1) = \mu_{i}^{\uparrow}$. Then for any $t \in \{1,...,n\}$,
    \begin{equation}
        \mathbb{P}\Bigg( \sum_{i \neq t}^{n} W_{i} \geq \lfloor(n-1)(\omega_{\sharp} - \kappa)\rfloor \Bigg) \leq \mathbb{P}\Bigg(\sum_{i \neq t}^{n} \hat{W}_{i} \geq \lfloor(n-1)(\omega_{\sharp} - \kappa)\rfloor \Bigg).
    \end{equation} \label{cor:seq}
\end{corollary}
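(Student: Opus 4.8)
The plan is to obtain \cref{cor:seq} as a direct specialisation of \cref{lem:seq}. First I would fix $T=t$ and observe that, in the sequential protocol, the only classical data retained is the string $\mathbf{w}=(w_i)_{i\neq t}$ recording which of the measured rounds won the CHSH game, and that future rounds can depend only on past outcomes. Hence the joint law of $(W_i)_{i\neq t}$ has the product form \eqref{eqn: product-distributions}, i.e.\ it is a measure $P\in\mathfrak{P}$ on $\{0,1\}^{n-1}$ (after relabelling the indices in $\{1,\dots,n\}\setminus t$ as $\{1,\dots,n-1\}$; the harmless mismatch between $n$ and $n-1$ rounds is exactly the cosmetic one already noted in the proof of \cref{lem:seq}).

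The key step is to show that $P\in\mathfrak{P}_{\bm{\mu}}$ for $\bm{\mu}=(\mu_i^{\uparrow})_{i\neq t}$. Here I would invoke the assumption that the source emits independent states: the state $\rho_i$ supplied in round $i$ is generated independently of all earlier rounds, so conditioning on any past trajectory $(w_1,\dots,w_{i-1})$ leaves $\rho_i$ intact. Conditioning may, however, correlate with the device's memory register; for a fixed conditional memory state, the analysis of \cref{app:seqProc} shows that the round-$i$ behaviour on $\rho_i$ is reproduced by some POVM pair $\{\tilde{M}^i_{a|x}\},\{\tilde{N}^i_{b|y}\}$, so the conditional winning probability equals $\tr[\tilde{B}_i\rho_i]$ for the associated operator. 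Since $\mu_i^{\uparrow}$ is defined in \eqref{eq:muUp} as the maximum of $\tr[\tilde{B}_i\rho_i]$ over all admissible POVM pairs, this gives $p_i(1\mid w_{i-1},\dots,w_1)\le\mu_i^{\uparrow}$ for every trajectory, which is precisely the membership $P\in\mathfrak{P}_{\bm{\mu}}$.

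Finally I would apply \cref{lem:seq} with $c=\lfloor(n-1)(\omega_{\sharp}-\kappa)\rfloor$ and $\Omega=\{\mathbf{w}:\sum_i w_i\ge c\}$, which yields $P(\Omega)\le P^*_{\bm{\mu}}(\Omega)$. But $P^*_{\bm{\mu}}$ is exactly the distribution of a tuple of independent bits, the $i$-th equal to $1$ with probability $\mu_i^{\uparrow}$, i.e.\ the law of $(\hat{W}_i)_{i\neq t}$. Reading off both sides then gives $\mathbb{P}\big(\sum_{i\neq t}W_i\ge c\big)\le\mathbb{P}\big(\sum_{i\neq t}\hat{W}_i\ge c\big)$, as claimed. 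I expect the only genuine obstacle to be the second step: one must argue carefully that conditioning on past \emph{outcomes} does not disturb $\rho_i$ (it does not, by independence of the source), whereas conditioning on the device memory merely selects some effective measurement strategy, which is in turn dominated by the optimised value $\mu_i^{\uparrow}$; everything else is a verbatim invocation of \cref{lem:seq}.
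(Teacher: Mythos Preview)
Your proposal is correct and follows essentially the same route as the paper: relabel the $n-1$ rounds, verify that the joint law of $(W_i)_{i\neq t}$ lies in $\mathfrak{P}_{\bm{\mu}^{\uparrow}}$, and then invoke \cref{lem:seq} with $c=\lfloor(n-1)(\omega_{\sharp}-\kappa)\rfloor$. In fact you spell out the membership step (independence of $\rho_i$ from the past, conditional memory state inducing some effective POVM pair, domination by the optimised value $\mu_i^{\uparrow}$) more carefully than the paper, which simply asserts it ``by the definition of $\mu_i^{\uparrow}$''.
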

\begin{proof}
    Let us relabel the string of $n-1$ binary variables $\mathbf{W} = \{W_{i}\}_{i \neq t}^{n} \equiv \{W_{1},...,W_{m}\}$ where $m = n-1$. They follow a distribution of the form
    \begin{equation}
        P(\mathbf{W}) = p_{1}(w_{1})p_{2}(w_{2}|w_{1})\cdots p_{m}(w_{m}|w_{m-1},...,w_{1}), 
    \end{equation}
    and by the definition of $\mu_{i}^{\uparrow}$, $P$ is a member of $ \mathfrak{P}_{\bm{\mu}^{\uparrow}}$, where $\bm{\mu}^{\uparrow} = [\mu_{1}^{\uparrow},...,\mu_{m}^{\uparrow}]$. Let
    \begin{equation}
        \hat{\Omega} = \Big\{ \mathbf{W} \in \{0,1\}^{m} \ : \ \sum_{i=1}^{m} W_{i} \geq \lfloor m(\omega_{\sharp} - \kappa) \rfloor \Big\}.
    \end{equation}
    Then
    \begin{equation}
        \begin{aligned}
            \mathbb{P}\Bigg(\sum_{i \neq t}^{n} \frac{W_{i}}{n-1} \geq \omega_{\sharp} - \kappa \Bigg) & \leq P(\hat{\Omega}) \leq \max_{P' \in \mathfrak{P}_{\bm{\mu}^{\uparrow}}} P'(\hat{\Omega}) = P^*_{\bm{\mu}^{\uparrow}}(\hat{\Omega}) = \mathbb{P}\Bigg(\sum_{i \neq t}^{n} \hat{W}_{i} \geq \lfloor(n-1)(\omega_{\sharp} - \kappa)\rfloor \Bigg)
        \end{aligned}
    \end{equation}
    as desired, where the final equality follows from the fact that the random variables $\hat{W}_{i}$ are distributed according to $P^*_{\bm{\mu}^{\uparrow}}$.
    
\end{proof}

\section{Bounding the LOCC extractability} \label{app: LOCC_extractibiltity}
The objective of this section is to derive reliable lower bounds on the LOCC extractability, as discussed in Section 8 of the main text. We begin by mathematically defining the general problem, followed by a reduction to qubit strategies when working in Bell scenarios with two inputs and two outputs per party. We then present a numerical method to bound the extractability in this case.  

\begin{figure}
    \centering
    \includegraphics[width=\linewidth]{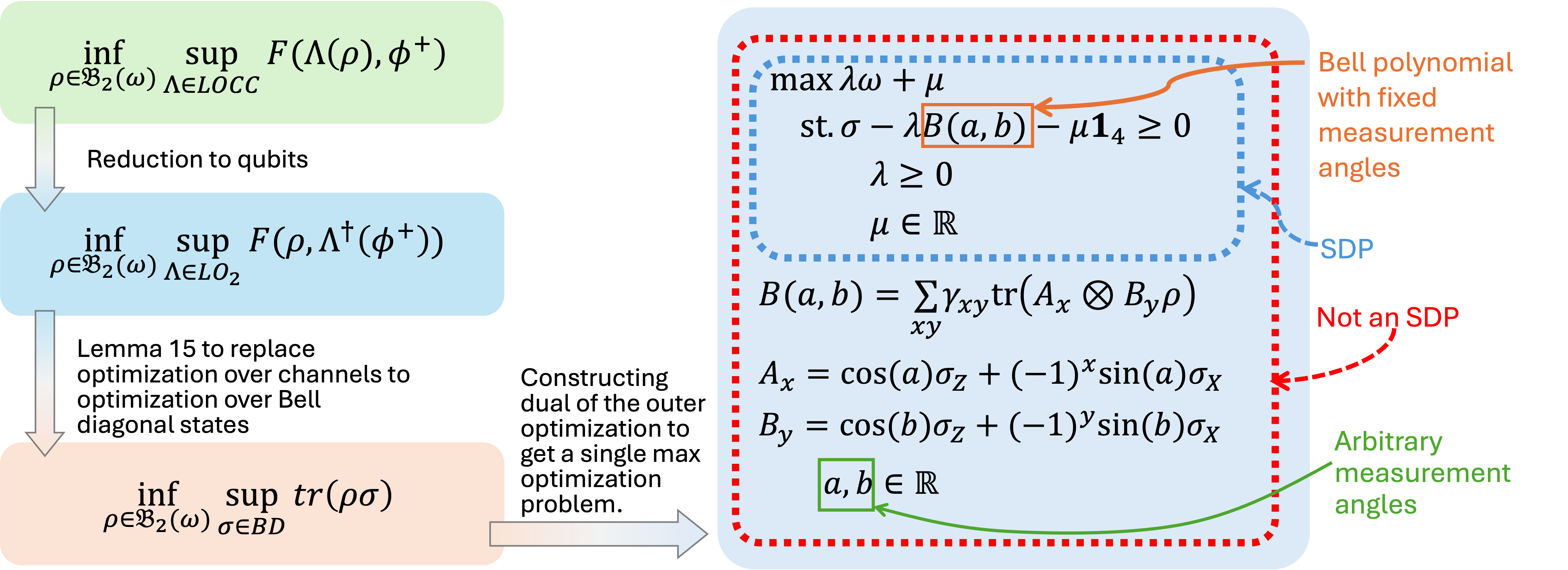}
   \caption{\chg{Sketch of the method for computing lower bounds on singlet extractability. After all reductions, the problem becomes a single maximization that is not an SDP in general, but reduces to an SDP when two parameters (corresponding to the Bell test measurement observables) are fixed. This allows the optimization to be solved by discretizing the parameter space, with a controllable penalty that can be reduced by refining the grid.}}
    \label{fig: proof}
\end{figure}
\subsection{Stating the problem}

We begin with some definitions. Recall the bipartite Bell scenario described in the main text, in which two parties perform local measurements on an entangled state $\rho \in \mcS(\mcH_{Q_{A}}\otimes \mcH_{Q_{B}})$. Their binary inputs are labeled by $X=x$ and $Y=y$, and outputs by $A=a$ and $B=b$, respectively. We label the corresponding POVMs $\{\{M_{a|x}\}_{a \in \{0,1\}}\}_{x \in \{0,1\}}$ on $\mcH_{Q_{A}}$ and $\{\{N_{b|y}\}_{b \in \{0,1\}}\}_{y \in \{0,1\}}$ on $\mcH_{Q_{B}}$, which define the observables $\{A_{x}\}_{x \in \{0,1\}}$ and $\{B_{y}\}_{y \in \{0,1\}}$. We define
\begin{equation}
    \mathcal{B}_{\omega} := \Big\{ \rho \in \mathcal{S}(\mathcal{H}_{Q_{A}} \otimes \mathcal{H}_{Q_{B}}) \ : \ \exists \{A_{x}\}_{x},\{B_{y}\}_{y} \ \ \text{s.t.} \ \ \tr[B\rho] \geq \omega \Big\}, 
\end{equation}
where 
\begin{equation}
    B = \sum_{x,y} \gamma_{xy} (A_{x} \otimes B_{y})
\end{equation}
is a Bell operator with some real coefficients $\gamma_{xy}$. Let $\mcC \in \{ \mathsf{U},\mathsf{LO},\mathsf{LOSR},\mathsf{LOCC}\}$ denote the set of unital, local, LOSR (local operations and shared randomness), LOCC and unital quantum channels from $\mathcal{S}(\mathcal{H}_{Q_{A}} \otimes \mathcal{H}_{Q_{B}})$ to $\mathcal{S}(\mathbb{C}^{d} \otimes \mathbb{C}^{d})$, and $\psi^* = \ketbra{\psi^*}{\psi^*} \in \mathcal{S}(\mathbb{C}^{d} \otimes \mathbb{C}^{d})$ be a target state. We also write $\mathcal{S}_{2} = \mathcal{S}(\mbC^{2} \otimes \mbC^{2})$ for the set of two-qubit states and denote by $\mcC_{2}$ the set of channels (in the class $\mcC$) from $\mathcal{S}_{2}$ to itself, i.e., from qubits to qubits. We will also label a qubit system held by Alice by $\hat{Q}_{A}$, and similarly for Bob.

The problem we wish to solve takes the form
\begin{equation}
    \Xi_{B}^{\mcC}(\omega) = \inf_{\substack{\rho \in \mathcal{B}_{\omega}}} \sup_{\Lambda \in \mcC} F(\Lambda(\rho),\psi^*),
\end{equation}
Note that since $\psi^*$ is pure, we can use the identity $F(\Lambda(\rho),\psi^*) = \tr[\Lambda(\rho)\psi^*]$.

\subsection{Reduction to qubits}
We now apply Jordan's lemma to reduce the problem to qubits.
\begin{lemma}[Jordan's lemma~\cite{Jordan}]
Let $A_{0},A_{1}$ be two binary observables on a Hilbert space $\mathcal{H}$. Then there exists a basis for which $\mathcal{H}$ can be decomposed block diagonally into subspaces with dimension $\leq 2$, where each subspace is preserved by $A_{0},A_{1}$. \label{lem:jordan}
\end{lemma}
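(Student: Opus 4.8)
The plan is to prove the lemma by producing a single Hermitian operator that commutes with both $A_0$ and $A_1$, using it to reduce the problem to its eigenspaces, and then explicitly constructing the two-dimensional invariant blocks inside each eigenspace. Throughout I use that a binary observable is a Hermitian involution, i.e.\ $A_x = A_x^{\dagger}$ and $A_x^2 = \id$, which is the only algebraic input needed.

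First I would introduce the anticommutator $H := A_0 A_1 + A_1 A_0$, which is manifestly Hermitian. Using only $A_0^2 = A_1^2 = \id$, a direct computation gives $A_0 H = A_1 + A_0 A_1 A_0 = H A_0$, and symmetrically $A_1 H = H A_1$, so $H$ commutes with both observables. Since $H$ is Hermitian I would then take its spectral decomposition $\cH = \bigoplus_{\lambda} \cH_{\lambda}$; because $H$ commutes with $A_0$ and $A_1$, each eigenspace $\cH_{\lambda}$ is invariant under both, and on $\cH_{\lambda}$ the anticommutator collapses to a scalar, $A_0 A_1 + A_1 A_0 = \lambda\, \id_{\cH_{\lambda}}$. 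This scalar relation is exactly what makes the remaining construction local.

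Next I would build the blocks inside a fixed $\cH_{\lambda}$. As $A_0$ and $H$ commute and are Hermitian, I can pick a common eigenvector $\ket{v} \in \cH_{\lambda}$ with $A_0 \ket{v} = \pm \ket{v}$, and set $V := \mathrm{span}\{\ket{v}, A_1\ket{v}\}$, of dimension at most two. The subspace $V$ is invariant under $A_1$ since $A_1^2 = \id$, and invariant under $A_0$ because the scalar anticommutator relation yields $A_0 A_1 \ket{v} = (\lambda\,\id - A_1 A_0)\ket{v} = \lambda \ket{v} \mp A_1 \ket{v} \in V$. Since $A_0, A_1$ are Hermitian and preserve $V$, they also preserve $V^{\perp}$, so I would recurse on $V^{\perp} \cap \cH_{\lambda}$, peeling off invariant subspaces of dimension at most two until the eigenspace is exhausted, and finally apply Gram–Schmidt within each block to obtain an orthonormal basis in which both $A_0$ and $A_1$ are simultaneously block diagonal with blocks of size $\leq 2$.

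The main obstacle is one of bookkeeping rather than of substance: one must verify that the iterative peeling of the $V$'s genuinely covers each eigenspace (an induction on $\dim\cH$ in the finite-dimensional case, replaced by a transfinite induction or a direct-integral argument if $\cH$ is infinite-dimensional), and one must separately record the degenerate case $A_1\ket{v} \in \mathbb{C}\ket{v}$, where $V$ collapses to a one-dimensional invariant block. In the finite-dimensional setting relevant to the qubit reduction that follows immediately after this lemma, both points are routine, so I would state the argument for finite $\cH$ and remark that the spectral-theoretic version extends it verbatim.
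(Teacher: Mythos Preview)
Your argument is correct: the anticommutator $H = A_0 A_1 + A_1 A_0$ commutes with both involutions, so the problem localizes to each eigenspace $\cH_{\lambda}$, and the span $\{\ket{v}, A_1\ket{v}\}$ for an $A_0$-eigenvector $\ket{v}$ gives an invariant block of dimension at most two; the recursion on orthogonal complements then exhausts the space. This is the standard proof of Jordan's lemma.

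There is nothing to compare against on the paper's side: the paper does not prove this lemma at all. It is stated with a citation to Jordan and then immediately applied (with a pointer to other references for the details of the reduction). Your proposal therefore supplies a proof where the paper offers none, and it is the textbook argument one would expect.
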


\noindent This allows us to perform the following reduction (see, e.g.,~\cite{kaniewski2016analytic,Bhavsar2023} for details). We define the set of two-qubit states which can achieve a Bell value $\omega$ below:
\begin{equation}
    \mathcal{B}_{2,\omega} = \Big\{ \rho \in \mathcal{S}_{2} \ : \ \exists (a,b) \in [0,\pi/2] \times [0,\pi/2] \ \ \text{s.t.} \ \ \tr[B(a,b)\rho] \geq \omega \Big\}, \label{eq:B_set}
\end{equation}
where $B(a,b)$ is the Bell operator $B$ constructed from the qubit observables 
\begin{equation}
    A_{x} = \cos(a) \, \sigma_{Z} + (-1)^{x}\sin(a) \, \sigma_{X}, \ \ \ B_{y} = \cos(b) \, \sigma_{Z} + (-1)^{y}\sin(b) \, \sigma_{X}.
\end{equation}

\begin{lemma}
    Let $B$ be a Bell operator in the bipartite minimal Bell scenario. Then the following inequality holds:
    \begin{equation}
        \Xi^{\mathsf{LOCC}}_{B}(\omega) \geq \tilde{f}(\omega),
    \end{equation}
    where $\tilde{f}(\omega)$ is any convex lower bound on the qubit LO extractability, 
    \begin{equation}
        f_{2}(\omega) := \inf_{\substack{\rho \in \mathcal{B}_{2,\omega}}} \sup_{\Lambda \in \mathsf{LO}_{2} } \tr[\Lambda(\rho)\phi]. \label{eq:singQubitExt}
    \end{equation}
\end{lemma}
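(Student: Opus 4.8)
The statement we want to establish is that, for any Bell operator $B$ in the minimal scenario, the (general-dimension) LOCC extractability is lower bounded by any convex lower bound on the qubit LO extractability, i.e.\ $\Xi^{\mathsf{LOCC}}_{B}(\omega) \geq \tilde f(\omega)$ where $\tilde f$ is convex and $\tilde f \leq f_2$ with $f_2$ as in \eqref{eq:singQubitExt}. The plan is to chain together three reductions. First, I would use Jordan's lemma (\cref{lem:jordan}) to block-diagonalize the two observables on each of Alice's and Bob's Hilbert spaces into blocks of dimension at most two; padding with an ancilla qubit if a block is one-dimensional, one may assume all blocks are genuine qubit blocks. This gives a direct-sum structure $\mcH_{Q_A} = \bigoplus_j \mcH_{A,j}$ with $\dim \mcH_{A,j} = 2$, preserved by $A_0, A_1$, and similarly for Bob, and on each block $(j,k)$ the observables take the canonical form $A_x = \cos a_{jk}\,\sigma_Z + (-1)^x \sin a_{jk}\,\sigma_X$ after a local unitary rotation into the invariant plane (this is the standard reduction used in, e.g., \cite{kaniewski2016analytic,Bhavsar2023}).

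Second, I would analyze how the Bell value and the extraction fidelity decompose across the Jordan blocks. Writing $P_{jk}$ for the projector onto block $\mcH_{A,j}\otimes\mcH_{B,k}$, any state $\rho \in \mcB_\omega$ can be written (after dephasing with respect to the block structure, which can only help the adversary minimizing extractability, since the measurement observables are block-diagonal so dephasing does not change the Bell value) as $\rho = \sum_{jk} q_{jk}\, \rho_{jk}$ with $\rho_{jk}$ supported on block $(j,k)$ and $\sum q_{jk}\langle B(a_{jk}, b_{jk})\rangle_{\rho_{jk}} \geq \omega$. Crucially, on each block the local measurement observables are qubit observables in canonical form, so $\rho_{jk}$ together with the block angles $(a_{jk},b_{jk})$ exhibits a qubit behavior achieving Bell value $\omega_{jk} := \langle B(a_{jk},b_{jk})\rangle_{\rho_{jk}}$, i.e.\ $\rho_{jk} \in \mcB_{2,\omega_{jk}}$ as in \eqref{eq:B_set}. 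Now, since LOCC includes the protocol ``measure which block you are in (a local, hence classically-coordinated, measurement), communicate the outcome, then apply the optimal qubit LO channel for that block,'' one gets
\begin{equation}
\sup_{\Lambda \in \mathsf{LOCC}} F(\Lambda(\rho),\psi^*) \;\geq\; \sum_{jk} q_{jk} \sup_{\Lambda_{jk} \in \mathsf{LO}_2} \tr[\Lambda_{jk}(\rho_{jk})\phi] \;\geq\; \sum_{jk} q_{jk}\, f_2(\omega_{jk}).
\end{equation}
Here the first inequality is exactly where LOCC (rather than LO) is needed: the block-identification outcome must be broadcast to both parties so they apply matching local channels, which LO alone does not permit but LOCC does; this also confirms the remark in the main text that classical communication happens only after the Bell test.

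Third, I would apply convexity. Since $\tilde f \leq f_2$ and $\tilde f$ is convex, and $\sum_{jk} q_{jk} = 1$ with $\sum_{jk} q_{jk}\,\omega_{jk} \geq \omega$, Jensen's inequality gives $\sum_{jk} q_{jk} f_2(\omega_{jk}) \geq \sum_{jk} q_{jk}\tilde f(\omega_{jk}) \geq \tilde f\big(\sum_{jk} q_{jk}\omega_{jk}\big) \geq \tilde f(\omega)$, using monotonicity of $\tilde f$ for the last step (the extractability and hence its natural lower bounds are non-decreasing in $\omega$; one takes $\tilde f$ non-decreasing WLOG, as noted around \cref{rem:g_fun}). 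Taking the infimum over $\rho \in \mcB_\omega$ on the left then yields $\Xi^{\mathsf{LOCC}}_B(\omega) \geq \tilde f(\omega)$.

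\textbf{Main obstacle.} The delicate point is the second step: justifying that one may restrict the adversary's state to be block-diagonal without increasing extractability, and that the LOCC extraction can genuinely be decomposed block-by-block. For the first part, one must check that dephasing $\rho$ onto the Jordan blocks (i) does not decrease the achievable Bell value — true because $B$ is block-diagonal — and (ii) does not increase $\sup_\Lambda F(\Lambda(\rho),\psi^*)$; the latter requires that the block-dephasing map is itself implementable within the free class so that it can be prepended to any extraction channel, which holds for LOCC since the Jordan blocks on Alice's and Bob's sides are defined by local projectors. For the second part, one must be careful that the Jordan decomposition on Alice's side and on Bob's side are independent, so the joint block label $(j,k)$ is obtained by a genuinely local measurement on each side; broadcasting both labels by classical communication is then what upgrades the per-block LO channels into a single global LOCC channel. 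Handling one-dimensional Jordan blocks (where the ``qubit'' observable is a scalar $\pm 1$, i.e.\ $a_{jk} \in \{0,\pi/2\}$ degenerate cases) by trivial ancilla-padding is routine but should be mentioned. Everything else is bookkeeping and invocation of \cref{lem:jordan} plus Jensen.
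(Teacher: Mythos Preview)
Your proposal is correct and follows essentially the same approach as the paper's proof: Jordan's lemma to block-diagonalize into qubit sectors, the specific LOCC channel that locally measures the Jordan block label, broadcasts it, and then applies the optimal qubit LO extraction per block, followed by the convexity (Jensen) argument on $\tilde f$. The only cosmetic difference is that the paper computes the action of this channel directly on a generic (non-block-diagonal) $\rho$---the measurement $\Pi$ automatically kills the off-diagonal blocks---whereas you first argue WLOG that $\rho$ is block-diagonal via the dephasing-helps-the-adversary argument; both routes are valid and lead to the same chain of inequalities.
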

\noindent We refer to the function $f_{2}(\omega)$ as the singlet fidelity~\cite{bardyn2009device}.
\begin{proof}
Applying Lemma \ref{lem:jordan} to $\mathcal{H}_{Q_{A}}$ and $\mathcal{H}_{Q_{B}}$, we can write
\begin{equation}
    A_{x} = \sum_{\alpha} A_{x}^{(\alpha)} \otimes \ketbra{\alpha}{\alpha}_{F_{A}}, \ B_{y} = \sum_{\beta} B_{y}^{(\beta)} \otimes \ketbra{\beta}{\beta}_{F_{B}}.
\end{equation}
Above, we introduced flag registers $F_{A}$ and $F_{B}$, and qubit registers $\hat{Q}_{A}$ and $\hat{Q}_{B}$, such that $\mcH_{Q_{A}} = \mcH_{\hat{Q}_{A}} \otimes \mcH_{F_{A}}$ and $\mcH_{Q_{B}} = \mcH_{\hat{Q}_{B}} \otimes \mcH_{F_{B}}$. Without loss of generality, we can apply local unitaries to each block such that each qubit measurement is real and lies in the $Z-X$ plane of the Bloch sphere~\cite{kaniewski2016analytic}, that is
\begin{equation}
    A_{x}^{(\alpha)} = \cos(a_{\alpha}) \, \sigma_{Z} + (-1)^{x} \sin(a_{\alpha}) \, \sigma_{X}, \ \ \ \text{and} \ \ \ B_{y}^{(\beta)} = \cos(b_{\beta}) \, \sigma_{Z} + (-1)^{y} \sin(b_{\beta}) \, \sigma_{X}, \label{eq:realObs}
\end{equation}
for some $a_{\alpha},b_{\beta} \in (0,\pi/2]$. The Bell operator $B$ then decomposes,
\begin{equation}
    B = \sum_{\alpha,\beta} B^{(\alpha,\beta)} \otimes \ketbra{\alpha}{\alpha}_{F_{A}} \otimes \ketbra{\beta}{\beta}_{F_{B}}.    
\end{equation}
We denote a generic state $\rho \in \mcS(\mcH_{Q_{A}} \otimes \mcH_{Q_{B}})$ by
\begin{equation}
    \rho = \sum_{\alpha,\alpha',\beta,\beta'} \rho^{(\alpha,\alpha'),(\beta,\beta')} \otimes \ketbra{\alpha}{\alpha'}_{F_{A}} \otimes \ketbra{\beta}{\beta'}_{F_{B}}, \label{eq:genRho}
\end{equation}
and write $\rho^{(\alpha,\beta)} = p_{\alpha,\beta}\hat{\rho}^{(\alpha,\alpha),(\beta,\beta)}$, where $p_{\alpha,\beta} = \tr[\rho^{(\alpha,\alpha),(\beta,\beta)}]$ and $\hat{\rho}^{(\alpha,\alpha),(\beta,\beta)} = \rho^{(\alpha,\alpha),(\beta,\beta)}/p_{\alpha,\beta}$. We then consider the following LOCC channel:
\begin{equation}
    \Lambda = \mathcal{M} \circ \Pi, \label{eq:bigCh}
\end{equation}
where $\Pi : \mcS(\mcH_{Q_{A}} \otimes \mcH_{Q_{B}}) \to \mcS(\mcH_{\hat{Q}_{A}} \otimes \mcH_{\hat{Q}_{B}} \otimes \mcH_{C})$,
\begin{equation}
    \Pi(\sigma_{\hat{Q}_{A}\hat{Q}_{B}F_{A}F_{B}}) = \sum_{\alpha,\beta} (\id_{4} \otimes \bra{\alpha}_{F_{A}} \otimes \bra{\beta}_{F_{B}}) \sigma (\id_{4} \otimes \ket{\alpha}_{F_{A}} \otimes \ket{\beta}_{F_{B}}) \otimes \ketbra{\alpha,\beta}{\alpha,\beta}_{C} 
\end{equation}
and $\mcM : \mcS(\mcH_{\hat{Q}_{A}} \otimes \mcH_{\hat{Q}_{B}} \otimes \mcH_{C}) \to \mcS(\mcH_{\hat{Q}_{A}} \otimes \mcH_{\hat{Q}_{B}})$,
\begin{equation}
    \mathcal{M}(\tau_{\hat{Q}_{A}\hat{Q}_{B}C}) = \sum_{\alpha,\beta} \sum_{k} (E_{k}^{(\alpha,\beta)} \otimes \bra{\alpha,\beta})\tau(E_{k}^{(\alpha,\beta)\dagger} \otimes \ket{\alpha,\beta}).
\end{equation}
Above, $\{E_{k}^{(\alpha,\beta)}\}_{k}$ are the Kraus operators of a quantum channels $\Lambda^{(\alpha,\beta)} \in \mathsf{LO}_{2}$, i.e., $\Lambda^{(\alpha,\beta)}(\rho_{\hat{Q}_{A}\hat{Q}_{B}}) = \sum_{k} E^{(\alpha,\beta)}_{k}\rho E^{(\alpha,\beta)\dagger}_{k}$. Note that the register $C$ is shared by both devices, i.e., $\Pi$ is not a local channel. It is however an LOCC channel, since it can be performed by both devices measuring $F_{A}$ and $F_{B}$, and communicating the results.  Applied to a state of the form \eqref{eq:genRho}, we find
\begin{equation}
    \Lambda(\rho) = \sum_{\alpha,\beta}p_{\alpha,\beta}\Lambda^{(\alpha,\beta)}(\hat{\rho}^{(\alpha,\beta)}),
\end{equation}
which implies 
\begin{equation}
    \begin{aligned}
        \Xi^{\mathsf{LOCC}}_{B}(\omega) &= \inf_{\substack{\rho,\{A_{x}^{(\alpha)},B_{y}^{(\beta)}\}_{\alpha,\beta,x,y} \\ \text{s.t.} \ \sum_{\alpha,\beta}p_{\alpha,\beta} \tr[B^{(\alpha,\beta)}\hat{\rho}^{(\alpha,\beta)}] \geq \omega}} \sup_{\Lambda \in \mathsf{LOCC}} \tr[\Lambda(\rho) \psi^*] \\
        &\geq \inf_{\substack{\{\hat{\rho}^{(\alpha,\beta)},p_{\alpha,\beta},A_{x}^{(\alpha)},B_{y}^{(\beta)}\}_{\alpha,\beta} \\ \text{s.t.} \ \sum_{\alpha,\beta}p_{\alpha,\beta} \tr[B^{(\alpha,\beta)}\hat{\rho}^{(\alpha,\beta)}] \geq \omega}} \sum_{\alpha,\beta} p_{\alpha,\beta} \sup_{\Lambda^{(\alpha,\beta)} \in \mathsf{LO}_{2}} \tr[\Lambda^{(\alpha,\beta)}(\hat{\rho}^{(\alpha,\beta)}) \psi^*].
    \end{aligned}
\end{equation}
Let $\omega_{\alpha,\beta} = \tr[B^{(\alpha,\beta)}\hat{\rho}^{(\alpha,\beta)}]$, and $g(\rho) = \sup_{\Lambda \in \mathsf{LO}_{2}} \tr[\Lambda(\rho) \psi^*]$. We then have
\begin{equation}
    \begin{aligned}
        \inf_{\substack{\{\hat{\rho}^{(\alpha,\beta)},p_{\alpha,\beta},A_{x}^{(\alpha)},B_{y}^{(\beta)}\}_{\alpha,\beta} \\ \text{s.t.} \ \sum_{\alpha,\beta}p_{\alpha,\beta} \tr[B^{(\alpha,\beta)}\hat{\rho}^{(\alpha,\beta)}] \geq \omega}} \sum_{\alpha,\beta} p_{\alpha,\beta} \, g(\hat{\rho}^{(\alpha,\beta)}) = \inf_{\substack{\{\omega_{\alpha,\beta},p_{\alpha,\beta}\}_{\alpha,\beta} \\ \text{s.t.} \sum_{\alpha,\beta}p_{\alpha,\beta}\omega_{\alpha,\beta} \geq \omega}} \sum_{\alpha,\beta} p_{\alpha,\beta} \inf_{\rho \in \mathcal{B}_{2}(\omega_{\alpha,\beta})} g(\rho).
    \end{aligned}
\end{equation}
Let 
\begin{equation}
    f_{2}(\omega) = \inf_{\rho \in \mathcal{B}_{2,\omega}} g(\rho)
\end{equation}
be the qubit LO extractability, and $\tilde{f}_{2}(\omega)$ be any convex function satisfying $f_{2}(\omega) \geq \tilde{f}(\omega)$ for all $\omega$. Then
\begin{equation}
\begin{aligned}
    \inf_{\substack{\{\omega_{\alpha,\beta},p_{\alpha,\beta}\}_{\alpha,\beta} \\ \text{s.t.} \sum_{\alpha,\beta}p_{\alpha,\beta}\omega_{\alpha,\beta} \geq \omega}} \sum_{\alpha,\beta} p_{\alpha,\beta} \inf_{\rho \in \mathcal{B}_{2}(\omega_{\alpha,\beta})} g(\rho) &= \inf_{\substack{\{\omega_{\alpha,\beta},p_{\alpha,\beta}\}_{\alpha,\beta} \\ \text{s.t.} \sum_{\alpha,\beta}p_{\alpha,\beta}\omega_{\alpha,\beta} \geq \omega}} \sum_{\alpha,\beta} p_{\alpha,\beta} f_{2}(\omega_{\alpha,\beta}) \\
    &\geq \inf_{\substack{\{\omega_{\alpha,\beta},p_{\alpha,\beta}\}_{\alpha,\beta} \\ \text{s.t.} \sum_{\alpha,\beta}p_{\alpha,\beta}\omega_{\alpha,\beta} \geq \omega}} \tilde{f}\Bigg( \sum_{\alpha,\beta} p_{\alpha,\beta} \omega_{\alpha,\beta}\Bigg)  \\
    &\geq \tilde{f}(\omega),
\end{aligned}
\end{equation}
completing the proof.
\end{proof}

\subsection{Bounds on the singlet fidelity}\label{app: bounds on the singlet fidelity} 
In this section, we provide bounds on the singlet fidelity under local channels, $f_{2}(\omega)$, for the case $\psi^* = \phi^+$. We begin with the following lemmas, which allow us to reduce the problem. Consider writing
\begin{equation}
    \tr[\Lambda(\rho) \phi^{+}] = \langle \Lambda(\rho) , \phi^{+} \rangle = \langle \rho , \Lambda^{\dagger}(\phi^{+}) \rangle = \tr[\rho \Lambda^{\dagger}(\phi^{+})],
\end{equation}
where $\langle A, B\rangle = \tr[A^{\dagger}B]$ is the Hilbert Schmidt norm, and $\Lambda^{\dagger}$ is the adjoint channel of $\Lambda$. 

\begin{lemma}
    Let $\Lambda_{A}:\mathcal{S}(\mathcal{H}_{A}) \to \mathcal{S}(\mathcal{H}_{A})$ and $\Lambda_{B}:\mathcal{S}(\mathcal{H}_{B}) \to \mathcal{S}(\mathcal{H}_{B})$ be quantum channels and $\rho \in \mathcal{S}(\mathcal{H}_{A} \otimes \mathcal{H}_{B})$. Then 
    \begin{equation}
        \tr_{A}\big[ \Lambda_{A} \otimes \Lambda_{B} (\rho)] = \Lambda_{B}(\rho_{B}), \ \ \ \text{and} \ \ \ \tr_{B}\big[ \Lambda_{A} \otimes \Lambda_{B} (\rho)] = \Lambda_{A}(\rho_{A}).
    \end{equation} \label{lem:ptr}
\end{lemma}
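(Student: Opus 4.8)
The plan is to prove this directly from Kraus decompositions, using only that $\Lambda_A$ and $\Lambda_B$ are trace-preserving. I would fix Kraus representations $\Lambda_A(\cdot)=\sum_i K_i(\cdot)K_i^{\dagger}$ with $\sum_i K_i^{\dagger}K_i=\id_{\mcH_A}$ and $\Lambda_B(\cdot)=\sum_j L_j(\cdot)L_j^{\dagger}$ with $\sum_j L_j^{\dagger}L_j=\id_{\mcH_B}$. Since the two channels act on distinct tensor factors they commute, so $\Lambda_A\otimes\Lambda_B=(\Lambda_A\otimes\id_{\mcH_B})\circ(\id_{\mcH_A}\otimes\Lambda_B)$, and it suffices to treat the two halves separately and then compose.

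The key step is the elementary operator identity $\tr_A\big[(X\otimes\id_{\mcH_B})\,Z\,(Y\otimes\id_{\mcH_B})\big]=\tr_A\big[(YX\otimes\id_{\mcH_B})\,Z\big]$ for operators $X,Y$ on $\mcH_A$ and $Z$ on $\mcH_A\otimes\mcH_B$, which I would verify by expanding $\tr_A$ over an orthonormal basis of $\mcH_A$ and inserting resolutions of the identity — this is just cyclicity of the trace within the $A$ factor. Applying it with $X=K_i$, $Y=K_i^{\dagger}$ and summing over $i$ gives, for every $\sigma\in\mcS(\mcH_A\otimes\mcH_B)$, $\tr_A\big[(\Lambda_A\otimes\id_{\mcH_B})(\sigma)\big]=\tr_A\big[(\sum_i K_i^{\dagger}K_i\otimes\id_{\mcH_B})\,\sigma\big]=\tr_A[\sigma]$ by trace-preservation of $\Lambda_A$. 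Separately, writing a generic $\rho=\sum_k\ket{a_k}\!\bra{a'_k}_A\otimes\tau_k$ shows at once that $\tr_A$ commutes with a channel acting only on $B$, so $\tr_A\big[(\id_{\mcH_A}\otimes\Lambda_B)(\rho)\big]=\Lambda_B(\tr_A[\rho])=\Lambda_B(\rho_B)$. Composing the two facts yields $\tr_A\big[(\Lambda_A\otimes\Lambda_B)(\rho)\big]=\tr_A\big[(\id_{\mcH_A}\otimes\Lambda_B)(\rho)\big]=\Lambda_B(\rho_B)$, and the second identity follows by interchanging $A$ and $B$.

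I do not expect a genuine obstacle: the statement is standard, and the only mildly delicate point is the index bookkeeping in the partial-trace identity, which is purely mechanical. As an even shorter alternative, one can argue by duality: for any observable $M_B$ on $\mcH_B$, $\tr\big[\tr_A[(\Lambda_A\otimes\Lambda_B)(\rho)]\,M_B\big]=\tr\big[\rho\,(\Lambda_A^{\dagger}(\id_{\mcH_A})\otimes\Lambda_B^{\dagger}(M_B))\big]=\tr\big[\rho\,(\id_{\mcH_A}\otimes\Lambda_B^{\dagger}(M_B))\big]=\tr\big[\Lambda_B(\rho_B)\,M_B\big]$, using that the adjoint of a trace-preserving map is unital; since $M_B$ is arbitrary, $\tr_A[(\Lambda_A\otimes\Lambda_B)(\rho)]=\Lambda_B(\rho_B)$. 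I would include the Kraus-operator version in the paper as it is the most self-contained.
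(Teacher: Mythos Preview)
Your proposal is correct and follows essentially the same approach as the paper: both argue via Kraus decompositions, invoke cyclicity of the partial trace on the $A$ factor to collapse $\sum_i K_i^{\dagger}K_i=\id_{\mcH_A}$, and then pull the $B$-side Kraus operators outside $\tr_A$. The only cosmetic difference is that you first factor $\Lambda_A\otimes\Lambda_B=(\Lambda_A\otimes\id)\circ(\id\otimes\Lambda_B)$ and treat the two pieces separately, whereas the paper handles the full tensor product in one chain of equalities; your duality alternative is not in the paper but is a nice addition.
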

\begin{proof}
    This fact is a consequence of the product channel strucuture. Let $\{K_{A}^{i}\}_{i}$ be a set of Kraus operators for $\Lambda_{A}$, and $\{K_{B}^{j}\}_{j}$ be a set of Kraus operators for $\Lambda_{B}$. Then we have
    \begin{equation}
    \begin{aligned}
        \tr_{A}\big[ \Lambda_{A} \otimes \Lambda_{B} (\rho)] &= \sum_{i,j}\tr_{A}\big[ (K_{A}^{i} \otimes K_{B}^{j})\rho(K_{A}^{i} \otimes K_{B}^{j})^{\dagger}] \\
        &= \sum_{j} \tr_{A}\Bigg[ \Bigg(\sum_{i}\big(K_{A}^{i}\big)^{\dagger}K_{A}^{i} \otimes K_{B}^{j}\Bigg)\rho (\id_{A} \otimes K_{B}^{j})^{\dagger}\Bigg] \\
        &= \sum_{j} K_{B}^{j} \rho_{B} \big(K_{B}^{j}\big)^{\dagger} = \Lambda_{B}(\rho_{B}),
    \end{aligned}
    \end{equation}
    where for the second equality we used that the partial trace is cyclic, and for the third we used the identities $\sum_{i}\big(K_{A}^{i}\big)^{\dagger}K_{A}^{i} = \id_{A}$ and $\tr_{A}[(\id_{A} \otimes Y_{B})X_{AB}(\id_{A} \otimes Y_{B})^{\dagger}] =Y_{B} \tr_{A}[X_{AB}]Y_{B}^{\dagger}$. The analogous statement holds when tracing out system $B$. 
\end{proof}

\noindent This allows us to show the following.

\begin{lemma}
    \chg{Let $\Lambda \in \mathsf{LO}_{2} \cap \mathsf{U}_{2}$ be a local, unital quantum channel.} Then the state $\sigma = \Lambda^{\dagger}(\phi^{+})$ is Bell diagonal in some basis, i.e., it satisfies
    \begin{equation}
        \tr_{\hat{Q}_{A}}[\sigma] = \tr_{\hat{Q}_{B}}[\sigma] = \id_{2}/2. \label{eq:bd}
    \end{equation} \label{lem:ch2BD}
\end{lemma}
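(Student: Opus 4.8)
The claim is that if $\Lambda$ is a local and unital channel between two-qubit systems, then $\sigma = \Lambda^{\dagger}(\phi^{+})$ has maximally mixed marginals. The natural strategy is to exploit two structural facts: (i) $\Lambda$ being unital means $\Lambda^{\dagger}$ is trace-preserving (the adjoint of a unital channel is trace-preserving, and vice versa), so $\sigma$ is a bona fide density operator; and (ii) $\Lambda$ being local means $\Lambda = \Lambda_{A} \otimes \Lambda_{B}$, hence $\Lambda^{\dagger} = \Lambda_{A}^{\dagger} \otimes \Lambda_{B}^{\dagger}$, where each $\Lambda_{i}^{\dagger}$ is itself a trace-preserving map (since each $\Lambda_{i}$ is unital). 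The key computation is then to take the partial trace of $\sigma$ over one subsystem and use \cref{lem:ptr} to push the partial trace through the product structure.

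\textbf{Key steps.} First I would record that unitality of $\Lambda$ is equivalent to trace-preservation of $\Lambda^{\dagger}$: for any $X$, $\tr[\Lambda^{\dagger}(X)] = \langle \id, \Lambda^{\dagger}(X)\rangle = \langle \Lambda(\id), X\rangle = \langle \id, X\rangle = \tr[X]$, using $\Lambda(\id) = \id$. Since $\Lambda^{\dagger}$ is also completely positive (adjoints of CP maps are CP), $\sigma = \Lambda^{\dagger}(\phi^{+})$ is a valid state. Next, writing $\Lambda = \Lambda_{A} \otimes \Lambda_{B}$ for local channels $\Lambda_{A},\Lambda_{B}$, each of which is unital, we get $\Lambda^{\dagger} = \Lambda_{A}^{\dagger} \otimes \Lambda_{B}^{\dagger}$ with $\Lambda_{A}^{\dagger},\Lambda_{B}^{\dagger}$ trace-preserving by the same argument applied to the single-qubit factors. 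Then by \cref{lem:ptr} applied to the product channel $\Lambda_{A}^{\dagger} \otimes \Lambda_{B}^{\dagger}$ acting on $\phi^{+}$,
\begin{equation}
    \tr_{\hat{Q}_{A}}[\sigma] = \tr_{\hat{Q}_{A}}\big[\Lambda_{A}^{\dagger} \otimes \Lambda_{B}^{\dagger}(\phi^{+})\big] = \Lambda_{B}^{\dagger}\big(\tr_{\hat{Q}_{A}}[\phi^{+}]\big) = \Lambda_{B}^{\dagger}(\id_{2}/2) = \id_{2}/2,
\end{equation}
where the last equality uses that $\Lambda_{B}^{\dagger}$ is trace-preserving \emph{and} that the input $\id_{2}/2$ is maximally mixed, so in fact one needs a little care here. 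Actually trace-preservation alone does not send $\id_{2}/2$ to $\id_{2}/2$; rather, one needs $\Lambda_{B}^{\dagger}$ to be unital, i.e., $\Lambda_{B}$ to be trace-preserving (which it is, being a channel). So the correct chain is: $\Lambda_{B}$ is a channel $\Rightarrow$ $\Lambda_{B}^{\dagger}$ is unital $\Rightarrow$ $\Lambda_{B}^{\dagger}(\id_{2}) = \id_{2}$ $\Rightarrow$ $\Lambda_{B}^{\dagger}(\id_{2}/2) = \id_{2}/2$. The symmetric computation gives $\tr_{\hat{Q}_{B}}[\sigma] = \Lambda_{A}^{\dagger}(\id_{2}/2) = \id_{2}/2$, establishing \eqref{eq:bd}.

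\textbf{Main obstacle.} The computation itself is short; the only real subtlety is bookkeeping about which duality sends which property where. Specifically: $\Lambda$ unital $\Leftrightarrow$ $\Lambda^{\dagger}$ trace-preserving, and $\Lambda$ trace-preserving $\Leftrightarrow$ $\Lambda^{\dagger}$ unital. To conclude that $\sigma$ is a state I need the first equivalence (from the hypothesis that $\Lambda$ is unital), and to conclude that the marginals of $\sigma$ are maximally mixed I need the second equivalence applied to the local factors (from the fact that $\Lambda$, being a channel, is trace-preserving, hence so is each local factor $\Lambda_{A},\Lambda_{B}$ — this requires a brief remark that the local factors of a trace-preserving product channel are themselves trace-preserving, up to normalization). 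Getting this duality bookkeeping exactly right, and applying \cref{lem:ptr} with the correct roles for $A$ and $B$, is the part that needs attention; everything else is routine.
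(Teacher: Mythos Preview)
Your proposal is correct and follows essentially the same route as the paper: write $\Lambda = \Lambda_A \otimes \Lambda_B$ with each factor unital, so that $\Lambda_A^\dagger,\Lambda_B^\dagger$ are channels, apply \cref{lem:ptr} to $\Lambda_A^\dagger \otimes \Lambda_B^\dagger$ acting on $\phi^+$, and then use unitality of $\Lambda_B^\dagger$ (equivalently, trace-preservation of $\Lambda_B$) to conclude $\Lambda_B^\dagger(\id_2/2)=\id_2/2$. Your explicit duality bookkeeping and the self-correction distinguishing ``$\Lambda_B^\dagger$ trace-preserving'' from ``$\Lambda_B^\dagger$ unital'' are exactly the right clarifications; the paper compresses this into one line.
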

\begin{proof}
    \chg{Let $\Lambda = \Lambda_{A} \otimes \Lambda_{B}$ where $\Lambda_{A}(\id_{2}) = \Lambda_{B}(\id_{2}) = \id_{2}$. This implies that both $\Lambda_{A}^{\dagger}$ and $\Lambda_{B}^{\dagger}$ are quantum channels.} We can therefore apply \Cref{lem:ptr},
    \begin{equation}
        \tr_{\hat{Q}_{A}}[\sigma] = \tr_{\hat{Q}_{A}}[\Lambda_{A}^{\dagger}\otimes \Lambda_{B}^{\dagger}(\phi^{+})] = \Lambda_{B}^{\dagger}(\id_{2}/2) = \id_{2}/2,
    \end{equation}
    where for the last line we used that $\Lambda^{\dagger}$ is unital. The analogous statement holds when tracing out system $\hat{Q}_{B}$. 
\end{proof}

\noindent We denote the set of Bell diagonal states, i.e., two-qubit  states that satisfy  \Cref{eq:bd}, $\mathcal{BD}$. We next prove the converse statement. 

\begin{lemma}
    For every state $\sigma \in \mathcal{BD}$, there exist unital channels $\Lambda_{A}:\mathcal{S}(\mbC^{2}) \to \mathcal{S}(\mbC^{2})$ and $\Lambda_{B}:\mathcal{S}(\mbC^{2}) \to \mathcal{S}(\mbC^{2})$ such that
    \begin{equation}
        \Lambda_{A}^{\dagger} \otimes \Lambda_{B}^{\dagger}(\phi^{+}) = \sigma.
    \end{equation} \label{lem:BD2ch}
\end{lemma}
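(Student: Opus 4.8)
The plan is to prove this via the Choi--Jamio\l kowski correspondence, which makes the identity essentially immediate and sidesteps any case analysis. The point is that a two-qubit density operator with maximally mixed marginals is, up to the factor $2$, the Choi matrix of a unital quantum channel, and $\phi^{+}$ is proportional to the unnormalised maximally entangled vector $|\Omega\rangle = |00\rangle+|11\rangle$ underlying that correspondence, i.e. $\phi^{+} = \tfrac12|\Omega\rangle\!\langle\Omega|$. So, given $\sigma\in\mathcal{BD}$ --- a two-qubit state with $\tr_{\hat{Q}_{A}}[\sigma]=\tr_{\hat{Q}_{B}}[\sigma]=\id_{2}/2$ --- I would first set $J := 2\sigma$ and let $\Psi$ be the single-qubit map with Choi matrix $J$, defined by $\Psi(X) = \tr_{\hat{Q}_{A}}[(X^{T}\ot\id_{2})\,J]$. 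Positivity of $\sigma$ makes $\Psi$ completely positive; $\tr_{\hat{Q}_{B}}[J]=\id_{2}$ makes it trace-preserving; and the extra condition $\tr_{\hat{Q}_{A}}[J]=\id_{2}$ --- which is exactly the ``Bell-diagonal'' hypothesis \eqref{eq:bd} --- makes it unital. A one-line computation using $\phi^{+}=\tfrac12\sum_{i,j}|i\rangle\!\langle j|\ot|i\rangle\!\langle j|$ then gives $(\mathrm{id}_{\hat{Q}_{A}}\ot\Psi)(\phi^{+}) = \tfrac12 J = \sigma$.

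Next I would pass to adjoints. Since the statement involves $\Lambda^{\dagger}_{A}\ot\Lambda^{\dagger}_{B}$, I would simply take $\Lambda_{A} := \id_{2}$ and $\Lambda_{B} := \Psi^{\dagger}$ (Hilbert--Schmidt adjoint), so that $\Lambda^{\dagger}_{A}\ot\Lambda^{\dagger}_{B}(\phi^{+}) = \id_{2}\ot\Psi(\phi^{+}) = \sigma$. It then only remains to observe that $\Psi^{\dagger}$ is again a unital qubit channel: adjunction preserves complete positivity, and it interchanges trace preservation with unitality, so the adjoint of a unital CPTP map is a unital CPTP map. That finishes the argument.

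As an alternative, closer to what the ``Bell-diagonal'' terminology suggests, I could instead first bring $\sigma$ to the canonical form $\tfrac14(\id_{2}\ot\id_{2} + \sum_{i} t_{i}\,\sigma_{i}\ot\sigma_{i})$ by applying suitable local unitaries $U_{A}\ot U_{B}$, using a singular-value decomposition of the correlation matrix of $\sigma$ together with the bijection between $\mathrm{SO}(3)$ rotations of the Bloch sphere and qubit unitary conjugations. A canonical Bell-diagonal state is a convex combination $\sum_{k} q_{k}\,\beta_{k}$ of the four Bell projectors, each of which equals $(\sigma_{k}\ot\id_{2})\phi^{+}(\sigma_{k}\ot\id_{2})$ with $\sigma_{0}=\id_{2}$; hence it is $(\Phi\ot\mathrm{id})(\phi^{+})$ for the self-adjoint unital Pauli channel $\Phi(\rho)=\sum_{k}q_{k}\sigma_{k}\rho\sigma_{k}$, and conjugating back by $U_{A}^{\dagger}\ot U_{B}^{\dagger}$ and taking adjoints yields explicit $\Lambda_{A},\Lambda_{B}$. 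I do not expect a genuine obstacle in either route, since the statement is a short structural fact; the only points needing care are the Choi-matrix dictionary (being sure which partial trace encodes trace preservation and which encodes unitality) and, in the alternative, the $\mathrm{O}(3)$-versus-$\mathrm{SO}(3)$ sign bookkeeping in the singular-value reduction, both of which are routine.
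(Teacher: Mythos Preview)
Your proposal is correct. Your primary route via the Choi--Jamio\l kowski correspondence is genuinely different from the paper's argument and is in fact more direct: you observe that $J=2\sigma$ is the Choi matrix of a unital qubit channel $\Psi$ (complete positivity from $\sigma\ge 0$, trace preservation from $\tr_{\hat Q_B}[J]=\id_2$, unitality from $\tr_{\hat Q_A}[J]=\id_2$), so that $(\mathrm{id}\otimes\Psi)(\phi^+)=\sigma$ immediately, and then set $\Lambda_A=\mathrm{id}$, $\Lambda_B=\Psi^\dagger$, using that the adjoint of a unital CPTP map is again unital CPTP. The paper instead takes the route you sketch as your alternative: it writes $\sigma=\sum_\alpha\lambda_\alpha\tilde\Phi_\alpha$ in some Bell basis, builds the self-adjoint Pauli channel $\tilde\Lambda_A$ with Kraus operators $\sqrt{\lambda_\alpha}\,\sigma_\alpha$ so that $(\tilde\Lambda_A\otimes\mathrm{id})(\phi^+)=\sum_\alpha\lambda_\alpha\Phi_\alpha$ in the standard Bell basis, and then conjugates by local unitaries $U_A\otimes U_B$ to reach the given basis. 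Your Choi argument buys brevity and avoids the basis-rotation bookkeeping entirely; the paper's construction buys an explicit Kraus decomposition and makes the ``Pauli channel'' structure visible, which can be convenient if one later wants concrete extraction maps. Both are short; yours is the cleaner abstraction.
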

\begin{proof}
    Suppose $\sigma$ is Bell diagonal, i.e., there exists a Bell basis $\{\tilde{\Phi}_{\alpha}\}_{\alpha=0}^{3}$ and a distribution $\{\lambda_{\alpha}\}_{\alpha=0}^{3}$ such that
    \begin{equation}
        \sigma = \sum_{\alpha} \lambda_{\alpha} \tilde{\Phi}_{\alpha}.
    \end{equation}
    Define the channel $\tilde{\Lambda}_{A}$ by the following Kraus operators 
    \begin{equation}
        \begin{aligned}
            E_{0} &= \sqrt{\lambda_{0}} \id, \\
            E_{1} &= \sqrt{\lambda_{1}} \sigma_{Z}\sigma_{X}, \\
            E_{2} &= \sqrt{\lambda_{2}} \sigma_{Z}, \\
            E_{3} &= \sqrt{\lambda_{3}} \sigma_{X}. 
        \end{aligned}
    \end{equation}
    Note that $\tilde{\Lambda}_{A}$ is unital, equal to its adjoint, and 
    \begin{equation}
       \sum_{i} (E_{i} \otimes \id_{2}) \phi^{+} (E_{i} \otimes \id_{2}) = \sum_{\alpha}\lambda_{\alpha} \Phi_{\alpha},
    \end{equation}
    where $\{\Phi_{\alpha}\}_{\alpha}$ is the standard Bell basis, $\Phi_{0} = \phi^{+}$, $\Phi_{1} = \psi^{-}$, $\Phi_{2} = \phi^{-}$ and $\Phi_{3} = \psi^{+}$, where $\ket{\phi^\pm} = (\ket{00} \pm \ket{11})/\sqrt{2}$ and $\ket{\psi^\pm} = (\ket{01} \pm \ket{10})/\sqrt{2}$. Let $U_{A} \otimes U_{B}$ be the local unitary which rotates the standard Bell basis $\{\tilde{\Phi}_{\alpha}\}_{\alpha}$ to the Bell basis $\{\tilde{\Phi}_{\alpha}\}_{\alpha=0}^{3}$, i.e.,
    \begin{equation}
        (U_{A} \otimes U_{B})  \Bigg(\sum_{\alpha}\lambda_{\alpha} \Phi_{\alpha} \Bigg) (U_{A} \otimes U_{B})^{\dagger} = \sum_{\alpha}\lambda_{\alpha} \tilde{\Phi}_{\alpha} = \sigma.
    \end{equation}
    Define the unital channels 
    \begin{equation}
        \Lambda_{A}(\rho) = \tilde{\Lambda}_{A}(U_{A}^{\dagger}\rho U_{A}), \ \ \ \text{and} \ \ \ \Lambda_{B}(\tau) = U_{B}^{\dagger} \tau U_{B}.
    \end{equation}
    We then have
    \begin{equation}
        \Lambda_{A}^{\dagger} \otimes \Lambda_{B}^{\dagger}(\phi^{+}) = \sum_{i} (U_{A} \otimes U_{B}) (E_{i} \otimes \id_{2})\phi^{+} (E_{i} \otimes \id_{2})(U_{A} \otimes U_{B})^{\dagger} = \sigma, 
    \end{equation}
    as desired. 
\end{proof}
By combining Lemmas \ref{lem:ch2BD} and \ref{lem:BD2ch}, we arrive at the following reduction.
\begin{corollary}
    Let $\rho \in \mathcal{S}_{2}$ be an arbitrary two-qubit state. Then following equality holds:
    \begin{equation}
       \sup_{\Lambda \in \mathsf{LO}_{2} \cap \mathsf{U}_{2}} \tr[\rho \Lambda^{\dagger}(\phi^+)] = \sup_{\sigma \in \mathcal{BD}} \tr[\rho\sigma]. \label{eq:opt3}
    \end{equation}
     \label{cor: bellDiag}
\end{corollary}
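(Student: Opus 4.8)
The plan is to obtain \eqref{eq:opt3} as a two-sided containment argument, with each direction handed off to one of the two preceding lemmas; no new estimates are needed. Fix the two-qubit state $\rho \in \mathcal{S}_{2}$ throughout, and observe that the claim is purely about equality of two suprema, so whether or not either supremum is attained is irrelevant to the argument.

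For the inequality ``$\le$'', I would take any $\Lambda \in \mathsf{LO}_{2} \cap \mathsf{U}_{2}$. Since $\Lambda$ is a unital completely positive trace-preserving map on qubits, its adjoint $\Lambda^{\dagger}$ is again completely positive and trace-preserving, so $\Lambda^{\dagger}(\phi^{+}) \in \mathcal{S}_{2}$ is a genuine state; by \cref{lem:ch2BD} it lies in $\mathcal{BD}$, hence $\tr[\rho\,\Lambda^{\dagger}(\phi^{+})] \le \sup_{\sigma \in \mathcal{BD}} \tr[\rho\sigma]$, and taking the supremum over all admissible $\Lambda$ gives the first inequality. For ``$\ge$'', I would take any $\sigma \in \mathcal{BD}$; by \cref{lem:BD2ch} there are single-qubit unital channels $\Lambda_{A},\Lambda_{B}$ with $\Lambda_{A}^{\dagger} \otimes \Lambda_{B}^{\dagger}(\phi^{+}) = \sigma$. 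The channel $\Lambda := \Lambda_{A} \otimes \Lambda_{B}$ is local by construction and unital, since $\Lambda(\id_{4}) = \Lambda_{A}(\id_{2}) \otimes \Lambda_{B}(\id_{2}) = \id_{4}$, so $\Lambda \in \mathsf{LO}_{2} \cap \mathsf{U}_{2}$, and moreover $\Lambda^{\dagger} = \Lambda_{A}^{\dagger} \otimes \Lambda_{B}^{\dagger}$. Therefore $\tr[\rho\sigma] = \tr[\rho\,\Lambda^{\dagger}(\phi^{+})] \le \sup_{\Lambda' \in \mathsf{LO}_{2}\cap\mathsf{U}_{2}} \tr[\rho\,\Lambda'^{\dagger}(\phi^{+})]$, and taking the supremum over $\sigma \in \mathcal{BD}$ yields the reverse inequality. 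Combining the two gives \eqref{eq:opt3}.

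There is no genuine obstacle to overcome in this corollary---it is a clean repackaging of \cref{lem:ch2BD,lem:BD2ch}. The only points meriting an explicit one-line remark are the two bookkeeping facts used above: that the adjoint of a unital qubit channel is itself a qubit channel (so that both suprema range over the same class of objects and $\Lambda^{\dagger}(\phi^{+})$ is well defined), and that the pair of unital factors produced in \cref{lem:BD2ch} assembles into a channel that genuinely lies in $\mathsf{LO}_{2}\cap\mathsf{U}_{2}$, rather than merely in a larger class such as $\mathsf{LOCC}_{2}$. Both are immediate from the definitions, so I would not expand on them further.
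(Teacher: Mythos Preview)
Your proposal is correct and takes essentially the same approach as the paper. The paper phrases it as a set equality---defining $S = \{\Lambda^{\dagger}(\phi^+) : \Lambda \in \mathsf{LO}_{2}\cap\mathsf{U}_{2}\}$ and concluding $S = \mathcal{BD}$ from \cref{lem:ch2BD,lem:BD2ch}---while you spell out the two corresponding sup-inequalities, but the content is identical.
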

\begin{proof}
    Let
    \begin{equation}
        S = \Big\{ \Lambda^{\dagger}(\phi^+) \ : \ \Lambda \in \mathsf{LO}_{2} \cap \mathsf{U}_{2} \Big\}.
    \end{equation}
    Then by Lemma \ref{lem:ch2BD}, we know $S \subseteq \mathcal{BD}$, and by Lemma \ref{lem:BD2ch}, we know $\mathcal{BD} \subseteq S$. We therefore have $S = \mathcal{BD}$, proving the claim. 
\end{proof}

Having lower bounded the maximization over local channels by a maximization over states, we now show how, given for fixed pair of measurement angles $a,b \in \mathbb{R}$, the function $f_{2}(\omega)$ can be bounded by an SDP. 
\begin{lemma}
    Define the function
    \begin{equation}
        \begin{aligned}
            f_{a,b}(\omega) := \max_{\lambda,\mu,\sigma} \ & \lambda \, \omega + \mu \\
            \mathrm{s.t.} \ & \ \sigma - \lambda B(a,b) - \mu \id_{4} \geq 0 \\
            & \tr_{\hat{Q}_{A}}[\sigma] = \frac{\id_{2}}{2} \\
            & \tr_{\hat{Q}_{B}}[\sigma] = \frac{\id_{2}}{2} \\
            & \sigma \in \mcS_{2}, \ \lambda \geq 0, \ \mu \in \mathbb{R}.
        \end{aligned} \label{eq:dualSDP}
    \end{equation}
    Then 
    \begin{equation}
        f_{2}(\omega) \geq \min_{(a,b) \in \mcF_{\omega}} f_{a,b}(\omega),
    \end{equation}
    where $\mcF_{\omega} = \{ (a,b) \in [0,\pi/2] \times [0,\pi/2] \ : \ \exists \rho \in \mathcal{S}_{2} \ \mathrm{s.t.} \ \tr[B(a,b)\rho] \geq \omega \}$.\label{lem:SDP}
\end{lemma}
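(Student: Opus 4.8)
\emph{Proof plan.} The strategy is to combine three ingredients already in place: the decomposition of $\mathcal{B}_{2,\omega}$ as a union over measurement angles $(a,b)$; Corollary~\ref{cor: bellDiag}, which trades the hard optimization over local channels for a linear optimization over Bell-diagonal states; and a routine weak-duality estimate exhibiting $f_{a,b}(\omega)$ as a dual bound for the resulting inner problem. I will prove the fixed-angle statement $g(\rho)\ge f_{a,b}(\omega)$ for every two-qubit $\rho$ with $\tr[B(a,b)\rho]\ge\omega$, where $g(\rho)=\sup_{\Lambda\in\mathsf{LO}_{2}}\tr[\Lambda(\rho)\phi^{+}]$, and then minimize over $(a,b)$.

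First I would unfold the definitions. By~\eqref{eq:B_set}, every $\rho\in\mathcal{B}_{2,\omega}$ carries a pair $(a,b)\in[0,\pi/2]^{2}$ with $\tr[B(a,b)\rho]\ge\omega$, and such a pair necessarily lies in $\mcF_{\omega}$. Hence it suffices to fix $(a,b)\in\mcF_{\omega}$ and bound $g(\rho)$ from below for every $\rho\in\mcS_{2}$ feasible for that angle pair; taking the infimum over such $\rho$ and then over $(a,b)\in\mcF_{\omega}$ will give $f_{2}(\omega)\ge\inf_{(a,b)\in\mcF_{\omega}}f_{a,b}(\omega)$.

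For the fixed-angle bound I would proceed in two moves. Since $\mathsf{LO}_{2}\cap\mathsf{U}_{2}\subseteq\mathsf{LO}_{2}$, restricting to unital local channels can only shrink the supremum, so $g(\rho)\ge\sup_{\Lambda\in\mathsf{LO}_{2}\cap\mathsf{U}_{2}}\tr[\Lambda(\rho)\phi^{+}]$; for unital $\Lambda$ the adjoint $\Lambda^{\dagger}$ is trace-preserving, so $\tr[\Lambda(\rho)\phi^{+}]=\tr[\rho\,\Lambda^{\dagger}(\phi^{+})]$, and Corollary~\ref{cor: bellDiag} identifies $\{\Lambda^{\dagger}(\phi^{+}):\Lambda\in\mathsf{LO}_{2}\cap\mathsf{U}_{2}\}$ with $\mathcal{BD}$, giving $g(\rho)\ge\sup_{\sigma\in\mathcal{BD}}\tr[\rho\sigma]$. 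Then weak duality finishes the job: for any $(\lambda,\mu,\sigma)$ feasible for the program~\eqref{eq:dualSDP} and any feasible $\rho$, positivity of $\rho$ against the constraint $\sigma-\lambda B(a,b)-\mu\id_{4}\ge0$, together with $\tr\rho=1$, $\lambda\ge0$ and $\tr[B(a,b)\rho]\ge\omega$, yields $\tr[\rho\sigma]\ge\lambda\tr[B(a,b)\rho]+\mu\ge\lambda\omega+\mu$; since $\sigma\in\mathcal{BD}$ this chains with the previous bound to $g(\rho)\ge\lambda\omega+\mu$. Taking $\inf$ over $\rho$ and $\sup$ over $(\lambda,\mu,\sigma)$ gives $\inf_{\rho}g(\rho)\ge f_{a,b}(\omega)$, as required.

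Finally I would record that the infimum over $\mcF_{\omega}$ is attained, so it may be written as a minimum: $\mcF_{\omega}$ is closed in $[0,\pi/2]^{2}$ because $(a,b)\mapsto\lambda_{\max}(B(a,b))$ is continuous, hence compact, and $(a,b)\mapsto f_{a,b}(\omega)$ is lower semicontinuous — a short perturbation argument (keeping $(\lambda,\sigma)$ fixed and decreasing $\mu$ by the size of $\lambda\,[B(a,b)-B(a_{0},b_{0})]$) shows that near-optimal feasible triples at $(a_{0},b_{0})$ remain feasible at nearby $(a,b)$ with objective changed by $o(1)$. I expect this attainment/continuity point to be the only genuinely delicate step; the relaxation, the Heisenberg-picture identity, and the weak-duality computation are all standard, and if one is content to read ``$\min$'' as ``$\inf$'' it can be omitted entirely, since only $f_{2}(\omega)\ge\inf_{(a,b)\in\mcF_{\omega}}f_{a,b}(\omega)$ is used in Theorem~\ref{thm:SDP_main}.
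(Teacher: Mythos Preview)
Your proof is correct and, for the inequality actually stated, cleaner than the paper's. The paper first restricts to unital local channels and invokes Corollary~\ref{cor: bellDiag} exactly as you do, but then it swaps $\min_{\rho}$ and $\max_{\sigma}$ via the von~Neumann minimax theorem (requiring compactness and convexity of $\mathcal{B}^{a,b}_{\omega}$ and $\mathcal{BD}$), and for the resulting inner $\min_{\rho}\tr[\rho\sigma]$ it computes the SDP dual and verifies Slater's condition to obtain \emph{strong} duality, finally absorbing the outer $\max_{\sigma}$ into the dual to recover $f_{a,b}(\omega)$. Your pointwise argument sidesteps both the minimax step and strong duality: since $\sigma$ already appears as a variable in~\eqref{eq:dualSDP} and is constrained to lie in $\mathcal{BD}$, the single chain $\tr[\rho\sigma]\ge\lambda\omega+\mu$ from weak duality, combined with $g(\rho)\ge\tr[\rho\sigma]$, delivers $g(\rho)\ge f_{a,b}(\omega)$ directly. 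The paper's route does establish slightly more---namely that $f_{a,b}(\omega)$ \emph{equals} $\max_{\sigma\in\mathcal{BD}}\min_{\rho\in\mathcal{B}^{a,b}_{\omega}}\tr[\rho\sigma]$---but this extra tightness is not used in the lemma statement or downstream. On the attainment issue you flagged as delicate: the paper handles it by separate compactness claims for $\mathcal{B}_{2,\omega}$ and $\mcF_{\omega}$ (Claims~\ref{claim:1}--\ref{claim:3}) rather than your lower-semicontinuity sketch, and as you correctly observe, for the purposes of Theorem~\ref{thm:SDP_main} the distinction between $\min$ and $\inf$ is immaterial.
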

\begin{proof}
    \chg{We first employ the bound $f_{2}(\omega) = \inf_{\substack{\rho \in \mathcal{B}_{2,\omega}}} \sup_{\Lambda \in \mathsf{LO}_{2} } \tr[\Lambda(\rho)\phi] \geq \inf_{\substack{\rho \in \mathcal{B}_{2,\omega}}} \sup_{\Lambda \in \mathsf{LO}_{2} \cap \mathsf{U}_{2} } \tr[\Lambda(\rho)\phi]$.} Then, by applying \Cref{cor: bellDiag}, we have
    \begin{equation}
    \begin{aligned}
        f_{2}(\omega) &\geq \inf_{\rho \in \mathcal{B}_{2,\omega}} \sup_{\sigma \in \mathcal{BD}} \tr[\rho \sigma] \\ 
        &= \min_{\rho \in \mathcal{B}_{2,\omega}} \max_{\sigma \in \mathcal{BD}} \tr[\rho \sigma] \\
        &= \min_{(a,b) \in \mcF_{\omega}} \min_{\rho \in \mathcal{B}^{a,b}_{\omega}} \max_{\sigma \in \mathcal{BD}} \tr[\rho \sigma] \\
        &= \min_{(a,b) \in \mcF_{\omega}}  \max_{\sigma \in \mathcal{BD}} \min_{\rho \in \mathcal{B}^{a,b}_{\omega}} \tr[\rho \sigma], \label{eq:optSwap}
    \end{aligned}
    \end{equation}
    where $\mathcal{B}^{a,b}_{\omega}$ is the set of two-qubit states which can achieve a Bell value of $\omega$ with measurement angles $a,b$, i.e., 
    \begin{equation}
    \mathcal{B}^{a,b}_{\omega} = \Big\{ \rho \in \mathcal{S}_{2} \ : \ \ \tr[B(a,b)\rho] \geq \omega \Big\}. \label{eq:B_ab}
\end{equation}
    In \Cref{eq:optSwap}, we used the following facts:
    \begin{enumerate}
        \item The set $\mcB_{2,\omega}$ defined in \Cref{eq:B_set} is compact. For proof see Claim \ref{claim:1}.
        \item The set $\mcF_{\omega}$ defined in the lemma statement is compact. For proof see Claim \ref{claim:2}.
        \item The sets $\mathcal{B}^{a,b}_{\omega}$ and $\mcB \mcD$ are compact and convex. This follows from the fact that any subset of $\mcS_{2}$ defined by linear constraints inherits the compactness and convexity of $\mcS_{2}$. 
        \item For any function $g:\mcS_{2} \to \mathbb{R}$, the set $\mcB_{2,\omega}$ satisfies $\min_{\rho \in \mcB_{2,\omega}} g(\rho) = \min_{(a,b) \in \mcF_{\omega}} \min_{\rho \in \mathcal{B}^{a,b}_{\omega}} g(\rho)$. For proof see Claim \ref{claim:3}.
        \item The function $g(\rho,\sigma) = \tr[\rho \sigma]$ is linear in one of its arguments when the other is fixed.
    \end{enumerate}
    The first equality in \Cref{eq:optSwap} then follows from point 1, allowing us to replace $\inf_{\rho \in \mathcal{B}_{2,\omega}}$ with $\min_{\rho \in \mathcal{B}_{2,\omega}}$, and point 3, allowing us to replace $\sup_{\sigma \in \mathcal{BD}}$ with $\max_{\sigma \in \mathcal{BD}}$. The second equality follows from point 4. The third equality follows from points 3 and 5, allowing us to apply von Neumann's minimax theorem~\cite{Simons1995}. 
    
    Consider, for a fixed $\sigma \in \mathcal{BD}$, $\omega$ and $(a,b) \in [0,\pi/2] \times [0 ,\pi/2]$ such that $\mathcal{B}^{a,b}_{\omega}$ is non-empty, the optimization
    \begin{equation}
    \begin{aligned}
        \min \ &\tr[\rho \sigma] \\
        \text{s.t.} \ & \tr[B(a,b)\rho] \geq \omega \\
        & \tr[\rho] = 1 \\
        & \rho \geq 0.
    \end{aligned}
    \end{equation}
    This has the dual (see, e.g.,~\cite[Example 5.11]{Boyd_Vandenberghe_2004}) 
    \begin{equation}
    \begin{aligned}
        \max \ & \lambda \omega + \mu \\
        \text{s.t.} \ & \sigma - \lambda B(a,b) - \mu \id_{4} \geq 0 \\
        & \lambda \geq 0 \\
        & \mu \in \mathbb{R},
    \end{aligned} \label{eq:dual}
    \end{equation}
    whose optimal value lower bounds that of the primal by weak duality. Furthermore, we proceed to show that strong duality holds. Consider the point $(\lambda , \mu) = (1 , -\eta^{\mathrm{Q}}_{2}-\epsilon)$ for any $\epsilon > 0$, where $\eta^{\mathrm{Q}}_{2}$ is the maximum quantum value of the Bell functional $B$ for qubits, i.e., 
    \begin{equation}
        \eta^{\mathrm{Q}}_{2} = \sup_{\substack{\rho \in \mathcal{S}_{2}, \\ (a,b) \in [0,\pi/2] \times [0\pi/2]}} \tr[B(a,b)\rho].
    \end{equation}
    \chg{The constraint $\sigma - \lambda B(a,b) - \mu \id_{4} \geq 0$ is equivalent to 
    \begin{equation}
        \bra{\psi}\sigma \ket{\psi} - \lambda \bra{\psi}B(a,b)\ket{\psi} - \mu \geq 0, \ \forall \ket{\psi}.
    \end{equation}
    The point $(\lambda , \mu) = (1 , -\eta^{\mathrm{Q}}_{2} - \epsilon)$ satisfies $\lambda > 0$, and
    \begin{equation}
    \begin{aligned}
        \bra{\psi}\sigma \ket{\psi} - \lambda \bra{\psi}B(a,b)\ket{\psi} - \mu &= \bra{\psi}\sigma \ket{\psi} - \bra{\psi}B(a,b)\ket{\psi} + \eta^{\mathrm{Q}}_{2} + \epsilon \\
        &\geq \epsilon \\
        &> 0,
    \end{aligned}
    \end{equation}
    where we used the fact that $\eta^{\mathrm{Q}}_{2}  \geq \bra{\psi}B(a,b)\ket{\psi}$ for all measurements $(a,b)$ and states $\ket{\psi}$, and that $\bra{\psi}\sigma \ket{\psi} \geq 0$. We have shown that the dual is strictly feasible, and therefore strong duality holds. Inserting \Cref{eq:dual} into \Cref{eq:optSwap} establishes the claim.}
\end{proof}

\chg{We provide proofs of the referenced claims below.
\begin{claim}
    The set $\mcB_{2,\omega}$ defined in \Cref{eq:B_set} is compact. \label{claim:1}
\end{claim}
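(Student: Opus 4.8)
\textbf{Proof proposal for Claim \ref{claim:1} (compactness of $\mcB_{2,\omega}$).}

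The plan is to show that $\mcB_{2,\omega}$, as defined in \eqref{eq:B_set}, is a closed and bounded subset of the real vector space of Hermitian operators on $\mbC^2 \otimes \mbC^2$, and hence compact by Heine--Borel. Boundedness is immediate: $\mcB_{2,\omega} \subseteq \mcS_2$, and $\mcS_2$ is bounded (every density operator has trace norm $1$). The substance is therefore closedness, i.e.\ showing that if $\rho_k \to \rho$ with each $\rho_k \in \mcB_{2,\omega}$, then $\rho \in \mcB_{2,\omega}$. Since $\mcS_2$ is closed, the limit $\rho$ is a valid two-qubit state; what must be verified is that $\rho$ still admits a witnessing pair of angles $(a,b) \in [0,\pi/2]^{\times 2}$ with $\tr[B(a,b)\rho] \geq \omega$.

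First I would unpack the definition: for each $k$ there exists $(a_k, b_k) \in [0,\pi/2]^{\times 2}$ with $\tr[B(a_k,b_k)\rho_k] \geq \omega$. The box $[0,\pi/2]^{\times 2}$ is compact, so by passing to a subsequence (which does not affect the conclusion) I may assume $(a_k,b_k) \to (a_\infty, b_\infty) \in [0,\pi/2]^{\times 2}$. Next I would observe that the map $(a,b,\rho) \mapsto \tr[B(a,b)\rho]$ is jointly continuous: $B(a,b)$ is built from the observables $A_x = \cos(a)\sigma_Z + (-1)^x \sin(a)\sigma_X$ and $B_y = \cos(b)\sigma_Z + (-1)^y \sin(b)\sigma_X$ via the fixed linear combination with coefficients $\gamma_{xy}$ (see \eqref{eqn: Bell_score} and the Bell operator \eqref{eq:bop}), so the entries of $B(a,b)$ are continuous — indeed smooth — functions of $(a,b)$, and $\tr[\,\cdot\,\rho]$ is linear (hence continuous) in $\rho$ and in $B$. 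Passing to the limit in $\tr[B(a_k,b_k)\rho_k] \geq \omega$ then yields $\tr[B(a_\infty,b_\infty)\rho] \geq \omega$, so $(a_\infty,b_\infty)$ witnesses $\rho \in \mcB_{2,\omega}$, establishing closedness and completing the proof.

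I do not expect a serious obstacle here; the only mild care needed is the subsequence extraction for the angles and noting that joint continuity lets one take limits in all arguments simultaneously. One could alternatively phrase it more slickly by writing $\mcB_{2,\omega} = \pi\big(\{(a,b,\rho) \in [0,\pi/2]^{\times 2} \times \mcS_2 : \tr[B(a,b)\rho] \geq \omega\}\big)$, where $\pi$ is the projection onto the $\rho$-coordinate: the bracketed set is closed (preimage of $[\omega,\infty)$ under a continuous map intersected with a compact set) and bounded, hence compact, and the continuous image of a compact set under $\pi$ is compact. Either route is routine, so I would present whichever is shorter in the final write-up.
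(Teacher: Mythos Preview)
Your proposal is correct. Both you and the paper reduce compactness to closedness in $\mcS_2$ (boundedness being immediate), but the mechanics of closedness differ slightly. The paper introduces the auxiliary function $f(\rho) := \max_{(a,b)\in [0,\pi/2]^2}\tr[B(a,b)\rho]$ and bounds $|\tr[B(a,b)\rho]-\tr[B(a,b)\rho_n]|$ \emph{uniformly} in $(a,b)$ via H\"older together with the finite constant $c=\max_{(a,b)}\|B(a,b)\|_\infty$; this lets them conclude $f(\rho)\ge \omega-\epsilon$ for all $\epsilon>0$ without ever extracting a limit of the witnessing angles. You instead pass to a convergent subsequence of the angle pairs $(a_k,b_k)$ and invoke joint continuity of $(a,b,\rho)\mapsto\tr[B(a,b)\rho]$ to take limits simultaneously, which is the more textbook sequential-compactness argument and arguably cleaner. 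Your alternative formulation---writing $\mcB_{2,\omega}$ as the projection of the compact set $\{(a,b,\rho):\tr[B(a,b)\rho]\ge\omega\}\subset[0,\pi/2]^2\times\mcS_2$---is the slickest of the three and packages the whole thing into a one-liner; it would also immediately give Claim~\ref{claim:2} (compactness of $\mcF_\omega$) by projecting onto the other factor.
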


\begin{proof}
We first show $\mathcal{B}_{2,\omega}$ is closed. Let $K:=[0,\tfrac{\pi}{2}]\times[0,\tfrac{\pi}{2}]$ and note that $(a,b)\mapsto B(a,b)$ is continuous (see Claim \ref{claim: continuity of B(a,b)}). Define
\[
f(\rho):=\max_{(a,b)\in K} \operatorname{tr}[B(a,b)\rho].
\]
The maximum is achievable because $K$ is compact and $(a,b)\mapsto B(a,b)$ is continuous.  

Let $\{\rho_{n}\}_{n \in \mathbb{N}} \subset \mathcal{B}_{2,\omega}$ be a sequence of states and suppose $\rho_{n} \to \rho$ (in trace norm) as $n \to \infty$. For any $(a,b)\in K$ and any $\epsilon > 0$, there exists an $n \in \mathbb{N}$ such that
\[
\big|\tr[B(a,b)\rho]-\tr[B(a,b)\rho_n]\big| \le \|B(a,b)\|_\infty \, \|\rho-\rho_n\|_1 \le c \, \epsilon,
\]
where $c := \max_{(a,b)\in K} \|B(a,b)\|_\infty$ and for the first inequality we applied H\"older's inequality. This follows from the fact that $\| \rho - \rho_{n}\|_{1} \to 0$ as $n \to \infty$. Since $\rho_n \in \mathcal{B}_{2,\omega}$, there exist $(a_n,b_n)\in K$ such that $\tr[B(a_n,b_n)\rho_n]\ge \omega$. Applying the above bound with the substitution $(a,b) \mapsto (a_{n},b_{n})$, for any $\epsilon >0$ there exists an $n \in \mathbb{N}$ such that 
\[
\tr[B(a_n,b_n)\rho] \ge \omega - c \, \epsilon.
\]
As this holds for every $\epsilon>0$, it follows that $f(\rho) \ge \omega - \epsilon'$ for every $\epsilon' > 0$. Taking $\epsilon'$ to be arbitrarily small therefore implies $\rho \in \mathcal{B}_{2,\omega}$.  

Finally, $\mathcal{B}_{2,\omega} \subset \mathcal{S}_2$ and the state space $\mathcal{S}_2$ is bounded, so $\mathcal{B}_{2,\omega}$ is a closed and bounded subset of a finite-dimensional space. Therefore, $\mathcal{B}_{2,\omega}$ is compact.
\end{proof}

\begin{claim}
    The set $\mcF_{\omega}$ defined in \Cref{lem:SDP} is compact. \label{claim:2}
\end{claim}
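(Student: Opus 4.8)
The plan is to show that $\mathcal{F}_{\omega}$ is a closed subset of the compact square $[0,\pi/2]^{\times 2}$, from which compactness is immediate (a closed subset of a compact set is compact). First I would rewrite the membership condition in a more tractable form: for fixed $(a,b)$, the existence of some $\rho \in \mathcal{S}_{2}$ with $\tr[B(a,b)\rho] \geq \omega$ is equivalent to $\lambda_{\max}\big(B(a,b)\big) \geq \omega$, where $\lambda_{\max}(\cdot)$ denotes the largest eigenvalue. Indeed, since $\mathcal{S}_{2}$ is compact and $\rho \mapsto \tr[B(a,b)\rho]$ is linear, $\max_{\rho \in \mathcal{S}_{2}} \tr[B(a,b)\rho] = \lambda_{\max}\big(B(a,b)\big)$, the maximum being attained by the projector onto an eigenvector associated with the largest eigenvalue. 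Hence
\[
\mathcal{F}_{\omega} = \Big\{ (a,b) \in [0,\pi/2]^{\times 2} \ : \ \lambda_{\max}\big(B(a,b)\big) \geq \omega \Big\}.
\]

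Next I would invoke two continuity facts. The map $(a,b) \mapsto B(a,b)$ is continuous (cf. Claim~\ref{claim: continuity of B(a,b)}), and the largest-eigenvalue functional $H \mapsto \lambda_{\max}(H)$ is continuous on the finite-dimensional space of Hermitian operators; for instance, Weyl's perturbation inequality gives $\big|\lambda_{\max}(H) - \lambda_{\max}(H')\big| \leq \|H - H'\|_{\infty}$. Composing, $(a,b) \mapsto \lambda_{\max}\big(B(a,b)\big)$ is continuous on $[0,\pi/2]^{\times 2}$, so $\mathcal{F}_{\omega}$ is the preimage of the closed half-line $[\omega,\infty)$ under a continuous function and is therefore closed. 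Being a closed subset of the compact square $[0,\pi/2]^{\times 2}$, it is compact. Alternatively, one can mirror the proof of Claim~\ref{claim:1}: set $f(a,b) := \max_{\rho \in \mathcal{S}_{2}} \tr[B(a,b)\rho]$, which is continuous by compactness of $\mathcal{S}_{2}$ together with the continuity of $(a,b) \mapsto B(a,b)$, and note that $\mathcal{F}_{\omega} = f^{-1}\big([\omega,\infty)\big)$ is closed.

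I do not expect a genuine obstacle here. The only mildly non-trivial ingredient is the continuity of $\lambda_{\max}$ (equivalently, of the value function $f$), but this is standard and follows in one line from Weyl's inequality once the already-established continuity of $(a,b) \mapsto B(a,b)$ is in hand.
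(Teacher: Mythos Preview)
Your proof is correct and takes essentially the same approach as the paper: both establish that $\mathcal{F}_{\omega}$ is the superlevel set of the continuous function $(a,b)\mapsto \max_{\rho\in\mathcal{S}_{2}}\tr[B(a,b)\rho]$ and hence closed in the compact square $[0,\pi/2]^{2}$. Your identification of this maximum with $\lambda_{\max}\big(B(a,b)\big)$ and the appeal to Weyl's inequality streamline the continuity argument that the paper carries out via a sequential $\epsilon$-argument with H\"older's inequality, but the underlying strategy is identical.
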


\begin{proof}
We follow an argument analogous to Claim~\ref{claim:1}.  

Let $\{(a_n,b_n)\}_{n\in\mathbb{N}} \subset \mathcal{F}_\omega$ be a sequence and suppose the limit $(a_n,b_n) \to (a,b) \in [0,\pi/2]^2$ as $n \to \infty$ exists.  
By definition of $\mathcal{F}_\omega$, for each $n$ there exists $\rho_n \in \mathcal{S}_2$ such that 
\[
\tr[\rho_n B(a_n,b_n)] \ge \omega.
\]
Define
\[
\tilde{g}(a,b) := \max_{\rho \in \mathcal{S}_2} \tr[\rho B(a,b)].
\]
The maximum exists (and is achievable) because $\mathcal{S}_2$ is compact and $(a,b) \mapsto B(a,b)$ is continuous (see Claim \ref{claim: continuity of B(a,b)}).  

By Hölder's inequality,
\[
\big|\tr[\rho_n B(a_n,b_n)] - \tr[\rho_n B(a,b)]\big| \le \|B(a_n,b_n) - B(a,b)\|_1 \cdot \|\rho_n\|_\infty \le \|B(a_n,b_n)-B(a,b)\|_1.
\]
Continuity of $(a,b)\mapsto B(a,b)$ implies $\|B(a_n,b_n)-B(a,b)\|_1 \to 0$ as $n \to \infty$, so for every $\epsilon > 0$ there exists a large enough $n \in \mathbb{N}$ such that
\[
\tr[\rho_n B(a,b)] \ge \tr[\rho_n B(a_n,b_n)] - \epsilon \ge \omega - \epsilon.
\]
Since $\epsilon>0$ can be arbitrarily small, we conclude
\[
\tilde{g}(a,b) \ge \omega,
\] 
and hence $(a,b)\in \mathcal{F}_\omega$. Therefore, $\mathcal{F}_\omega$ is closed. The boundedness follows from the boundedness of the Bell operators $B(a,b)$. 
\end{proof}}

\chg{\begin{claim}\label{claim: continuity of B(a,b)}
The map $(a,b) \mapsto B(a,b)$ is continuous on $[0,\pi/2] \times [0 , \pi/2]$, i.e.,
\[
\lim_{(a',b') \to (a,b)} \|B(a',b') - B(a,b)\|_1 = 0.
\]
\end{claim}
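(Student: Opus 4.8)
\end{claim}

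\begin{proof}
The plan is to exploit the explicit parametrization of $B(a,b)$. By construction (see Theorem~\ref{thm:SDP_main}), $B(a,b)$ is a finite linear combination, with fixed real coefficients $c_{xy}$ as in~\eqref{eq:bop}, of tensor products $A_{x}(a)\ot B_{y}(b)$ of the single-qubit observables $A_{x}(a) = \cos(a)\,\sigma_{Z} + (-1)^{x}\sin(a)\,\sigma_{X}$ and $B_{y}(b) = \cos(b)\,\sigma_{Z} + (-1)^{y}\sin(b)\,\sigma_{X}$, with the convention $A_{-1} = B_{-1} = \id_{2}$. Each such observable is a (Lipschitz) matrix-valued function of its angle, so continuity of $(a,b)\mapsto B(a,b)$ will follow from continuity of scalar multiplication, addition, and the tensor product on the finite-dimensional space $\mbC^{2}\ot\mbC^{2}$, together with the equivalence of all norms in finite dimensions.

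Concretely, I would first record the elementary estimates. Since $A_{x}(a)^{2} = \id_{2}$ one has $\|A_{x}(a)\|_{\infty} = 1$ for every $a$, and similarly $\|B_{y}(b)\|_{\infty} = 1$; moreover, using $\|\sigma_{Z}\|_{\infty} = \|\sigma_{X}\|_{\infty} = 1$ and the $1$-Lipschitz continuity of $\cos$ and $\sin$,
\[
\|A_{x}(a) - A_{x}(a')\|_{\infty} \le |\cos a - \cos a'| + |\sin a - \sin a'| \le 2\,|a-a'|,
\]
and likewise for $B_{y}$. Second, for a single product term I would telescope,
\[
A_{x}(a)\ot B_{y}(b) - A_{x}(a')\ot B_{y}(b') = \big(A_{x}(a) - A_{x}(a')\big)\ot B_{y}(b) + A_{x}(a')\ot\big(B_{y}(b) - B_{y}(b')\big),
\]
and bound its operator norm by $2|a-a'| + 2|b-b'|$ via submultiplicativity under tensor products and the pointwise bounds above. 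Summing over the at most nine terms and writing $c := \max_{x,y}|c_{xy}|$ gives $\|B(a,b) - B(a',b')\|_{\infty} \le 18\,c\,(|a-a'| + |b-b'|)$. Finally, converting to the trace norm through $\|M\|_{1} \le 4\,\|M\|_{\infty}$ (valid because $M$ acts on a four-dimensional space) yields $\|B(a,b) - B(a',b')\|_{1} \le 72\,c\,(|a-a'| + |b-b'|)$, which tends to $0$ as $(a',b')\to(a,b)$, establishing the claim.

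I do not expect a genuine obstacle here: the argument is entirely routine. The only points requiring a little care are ensuring joint continuity in $(a,b)$ rather than merely separate continuity, which the telescoping step handles, and tracking the (harmless) multiplicative constants, which is immaterial since only the vanishing of the final estimate is used downstream, in Claims~\ref{claim:1} and~\ref{claim:2}.
\end{proof}
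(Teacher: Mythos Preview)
Your proposal is correct and follows essentially the same approach as the paper: both expand $B(a,b)$ as a finite sum of tensor products $A_x(a)\otimes B_y(b)$, use the telescoping identity $A_x(a')\otimes B_y(b') - A_x(a)\otimes B_y(b) = (A_x(a')-A_x(a))\otimes B_y(b') + A_x(a)\otimes(B_y(b')-B_y(b'))$, and reduce to the elementary continuity of $a\mapsto A_x(a)$ and $b\mapsto B_y(b)$. The only cosmetic difference is that you work in operator norm and convert to trace norm via $\|M\|_1\le 4\|M\|_\infty$, whereas the paper stays in trace norm throughout using $\|A\otimes B\|_1 = \|A\|_1\|B\|_1$ and $\|A_x(a)\|_1 = 2$; your version has the minor bonus of producing an explicit Lipschitz constant.
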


\begin{proof}
Recall that $B(a,b)$ is constructed from tensor products of local operators $A_x(a)$ and $B_y(b)$, each acting on a qubit:
\[
B(a,b) = \sum_{x,y} \gamma_{x,y}\, A_x(a) \otimes B_y(b),
\]
where the coefficients $\gamma_{x,y}$ are fixed real constants, and $A_x(a), B_y(b)$ depend continuously on $a$ and $b$.  

For any $(a',b') \in [0,\pi/2]^2$, we have
\begin{align*}
\|B(a',b') - B(a,b)\|_1 
&= \Big\|\sum_{x,y} \gamma_{x,y} \big( A_x(a') - A_x(a) \big) \otimes B_y(b') 
      + \sum_{x,y} \gamma_{x,y} A_x(a) \otimes \big( B_y(b') - B_y(b) \big) \Big\|_1 \\
&\le \sum_{x,y} |\gamma_{x,y}| \, \| (A_x(a') - A_x(a)) \otimes B_y(b') \|_1
    + \sum_{x,y} |\gamma_{x,y}| \, \| A_x(a) \otimes (B_y(b') - B_y(b)) \|_1
\end{align*}
using the triangle inequality.

Using the fact that $\|A \otimes B\|_{1} = \|A\|_{1}\|B\|_{1}$ and $A_{x}(a)^{\dagger}A_{x}(a)=B_{y}(b)^{\dagger}B_{y}(b) = \id_{2}$ (since $A_{x}(a)$ and $B_{y}(b)$ are the observables of a rank 1 projective measurement on a qubit),
\[
\| (A_x(a') - A_x(a)) \otimes B_y(b') \|_1 \le 2\| A_x(a') - A_x(a) \|_1, \qquad
\| A_x(a) \otimes (B_y(b') - B_y(b)) \|_1 \le 2\| B_y(b') - B_y(b) \|_1.
\]
Hence,
\[
\|B(a',b') - B(a,b)\|_1 \le \sum_{x,y} 2|\gamma_{x,y}| \Big( \| A_x(a') - A_x(a) \|_1 + \| B_y(b') - B_y(b) \|_1 \Big).
\]

Finally, each $A_x(a)$ and $B_y(b)$ is continuous in trace norm (for instance,
$\|A_x(a') - A_x(a)\|_1 = \|(\cos(a')-\cos(a))\sigma_z + (-1)^x(\sin(a')-\sin(a))\sigma_x\|_1 \to 0$ as $a'\to a$, and similarly for $B_y(b)$), so the right-hand side tends to zero as $(a',b') \to (a,b)$. Therefore,
\[
\lim_{(a',b') \to (a,b)} \|B(a',b') - B(a,b)\|_1 = 0,
\]
i.e., $B(a,b)$ is continuous in trace norm.
\end{proof}}

\chg{\begin{claim}
    Let $g : \mcS_{2} \to \mathbb{R}$ be a function. The set $\mcB_{2,\omega}$ defined in \Cref{eq:B_set} satisfies $\min_{\rho \in \mcB_{2,\omega}} g(\rho) = \min_{(a,b) \in \mcF_{\omega}} \min_{\rho \in \mathcal{B}^{a,b}_{\omega}} g(\rho)$, where $\mathcal{B}^{a,b}_{\omega}$ and $\mcF_{\omega}$ are defined in \Cref{eq:B_ab} and \Cref{lem:SDP}, respectively. \label{claim:3}
\end{claim}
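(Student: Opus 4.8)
The plan is to reduce the claimed identity to a set-theoretic decomposition of $\mcB_{2,\omega}$ and then apply the elementary fact that an infimum over a union of sets equals the infimum of the infima over the members of the union.

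First I would establish that
\[
  \mcB_{2,\omega} \;=\; \bigcup_{(a,b) \in \mcF_{\omega}} \mcB^{a,b}_{\omega},
\]
where $\mcB^{a,b}_{\omega}$ and $\mcF_{\omega}$ are as in \Cref{eq:B_ab} and \Cref{lem:SDP}. Directly from \Cref{eq:B_set}, one has $\rho \in \mcB_{2,\omega}$ if and only if there exists $(a,b) \in [0,\pi/2]^{\times 2}$ with $\tr[B(a,b)\rho] \geq \omega$, i.e.\ if and only if $\rho \in \mcB^{a,b}_{\omega}$ for some such $(a,b)$; hence $\mcB_{2,\omega} = \bigcup_{(a,b) \in [0,\pi/2]^{\times 2}} \mcB^{a,b}_{\omega}$. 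Now observe that for any $(a,b) \notin \mcF_{\omega}$ no state achieves $\tr[B(a,b)\rho] \geq \omega$, so $\mcB^{a,b}_{\omega} = \varnothing$; conversely, any $(a,b)$ with $\mcB^{a,b}_{\omega} \neq \varnothing$ lies in $\mcF_{\omega}$, a witness being any element of $\mcB^{a,b}_{\omega}$. Therefore the union over $[0,\pi/2]^{\times 2}$ coincides with the union over $\mcF_{\omega}$, giving the displayed identity.

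Given this decomposition, for any function $g : \mcS_{2} \to \mathbb{R}$,
\[
  \inf_{\rho \in \mcB_{2,\omega}} g(\rho)
  \;=\; \inf_{\rho \in \bigcup_{(a,b) \in \mcF_{\omega}} \mcB^{a,b}_{\omega}} g(\rho)
  \;=\; \inf_{(a,b) \in \mcF_{\omega}} \ \inf_{\rho \in \mcB^{a,b}_{\omega}} g(\rho),
\]
which is the claim at the level of infima (with the usual convention $\inf \varnothing = +\infty$, so the degenerate case $\mcF_{\omega} = \varnothing = \mcB_{2,\omega}$ is covered automatically). To replace the infima by minima in the regime relevant to \Cref{lem:SDP}, I would invoke compactness: $\mcB_{2,\omega}$ is compact by Claim \ref{claim:1}, $\mcF_{\omega}$ is compact by Claim \ref{claim:2}, and each $\mcB^{a,b}_{\omega}$ is a closed subset of the compact set $\mcS_{2}$ cut out by the linear inequality $\tr[B(a,b)\rho] \geq \omega$, hence compact. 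Since the $g$ to which this is applied in \Cref{eq:optSwap} is $g(\rho) = \max_{\sigma \in \mathcal{BD}} \tr[\rho\sigma]$ --- continuous in $\rho$ as a maximum of linear functionals over a compact set --- the inner infima are attained, and joint continuity of $(a,b,\rho) \mapsto \tr[B(a,b)\rho]$ (using Claim \ref{claim: continuity of B(a,b)}) together with compactness of $\mcF_{\omega}$ makes the outer infimum attained as well. I expect no serious obstacle here; the only point demanding care is the bookkeeping around possibly empty $\mcB^{a,b}_{\omega}$ and the harmless edge case $\mcF_{\omega} = \varnothing$, which is why I would phrase the lemma at the level of infima first and then note the upgrade to minima.
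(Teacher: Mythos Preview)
Your proof is correct and follows essentially the same idea as the paper's: both arguments rest on the observation that $\mcB_{2,\omega}$ is exactly the union of the fibers $\mcB^{a,b}_{\omega}$ over $(a,b)\in\mcF_{\omega}$, from which the equality of optima is immediate. The paper phrases this by picking an optimizer $\rho^*\in\mcB_{2,\omega}$, finding witnessing angles $(a^*,b^*)$ to get one inequality, and conversely picking an optimizer $(\tilde\rho,\tilde a,\tilde b)$ of the double minimum to get the other; you instead state the union identity explicitly and invoke $\inf$ over a union equals $\inf$ of $\inf$s. Your treatment is arguably cleaner in one respect: the claim as stated places no continuity hypothesis on $g$, so speaking of $\min$ is not justified a priori; you correctly establish the identity at the level of infima first and only then upgrade to minima for the specific continuous $g$ used in \Cref{eq:optSwap}.
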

\begin{proof}
    Since $\mcB_{2,\omega}$ is compact by Claim \ref{claim:1}, let $\rho^* \in \mcB_{2,\omega}$ denote the optimal state that satisfies $\min_{\rho \in \mcB_{2,\omega}} g(\rho) = g(\rho^*)$. By definition of $\mcB_{2,\omega}$, there exists a pair of angles $(a^*,b^*)$ satisfying $\tr[\rho^* B(a^*,b^*)] \geq \omega$. Hence $\rho^* \in \mcB_{\omega}^{a^*,b^*}$ by the definition of $\mcB_{\omega}^{a^*,b^*}$ in \Cref{eq:B_ab}. We therefore have $g(\rho^*) \geq \min_{\rho \in \mcB_{\omega}^{a^*,b^*}} g(\rho) \geq \inf_{(a,b) \in \mcF_{\omega}} \min_{\rho \in \mcB_{\omega}^{a,b}} g(\rho)$, where the second inequality follows from the fact that $(a^*,b^*) \in \mcF_{\omega}$ as defined in \Cref{lem:SDP}. By Claim \ref{claim:2} $\inf_{(a,b) \in \mcF_{\omega}} \min_{\rho \in \mcB_{\omega}^{a^*,b^*}} g(\rho) = \min_{(a,b) \in \mcF_{\omega}} \min_{\rho \in \mcB_{\omega}^{a^*,b^*}} g(\rho)$, establishing the lower bound. 
    
    For the upper bound, note that the minimization $\min_{(a,b) \in \mcF_{\omega}} \min_{\rho \in \mcB_{\omega}^{a^*,b^*}} g(\rho)$ is attained by some state $\tilde{\rho}$ and pair of angles $(\tilde{a},\tilde{b})$ such that $\tr[\tilde{\rho}B(\tilde{a},\tilde{b})] \geq \omega$. Hence $\tilde{\rho} \in \mcB_{2,\omega}$, and $\min_{\rho \in \mcB_{2,\omega}} g(\rho) \leq g(\tilde{\rho}) = \min_{(a,b) \in \mcF_{\omega}} \min_{\rho \in \mcB_{\omega}^{a^*,b^*}} g(\rho)$, completing the proof.   
\end{proof}

\begin{remark}
    The dual SDP in \Cref{eq:dualSDP} is related to the ``self-testing via operator inequalities'' (STOI) approach first introduced in Ref.~\cite{kaniewski2016analytic} (see also Refs.~\cite{Coopmans19,Baccari20,Murta23}). The STOI approach seeks to lower bound the fidelity under local channels, $F(\Lambda_{A} \otimes \Lambda_{B}(\rho),\phi^+) = \tr[\rho (\Lambda_{A} \otimes \Lambda_{B})^{\dagger}(\phi^+)]$, by establishing an operator inequality of the form $ K \geq \lambda B(a,b) + \mu \id_{4}$, where $K = (\Lambda_{A} \otimes \Lambda_{B})^{\dagger}(\phi^+)$, $\lambda \geq 0$, $\mu \in \mathbb{R}$ and the inequality holds for all $a,b$. This has the same form as the matrix positivity constraint in \Cref{eq:dualSDP}. Indeed, for the case of the CHSH inequality, the choice of dephasing channels from~\cite{kaniewski2016analytic}, which we denote $\Lambda_{A} = \Lambda_{A}^*(a), \ \Lambda_{B} = \Lambda_{B}^*(b)$, along with the values $\lambda = \lambda^* = (4 + 5\sqrt{2})/ 16$, $\mu = \mu^* = -(1+2\sqrt{2})/4$, correspond to a feasible point $(\lambda,\mu,\sigma) = (\lambda^*,\mu^*,K(a,b))$ of the SDP in \Cref{eq:dualSDP}, where $K(a,b) = (\Lambda_{A}^*(a) \otimes \Lambda_{B}^*(b))^{\dagger}(\phi^+)$. \label{rem:STOI}
\end{remark}}

\subsection{Solving the outer minimization via gridding}\label{app: gridding}
In the previous subsections, we showed that bounding the LOCC fidelity can be reduced to finding a lower bound on $\inf_{(a,b) \in \mcF_{\omega}} f_{a,b}(\omega)$. In this subsection, we develop a generic approach based on evaluating the function $f_{a,b}(\omega)$ over a finite grid (see also \cite{Bhavsar2023, sharma2024enhancing} for an application of this technique to generic optimization problems). Let $\mcI = \{0,1,...,|\mcI|-1\}$ and
\begin{equation}
    \mcG = \{(a_{j},b_{j})\}_{j \in \mathcal{I}} \subset [0,\pi/2] \times [0,\pi/2]
\end{equation}
be a finite grid over $[0,\pi/2] \times [0,\pi/2]$ with a spacing $|a_{i} - a_{i+1}| = |b_{i} - b_{i+1}| = \delta > 0$, containing $|\mathcal{I}|$ points. Consider the region inside a square of length $2\delta$ centered on the point $(a_{j},b_{j}) \in \tilde{G}(\omega)$:
\begin{equation}
    S_{j} :=  \Big\{ \big(a_{j} + \delta(1-2\lambda_{0}), b_{j} + \delta(1-2\lambda_{1})\big) \ : \lambda_{0},\lambda_{1} \in [0,1] \Big\}. \label{eq:sj}
\end{equation}
We will now upper bound the maximum value of the function $f_{a,b}(\omega)$ for $(a,b) \in S_{j}$ when the Bell operator $B(a,b)$ admits a first order expansion.
\begin{lemma}
Suppose, for all $(a,b),(a^*,b^*) \in \mcF_{\omega}$, 
\begin{equation}
    \tr[B(a,b)\rho] \leq \tr[B(a^{*},b^{*})\rho]  + c_{0}(a - a^{*}) + c_{1}(b - b^{*}) \label{eq:Op_ub}
\end{equation}
for some constants $c_{0},c_{1} \geq 0$. Then for any $(a',b') \in \mcF_{\omega} \cap S_{j}$,
\begin{equation}
    \mathcal{B}_{a',b'}^\omega \subset \tilde{\mathcal{B}}_{j}^\omega := \Big \{ \rho \in \mathcal{S}_{2} \ : \ \tr[B(a_{j},b_{j})\rho] \geq \omega - \delta (c_{0} +  c_{1}) \Big\} .
\end{equation} \label{lem:grid}
\end{lemma}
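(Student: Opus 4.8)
<br>

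The claim to prove is Lemma~\ref{lem:grid}: if the Bell operator obeys the first-order bound \eqref{eq:Op_ub} for all relevant angle pairs, then for any $(a',b') \in \mcF_\omega \cap S_j$ the set $\mcB^\omega_{a',b'}$ of two-qubit states achieving the Bell value $\omega$ with angles $(a',b')$ is contained in the relaxed set $\tilde{\mcB}^\omega_j$ defined by the single inequality $\tr[B(a_j,b_j)\rho] \ge \omega - \delta(c_0+c_1)$.

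\textbf{Proof strategy.} The plan is a direct set-inclusion argument. Take an arbitrary $\rho \in \mcB^\omega_{a',b'}$; by definition of $\mcB^\omega_{a',b'}$ in \eqref{eq:B_ab}, this means $\rho \in \mcS_2$ and $\tr[B(a',b')\rho] \ge \omega$. I want to show $\rho \in \tilde{\mcB}^\omega_j$, i.e. $\tr[B(a_j,b_j)\rho] \ge \omega - \delta(c_0+c_1)$. The idea is to apply the hypothesis \eqref{eq:Op_ub} with the substitution $(a,b) \mapsto (a',b')$ and $(a^*,b^*) \mapsto (a_j,b_j)$, which gives
\[
\tr[B(a',b')\rho] \le \tr[B(a_j,b_j)\rho] + c_0(a' - a_j) + c_1(b' - b_j).
\]
Rearranging, $\tr[B(a_j,b_j)\rho] \ge \tr[B(a',b')\rho] - c_0(a'-a_j) - c_1(b'-b_j) \ge \omega - c_0(a'-a_j) - c_1(b'-b_j)$. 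Since $(a',b') \in S_j$, the definition \eqref{eq:sj} of $S_j$ gives $a' = a_j + \delta(1-2\lambda_0)$ and $b' = b_j + \delta(1-2\lambda_1)$ for some $\lambda_0,\lambda_1 \in [0,1]$, so $|a'-a_j| \le \delta$ and $|b'-b_j| \le \delta$; in particular $a'-a_j \le \delta$ and $b'-b_j \le \delta$. Because $c_0, c_1 \ge 0$, we conclude $c_0(a'-a_j) + c_1(b'-b_j) \le \delta(c_0+c_1)$, hence $\tr[B(a_j,b_j)\rho] \ge \omega - \delta(c_0+c_1)$, which places $\rho$ in $\tilde{\mcB}^\omega_j$. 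Since $\rho$ was arbitrary, $\mcB^\omega_{a',b'} \subset \tilde{\mcB}^\omega_j$.

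\textbf{Points needing care.} The only subtlety is to make sure the hypothesis \eqref{eq:Op_ub} is actually applicable: it is stated to hold for all $(a,b),(a^*,b^*) \in \mcF_\omega$, so I need $(a',b') \in \mcF_\omega$ (given, since $(a',b') \in \mcF_\omega \cap S_j$) and $(a_j,b_j) \in \mcF_\omega$. The latter requires a brief comment --- either $(a_j,b_j)$ is assumed to be a grid point lying in $\mcF_\omega$ (the gridding construction should only retain such points), or alternatively one observes that the inequality \eqref{eq:Op_ub} in fact holds for $(a^*,b^*)$ ranging over the full square $[0,\pi/2]^2$ by the structure of $B(a,b)$, so membership in $\mcF_\omega$ is not essential for the reference angles. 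I expect the write-up to flag this and adopt whichever convention matches the gridding setup in Appendix~\ref{app: gridding}. No genuine obstacle arises; the lemma is an elementary consequence of the linear-in-angles upper bound combined with the diameter bound on $S_j$, and the whole proof is two or three lines once the bookkeeping of signs and the $c_i \ge 0$ hypothesis are tracked.
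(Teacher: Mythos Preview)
Your proposal is correct and follows essentially the same argument as the paper: take $\rho \in \mcB^\omega_{a',b'}$, apply the hypothesis \eqref{eq:Op_ub} with $(a^*,b^*)=(a_j,b_j)$, rearrange, and bound the linear correction using the $S_j$-parametrisation $a'-a_j=\delta(1-2\lambda_0)\le\delta$, $b'-b_j=\delta(1-2\lambda_1)\le\delta$ together with $c_0,c_1\ge 0$. Your remark about needing $(a_j,b_j)\in\mcF_\omega$ is apt; the paper silently assumes this, and it is indeed enforced downstream in Corollary~\ref{cor:SDP} via the restriction to $j\in\tilde{\mcI}_\omega$.
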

\begin{proof}
    Take any $\rho \in \mathcal{B}_{a',b'}^\omega$ for $(a',b') \in \mcF_{\omega} \cap S_{j}$ (note $(a',b') \in \mcF_{\omega}$ implies $\mathcal{B}_{a',b'}^\omega$ is non-empty). Then we know $\tr[B(a',b')\rho] \geq \omega$, and there exists constants $\lambda_{0},\lambda_{1} \in [0,1]$ such that
    \begin{equation}
        a' = a_{j} + \delta(1-2\lambda_{0}) \ \ \text{and} \ \ b' = b_{j} + \delta(1-2\lambda_{1}).
    \end{equation}
    We then have by \Cref{eq:Op_ub}, choosing $a^{*} = a_{j}$ and $b^{*} = b_{j}$,
    \begin{equation}
        \omega \leq \tr[B(a',b')\rho] \leq \tr[B(a_{j},b_{j})\rho]  + c_{0}(a' - a_{j}) + c_{1}(b' - b_{j}) = \tr[B(a_{j},b_{j})\rho] + \delta( 1 - 2\lambda_{0}) c_{0} + \delta( 1 - 2\lambda_{1}) c_{1}.
    \end{equation}
    Rearranging, we see
    \begin{equation}
        \tr[B(a_{j},b_{j})\rho] \geq \omega - \delta( 1 - 2\lambda_{0}) c_{0} - \delta( 1 - 2\lambda_{1}) c_{1} \geq \omega - \delta (c_{0} +  c_{1}),
    \end{equation}
    where for the second inequality we took $\lambda_{0},\lambda_{1} \geq 0$. As a result, $\rho \in \mcB_{j}(\omega)$ as desired.
\end{proof}
\noindent As a consequence, for all $(a',b') \in S_{j}$ and any function $f : \mathcal{S}_{2} \to \mathbb{R}$,
\begin{equation}
    \min_{(a,b) \in \mathcal{F}} \min_{\rho \in \mathcal{B}^{a,b}_{\omega}} f(\rho) = \min_{j \in \mathcal{I}} \min_{(a',b') \in S_{j}} \min_{\rho \in \mathcal{B}_{a',b'}^\omega} f(\rho) \geq \min_{j \in \mathcal{I}} \min_{\rho \in \tilde{\mathcal{B}}_{j}^\omega} f(\rho).
\end{equation}

To apply \Cref{lem:grid} we require a linear approximation to the Bell operator $B(a,b)$. 

\begin{lemma}
    Let $A_{x} = \cos(a) \, \sigma_{Z} + (-1)^{x} \sin(a) \, \sigma_{X}$ and $B_{y} = \cos(b) \, \sigma_{Z} + (-1)^{y} \sin(b) \, \sigma_{X}$ be qubit observables parameterized by $(a,b) \in [0,\pi/2] \times [0,\pi/2]$. Let $B(a,b)$ be an arbitrary Bell operator
    \begin{equation}
        B(a,b) = \sum_{x \in \{0,1\}} c_{x}^{A} (A_{x} \otimes \id_{\hat{Q}_{B}}) + \sum_{y \in \{0,1\}} c_{y}^{B} (\id_{\hat{Q}_{A}} \otimes B_{y}) + \sum_{x,y \in \{0,1\}} c_{x,y}^{AB} (A_{x} \otimes B_{y}),
    \end{equation}
    with coefficients $c_{x}^{A},c_{y}^{B},c_{xy}^{AB} \in \mathbb{R}$. Then for all $\rho \in \mathcal{S}_{2}$ and $(a',b'),(a,b) \in [0,\pi/2] \times [0,\pi/2]$,
    \begin{equation}
        \tr[B(a',b')\rho] \leq \tr[B(a,b) \rho] + c_{0}(a'-a) + c_{1}(b'-b), \label{eq:bapprox}
    \end{equation}
    where
    \begin{equation}
    \begin{aligned}
        c_{0} &= \sum_{x \in \{0,1\}} |c_{x}^{A} | + \sum_{x,y\in \{0,1\}} |c_{x,y}^{AB} |, \ \text{and}\\
        c_{1} &= \sum_{y \in \{0,1\}} |c_{y}^{B} | + \sum_{x,y \in \{0,1\}} |c_{x,y}^{AB} |.
    \end{aligned}
    \end{equation} \label{lem:bellAppprox}
\end{lemma}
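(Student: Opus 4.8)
The goal is to prove the linear upper bound \eqref{eq:bapprox} for the Bell operator $B(a,b)$ as a function of the measurement angles. The plan is to reduce everything to a single-angle estimate for each local observable, controlling the variation of $\cos$ and $\sin$ separately, and then to add up the contributions coming from each term of the Bell operator weighted by its coefficient. Concretely, I would first isolate the elementary fact that for a fixed qubit observable of the form $A_x(a) = \cos(a)\sigma_Z + (-1)^x\sin(a)\sigma_X$, the difference $A_x(a') - A_x(a)$ equals $(\cos a' - \cos a)\sigma_Z + (-1)^x(\sin a' - \sin a)\sigma_X$, and since $|\cos a' - \cos a| \le |a'-a|$ and $|\sin a' - \sin a| \le |a'-a|$ (both $\cos$ and $\sin$ are $1$-Lipschitz), one gets $\tr[(A_x(a') - A_x(a))\tau] \le |a'-a| \cdot (|\tr[\sigma_Z \tau]| + |\tr[\sigma_X \tau]|)$ for any operator $\tau$, and in particular for reduced states this last quantity is bounded, but I actually want a cleaner route.

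The cleaner route is to bound $\tr[(A_x(a') \otimes M)\rho] - \tr[(A_x(a) \otimes M)\rho]$ for a fixed Hermitian $M$ with $\|M\|_\infty \le 1$ (here $M$ is either $\id_2$ or $B_y(b)$, both of which have operator norm $1$ since they are $\pm 1$-valued observables). Writing the difference as $\tr[((A_x(a') - A_x(a)) \otimes M)\rho]$ and using $|\tr[N\rho]| \le \|N\|_\infty$ for a state $\rho$, together with $\|(A_x(a') - A_x(a)) \otimes M\|_\infty = \|A_x(a') - A_x(a)\|_\infty \|M\|_\infty \le \|A_x(a') - A_x(a)\|_\infty$, I reduce to bounding $\|A_x(a') - A_x(a)\|_\infty$. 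That norm is $\| (\cos a' - \cos a)\sigma_Z + (-1)^x(\sin a' - \sin a)\sigma_X \|_\infty$; since $\sigma_Z$ and $\sigma_X$ anticommute, this Hermitian operator has eigenvalues $\pm\sqrt{(\cos a' - \cos a)^2 + (\sin a' - \sin a)^2}$, so its norm equals that square root, which by the elementary inequality $\sqrt{u^2 + v^2} \le |u| + |v|$ and the $1$-Lipschitz property is at most $2|a'-a|$ — however, a sharper and in fact tight bound is $\sqrt{2 - 2\cos(a'-a)} = 2|\sin((a'-a)/2)| \le |a'-a|$. Using this I get $\tr[(A_x(a') \otimes M)\rho] - \tr[(A_x(a) \otimes M)\rho] \le |a'-a|$, and symmetrically for the $B_y$ factors with $|b'-b|$.

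With these per-term estimates in hand, I would then expand the difference $\tr[B(a',b')\rho] - \tr[B(a,b)\rho]$ term by term. Each $A_x$-only term contributes at most $|c_x^A|\,|a'-a|$; each $B_y$-only term contributes at most $|c_y^B|\,|b'-b|$; and each $A_x \otimes B_y$ term I split by adding and subtracting $\tr[(A_x(a') \otimes B_y(b))\rho]$, giving a contribution bounded by $|c_{x,y}^{AB}|(|a'-a| + |b'-b|)$ — here the intermediate operator $A_x(a') \otimes B_y(b)$ still has a $\pm 1$-valued second factor, so the first step applies, and then $A_x(a')$ is a fixed $\pm 1$-valued observable so the second step applies. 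Summing over all terms and collecting the coefficients of $|a'-a|$ and $|b'-b|$ yields exactly the claimed $c_0$ and $c_1$. To finish, I note that under the hypothesis $(a',b'),(a,b) \in [0,\pi/2]^2$ we may drop the absolute values on $a'-a$ and $b'-b$ in the statement \eqref{eq:bapprox} only if the sign is favourable — but since the lemma as stated has $c_0, c_1 \ge 0$ multiplying signed differences, I should be careful: the inequality $\tr[B(a',b')\rho] \le \tr[B(a,b)\rho] + c_0(a'-a) + c_1(b'-b)$ can only hold for all configurations if we really have the absolute-value bound and the intended reading matches the use in \cref{lem:grid}, where indeed $(a',b') \in S_j$ is centred on $(a_j,b_j) = (a,b)$ so $|a'-a|, |b'-b| \le \delta$ and one ultimately only needs the bound with $|a'-a|$ replaced by $\delta$; I would phrase the proof to deliver $\tr[B(a',b')\rho] \le \tr[B(a,b)\rho] + c_0|a'-a| + c_1|b'-b|$, which is what \cref{lem:grid} actually invokes. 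The main obstacle is essentially bookkeeping: making sure the triangle-inequality split of the $A_x \otimes B_y$ terms is done through an intermediate operator whose relevant factor is still norm-one, so that the single-factor Lipschitz estimate is legitimately reusable; there is no deep difficulty, only the need to keep the operator norms and coefficient sums straight.

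\medskip

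\noindent\textit{Proof proposal in detail.} First I establish the key elementary estimate: for any angles $a, a'$ and any Hermitian $M$ on a qubit with $\|M\|_\infty \le 1$, and any two-qubit state $\rho$,
\[
\bigl|\tr[(A_x(a') \otimes M)\rho] - \tr[(A_x(a) \otimes M)\rho]\bigr| \le \|A_x(a') - A_x(a)\|_\infty \le |a' - a|,
\]
and likewise $\bigl|\tr[(\id_2 \otimes B_y(b'))\rho] - \tr[(\id_2 \otimes B_y(b))\rho]\bigr| \le |b' - b|$. The first inequality is H\"older's inequality $|\tr[N\rho]| \le \|N\|_1' \|\rho\|_\infty$ — more precisely $|\tr[N\rho]| \le \|N\|_\infty$ for a normalised $\rho$ — applied to $N = (A_x(a') - A_x(a)) \otimes M$, together with multiplicativity of the operator norm under tensor products. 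The second inequality follows because $A_x(a') - A_x(a) = (\cos a' - \cos a)\sigma_Z + (-1)^x(\sin a' - \sin a)\sigma_X$ is a traceless Hermitian qubit operator, hence has eigenvalues $\pm r$ with $r = \sqrt{(\cos a' - \cos a)^2 + (\sin a' - \sin a)^2} = \sqrt{2 - 2\cos(a'-a)} = 2\bigl|\sin\tfrac{a'-a}{2}\bigr| \le |a'-a|$.

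Next I apply this to each summand of $B(a',b') - B(a,b)$. The $A_x$-only terms give $\sum_x |c_x^A|\,|a'-a|$. The $B_y$-only terms give $\sum_y |c_y^B|\,|b'-b|$. For the bilinear terms, insert the intermediate point: for each $(x,y)$,
\[
\bigl|\tr[(A_x(a')\otimes B_y(b'))\rho] - \tr[(A_x(a)\otimes B_y(b))\rho]\bigr|
\le \bigl|\tr[(A_x(a')\otimes (B_y(b') - B_y(b)))\rho]\bigr| + \bigl|\tr[((A_x(a') - A_x(a))\otimes B_y(b))\rho]\bigr|.
\]
In the first term $A_x(a')$ is a $\pm1$-valued observable so $\|A_x(a') \otimes (B_y(b') - B_y(b))\|_\infty = \|B_y(b') - B_y(b)\|_\infty \le |b'-b|$; in the second term $B_y(b)$ is a $\pm1$-valued observable so the bound $|a'-a|$ from the key estimate applies with $M = B_y(b)$. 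Hence each bilinear term is bounded by $|c_{x,y}^{AB}|(|a'-a| + |b'-b|)$.

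Finally, using the triangle inequality to combine all the pieces,
\[
\tr[B(a',b')\rho] - \tr[B(a,b)\rho]
\le \Bigl(\sum_x |c_x^A| + \sum_{x,y} |c_{x,y}^{AB}|\Bigr)|a'-a|
 + \Bigl(\sum_y |c_y^B| + \sum_{x,y} |c_{x,y}^{AB}|\Bigr)|b'-b|
= c_0 |a'-a| + c_1 |b'-b|,
\]
which is the asserted bound. The hypothesis $(a',b'),(a,b) \in [0,\pi/2]^2$ plays no role beyond specifying the domain, and the form of the bound is precisely what \cref{lem:grid} requires (with the sign conventions there, $|a'-a|$ and $|b'-b|$ get further replaced by the grid half-width $\delta$). $\qed$
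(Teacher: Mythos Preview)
Your proof is correct and follows essentially the same approach as the paper: both rely on the single-observable Lipschitz estimate $\|A_x(a')-A_x(a)\|_\infty = 2|\sin((a'-a)/2)| \le |a'-a|$, then bound $|\tr[\Delta\rho]| \le \|\Delta\|_\infty$ and apply the triangle inequality term by term. The only minor difference is in the bilinear pieces: the paper computes $\|A_x(a')\otimes B_y(b') - A_x(a)\otimes B_y(b)\|_\infty$ directly as $2|\sin((\delta_0\pm\delta_1)/2)| \le |\delta_0|+|\delta_1|$, whereas you telescope through $A_x(a')\otimes B_y(b)$; both yield the same contribution $|a'-a|+|b'-b|$. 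Your observation that the proof really delivers the bound with $|a'-a|$ and $|b'-b|$ (rather than the signed differences in the lemma statement) is apt and matches how the result is actually used in \cref{lem:grid}.
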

\begin{proof}
Let $O(\theta) = \cos(\theta)\, \sigma_{Z} + \sin(\theta) \, \sigma_{X}$ be an arbitrary qubit observable in the ZX plane of the Bloch sphere, parameterized by an angle $\theta \in [-\pi,\pi]$. Let $\delta \in [-\pi/2,\pi/2]$. By direct calculation,
\begin{equation}
    \| O(\theta) - O(\theta + \delta) \|_{\infty} = 2 |\sin(\delta/2)| \leq |\delta|, \label{eq:OpNormA}
\end{equation}
where $\| A \|_{\infty}$ is the norm of an operator $A$ on a Hilbert space $\mcH$, equal to its largest eigenvalue. Similarly, for $\delta_{0},\delta_{1} \in [-\pi/2,\pi/2]$,
\begin{equation}
\begin{aligned}
    \| O(\theta) \otimes O(\phi) - O(\theta + \delta_{0}) \otimes O(\phi + \delta_{1}) \|_{\infty} &= \sqrt{2 | \sin(\delta_{0})\sin(\delta_{1})| + 2(1-\cos(\delta_{0})\cos(\delta_{1}))} \\
    &= 2|\sin\Big( \frac{\delta_{0} \pm \delta_{1}}{2} \Big) |\leq |\delta_{0} + \delta_{1}|, \label{eq:OpNormB}
\end{aligned}
\end{equation}
where we used the fact that $|\sin(\delta_{0})\sin(\delta_{1})| = \pm\sin(\delta_{0})\sin(\delta_{1})$ depending on the values of $\delta_{0},\delta_{1}$, and applied relevant trigonometric identities. The Bell operator $B(a,b)$ is given by
\begin{eqnarray}
    B(a,b) = \sum_{x}c^{A}_{x} ( O[(-1)^{x}a] \otimes \id) + \sum_{y}c^{B}_{y} ( \id \otimes O[(-1)^{y}b])  + \sum_{x,y}c^{AB}_{x,y}    (O[(-1)^{x}a] \otimes O[(-1)^{y}b]). 
\end{eqnarray}
Let us define, for $\delta_{0},\delta_{1} > 0$,
\begin{eqnarray}
    \Delta := B(a,b) - B(a + \delta_{0},b + \delta_{1}).
\end{eqnarray}
Note for any $\rho \in \mathcal{S}_{2}$ with a spectral decomposition $\rho = \sum_{i}p_{i} \ketbra{\phi_{i}}{\phi_{i}}$,
\begin{eqnarray}
    | \tr[\Delta \rho] | \leq \sum_{i}p_{i} |\bra{\phi_{i}}\Delta \ket{\phi_{i}}| \leq \| \Delta \|_{\infty}. 
\end{eqnarray}
We now bound the operator norm of $\Delta$ by repeatedly applying the triangle inequality, to find
\begin{equation}
\begin{aligned}
    \| \Delta \|_{\infty} &\leq \sum_{x} |c_{x}^{A}| \cdot \Big \| (O[(-1)^{x}a] - O[(-1)^{x}(a+\delta_{0})])\otimes \id \Big  \|_{\infty} + \sum_{y} |c_{y}^{B}| \cdot  \Big \| \id \otimes (O[(-1)^{y}b] - O[(-1)^{y}(b+\delta_{1})]) \Big \|_{\infty} \\
    & \hspace{2cm} + \sum_{x,y} |c_{x,y}^{AB} | \cdot \Big \| O[(-1)^{x}a] \otimes O[(-1)^{y}b] - O[(-1)^{x}(a+\delta_{0})] \otimes O[(-1)^{y} (b+\delta_{1})] \Big \|_{\infty} \\
    & \leq \delta_{0} \sum_{x} |c_{x}^{A} | + \delta_{1} \sum_{y} |c_{y}^{B} | + (\delta_{0} + \delta_{1}) \sum_{x,y} |c_{x,y}^{AB}| \\
    &=: c_{0} \, \delta_{0} + c_{1} \, \delta_{1}.
\end{aligned}
\end{equation}
For the second equality, we applied \Cref{eq:OpNormA,eq:OpNormB}. 

Therefore, we have for all $\rho \in \mathcal{S}_{2}$ and all $(a,b) \in [0,\pi/2] \times [0,\pi/2]$,
\begin{eqnarray}
    |\tr[\Delta \rho]| =\Big | \tr[B(a,b)\rho] - \tr[B(a + \delta_{0},b+\delta_{1})\rho] \Big| \leq c_{0} \, \delta_{0} + c_{1} \, \delta_{1},
\end{eqnarray}
which implies
\begin{eqnarray}
    \tr[B(a+ \delta_{0},b+\delta_{1}) \leq \tr[B(a,b)\rho] + c_{0} \, \delta_{0} + c_{1} \, \delta_{1}.
\end{eqnarray}
Let $a' = a + \delta_{0}$ and $b' = b + \delta_{1}$. Then
\begin{eqnarray}
    \tr[B(a',b')\rho] \leq \tr[B(a,b)\rho] + c_{0} \, (a' - a) + c_{1} \, (b' - b),
\end{eqnarray}
proving the claim.
\end{proof}

The approximation given by \Cref{lem:bellAppprox} is exactly of the form required by Lemma \ref{lem:grid}. By a modification to the proof of Lemma \ref{lem:SDP}, we arrive at the following consequence.

\begin{corollary}
   Let $\mathcal{G} = \{(a_{j},b_{j}) \ : \ j \in \mathcal{I}\}$ be a finite grid over $[0,\pi/2] \times [0,\pi/2]$, with a spacing $\delta > 0$. Define $\tilde{\mathcal{I}}_{\omega} = \{ j \in \mcI \ : \ (a_{j},b_{j}) \in \mcF_{\omega}\}$. Let $S_{j}$ be defined in \Cref{eq:sj}, $B(a,b),c_{0},c_{1}$ be defined in Lemma \ref{lem:bellAppprox}, and $f_{a,b}(\omega)$ be defined in Lemma \ref{lem:SDP}. Suppose $\mathcal{G}$ satisfies the following property:
    \begin{equation}
        \mcF_{\omega} \subset \bigcap_{j \in \tilde{\mathcal{I}}_{\omega}} S_{j}. \label{eq:prop1}
    \end{equation}
    Then 
    \begin{equation}
        f_{2}(\omega) \geq \min_{j \in \tilde{\mathcal{I}}_{\omega}} f_{a_{j},b_{j}}\big(\omega - \delta(c_{0} + c_{1})\big).
    \end{equation}
    \label{cor:SDP}
\end{corollary}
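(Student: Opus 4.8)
The plan is to run the min--max and semidefinite-duality argument of Lemma~\ref{lem:SDP} once more, but with the continuous minimization over measurement angles replaced by a finite one using Lemmas~\ref{lem:grid} and~\ref{lem:bellAppprox}. The starting point is the chain already established in the proof of Lemma~\ref{lem:SDP}: writing $g(\rho):=\sup_{\sigma\in\mathcal{BD}}\tr[\rho\sigma]$, compactness of $\mathcal{B}_{2,\omega}$ and $\mathcal{F}_{\omega}$ (Claims~\ref{claim:1} and~\ref{claim:2}) together with Claim~\ref{claim:3} give
\[
  f_{2}(\omega)\;\geq\;\inf_{\rho\in\mathcal{B}_{2,\omega}}g(\rho)\;=\;\min_{(a,b)\in\mathcal{F}_{\omega}}\ \min_{\rho\in\mathcal{B}^{a,b}_{\omega}}g(\rho),
\]
so it suffices to lower bound the right-hand side by a minimum over the grid points $(a_{j},b_{j})$.

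The core step is the geometric reduction. Lemma~\ref{lem:bellAppprox} supplies exactly the first-order bound \eqref{eq:Op_ub} for the Bell operator $B(a,b)$, with the nonnegative constants $c_{0},c_{1}$ of the statement, so Lemma~\ref{lem:grid} is applicable. Fix $(a,b)\in\mathcal{F}_{\omega}$; by the covering property \eqref{eq:prop1} there is an index $j\in\tilde{\mathcal{I}}_{\omega}$ with $(a,b)\in S_{j}$, hence $(a,b)\in\mathcal{F}_{\omega}\cap S_{j}$, and Lemma~\ref{lem:grid} yields $\mathcal{B}^{a,b}_{\omega}\subset\tilde{\mathcal{B}}^{\omega}_{j}=\mathcal{B}^{a_{j},b_{j}}_{\omega-\delta(c_{0}+c_{1})}$. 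Since minimizing over a larger set only decreases the value, $\min_{\rho\in\mathcal{B}^{a,b}_{\omega}}g(\rho)\geq\min_{\rho\in\tilde{\mathcal{B}}^{\omega}_{j}}g(\rho)\geq\min_{k\in\tilde{\mathcal{I}}_{\omega}}\min_{\rho\in\tilde{\mathcal{B}}^{\omega}_{k}}g(\rho)$. The last quantity is independent of $(a,b)$, so taking the minimum over $(a,b)\in\mathcal{F}_{\omega}$ gives $f_{2}(\omega)\geq\min_{j\in\tilde{\mathcal{I}}_{\omega}}\min_{\rho\in\tilde{\mathcal{B}}^{\omega}_{j}}g(\rho)$.

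It then remains to identify each term $\min_{\rho\in\tilde{\mathcal{B}}^{\omega}_{j}}g(\rho)$ with the SDP value $f_{a_{j},b_{j}}\!\big(\omega-\delta(c_{0}+c_{1})\big)$. This is the identical computation made at the end of the proof of Lemma~\ref{lem:SDP}, now carried out at the shifted Bell value $\omega':=\omega-\delta(c_{0}+c_{1})\leq\omega$: since $j\in\tilde{\mathcal{I}}_{\omega}$ we have $(a_{j},b_{j})\in\mathcal{F}_{\omega}\subseteq\mathcal{F}_{\omega'}$, so $\tilde{\mathcal{B}}^{\omega}_{j}=\mathcal{B}^{a_{j},b_{j}}_{\omega'}$ is a nonempty, compact, convex subset of $\mathcal{S}_{2}$; von Neumann's minimax theorem~\cite{Simons1995} applied to the bilinear form $\tr[\rho\sigma]$ over $\mathcal{B}^{a_{j},b_{j}}_{\omega'}\times\mathcal{BD}$ swaps the order of the optimizations, and strong duality of the resulting inner linear program over $\rho$ (strictly feasible in the dual at $(\lambda,\mu)=(1,-\eta^{\mathrm{Q}}_{2}-\epsilon)$, exactly as before and independently of $\omega'$) turns $\max_{\sigma\in\mathcal{BD}}\min_{\rho}\tr[\rho\sigma]$ into the semidefinite program \eqref{eq:dualSDP} evaluated at $\omega'$, i.e.\ $f_{a_{j},b_{j}}(\omega')$. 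Concatenating these (in)equalities proves the claim. The only point I expect to require care is this shift of the Bell value: one must check that the grid-point state sets remain nonempty and the dual SDP remains strictly feasible after replacing $\omega$ by $\omega'$, both of which hold because $\mathcal{F}_{\bullet}$ is monotone in the Bell value and the feasibility certificate is $\omega$-independent — everywhere else the argument is a verbatim reuse of Lemmas~\ref{lem:SDP}, \ref{lem:grid} and~\ref{lem:bellAppprox}.
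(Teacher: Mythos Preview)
Your proposal is correct and follows essentially the same route as the paper: start from the min--max reduction of Lemma~\ref{lem:SDP}, use the covering hypothesis together with Lemmas~\ref{lem:grid} and~\ref{lem:bellAppprox} to replace the continuous angle minimization by a finite one over the enlarged sets $\tilde{\mathcal{B}}^{\omega}_{j}$, and then rerun the minimax/strong-duality step at the shifted value $\omega'=\omega-\delta(c_{0}+c_{1})$. Your extra remarks that $\mathcal{F}_{\omega}\subseteq\mathcal{F}_{\omega'}$ ensures nonemptiness and that the dual feasibility certificate is $\omega$-independent are not spelled out in the paper's proof but are exactly the checks needed for that last step to go through.
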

\begin{proof}
    We start by writing 
    \begin{equation}
        \begin{aligned}
            f_{2}(\omega) &\geq \inf_{\rho \in \mathcal{B}_{2,\omega}} \sup_{\sigma \in \mathcal{BD}} \tr[\rho \sigma] \\ 
            &= \min_{(a,b) \in \mcF_{\omega}} \min_{\rho \in \mathcal{B}^{a,b}_{\omega}} \max_{\sigma \in \mathcal{BD}} \tr[\rho \sigma],
        \end{aligned}
    \end{equation}
    as argued in \Cref{eq:optSwap}. By assumption, the grid $\mathcal{G}$ satisfies 
    \begin{equation}
        \mcF_{\omega} \subset \bigcap_{j \in \tilde{\mathcal{I}}_{\omega}} S_{j}, 
    \end{equation}
    and $(a_{j},b_{j}) \in \mcF_{\omega}$ for all $j \in \tilde{\mathcal{I}}_{\omega}$. Then for every $(a,b) \in \mcF_{\omega}$ there exists an index $j(a,b) \in \tilde{\mathcal{I}}_{\omega}$ such that $(a,b) \in S_{j(a,b)}$. Since $\mathcal{B}_{a,b}^\omega \subset \tilde{\mathcal{B}}_{j(a,b)}^\omega$ by \Cref{lem:grid} with the Bell operator approximation from \Cref{lem:bellAppprox},
    \begin{equation}
    \begin{aligned}
        \min_{(a,b) \in \mcF_{\omega}} \min_{\rho \in \mathcal{B}^{a,b}_{\omega}} \max_{\sigma \in \mathcal{BD}} \tr[\rho \sigma] &\geq  \min_{(a,b) \in \mcF_{\omega}} \min_{\rho \in \tilde{\mathcal{B}}_{j(a,b)}^\omega} \max_{\sigma \in \mathcal{BD}} \tr[\rho \sigma] \\
        &=  \min_{j \in \tilde{\mathcal{I}}} \min_{\rho \in \tilde{\mathcal{B}}_{j}^\omega} \max_{\sigma \in \mathcal{BD}} \tr[\rho \sigma].
    \end{aligned}
    \end{equation}
    The remainder of the proof follows identically to that of \Cref{lem:SDP}, except with the constraint $\tr[B(a,b)\rho] \geq \omega$ substituted with $\tr[B(a,b)\rho] \geq \omega - \delta(c_{0} + c_{1})$. 

\end{proof}

\subsection{Example: CHSH}
To illustrate the numerical method derived in this section, we applied \Cref{cor:SDP} to bound the singlet fidelity certified by the CHSH inequality. This is known analytically to equal~\cite{bardyn2009device}
\begin{equation}
    f_{2}(\omega) \geq \frac{1 + \sqrt{(\omega/2)^{2}-1}}{2}, 
\end{equation}
where $\omega \in [2,2\sqrt{2}]$ is the violation of the CHSH inequality in correlator form. In \Cref{fig: combined_singFidCHSH} we compare this analytical bound to the lower bound generated by our numerics at different grid spacings $\delta$. One can see that the bounds can be made tighter as the spacing decreases, which results in a larger computation time. Specifically, the computation time scales as $O(\frac{1}{\delta^2})$. 

\begin{remark}
    While we have compared our technique to known analytical results for the CHSH inequality, unlike Ref.~\cite{bardyn2009device}, our approach allows one to bound the singlet fidelity for any Bell inequality, or combination of Bell functionals, in the minimal Bell scenario.
\end{remark}

\begin{figure}[h!]
    \centering
    \begin{subfigure}[b]{0.45\textwidth}
        \centering
        \includegraphics[width=\textwidth]{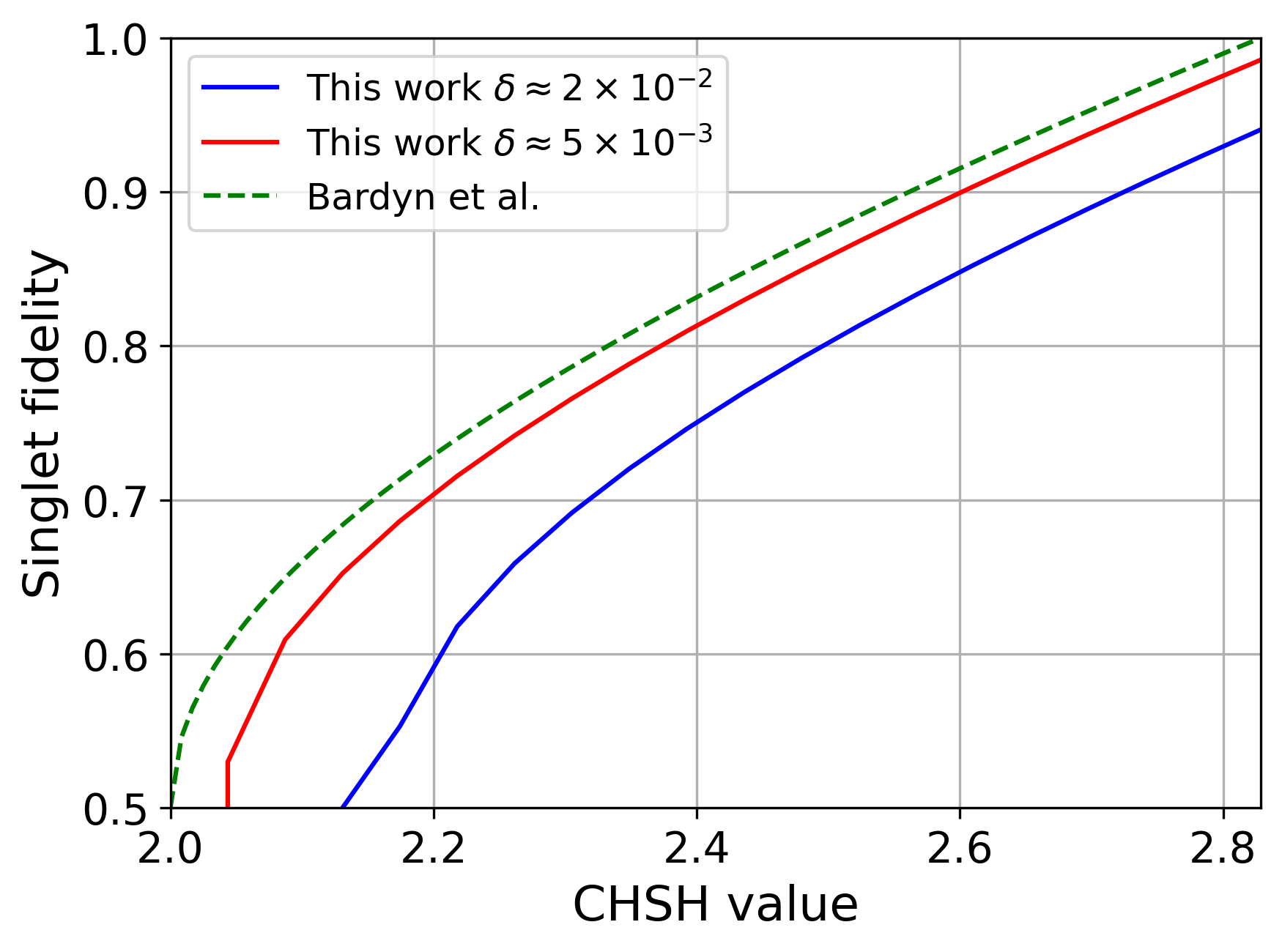}
        \label{fig: singFidCHSH_1}
    \end{subfigure}
    \hfill
    \begin{subfigure}[b]{0.45\textwidth}
        \centering
        \includegraphics[width=\textwidth]{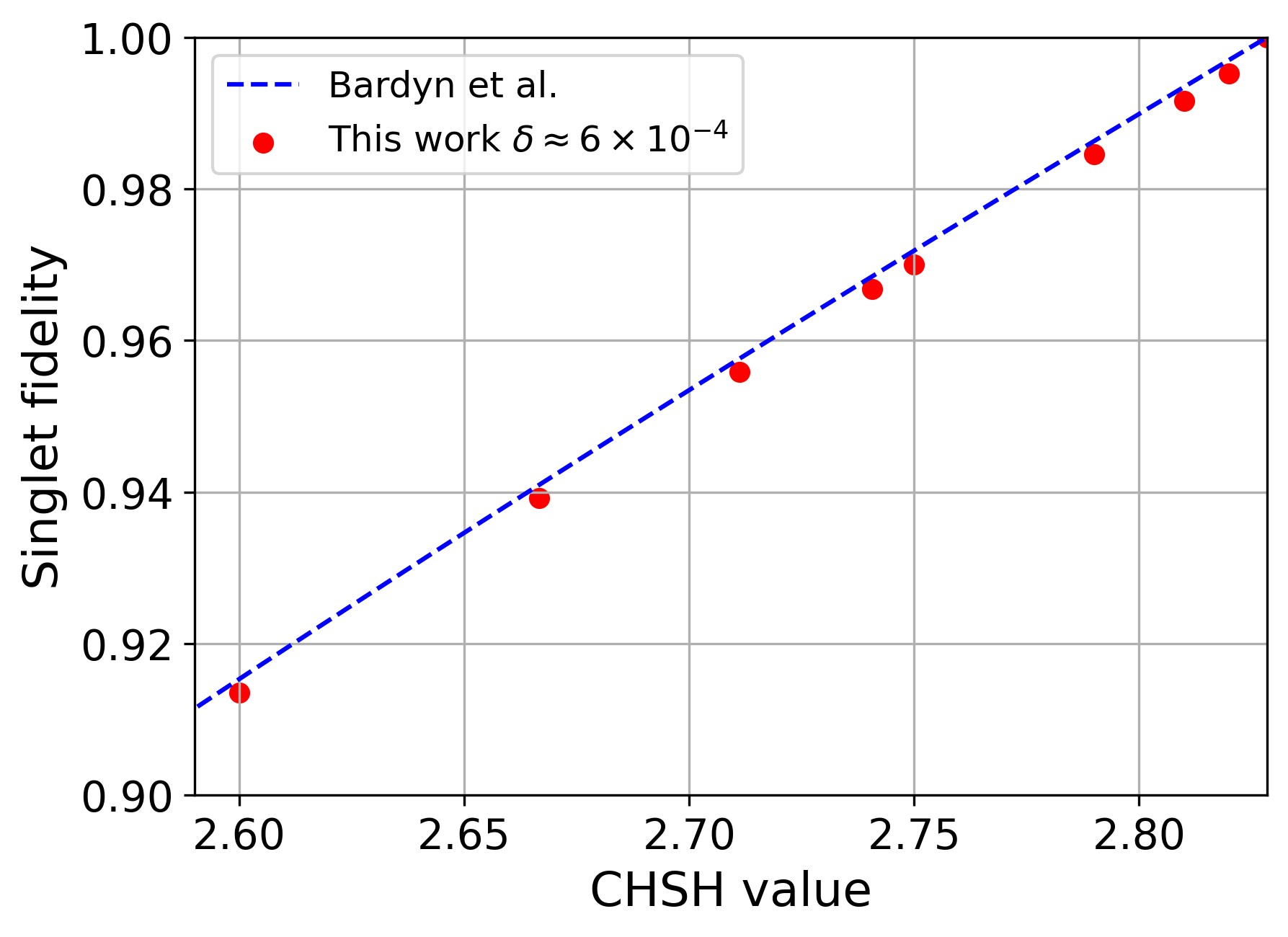}
        \label{fig: singFidCHSH_2}
    \end{subfigure}
    \caption{Lower bounds on singlet fidelity $f_{2}(\omega)$ for a given CHSH violation. Compared are the bounds generated by the numerical method of this work at different grid spacings $\delta$, and the known analytical result of Ref.~\cite{bardyn2009device}. }
    \label{fig: combined_singFidCHSH}
\end{figure}

\end{document}